\newif\ifsubmission
\newif\ifanon
\newif\ifextendedabstract
\newif\ifstocproceedings

\documentclass[11pt]{article}
\usepackage[margin=1in]{geometry}
\usepackage[T1]{fontenc}
\usepackage{newtxtext,newtxmath}

\usepackage{amsmath}

\usepackage{amsthm}
\usepackage{mathtools}
\usepackage{braket}
\usepackage{ytableau}

\numberwithin{equation}{section}

\usepackage{enumitem}
\usepackage{float}

\usepackage{graphicx}
\usepackage{xcolor}
\usepackage{tikz}
\usetikzlibrary{calc,positioning,fit,backgrounds}

\usepackage[
  backend=biber,
  style=alphabetic,
  maxnames=99,
  minnames=5,
  maxalphanames=3,
  minalphanames=3
]{biblatex}
\DeclareDelimFormat{multicitedelim}{\addcomma\space}

\usepackage[page]{appendix}

\usepackage{hyperref}
\usepackage{zref-clever}
\zcsetup{
    abbrev=true,
    cap
}
\hypersetup{
    colorlinks=true,
    linkcolor=blue,
    filecolor=magenta,
    urlcolor=cyan,
    citecolor=blue
}

\usepackage{tcolorbox}
\usepackage{caption}
\usepackage{algorithm}
\usepackage{algpseudocode}

\usepackage[colorinlistoftodos]{todonotes}

\usepackage{aliascnt}

\usepackage{array}

\newcommand{\cref}[1]{\zcref{#1}}
\newcommand{\Cref}[1]{\zcref[S]{#1}}

\tikzset{
  dnode/.style={circle, draw=black, line width=0.4pt, minimum size=4.6mm, inner sep=0pt},
  dlab/.style={font=\scriptsize, fill=white, inner sep=0.6pt},
}

\let\oldtodo\todo
\ifsubmission
    \newcommand{\ben}[1]{}
    \newcommand{\barak}[1]{}

    \renewcommand{\todo}[1]{}
\else
    \newcommand{\ben}[1]{\oldtodo[color=blue!15]{\textbf{Ben}: #1}}
    \newcommand{\barak}[1]{\oldtodo[color=olive!15]{\textbf{Barak}: #1}}

    \renewcommand{\todo}[1]{\oldtodo{\underline{\textbf{ToDo}}:\\#1}}
\fi

\theoremstyle{plain}
\newtheorem{theorem}{Theorem}[section]

\newaliascnt{lemma}{theorem}
\newtheorem{lemma}[lemma]{Lemma}
\aliascntresetthe{lemma}

\newaliascnt{proposition}{theorem}

\aliascntresetthe{proposition}

\newaliascnt{corollary}{theorem}
\newtheorem{corollary}[corollary]{Corollary}
\aliascntresetthe{corollary}

\newaliascnt{conjecture}{theorem}

\aliascntresetthe{conjecture}

\newaliascnt{claim}{theorem}

\aliascntresetthe{claim}

\theoremstyle{definition}
\newaliascnt{definition}{theorem}
\newtheorem{definition}[definition]{Definition}
\aliascntresetthe{definition}

\newaliascnt{example}{theorem}
\newtheorem{example}[example]{Example}
\aliascntresetthe{example}

\newaliascnt{fact}{theorem}
\newtheorem{fact}[fact]{Fact}
\aliascntresetthe{fact}

\zcRefTypeSetup{fact}{
  name-sg = fact,
  name-pl = facts,
  Name-sg = Fact,
  Name-pl = Facts,
  name-sg-ab = fact,
  name-pl-ab = facts,
  Name-sg-ab = Fact,
  Name-pl-ab = Facts,
}

\theoremstyle{remark}
\newaliascnt{remark}{theorem}

\aliascntresetthe{remark}

\newaliascnt{note}{theorem}

\aliascntresetthe{note}

\newcommand{\wh}[1]{\widehat{#1}}

\newcommand{\wt}[1]{\widetilde{#1}}
\newcommand{\ft}[0]{\mathtt{FT}}
\newcommand{\pn}[0]{\mathrm{pn}}
\newcommand{\cc}[0]{\mathrm{cc}}
\newcommand{\op}[0]{\mathrm{op}}
\newcommand{\tr}[0]{\mathrm{Tr}}

\newcommand{\cont}[0]{\mathrm{cont}}
\newcommand{\poly}[0]{\mathrm{poly}}

\newcommand{\stirlingii}[2]{\genfrac{\{}{\}}{0pt}{}{#1}{#2}}
\DeclarePairedDelimiter{\abs}{\lvert}{\rvert}

\DeclareMathOperator{\polylog}{polylog}

\title{Efficient Quantum Fourier Transforms For Semisimple Algebras}

\ifanon
    \author{}
\else
    \author{
        Ben Foxman\thanks{Yale University. Email: \texttt{ben.foxman@yale.edu}}
        \and
        Barak Nehoran\thanks{Columbia University. Email: \texttt{b.nehoran@columbia.edu}}
        \and
        Yongshan Ding\thanks{Yale University. Email: \texttt{yongshan.ding@yale.edu}}
    }
\fi

\date{}

\begin{document}
\maketitle

\begin{abstract}
The quantum Fourier transform (QFT) is a fundamental primitive in quantum computation and quantum information. In this work, we generalize the QFT for finite groups to a QFT for finite-dimensional semisimple algebras, and give efficient quantum Fourier transforms for the partition algebra $P_n(d)$, Brauer algebra $B_n(d)$, and walled Brauer algebra $B_{r,s}(d)$. These algebras play important roles in generalized Schur-Weyl duality, statistical physics and many-body systems, and have recently found several applications in quantum algorithms. 

Unlike the group case, the Fourier transform over a semisimple algebra can be non-unitary. Nevertheless, we show that when the parameter $d$ is sufficiently large, the Fourier transform is well approximated by a unitary operator. Furthermore, we show that for each of the algebras $A$ from above, such an approximate Fourier transform can be implemented efficiently: we give a quantum algorithm with gate complexity $\poly(n,\log d,\log(1/\varepsilon))$ for approximating the Fourier transform to error $(d^{-1/2} + \varepsilon) \cdot \poly(|A|)$. Along the way, we establish several properties of the Fourier basis of semisimple algebras that may be of independent interest.
\end{abstract}
\tableofcontents
\newpage
\section{Introduction}
The quantum Fourier transform (QFT) is one of the most important building blocks in quantum computing, with broad applications in quantum algorithms, quantum complexity theory, and quantum information theory. Famous applications of the QFT include Simon's algorithm for hidden shifts~\cite{Simon1997}, the period-finding step of Shor's algorithm~\cite{Shor1994}, and Kitaev's phase estimation algorithm~\cite{Kitaev1995}. In the last three decades, these breakthroughs have inspired extensive work aiming to generalize the QFT~\cite{beals1997qft, moore2003genericquantumfouriertransforms, bacon2005quantumschurtransformi}, improve its efficiency~\cite{CleveWatrous2000, Nam_2020, kahanamokumeyer2025logdepthinplacequantumfourier}, and uncover new applications~\cite{Hallgren2002, Kuperberg2005, HarrowHassidimLloyd2009,Jordan_2025}.

In this paper, we build on this long line of work by giving efficient QFTs for semisimple algebras, which generalize finite groups. In particular, we identify a class of semisimple algebras, known as \textit{diagram algebras}~\cite{kauffman1988new,Koenig2008Panorama}, for which an $N$-dimensional Fourier transform can be implemented by a quantum algorithm using only $\polylog(N)$ quantum gates. As in the setting of finite groups, our algorithm is exponentially faster than the $O(N \polylog N)$ complexity of the best known classical algorithm~\cite{maslen2016efficientcomputationfouriertransforms}. 
\textit{To the best of our knowledge, ours is the first efficient QFT algorithm for non-group algebras.}

Our algorithm is based on the well-known \textit{separation of variables} approach~\cite{CooleyTukey1965,DiaconisRockmore1990,moore2003genericquantumfouriertransforms}, which has been highly successful in the construction of efficient classical and quantum Fourier transforms for finite groups. In the classical setting, Maslen, Rockmore, and Wolff~\cite{maslen2016efficientcomputationfouriertransforms} used this framework to obtain a classical Fourier transform for the Brauer algebra, one of the diagram algebras studied in this paper. Bringing this framework into the quantum setting, however, requires several new ideas and modifications to the underlying algorithm. The main obstacle is that, by contrast to the group algebra case, the Fourier transform of a semisimple algebra may be non-unitary. As a consequence, key steps of the algorithm can no longer be implemented directly on a quantum computer. We discuss these challenges, and how we address them, in more detail in~\cref{sec:overview}.

Just as QFTs for finite groups have been a fruitful source of advances in quantum algorithms and complexity theory~\cite{EttingerHoyerKnill1998, FriedlEtAl2002, fefferman2015powerquantumfouriersampling, bravyi2024quantumcomplexitykroneckercoefficients, Bostanci_2025}, we believe that QFTs for semisimple algebras will also give rise to a number of interesting new directions. Along these lines, we discuss several potential applications in~\cref{sec:applications}. 

Before elaborating on our main results, we give a brief overview of semisimple algebras (\cref{sec:intro_semisimple}) and the generalized Fourier transform (\cref{sec:intro_ft}), since we expect our results to be relevant to a variety of areas of quantum information and computer science.
In~\cref{sec:rep_theory_background}, we give a more detailed background tailored to the results in this paper. 
\subsection{Semisimple Algebras}
\label{sec:intro_semisimple}
In this paper, we study quantum algorithms for semisimple algebras. An \textit{algebra} is a vector space over a field $\mathbb{F}$, equipped with a bilinear multiplication operator. In this work, we consider $\mathbb{C}$-algebras, where $\mathbb{F}$ is the field of complex numbers. Typical examples of such algebras include: $\mathbb{C}^n$ with coordinate-wise addition and multiplication, the set of all complex polynomials $\mathbb{C}[x]$, and the \textit{group algebra} $\mathbb{C}[G]$, obtained by taking the $\mathbb{C}$-vector space with basis $\{\ket{g}\}_{g \in G}$ and using the group operation for multiplication. 

The central example of such an algebra in the context of representation theory (or module theory) is the \textit{matrix algebra} $M_{d}(\mathbb{C})$, the vector space of $d \times  d$ complex matrices with multiplication given by the usual matrix product. If an algebra is isomorphic to a matrix algebra, then it is a \textit{simple algebra}. An algebra is \textit{semisimple} if it is isomorphic to a direct sum of simple algebras (see~\cref{def:semisimple_algebra}).\footnote{A common equivalent definition in the literature is based on having a trivial Jacobson radical.} Semisimple algebras are a broad class of algebras for which a Fourier transform can be defined, and include the group algebras of all finite groups as a strict subset.

This paper focuses on a family of semisimple algebras known as \textit{diagram algebras}~\cite{kauffman1988new,Koenig2008Panorama}. Diagram algebras are generalizations of the symmetric group algebra \(\mathbb{C}[S_n]\), although they are typically not group algebras. To illustrate the generalization, we can represent permutations (the basis elements of $\mathbb{C}[S_n]$) using \textit{permutation diagrams}. Permutation diagrams are \(n\times 2\) grids whose bottom \(n\) vertices are paired with the top \(n\) according to the permutation, with multiplication given by stacking one diagram on top of another, and merging and removing the vertices in the internal rows: 

\begin{figure}[H]
    \centering
    \includegraphics[width=\linewidth]{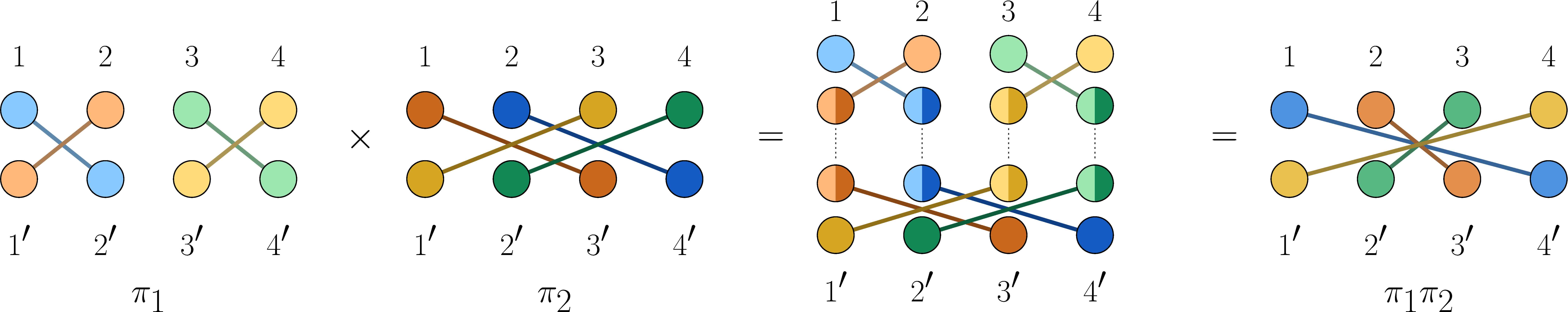}
    \caption{Diagram multiplication in the symmetric group algebra.}
    \label{fig:symmetric_group_multiplication}
\end{figure}
\noindent Diagram algebras are obtained by enlarging the class of allowed diagrams and extending this composition rule accordingly, so that connected components are merged when diagrams are concatenated (see~\cref{sec:diagram_algebras} for a rigorous definition). 
Among the most important of these diagram algebras, the \textit{partition algebra}~\cite{Martin1996StructurePartitionAlgebras}, is a diagram algebra whose basis includes all set partitions of the \(2n\) vertices (see~\Cref{fig:examples}, left, for an example of such a partition diagram).
The most common diagram algebras arise as either subalgebras or quotients of the partition algebra.
Examples include the \textit{Brauer algebra}~\cite{Brauer1937AlgebrasSemisimpleContinuousGroups}, which allows any perfect matching among the $2n$ vertices, or equivalently partitions where each component has size 2 (that is, a component may contain two bottom vertices, or two top vertices, or a top vertex and bottom vertex; see \Cref{fig:examples}, middle), and the \textit{walled Brauer algebra}~\cite{Koike1989TensorProducts, Turaev1989OperatorInvariantsTangles}, which imposes further restrictions on the Brauer algebra (see~\Cref{fig:examples}, right).
\begin{figure}[H]
    \centering
    \includegraphics[width=0.9\linewidth]{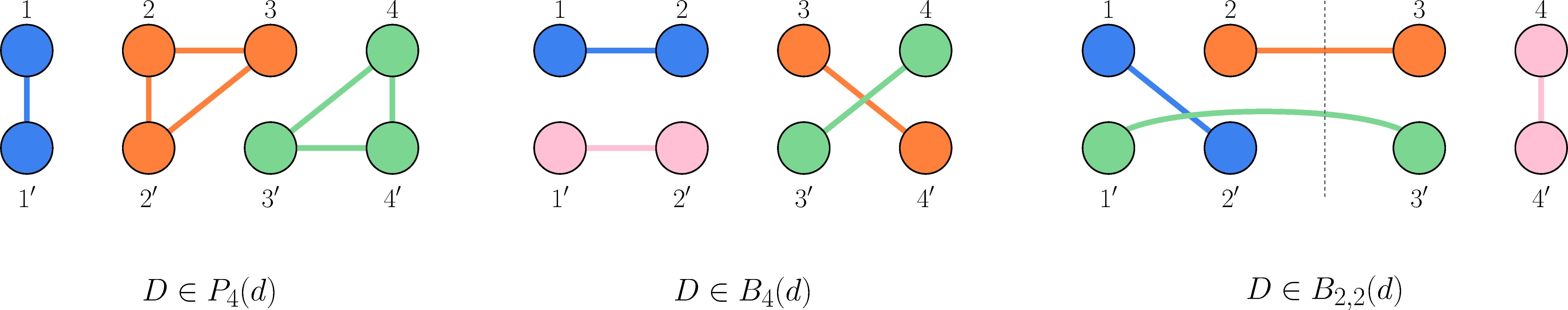}
    \caption{Example elements of the partition algebra $P_n(d)$, Brauer algebra $B_n(d)$, and the walled Brauer algebra $B_{r,s}(d)$. See~\cref{sec:diagram_algebras} for formal definitions of each of these algebras.}
    \label{fig:examples}
\end{figure}
\noindent
Each of these diagram algebras has a variety of applications across mathematical physics, many-body theory, and quantum information. The partition algebra was introduced in statistical mechanics, where it appears in the study of the Potts model, a generalization of the well-known Ising model~\cite{MartinSaleur1994,Martin2000PartitionPotts}, and has been used to analyze braiding constructions of entangling gates~\cite{PadmanabhanSuginoTrancanelli2020}. The Fourier transform on the partition algebra is also relevant in certain many-body settings involving subspaces invariant under local permutations~\cite{Barnes_2022}. The Brauer algebra appears in several quantum settings, including quantum field theory, gauge theory, and matrix models~\cite{KimuraRamgoolam2007,CanduSaleur2009,KimuraRamgoolamTurton2010}, as well as the theory of symmetric quantum states~\cite{Allerstorfer_2025}.

Besides the symmetric group algebra, the walled Brauer algebra is perhaps the most directly connected to recent advances in quantum algorithms. Representations of the walled Brauer algebra have been used to give efficient circuits for the mixed Schur transform~\cite{fei2023efficientquantumalgorithmportbased, nguyen2023mixedschurtransformefficient, grinko2023gelfandtsetlinbasispartiallytransposed}, port-based quantum teleportation~\cite{fei2023efficientquantumalgorithmportbased, grinko2024efficientquantumcircuitsportbased}, and to analyze unitarily equivariant channels~\cite{nguyen2023mixedschurtransformefficient, grinko2024linear}. The walled Brauer algebra also arises in several problems in statistical mechanics~\cite{Huber_2022, PhysRevX.13.041028}.

An important place where diagram algebras appear is in the context of tensor powers of group representations.%
\footnote{
A representation of a group assigns to each group element a (unitary) matrix in a way that preserves the group multiplication.
}
For instance, for any unitary $U \in U(d)$, its $n$-fold tensor power representation takes $n$ copies, $U^{\otimes n}$.
Its commutant, which consists of all operators that commute with $U^{\otimes n}$ for all unitaries $U$, is spanned by permutations among the $n$ registers, and is a representation of the symmetric group algebra.
Other diagram algebras arise from the commutants of other group tensor power representations: for instance the Brauer algebra arises from the tensor power of the orthogonal group, and the partition algebra appears in the commutant of the tensor power of the symmetric group (represented as permutation matrices).
These commutant relationships are summarized in \Cref{tab:schur-weyl-partition}, and form the basis of several important instances of Schur-Weyl duality, which we will return to later (see \Cref{sec:schur-transform-application}).

\begin{table}[H]
\centering
\setlength{\tabcolsep}{6pt}
\renewcommand{\arraystretch}{1.2}
\scalebox{0.96}{
\begin{tabular}{
  >{\raggedright\arraybackslash}m{1.7cm}
  >{\raggedright\arraybackslash}m{1.0cm}
  >{\centering\arraybackslash}m{2.8cm}
  >{\centering\arraybackslash}m{3.5cm}
  >{\raggedright\arraybackslash}m{4.3cm}
  >{\raggedright\arraybackslash}m{1.2cm}
}
\hline
\multicolumn{2}{c}{
    \textbf{Group}
}
& 
\textbf{Tensor space} 
& 
\textbf{Representation} 
& 
\multicolumn{2}{c}{
    \textbf{Commutant algebra} 
}
\\
\hline
Unitary
& 
$U(d)$ 
& 
$V_d^{\otimes n}$ 
& 
$U^{\otimes n}$
& 
Symmetric Group Algebra
& 
$\mathbb{C}[S_n]$ 
\\
\noalign{\vspace{6pt}}

Unitary
& 
$U(d)$ 
& 
$V_d^{\otimes r} \otimes (V_d^*)^{\otimes s}$ 
& 
\shortstack[c]{
$U^{\otimes r} \otimes \overline{U}\vphantom{U}^{\otimes s}$
\\
{\tiny $r$ copies of the unitary tensored with}
\\
{\tiny $s$ copies of its complex conjugate}
}
& 
Walled Brauer Algebra
& 
$B_{r,s}(d)$ 
\\
\noalign{\vspace{6pt}}

Orthogonal
& 
$O(d)$ 
& 
$V_d^{\otimes n}$ 
&
\shortstack[c]{
$U_{g}^{\otimes n}$
\\
{\tiny $U_g$ is the standard embedding}
\\
{\tiny of $g \in O(d)$ as a unitary over $\mathbb{C}^n$}
}
& 
Brauer Algebra
& 
$B_n(d)$ 
\\
\noalign{\vspace{6pt}}

Symmetric
& 
$S_d$ 
& 
$(V_{d-1}\oplus V_1)^{\otimes n}$ 
& 
\shortstack[c]{
$P_{\sigma}^{\otimes n}$
\\
{\tiny $P_{\sigma}$ is the permutation matrix} 
\\
{\tiny of $\sigma \in S_d$}
}
& 
Partition Algebra
& 
$P_n(d)$ 
\\
\noalign{\vspace{6pt}}

Symmetric
& 
$S_{d-1}$ 
& 
$(V_{d-2}\oplus V_1 \oplus V_1)^{\otimes n}$ 
& 
\shortstack[c]{
$P_{\sigma}^{\otimes n}$
\\
{\tiny same as above, but with one} 
\\
{\tiny of the dimensions fixed}
}
& 
Half Partition Algebra
& 
$P_{n+\frac12}(d)$ 
\\
\noalign{\vspace{6pt}}

\hline
\end{tabular}
}
\caption{Schur-Weyl dualities between group actions on tensor spaces and their commutant algebras.}
\label{tab:schur-weyl-partition}
\end{table}

\subsection{The Fourier Transform}
\label{sec:intro_ft}
The Fourier transform for a group $G$, denoted $\ft_G$, is a basis change on the group algebra $\mathbb{C}[G]$ which maximally block-diagonalizes the natural $G$-action. For example, when $G = \mathbb{Z}_N$ (as in Shor's algorithm), the Fourier basis states given by 
\begin{equation}
    \left\{\ket{\wt{k}} = \frac{1}{\sqrt{N}}\sum_{j=0}^{N-1} e^{2 \pi i jk/N} \ket{j} \right\}_{k = 0}^{N - 1}
\end{equation}
are eigenstates of the translation action $U_x\ket{j} =  \ket{x + j}$, for all $x \in \mathbb{Z}_N$. Because $\mathbb{Z}_N$ is abelian, these translation operators pairwise commute, and the Fourier basis is their simultaneous eigenbasis.

When $G$ is non-abelian, $U_g$ may not commute with $U_h$ for $g \ne h$, so no simultaneous eigenbasis exists. In this case, the Fourier basis states can be grouped into bases for minimal $G$-invariant subspaces of $\mathbb{C}[G]$, known as irreducible subspaces. Within each irreducible subspace, the $G$-action is given by an \textit{irreducible representation} (or, \textit{irrep}) $\rho$: 
\begin{figure}[H]
    \centering
    \includegraphics[width=0.6\linewidth]{figures/group_ft_block_diagonalization.jpg}
    \caption{The Fourier transform block-diagonalizes the natural $G$-action, known as the regular representation of $G$. The $\ket{\rho}$ label specifies an irreducible representation, the $\ket{i}$ register denotes a particular copy of $\rho$, and the $\ket{j}$ register is a basis vector within the irreducible subspace indexed by $(\rho, i)$.}
    \label{fig:diagonalization_ft}
\end{figure}
For certain groups, such as the symmetric group $S_n$, the structure of the irreducible representations is well understood and has many deep connections in algebra, combinatorics, and representation theory~\cite{JamesKerber1984, Sagan2001, Kleshchev2014}. The semisimple algebras considered in this paper may be viewed as natural generalizations of the symmetric group algebra (\cref{sec:diagram_algebras}), and many (though not all) properties of their irreducible representations extend in a corresponding way.

For a general semisimple algebra $A$, the Fourier transform is defined analogously, and we give a more formal treatment in~\cref{sec:rep_theory_background}. Unlike a finite group, where $U_g$ is unitary\footnote{Since $h \mapsto gh$ is a bijection for any group $G$, $U_g$ is simply a permutation on the group basis states.}, the action of $U_a$ on $A$ may not be unitary. For instance, some may be projectors, while others may be more general operators. Moreover, $\ft_A$ may itself fail to be unitary, which initially appears to pose a fundamental obstacle for implementing the corresponding QFT. To get around this obstacle we implement an \textit{approximate} QFT instead of an exact one, and give precise conditions for when such an approximate QFT exists. For more details, see~\cref{sec:qft_exposition}.   

As one of the most prominent algorithms of the 20th century, the Fourier transform has a long history of algorithmic instantiations, beginning with classical algorithms. For any $N$-dimensional semisimple algebra $A$, the Fourier transform can be computed classically in $O(N^2)$ time with standard matrix methods. The famous \textit{fast Fourier transform}~\cite{CooleyTukey1965} reduces the complexity to $O(N \log N)$ when $A$ is a cyclic group algebra. Algorithms with complexity $O(N \polylog N)$ were found for certain non-abelian groups by Beth~\cite{Beth1987GeneralizedFourier} and separately by Diaconis and Rockmore~\cite{DiaconisRockmore1990}, who developed the separation of variables approach for the non-abelian Fourier transform. Since then, several other works have built upon these fundamental results~\cite{ClausenBaum1993FFTsn, MaslenRockmore1997SOVI, umans2019fastgeneralizeddftsfinite}.

On the quantum side, an efficient QFT for cyclic group algebras has been known since the mid-1990s~\cite{Coppersmith1994ApproximateFourier, Shor1994}.\ Since an \(N\)-dimensional quantum state can be represented using \(\log N\) qubits, the quantum Fourier transform is considered efficient if it can be implemented using \(\poly(\log N)\) gates. The first efficient QFT for a non-abelian group was given by Beals for the symmetric group $S_n$~\cite{beals1997qft}, by ``quantizing'' the framework used in~\cite{DiaconisRockmore1990}. Generalizing this framework to other non-abelian groups was subsequently done in~\cite{pueschel1998fastquantumfouriertransforms, moore2003genericquantumfouriertransforms}. 

To our knowledge, the only $O(N\polylog N)$-time classical Fourier transform for semisimple algebras is the work of Maslen, Rockmore, and Wolff~\cite{maslen2016efficientcomputationfouriertransforms}. In that work, the authors similarly consider diagram algebras, namely the Brauer algebra, Birman-Murakami-Wenzl algebra, and Temperley-Lieb algebra. While we do not consider the latter two algebras in this paper, we believe our techniques could be adapted to derive efficient QFT's for these algebras as well. 

\subsection{Our Contributions}
\label{sec:overview}
The main goal of this work is to efficiently implement the QFT for various diagram algebras. We summarize the main result in~\cref{thm:main_intro}, but first explain some of the challenges that arise along the way. 

As an operator, the QFT maps the standard basis%
\footnote{For diagram algebras, the natural basis, $\mathcal{B}(A)$, of the algebra is simply the set of all valid diagrams, much like how the standard basis for a group algebra is the set of all group elements. For the partition algebra $P_n(d)$, technical reasons will require us to rescale the diagrams in order to implement the Fourier transform. For more details, see~\cref{sec:diagram_algebra_conventions}.} 
$\{\ket{a}\}_{a \in \mathcal{B}(A)}$ to the Fourier basis% 
\footnote{
    We write the Fourier basis states here as $\ket{E_{ij}^\rho}$ here for simplicity. In fact, they will be encoded as the tuple $\ket{\rho, i,j}$, but we use these interchangeably in the introduction.
}
$\{\ket{E_{ij}^\rho}\}_{\rho, i, j}$:
\begin{equation}
    \ft_A\ket{a} = \sum_{\rho \in \wh{A}} \; \sum_{i, j = 1}^{d_\rho} \braket{{E_{ij}^\rho}, a}_0 \ket{E_{ij}^\rho}
\end{equation}
Here, $\wh{A}$ denotes a complete set of irreducible representations for $A$, and $\braket{\cdot, \cdot}_0$ denotes the usual inner product on the Hilbert space $\mathbb{C}^{\mathcal{B}(A)}$ (to distinguish it from other bilinear forms we will use later). 

Several obstacles arise when trying to implement $\ft_A$ on a quantum computer. Some of these obstacles are common in the QFT literature, while others are specific to the case of semisimple algebras and require new techniques to overcome.
\begin{enumerate}
    \item We must take care to give proper normalization to the Fourier basis. When we encode the elements of the algebra as quantum states, we can encode the algebra's standard basis as the standard basis of the quantum computer's Hilbert space. However, in such an encoding, the natural definition of the algebra's Fourier basis may not correspond to normalized quantum states, i.e. $\abs{\braket{E_{ij}^\rho|E_{ij}^\rho}}^2 \ne 1$. So $\ft_A$ cannot be unitary. Of course, this arises even when $A$ is a group algebra, and can be handled by giving the Fourier basis of the abstract algebra a normalization that depends on the encoding. 
    \item Even once we properly normalize them, the quantum states corresponding to the algebra's Fourier basis states are not pairwise orthogonal, i.e. $\braket{E_{ij}^\rho|E_{kl}^\sigma} \ne 0$. This is a more fundamental issue, and rules out even inefficient implementations of the QFT for a generic semisimple algebra. Nevertheless, we show that the diagram algebras considered in~\cref{sec:intro_semisimple} have Fourier states which are \textit{approximately orthogonal}, in the limit of large $d$:
    \begin{theorem}[Corollary~\ref{cor:d_nice_algebras}, Informal]
        \label{thm:intro_approx_orthogonality}
        Let $A \in \{P_n(d), B_n(d), B_{r,s}(d)\}$. For distinct Fourier basis states $E_{ij}^\rho, E_{kl}^\sigma \in A$, 
        \begin{equation}
            \abs{\braket{E_{ij}^\rho|E_{kl}^\sigma}} \le \poly(|A|) \cdot d^{-1/2}
        \end{equation}
    \end{theorem}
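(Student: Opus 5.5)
The plan is to compare the Hilbert-space inner product $\braket{\cdot,\cdot}_0$ with the trace form of the algebra, since the Fourier basis is exactly orthogonal for the latter. For a semisimple algebra $A$ with basis $\mathcal{B}(A)$, define the bilinear form $\braket{x,y}_\tau = \tau(x^* y)$. Here $\tau$ is a suitably normalized trace; the natural choice is the trace of the regular representation, or the Markov/diagram trace $\tau(a) = d^{\#\text{loops in closure of } a - n}$. With the conjugate-linear involution given by flipping diagrams, the matrix units $E^\rho_{ij}$ of a Wedderburn decomposition satisfy
\begin{equation}
\tau\bigl((E^\rho_{ij})^* E^\sigma_{kl}\bigr) = \delta_{\rho\sigma}\delta_{ik}\delta_{jl}\, c_\rho
\end{equation}
for some positive weights $c_\rho$. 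This holds provided the matrix units are chosen so that $(E^\rho_{ij})^* = E^\rho_{ji}$, i.e.\ the representations are unitary with respect to the involution. After rescaling by $c_\rho^{-1/2}$ (the normalization step), the Fourier basis is orthonormal for $\braket{\cdot,\cdot}_\tau$.

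Next I would write $\braket{x,y}_\tau = \braket{x, G y}_0$, where $G$ is the Gram matrix $G_{a,b} = \tau(a^* b)$ over diagrams $a,b \in \mathcal{B}(A)$. For diagram algebras, the diagram $a^*b$ closes up into loops and components. With the rescaling of basis diagrams (a factor $d^{-(\text{stuff})/2}$ per diagram, as in the paper's conventions, needed for $P_n(d)$), I expect
\begin{equation}
G = I + \Delta, \qquad |\Delta_{a,b}| \le \poly \cdot d^{-1/2} \quad \text{for } a \ne b.
\end{equation}
The reason is that the diagonal entries $\tau(a^*a)$ achieve the maximal number of closed components, while any off-diagonal pair loses at least one component and therefore one power of $d^{1/2}$ after normalization. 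This combinatorial count is the key lemma: for perfect matchings (Brauer and walled Brauer), the pairing $a^* b$ forms a set of cycles whose number is maximized exactly when $a=b$. For partition diagrams, the analogous statement is for the join of set partitions, and there the rescaling by block counts is what makes the diagonal equal to $1$.

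Given this, $\|\Delta\|_{\mathrm{op}} \le |A| \cdot \poly \cdot d^{-1/2} = \poly(|A|)\, d^{-1/2}$. The normalized Fourier vectors $v, w$ satisfy $\braket{v, (I+\Delta) w}_0 = 0$ for distinct labels and $\braket{v,(I+\Delta)v}_0 = 1$. Hence $\|v\|_0^2 \le 1/(1-\|\Delta\|)$, which is bounded once $d$ exceeds a polynomial in $|A|$; for smaller $d$ the bound is trivial after adjusting constants. It follows that
\begin{equation}
|\braket{v,w}_0| = |\braket{v,\Delta w}_0| \le \|\Delta\|\,\|v\|\,\|w\| \le \poly(|A|)\, d^{-1/2}.
\end{equation}

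The main obstacle is the component-counting lemma, especially for $P_n(d)$. There one must choose the rescaling so that the diagonal entries are exactly $1$ and every off-diagonal entry is suppressed by at least $d^{-1/2}$. This amounts to an inequality of the form
\begin{equation}
2\,\#\text{blocks}(a \vee b) \le \#\text{blocks}(a) + \#\text{blocks}(b) - 1 \quad \text{for } a \ne b,
\end{equation}
with the join taken in the appropriate closed-up sense. I would attempt it first by a submodularity or rank-function argument on the partition lattice. A secondary issue is the positivity of $\tau$ for large $d$, which is needed so that $c_\rho > 0$ and the unitary choice of matrix units is possible; for $d$ large this follows from $G$ being close to $I$.
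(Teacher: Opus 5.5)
Your argument is essentially the paper's. The Markov trace you use is exactly $\tr\mathbf{S}(\cdot)/d^n$ for the Schur representation, so your weights are $c_\rho = m_\rho/d^n$. Your off-diagonal suppression is the paper's inequality $2\,\cc(D\vee E)\le \cc(D)+\cc(E)-1$ for $D\ne E$. Your closing Cauchy--Schwarz step plays the role of its Gram-matrix comparison lemma.

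Note that only the Markov trace works here: the regular trace's Gram matrix is not close to a multiple of $I$, already for $P_1(d)$. The step you flag as the main obstacle is easy and needs monotonicity rather than submodularity, since submodularity bounds $\cc(a\vee b)$ from below. The join $a\vee b$ coarsens both $a$ and $b$, so $\cc(a\vee b)\le\min(\cc(a),\cc(b))$, and equality on both sides would force $a=b$.
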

    \cref{thm:intro_approx_orthogonality} becomes meaningful when $d \gg \poly(|A|)$. Since $|A| = 2^{O(n \log n)}$ for the algebras we consider, we typically assume $d$ to be a sufficiently large exponential in $n$. While this is somewhat restrictive, the $\poly(|A|)$ factor is tight (in particular, it is at least $\Omega(|A|)$), so one could not hope to substantially improve the bound without modifying the definition of the QFT.
    \item Even when an approximate QFT exists, the standard approach for implementing the QFT on a group algebra does not extend cleanly to semisimple algebras. This standard approach is known as the \textit{separation of variables} approach~\cite{CooleyTukey1965,DiaconisRockmore1990,moore2003genericquantumfouriertransforms}, and allows one to reduce the QFT for a group $G$ to a QFT for a subgroup $H$. We outline the full approach at the beginning of~\cref{sec:efficient_qft_for_algebras}, and now briefly summarize a few of the key modifications: 
    \begin{enumerate}
        \item When reducing the QFT for a group $G$ to the QFT for a subgroup $H$~\cite{beals1997qft, moore2003genericquantumfouriertransforms}, a group element $g \in G$ is factored into $g = w\,h$, with a left transversal $w$ and an element $h \in H$. For example, when $G = S_n$, and $H = S_{n-1}$, the transversal $w$ is a transposition of the form $(i\;n)$. In general, diagram algebras do not admit such a factorization. Instead, for a basis diagram $a$ in the superalgebra $A$, we need both a left transversal $w_1$ and a right transversal $w_2$, such that $a = w_1 \, b \, w_2$ for a basis diagram $b$ in the subalgebra $B$. We give such a factorization for each of the diagram algebras we consider in~\Cref{sec:factor}.
        \item
        Moreover, this factorization is inherently not unique: there will exist multiple choices of $(w_1, w_2)$ that lead to a valid factorization (for instance, two different transpositions may act the same way on a basis diagram). This poses challenges when adapting Beals' algorithm~\cite{beals1997qft}, which implicitly assumes the existence of only a single factorization.%
        \footnote{
            Beals' algorithm has a step that guesses a transversal, and determines if it is the correct one by checking if removing the transversal brings us into the subalgebra.
            When there are multiple possible valid factorizations, the algorithm may trigger on the wrong transversal.
        }
        We therefore introduce the notion of a \textit{last possible factorization} (\cref{def:ordered_factor}): We give an ordering on the valid factorizations, and when factoring a diagram, we always choose the last possible valid factorization with respect to this ordering. We choose the ordering carefully such that all steps of the algorithm (see \cref{sec:efficient_qft_for_algebras}) can be implemented efficiently.  
        \item When promoting a Fourier state of a subalgebra to a Fourier state of the full algebra, we need to compute the ratio of the norms of two Fourier states, i.e.\ $||E_{ij}^\rho||_2/||E_{kl}^\sigma||_2$. To efficiently compute this ratio, we prove that the Fourier basis states have norms which are approximately proportional to their \textit{Schur multiplicities}\footnote{By Schur-Weyl duality, these are equivalent to the irrep dimensions for various groups in~\cref{tab:schur-weyl-partition}.}(\cref{cor:norm_of_fourier_states}), which are derived from the Schur representation (see \cref{sec:schur_rep}) of the associated diagram algebra. Given the significance of Schur--Weyl duality for these algebras (\cref{sec:schur-transform-application}), we find this suggestive of a path towards promoting implementations of Fourier transforms for semisimple algebras into their corresponding Schur transforms, possibly similar to the method used in~\cite{Krovi_2019, burchardt2025highdimensionalquantumschurtransforms} for the symmetric group.
        \item When postprocessing the recursive call, we need to apply controlled applications of the representation $\rho$ to both the $\ket{i}$ and $\ket{j}$ registers of the Fourier state. For groups, these irrep matrices are always unitary.\footnote{In the orthogonal form (see~\cref{sec:main_body_orthogonal_form}).} However, for semisimple algebras, the irrep matrices can be highly non-unitary. For example, irrep matrices of the partition algebra include non-trivial orthogonal projectors, and even matrices that are all zero, which cannot be even approximately implemented by unitary operators. To avoid this issue, we develop several alternate embeddings (\cref{sec:apply_irreps}), which combine the embedding step with the application of a non-invertible algebra element. We show that each of these alternate embeddings can be implemented efficiently and unitarily. 
        \item In addition to bounding the error at each step of the algorithm, we must also bound the error of the ``ideal'' unitary algorithm from the non-unitary true Fourier transform. The coefficients of the Fourier basis states can be expressed in terms of the irreps applied to the \textit{dual basis} of the algebra (see~\cref{def:dual_basis} for a discussion of the dual basis). We show that approximate orthogonality of the Fourier basis allows us to approximately replace terms involving the dual basis elements with corresponding terms written in the standard basis. This substitution also simplifies the analysis of the algorithm (see~\cref{sec:efficient_qft_for_algebras} for more details).
    \end{enumerate}
\end{enumerate}
\vspace{1.7cm}
After addressing all of these obstacles, we arrive at the main result of the paper:
\begin{theorem}[Theorem \ref{thm:qft_is_efficient_to_implement}, Informal]
    \label{thm:main_intro}
    Let $A \in \{B_n(d), B_{r,s}(d), P_n(d)\}$. The Fourier transform $\ft_A$ can be implemented on a quantum computer using 
    \begin{equation}
        \poly(n, \log d, \log(1/\varepsilon))
    \end{equation}
    gates, up to operator norm error 
    \begin{equation}
        O\!\left(\poly(|A|)\cdot \bigl(d^{-1/2} + \varepsilon\bigr)\right)
    \end{equation}
\end{theorem}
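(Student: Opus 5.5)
We will prove \cref{thm:main_intro} by combining an error decomposition with a recursive separation-of-variables circuit. The decomposition introduces an idealized unitary $\wt{\ft}_A$ in which the dual-basis coefficients $\braket{E^\rho_{ij},a}_0$ are replaced by their standard-basis surrogates. By \cref{thm:intro_approx_orthogonality}, the normalized Fourier states have pairwise overlaps at most $\poly(|A|)\,d^{-1/2}$. Their Gram matrix $G$ therefore satisfies $\|G-I\|\le \poly(|A|)\,d^{-1/2}$, and the polar factor $U=\ft_A(\ft_A^\dagger\ft_A)^{-1/2}$ obeys $\|\ft_A-U\|\le \poly(|A|)\,d^{-1/2}$. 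We then show that the idealized unitary $\wt{\ft}_A$ which our circuit targets is within the same distance of $U$: each dual-basis term differs from its standard-basis surrogate by an amount controlled by $G-I$. This accounts for the $d^{-1/2}$ part of the error. The remainder of the proof implements $\wt{\ft}_A$ to error $\poly(|A|)\,\varepsilon$.

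The implementation recurses along the tower $A_n\supset A_{n-1}\supset\cdots$ (or $A_{n-1/2}$ for the partition algebra), following Beals and Moore--Rockmore--Russell. On input $\ket{a}$ we perform the following steps.
\begin{enumerate}
  \item Compute the last possible factorization $a=w_1\,b\,w_2$ of \cref{def:ordered_factor}, with $b\in\mathcal{B}(B)$ and $w_1,w_2$ drawn from polynomially many transversals. The ordering is chosen so that the factorization is computable reversibly in $\poly(n)$ gates, and since it is canonical, the map $a\mapsto(w_1,b,w_2)$ is injective onto its image.
  \item Recursively apply $\ft_B$ to the $\ket{b}$ register, obtaining $\ket{\sigma,k,l}$.
  \item Embed the $B$-irrep $\sigma$ into the $A$-irreps $\rho\supset\sigma$ using the branching rule. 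The amplitudes of this step are the norm ratios $\|E^\rho\|/\|E^\sigma\|$, which by \cref{cor:norm_of_fourier_states} are approximately ratios of Schur multiplicities. These multiplicities are given by hook-length- or Weyl-type product formulas, so they can be computed classically in $\poly(n,\log d)$ time and loaded by standard rotation-preparation to precision $\varepsilon$.
  \item Apply $\rho(w_1)$ to the $i$ register and $\rho(w_2)^{T}$ to the $j$ register, controlled on $w_1,w_2$, and uncompute the transversal labels. The transversals are products of $O(n)$ generators, and in Gelfand--Tsetlin / seminormal form each generator acts on at most two or three adjacent basis vectors, so each controlled application costs $\poly(n,\log d,\log 1/\varepsilon)$. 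Where a generator is non-invertible (a projector, or zero), we use the alternate embeddings of \cref{sec:apply_irreps}, which fuse the embedding of step 3 with the singular generator into a single isometry on the relevant subspace.
\end{enumerate}
The recursion has depth $O(n)$, giving a total gate count of $\poly(n,\log d,\log 1/\varepsilon)$.

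For correctness, we verify on a basis diagram $a$ that the composition of these steps reproduces the idealized coefficients $\rho(w_1)\,\sigma(b)\,\rho(w_2)$ restricted along the branching. The required identity is the dual-basis-free expression for the coefficients that motivated $\wt{\ft}_A$ in the first place.

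Errors accumulate as follows. Each level contributes the $\varepsilon$-precision of the rotations and multiplicity arithmetic, plus the deviation of the actual norms from the multiplicity ratios in \cref{cor:norm_of_fourier_states}, which is $\poly(|A|)\,d^{-1/2}$. Summing over $O(n)$ levels by the triangle inequality for operator norms gives the stated total $O(\poly(|A|)(d^{-1/2}+\varepsilon))$.

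The main obstacle is step 4 combined with the unitarity argument. The alternate embeddings must be genuinely unitary on the whole space, including inputs $\ket{w_1,\sigma,k,l,w_2}$ that do not come from any diagram, and the non-uniqueness of factorizations must not produce interference. We first check that the last-possible factorization makes the image set exactly the support of a partial isometry, and then extend it to a unitary arbitrarily off that support. If this extension fails to be compatible across levels, we fall back to bounding the off-support weight by the same $\poly(|A|)\,d^{-1/2}$ Gram-matrix estimate.
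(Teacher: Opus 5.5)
You have the paper's architecture (factor, recurse, embed, apply irreps, then compare $\wt{\ft_A}$ with $\ft_A$ via near-orthonormality). But the step you call ``uncompute the transversal labels'' is the core of quantum separation of variables, and it cannot be done as you state it. After the controlled irreps act, $\ket{w_1}\ket{w_2}$ is no longer a function of anything left in the circuit: $a$ has been consumed, and $\mathsf S$ holds $\wt{\ft_A}\ket{a}$. The labels can therefore only be erased by undoing a candidate transversal and testing whether the result lands back in the Fourier space of $B$.

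The paper does exactly this in \cref{alg:accumulate}. It loops over all transversals in the fixed order. For each guess $(w_1',w_2')$ it does three things. First, if the labels already read $\ket{\perp}$, it applies the inverse irreps and $\wt{\mathbf U}^\dagger$. Second, conditioned on $\mathsf S$ holding a $B$-Fourier state (detected from the Bratteli path length), it swaps $\ket{\perp}\ket{\perp}$ with $\ket{w_1'}\ket{w_2'}$. Third, if the labels read $\ket{\perp}$, it re-applies $\wt{\mathbf U}$ and the irreps.

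Correctness rests on two facts that your proposal lacks. Guesses before the true transversal do nothing, because the controls are not $\perp$. Guesses $(w_3',w_4')$ after it act as the identity up to $O(\poly(|A|)\,\delta)$. The reason is that the last-possible property gives $w_3'D_cw_4'\neq D_a$ for all $D_c\in\mathcal D(B)$. Near-orthonormality of $\{\wt{\ft_A}\ket{D}\}$ (\cref{eq:new_gram_matrix}) then makes $\wt{\ft_A}\ket{D_a}$ nearly orthogonal to the forward image of $\operatorname{im}\wt{\ft_B}$ for that guess, so the membership test does not fire. Your fallback of extending a partial isometry ``arbitrarily off its support'' gives only the existence of some unitary. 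It gives no $\poly(n,\log d,\log(1/\varepsilon))$-size circuit, and that circuit is the whole content of the theorem.

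You also misidentify the purpose of the ordering in \cref{def:factorings_for_our_algebras}. Any canonical factorization is reversibly computable. The ordering matters because every transversal containing $e_{n-1}$, $f_r$, or $p_n$ is placed below all permutation (or permutation-and-bridge) transversals. The last possible factorization therefore uses a singular generator only when $\pn(D_a)=0$, which forces $\pn(D_b)=0$.

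This is essential. The fused maps $\wt{\mathbf F}$ of \cref{lem:brauer_extension}, \cref{lem:walled_brauer_extension_rule}, and \cref{lem:efficient_add_a_point} are valid only on that sector. There, \cref{thm:zero_irrep_matrix} confines $\wt{\ft_B}\ket{D_b}$ to the irrep $\emptyset$, and the path extension is forced. Note that $\pn(De_{n-1})=\pn(D)$ for $D\in B_{n-2}(d)$, so without the ordering these maps would be applied to propagating inputs, where they are wrong.

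The bridge cases (\textbf{Partition 2}, \textbf{Partition 3}), where $D_b$ may propagate, need a separate construction: $\wt{\mathbf G}=\mathbf E\,\wt{\mathbf H}$ through $P_{n-\frac12}(d)$ (\cref{lem:efficient_add_a_bridge}). Your phrase about fusing the singular generator into an isometry does not supply it.

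Finally, the swap matrices must be taken in the orthogonal form of \cref{sec:main_body_orthogonal_form}, since $\rho(s_i)$ is not unitary in the seminormal form. They also mix $O(\sqrt n)$ paths (Brauer) or $O(n)$ paths (partition), not two or three, though this only changes the polynomial.
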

\noindent In particular, the dependence of the gate count on the number of qubits is $\wt{O}(n^{8})$ for the Brauer and walled Brauer algebras, and $\wt{O}(n^{8.5})$ for the partition algebra. We believe that these bounds could be further improved, leaving the optimization for future work.

Finally, as part of our analysis, we prove that a certain invariant is preserved by all of the Fourier transforms above. We choose to highlight this result since it may be of broader interest.\footnote{For example, the invariant likely holds for the Schur transform as well.}
\begin{theorem}[Theorem~\ref{thm:fourier_concentration_on_propagating_number}, Informal]
    \label{thm:intro_invariant}
    Given a diagram $D$, define the propagating number $\pn(D)$ to be the number of connected components of $D$ which intersect both the top and bottom rows. For any $A \in \{P_n(d), B_n(d), B_{r,s}(d)\}$, $\ft_A\ket{D}$ is almost entirely concentrated on Fourier states $\ket{\rho, i, j}$ where $\rho$ is a $\pn(D)$-box Young diagram. 
\end{theorem}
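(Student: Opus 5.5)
The plan is to use the standard filtration of each diagram algebra by propagating number. For $k \le n$, let $J_k \subseteq A$ be the span of basis diagrams $D$ with $\pn(D) \le k$. Stacking diagrams cannot increase the number of propagating components, so $\pn(D_1 D_2) \le \min(\pn(D_1),\pn(D_2))$. Hence each $J_k$ is a two-sided ideal, and we obtain a chain $J_0 \subseteq J_1 \subseteq \dots \subseteq J_n = A$.

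Since $A$ is semisimple, every two-sided ideal is a sum of Wedderburn blocks. So $J_k = \bigoplus_{\rho \in \wh{A}_{\le k}} M_{d_\rho}$ for some set of irreps $\wh{A}_{\le k}$. From the cellular and tower structure of these algebras, obtained via the basic-construction recursion used earlier for the Bratteli diagram, I would identify $\wh{A}_{\le k} \setminus \wh{A}_{\le k-1}$ with the irreps labelled by $k$-box partitions: Young diagrams $\lambda \vdash k$ for $B_n$ and $P_n$, and the analogous bipartition sizes for $B_{r,s}$.

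Consequently, every $E^\rho_{ij}$ with $|\rho| = k$ lies in $J_k$ and is a combination of diagrams of propagating number at most $k$. In the opposite direction, $E^\rho_{ij}$ with $|\rho| \le k-1$ lies in $J_{k-1}$.

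The statement concerns the coefficients $\braket{E^\rho_{ij}, D}_0$, i.e.\ the $0$-inner product rather than the algebra structure. This is where approximate orthogonality enters. Write $D$ in the Fourier basis: $D = \sum_{\rho,i,j} c^\rho_{ij} E^\rho_{ij}$, with $c^\rho_{ij}$ given by the dual-basis coefficient formula, essentially $\tr(\rho(D)\,\cdot)$ evaluated at $(j,i)$. Two facts then constrain the support.

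\textbf{First}, $\rho(D) = 0$ whenever $|\rho| > \pn(D)$. This holds because $D \in J_{\pn(D)}$ acts by zero on all irreps outside $\wh{A}_{\le \pn(D)}$. So the exact expansion of $D$ is supported on $|\rho| \le \pn(D)$.

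\textbf{Second}, the components with $|\rho| < \pn(D)$ should be small for large $d$. To see this, pair both sides with $E^\sigma_{kl}$ under $\braket{\cdot,\cdot}_0$. The exact dual-basis formula gives coefficients that approximately equal the $0$-inner products, by the replacement argument of the approximate-orthogonality corollary. Using Corollary~\ref{cor:d_nice_algebras}, which gives the bound $\poly(|A|)d^{-1/2}$ on off-diagonal Gram entries, together with the norm estimates of Corollary~\ref{cor:norm_of_fourier_states}, the vector $\ft_A\ket{D}$ is within $\poly(|A|)d^{-1/2}$ of the vector of exact coefficients $c^\rho_{ij}$ (suitably normalized).

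The main obstacle is showing that the mass on $|\rho| < \pn(D)$ is small. This is a genuine $d$-dependent statement: $D$ itself is not in $J_{\pn(D)-1}$, but its projection onto those blocks need not vanish exactly. My first approach is to show the $0$-inner product $\braket{D, x}_0$ is tiny, relative to norms, for $x \in J_{k-1}$ when $\pn(D) = k$.

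I would argue this via the trace form. The coefficient $\braket{D, E^\sigma}_0$ relates to $\tr_A(D^* E^\sigma)$, and the regular trace of a product of diagrams is $d^{\#\text{loops}}$. The norm scale of a Fourier element in block $\sigma$ grows with the Schur multiplicity, roughly $d^{|\sigma|}$. A diagram with more propagating strands produces fewer closed loops when paired against lower-layer idempotents, so the overlap is smaller by a factor of at least $d^{-1/2}$ per layer. In practice I would obtain this directly from the concentration of the Gram matrix diagonal: $\|E^\sigma\|$ is about $\dim$ of the Schur irrep, and the bound on $\braket{D, E^\sigma}$ follows by counting the powers of $d$ in the entries of $\sigma(D^*)$ through the Schur representation.

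Combining the two facts, the squared weight of $\ft_A\ket{D}$ outside $\{|\rho| = \pn(D)\}$ is $O(\poly(|A|)d^{-1/2})$, which is the statement.
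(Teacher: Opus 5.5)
Your ``Second'' step carries the entire content of the statement, and it is not established; the route you sketch for it does not work. The $0$-inner product is not a trace form, and $\tr_A(D^*E^\sigma)$ cannot be evaluated by counting loops. If $D^*$ is the dual-basis element, it is a dense combination of diagrams rather than a single diagram. If it is the involution, you are computing a different bilinear form from $\braket{\cdot,\cdot}_0$. The ``$d^{-1/2}$ per layer'' gain is asserted, not derived. The underlying error is the claim that this is ``a genuine $d$-dependent statement.'' It conflates two different things: the algebraic block components $c^\rho_{ij}=\rho(D)_{ij}$ of $D$, which indeed need not vanish for $|\rho|<\pn(D)$, and the coefficients $\braket{E^\rho_{ij},D}_0$ that actually define $\ft_A\ket{D}$.

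The missing observation is already implicit in your third paragraph. If $|\sigma|<\pn(D)$, then $E^\sigma_{kl}\in J_{\pn(D)-1}$ is a linear combination of basis diagrams of propagating number $<\pn(D)$, all distinct from $D$. Distinct computational basis elements are orthogonal, so $\braket{E^\sigma_{kl},D}_0=0$ \emph{exactly}. Hence $\ft_A\ket{D}$ is exactly supported on $|\rho|\ge\pn(D)$, while your first fact makes $\wt{\ft_A}\ket{D}$ exactly supported on $|\rho|\le\pn(D)$. The only $d$-dependence is $\|\ft_A-\wt{\ft_A}\|_\infty=O(\delta|A|^{3/2})$ from \cref{thm:approx_qft}, which gives $\|\sum_{i<r}\Pi_i\wt{\ft_A}\ket{a}\|\le\|\sum_{i<r}\Pi_i\ft_A\ket{a}\|+\|\wt{\ft_A}-\ft_A\|_\infty=O(\delta|A|^{3/2})$. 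The same bound holds symmetrically for $\ft_A\ket{a}$ above level $r$. Your identification of $J_k$ with the blocks $|\rho|\le k$ also needs no cellular theory: \cref{thm:zero_irrep_matrix} gives one inclusion, and the count in \cref{thm:prop_number_k_equals_irrep_dimension_at_level_k} gives equality of dimensions.

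Repaired this way, your argument is a genuinely different and shorter route than the paper's. The paper proves by strong induction on $k$ that $\{\wt{\ft_A}\ket{a}:\pn(D_a)\le k\}$ spans $\bigoplus_{i\le k}S_i$. It then bounds the lower component of $\wt{\ft_A}\ket{b}$ through its small overlaps with those vectors, using the near-orthonormality in \cref{eq:new_gram_matrix}. Your version uses exact orthogonality in the computational basis and pays for the approximation only once.
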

Of course, \cref{thm:intro_invariant} holds exactly for the symmetric group algebra, which is spanned by diagrams with propagating number $n$, and has irreps indexed by $n$-box Young diagrams (see~\cref{sec:young_diagrams}). To prove~\cref{thm:intro_invariant}, we combine the approximate orthogonality of Fourier states with established tools in the representation theory of diagram algebras~\cite{halverson2004partitionalgebras}.

\subsection{Applications and Further Directions}
\label{sec:applications}
In this section, we discuss several potential applications of our efficient QFT over semisimple algebras, as well as a number of follow-up directions. \Cref{sec:schur-transform-application,sec:hsp_algebras,sec:computing_multiplicities} describe possible applications, while \cref{sec:future_directions} outlines other questions to explore in future work.
\subsubsection{Generalized Efficient Quantum Schur Transforms}
\label{sec:schur-transform-application}
The standard quantum Fourier transform considered in this paper decomposes the regular representation of an algebra into a direct sum of irreducible representations. This Fourier transform for the regular representation can be generalized to one which similarly block-diagonalizes an arbitrary representation $\mathbf{R}$, as the basis change which carries out the following isomorphism:  
\begin{equation}
    \label{eq:general_fourier_isom}
    \mathbf{R}(a) \cong \bigoplus_{\rho \in \wh{A}} \rho(a) \otimes I_{m_\rho}
\end{equation}
In~\cref{eq:general_fourier_isom}, $m_\rho$ denotes the \textit{multiplicity} of the irrep $\rho$, or number of copies of $\rho$ appearing in the decomposition of $\mathbf{R}$. 

\paragraph{Schur-Weyl Duality.}
An example of a representation besides the regular representation is the \textit{permutation representation} of $\mathbb{C}[S_n]$, which permutes $n$ distinct $d$-dimensional registers according to the permutation $\pi$:
\begin{equation}
    \mathbf{R}_{\text{perm}}(\pi)\ket{x_1, x_2, \dots, x_n} = \ket{x_{\pi^{-1
    }(1)}, x_{\pi^{-1}(2)}, \dots x_{\pi^{-1}(n)}}
\end{equation}
Unlike the regular representation of $\mathbb{C}[S_n]$, which maps $a$ to a matrix acting on $\mathbb{C}^{|S_n|}$, the images of $\mathbf{R}_{\text{perm}}$ are matrices acting on $(\mathbb{C}^{d})^{\otimes n}$. Another representation which acts on $(\mathbb{C}^{d})^{\otimes n}$ is the $n$-fold tensor power representation of $U(d)$, which acts as $U^{\otimes n}$:
\begin{equation}
    \mathbf{R}_{\text{tp}}(U)\ket{x_1, x_2, \dots, x_n} = U\ket{x_1} \otimes U\ket{x_2} \otimes \dots\otimes U\ket{x_n}
\end{equation}
In fact, the permutation representation and tensor power representation are \textit{mutual commutants}.%
\footnote{
    The commutant of a set of operators $A$ on a Hilbert space is the set of operators that commute with every operator in $A$. We say that two sets of operators (in this case the two representations of the two algebras) are mutual commutants if each is the commutant of the other.
}

Besides the unitary and symmetric group algebras, there are other such commutant pairs acting on $(\mathbb{C}^{d})^{\otimes n}$, several of which are listed in~\Cref{tab:schur-weyl-partition}. Generally (though not always), one algebra is a group algebra $\mathbb{C}[G]$, with a tensor power representation acting in parallel on the $n$ different tensored copies of $\mathbb{C}^{d}$, while the other is a diagram algebra $A$ with a representation acting across the $n$ copies (in a way that generalizes the permutation representation above).

When two such representations are mutual commutants, Schur-Weyl duality~\cite{schur1927rationalen, weyl1939classical} says that there is a correspondence between the irreps of the two algebras, and that, furthermore, the space $(\mathbb{C}^{d})^{\otimes n}$ can be decomposed as a direct sum over the paired irreps of the two algebras:
\begin{align}
    \label{eq:schur-isomorphism}
    (\mathbb{C}^{d})^{\otimes n} 
    \cong 
    \bigoplus_{\rho \in \wh{A}} 
    W_{\rho}
    \otimes 
    V_{\rho} 
\end{align}
Where $W_{\rho}$ is an irrep space of the group algebra $\mathbb{C}[G]$, and $V_{\rho}$ is an irrep space of the diagram algebra $A$.
When $G = U(d)$ and $A = \mathbb{C}[S_n]$, the isometry which carries out the isomorphism in~\Cref{eq:schur-isomorphism} is known as the \textit{Schur transform}~\cite{bacon2005quantumschurtransformi}. 
For each of the other Schur-Weyl dual pairs (see~\Cref{tab:schur-weyl-partition}), one can also define a corresponding \textit{generalized Schur transform}.

\paragraph{Generalized Schur Transforms.}
There are numerous applications of the Schur transform across quantum information and computation, including state tomography and quantum hypothesis testing~\cite{Keyl_2001, Hayashi_2001,Christandl_2006,odonnell2015quantumspectrumtesting, pelecanos2025mixedstatetomographyreduces}, quantum communication~\cite{Bartlett_2003, ChiribellaDArianoPerinottiSacchi2004ReferenceFrame}, compression of quantum states~\cite{Jozsa_1998, Yang_2016}, port-based teleportation~\cite{grinko2024efficientquantumcircuitsportbased}, and was recently used to generalize Zhandry's compressed oracle technique~\cite{zhandry2019record, grinko2025quantumsimulationrandomunitaries,foxman2026perfectlyrecordingqueries}. For more applications, we refer readers to Harrow's thesis~\cite{harrow2005applicationscoherentclassicalcommunication}.

An efficient implementation of the Schur transform was first given in the work of Bacon, Chuang, and Harrow~\cite{bacon2005quantumschurtransformi}. The ``BCH'' algorithm constructs the Schur transform as a cascade of $n - 1$ local \textit{Clebsch-Gordan} transforms on the unitary group $U(d)$. While an efficient Clebsch-Gordan transform is not known for general irreps,~\cite{bacon2005quantumschurtransformi} shows that it is efficient when one of the irreps involved is the fundamental irrep, which suffices to implement the Schur transform. 

On the other hand, the algorithm of Krovi~\cite{Krovi_2019, burchardt2025highdimensionalquantumschurtransforms} uses the representation theory of $\mathbb{C}[S_n]$. On an input $\ket{x_1x_2\dots x_n}$, Krovi's algorithm first preprocesses the input into a permutation $\ket{\pi}$ and a ``type'' $\ket{T}$, followed by applying a QFT for $S_n$ on the $\ket{\pi}$ register, obtaining superpositions of states of the form $\ket{\lambda, i, j, T}$, where $\ket{i}$ and $\ket{j}$ are basis vectors of irrep spaces for $\mathbb{C}[S_n]$. Finally, $\ket{j, T}$ is postprocessed to obtain a superposition\footnote{Krovi's original algorithm in~\cite{Krovi_2019} contained an error in this step; this was subsequently fixed by~\cite{burchardt2025highdimensionalquantumschurtransforms}.} over basis vectors of irrep spaces for $U(d)$, completing the algorithm. 

Since our paper gives an efficient QFT for all the commutant algebras in~\cref{tab:schur-weyl-partition}, it is natural to ask whether Krovi's algorithm can be extended to obtain efficient Schur transforms for these algebras. There are several reasons this seems plausible. First, the basis vectors of the irrep spaces of $S_n$ and $U(d)$ have natural generalizations, namely the \textit{Bratteli paths} defined in~\cref{sec:subalgebra_chains}. In addition, the ``F-moves'' used in the postprocessing step of~\cite{burchardt2025highdimensionalquantumschurtransforms} have analogues for the partition algebra, whose Schur-Weyl dual is $S_d$~\cite{haase1986symmetric}, and the Brauer algebra, whose Schur-Weyl dual is $O(d)$~\cite{pan1996racah}.

We expect that efficient Schur transforms for the semisimple algebras would have a wide variety of applications in quantum computation. For example, one could generalize the compressed oracle for random unitaries~\cite{ma2025constructrandomunitaries, grinko2025quantumsimulationrandomunitaries,foxman2026perfectlyrecordingqueries} to random permutations or random orthogonal matrices~\cite{foxman2026perfectlyrecordingqueries}, or generalize tomography of unitarily-invariant properties~\cite{odonnell2015quantumspectrumtesting} to permutation-invariant properties. Additionally, attempts to generalize the Schur transform to semisimple algebras may reveal new structural properties of these algebras; see the first paragraph of~\cref{sec:future_directions}.

\subsubsection{The Hidden Subalgebra Problem}
\label{sec:hsp_algebras}
The hidden subgroup problem~\cite{Kitaev1995}, often abbreviated HSP, is a framework capturing many algebraic problems in quantum computation~\cite{Jozsa_2001, HallgrenRussellTaShma2003, Childs_2010}. The general formulation of the HSP is as follows: given a group $G$ and black-box access to a function $f: G \rightarrow \{0, 1\}^{*}$, we say that $f$ \textit{hides} a subgroup $H \le G$ if 
\begin{equation}
    \label{eq:hsp}
    f(g_1) = f(g_2) \iff g_1 = g_2h,\;\; h \in H
\end{equation}
equivalently, $g_1$ and $g_2$ are in the same coset of $H$ in $G$.\footnote{Note that we do not require $H$ to be a normal subgroup. Because of this, when $H$ is non-normal, one must distinguish between applying $h$ on the left or the right (corresponding to left and right cosets).} The goal of the HSP is to determine $H$, ideally with a $\poly(\log |G|)$-time (quantum) algorithm. 

In many applications of the QFT on groups, the underlying problem can be recast as a HSP for some group. For example, this is the case in both Simon's algorithm with $G = \mathbb{Z}_2^n$ and Shor's algorithm with $G = \mathbb{Z}_k$. The relationship between the QFT and HSP comes from a typical algorithmic strategy for solving the HSP, known as \textit{coset sampling}. In coset sampling, one uses the oracle for $f$ to prepare the state 
\begin{equation}
    \frac{1}{\sqrt{H}}\sum_{h \in H} \ket{gh}
\end{equation}
for some uniformly random $g \in G$, i.e.\ a random coset of $H$. Afterwards, the QFT for $G$ is applied to the state, and one hopes to use the resulting Fourier state to extract some information about $H$. While determining $H$ can always be done with $\polylog |G|$ queries to $f$~\cite{Ettinger_2004}, quantum algorithms which also have $\polylog |G|$ time complexity are only known for certain restricted families of groups.

Since this paper introduces an efficient QFT for certain semisimple algebras, it seems natural to generalize the HSP to this setting. In attempting to generalize~\cref{eq:hsp}, one has to contend with the following obstacles:  
\begin{itemize}
    \item Unlike group algebras, the natural bases of the semisimple algebras we consider are not closed under multiplication. For example, in the partition algebra $P_n(d)$ (\cref{sec:diagram_algebras}), the point diagram $p_i$ satisfies $p_i^2 = dp_i$, which is a scalar multiple of the original diagram. To rectify this issue, one could perhaps redefine the hiding property of $f$ up to rescaling the diagrams.
    \item Unlike groups, multiplication in an algebra is not invertible. Hence, the ``cosets'' $\{aB: a \in A\}$ may not partition $A$. In this case, the hiding property may have to defined with respect to some fixed factorization (see~\cref{sec:factor}). 
\end{itemize}
Despite these caveats, we believe the HSP for diagram algebras is an interesting direction for future work. Our main justification comes from viewing diagram algebras as generalizations of the symmetric group algebra, for which the HSP has attracted substantial attention. This is primarily due to the connection between the HSP for $S_n$ and the famous \textit{graph automorphism} and \textit{graph isomorphism} problems, which are widely believed to be $\mathsf{NP}$-intermediate~\cite{Schoning1988,Babai2018GI}. In particular, no polynomial-time classical algorithm is known for either problem. 

However, an efficient algorithm for the HSP on $S_n$ is known to imply an efficient algorithm for graph automorphism, which in turn implies an efficient algorithm for graph isomorphism~\cite{Childs_2010}. To see why, note that the hiding function given by $f(\pi) \coloneqq \pi(G)$, where $\pi(G)$ is a description of a graph $G$ with vertex labels permuted according to $\pi$, hides the automorphism group of the graph. An efficient \textit{quantum} algorithm for the symmetric group HSP would therefore imply an efficient \textit{quantum} algorithm for graph automorphism. 

In addition to acting on a graph $G$ by a permutation $\pi$, one could consider the action of a diagram $D$ from a larger diagram algebra. For example, in the partition algebra $P_n(d)$, a generating set for the diagram basis consists of the swap generators $s_i$, as well as the \textit{bridge} generators $b_i$, which join all vertices in columns $i$ and $i + 1$, and the \textit{point} generators, which disconnect the two vertices in column $i$ (see~\cref{lem:partition_generators}). Just as $s_i$ acts on $G$ by swapping vertex labels $i$ and $i + 1$, $b_i$ and $p_i$ have natural actions given by \textit{vertex contractions} and \textit{vertex resets}:
\begin{figure}[H]
    \centering
    \includegraphics[width=0.9\linewidth]{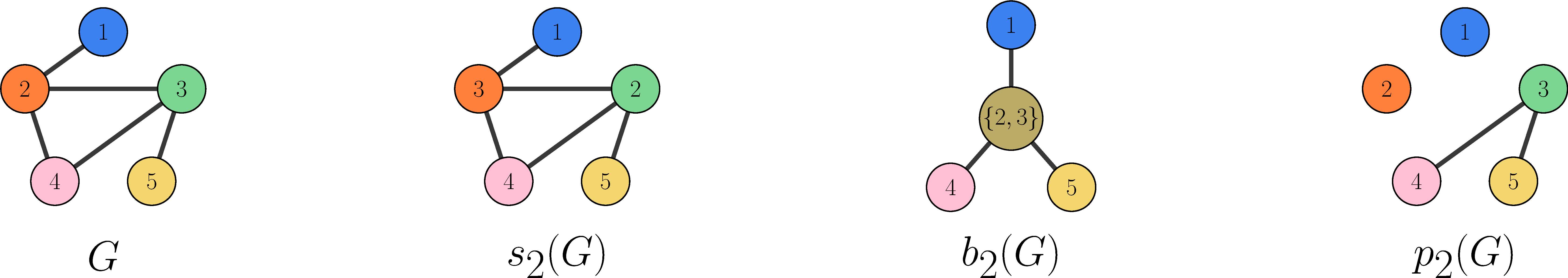}
    \caption{Actions of the partition algebra generators on a graph $G$. The generator $b_i$ contracts the vertices containing labels $i$ and $i+1$, merging their label sets. The generator $p_i$ removes the label $i$ from its current vertex and places it on a new isolated singleton vertex. If $i$ belongs to a vertex carrying multiple labels, only the label $i$ is reset; the remaining labels stay on the original vertex.}
    \label{fig:graph_actions}
\end{figure} 

One can verify that this is a valid representation of the partition algebra by checking that these actions respect all the relations between partition algebra generators, including the idempotent relations\footnote{Technically, $p_i^2 = dp_i$, but we ignore the scalar factors for the purposes of this discussion.} $p_i^2 = p_i$ and $b_i^2 = b_i$, the twisting relations $b_ip_ib_i = b_i$ and $b_ip_{i+1}b_i = b_i$, and the transport relation $s_ip_is_{i} = p_{i+1}$. For a full list of partition algebra relations, see Theorem 1.11 in~\cite{halverson2004partitionalgebras}. 
 
While graph isomorphism is not known to be $\mathsf{NP}$-complete, $\mathsf{NP}$-completeness is known for a related problem, the one of determining if two graphs are isomorphic after performing vertex contractions on one of the graphs~\cite{brouwer1987contractibility}. This problem corresponds to finding a hidden subalgebra of $P_n(d)$ generated by swap and bridge generators, extending the analogy for graph isomorphism, which asks for a hidden subalgebra of $\mathbb{C}[S_n]$. Hence, an efficient algorithm for the HSP on the partition algebra could imply a solution to an $\mathsf{NP}$-complete problem, which gives strong evidence for the hardness of the partition algebra HSP.\footnote{Since it is unlikely that $\mathsf{NP} \subseteq \mathsf{BQP}$~\cite{Aaronson2010BQPandPH}.} 
We leave the precise formulation and analysis of these connections to future work.

\subsubsection{Computing Multiplicities for Diagram Algebras}
\label{sec:computing_multiplicities}
A major line of work in representation theory aims to capture representation-theoretic quantities with ``purely combinatorial'' formulas~\cite{Stanley2000, Ikenmeyer_2017, pak2022combinatorialinterpretation, bravyi2024quantumcomplexitykroneckercoefficients, panova2025polynomialtimeclassicalversus, larocca2025quantumalgorithmsrepresentationtheoreticmultiplicities, christandl2026plethysmbqp}. One famous example of this connection is the dimension of an irreducible representation $\lambda$ of $\mathbb{C}[S_n]$, which is given by the \textit{hook length formula} (\cref{def:hook_length_formula}):
\begin{equation}
    \label{eq:intro_hook_length}
    \dim V^{\lambda} = \frac{|\lambda|!}{\prod_{(i,j) \in \lambda}h(i,j)} 
\end{equation}
\cref{eq:intro_hook_length} implies that $\dim V^\lambda$ can be computed classically in polynomial time, a fact we exploit in this paper. However, many other important representation-theoretic quantities are not known to have efficient classical algorithms, and in fact offer promising candidates for quantum speedups~\cite{bravyi2024quantumcomplexitykroneckercoefficients, larocca2025quantumalgorithmsrepresentationtheoreticmultiplicities}. 

As an example, consider the \textit{Kronecker coefficient} $g_{\rho\sigma}^\lambda$ defined in~\cite{Murnaghan1938}. $g_{\rho\sigma}^\lambda$ counts how many copies of an irrep $\lambda$ appear in the tensor product of two irreps $\rho$ and $\sigma$, i.e. 
\begin{equation}
    \label{eq:kronecker_coeff}
    \rho \otimes \sigma \cong \bigoplus_{\lambda} g_{\rho\sigma }^\lambda \lambda
\end{equation}
In addition to its mathematical significance, computing specific Kronecker coefficients is an important subroutine for many quantum algorithms~\cite{klyachko2004quantummarginalproblemrepresentations, Christandl_2006}. However, it is not a coincidence that these algorithms are restricted to special cases: computing an \textit{arbitrary} Kronecker coefficient (for $\mathbb{C}[S_n]$) is known to be $\#\mathsf{P}$-hard~\cite{Ikenmeyer_2017}. 

In a recent breakthrough,~\cite{bravyi2024quantumcomplexitykroneckercoefficients, larocca2025quantumalgorithmsrepresentationtheoreticmultiplicities, christandl2026plethysmbqp} proved that the Kronecker coefficients for $\mathbb{C}[S_n]$ can be computed in $\#\mathsf{BQP}$, the quantum analog of $\#\mathsf{P}$.\ Moreover, the techniques of~\cite{larocca2025quantumalgorithmsrepresentationtheoreticmultiplicities} give a $\poly(n)$ time quantum algorithm when all irreps have $\poly(n)$ dimension. The algorithms in~\cite{bravyi2024quantumcomplexitykroneckercoefficients, larocca2025quantumalgorithmsrepresentationtheoreticmultiplicities} are remarkably simple, requiring only the application of QFT as well as the application of an irrep matrix\footnote{The algorithm in~\cite{christandl2026plethysmbqp} requires a Schur transform. We discuss the generalization to diagram algebras in~\cref{sec:schur-transform-application}.}, and may be a source of quantum advantage over the best classical algorithms.\footnote{The result of~\cite{bravyi2024quantumcomplexitykroneckercoefficients} for $\mathbb{C}[S_n]$ was partially dequantized by Panova~\cite{panova2025polynomialtimeclassicalversus}, motivating the search for other types of coefficients where quantum advantage may persist.}

Since our work gives an efficient QFT for several semisimple algebras (and partially addresses the problem of applying their irrep matrices in~\cref{sec:apply_irreps}), this opens up the possibility that these algorithms could be generalized to compute the corresponding Kronecker coefficients. \ifanon
\else
\!\!\!\!\footnote{We thank Dmitry Grinko and Maris Ozols for a discussion which sparked this idea.}\!\!\!
\fi 
Although we used the Kronecker coefficients as an example, the methods in~\cite{larocca2025quantumalgorithmsrepresentationtheoreticmultiplicities, christandl2026plethysmbqp} apply to several types of multiplicities such as the Littlewood-Richardson coefficients~\cite{LittlewoodRichardson1934}, the Plethysm coefficients~\cite{Littlewood1958}, and the Kostka numbers~\cite{Kostka1882}, all of which have a generalization in the context of diagram algebras. 

\subsubsection{Additional Future Directions}
In addition to the applications above, our derivation of an efficient QFT for semisimple algebras raises several interesting questions for further exploration. We list some of these below: 
\label{sec:future_directions}

\paragraph{Perturbing the Diagram Basis.}
While the Fourier transforms considered in this work are only approximately unitary, the algorithm we give to approximate the Fourier transform is exactly unitary. As a consequence, one could define a ``perturbed diagram basis'' obtained by the taking the pre-images of $\{\ket{\rho, i, j}\}_{\rho, i, j}$ under our algorithm. The resulting basis is exactly orthonormal, but has elements of the form 
\begin{equation}
    \label{eq:perturbed_basis_form}
    \ket{a} + \sum_{\substack{b \in \mathcal{B}(A) \\ b \ne a}} \gamma_{b} \ket{b},\;\; \sum_{\substack{b \in \mathcal{B}(A) \\ b \ne a}} |\gamma_{b}|^2 \le \varepsilon
\end{equation}
for some $\varepsilon \le \poly(|A|) \cdot d^{-1/2}$. Setting the ``standard basis'' to be this basis would result in a perfectly unitary QFT. Does this basis have any representation-theoretic or operational interpretation? Furthermore, understanding this basis may be useful step towards a Krovi-style Schur transform for diagram algebras~\cite{Krovi_2019, burchardt2025highdimensionalquantumschurtransforms} (see~\cref{sec:schur-transform-application} above). 

\paragraph{The Fourier Transform as a Direct Sum.}
One consequence of~\cref{thm:intro_invariant} is that when $d \gg \poly(|A|)$, the Fourier transform over $A$ decomposes as a direct sum of Fourier transforms restricted to each propagating number. More precisely, define 
\begin{equation}
    K_r = \text{span}\{\ket{D}: \pn(D) = r\}
\end{equation}
Then, up to $\poly(|A|) \cdot d^{-1/2}$ error, 
\begin{equation}
    \label{eq:direct_sum_form}
    \ft_A = \bigoplus_{r} \ft_A|_{K_r}
\end{equation}
for any $A \in \{P_n(d), B_n(d), B_{r,s}(d)\}$. 
Several questions arise from this direct sum decomposition. For example, if one is promised that all inputs satisfy $\pn(D)=r$, could \cref{eq:direct_sum_form} yield a faster algorithm? More generally, can $\ft_A|_{K_r}$ be related to the Fourier transform on the algebra with $n$ replaced by $r$? Note that in general these operators are not equivalent, since $\ft_A|_{K_r}$ acts on a higher-dimensional space. 

\paragraph{Outline.} In Section 2, we cover the representation theory and linear algebra background needed for the paper. In Section 3, we give a rigorous definition of the Fourier transform over semisimple algebras, and introduce a notion of approximate orthogonality we call $\delta$-niceness. In Section 4, we give an overview of the diagram algebras studied in this paper, and prove several results about their Fourier states. In Section 5, we introduce the notion of a subalgebra adapted basis, and in Section 6 we give the algorithm for the QFT on semisimple algebras. 

Throughout, we assume familiarity with the basics of quantum computation, including braket notation and quantum circuits. We also use $\wt{O}(\cdot)$ to hide polylogarithmic factors in the big-$O$ notation.   

\section{Semisimple Algebras and the Fourier Basis}
\label{sec:rep_theory_background}
Let $A$ be an associative $\mathbb{C}$-algebra with unity. In this work, we assume that $A$ is finite-dimensional, and use $|A|$ to denote the dimension of $A$.
\subsection{Representation Theory}
\begin{definition}
    Let $V$ be a vector space. A \textit{representation} is an algebra homomorphism $\rho: A \rightarrow \text{End}(V)$, satisfying $\rho(a) + \rho(b) = \rho(a + b)$, $\rho(a)\rho(b) = \rho(ab)$, and $\rho(1) = I$.
\end{definition}
For a representation $\rho$, we will sometimes use $V^\rho$ to denote the image of $\rho$, which is a subspace of $\text{End}(V)$. 
\begin{definition}
    Two representations $\rho_1$, $\rho_2$ are said to be \textit{equivalent} if they are related by a change of basis, i.e. there is a matrix $S$ such that $S\rho_1(a)S^{-1} = \rho_2(a)$ for all $a \in A$. Otherwise, $\rho_1$ and $\rho_2$ are inequivalent. 
\end{definition}
\begin{definition}
    A representation $\rho: A\rightarrow \text{End}(V)$ is \textit{irreducible} if, for any subspace $W \subseteq V$, $$\rho(a)W \subseteq W \implies W \in \{\{0\}, V\}$$ Equivalently, there are no non-trivial $\rho$-invariant subspaces of $V$. 
\end{definition}
Irreducible representations are often called ``irreps'' for brevity. We use $\wh{A}$ to denote a complete set of inequivalent irreps of $A$. 

A useful way to study an algebra $A$ is by decomposing it into irreps. For example, if $A = \mathbb{C}[G]$ is a finite-dimensional group algebra, then Maschke's Theorem implies that 
\begin{equation}
    A \cong \bigoplus_{\rho \in \wh{A}} V^\rho
\end{equation} 
This decomposition may not hold for more general finite-dimensional algebras. When the above decomposition holds, we say that $A$ is semisimple:
\begin{definition}
    \label{def:semisimple_algebra}
    An algebra $A$ is \textit{semisimple}\footnote{An equivalent definition of a semisimple algebra is an algebra with a trivial Jacobson radical. In this paper, it will be useful to work with~\cref{def:semisimple_algebra}.} if 
    \begin{equation}
        \label{eq:semisimple_algebra}
        A \cong \bigoplus_{\rho \in \wh{A}} V^\rho \cong \bigoplus_{\rho \in \wh{A}} M_{d_\rho}(\mathbb{C})
    \end{equation} where $M_{d_{\rho}}(\mathbb{C})$ is the space of $d_\rho \times d_\rho$ matrices over $\mathbb{C}$. 
\end{definition}
In the above equation, $d_\rho$ is the \textit{dimension} of the irrep $\rho$. For any semisimple algebra $A$, a \textit{Fourier transform} is a function carrying out the isomorphism\footnote{Many such isomorphisms can exist. In~\cref{sec:subalgebra_adapteD_basis}, we will define a particular isomorphism which can be carried out efficiently.} in \cref{eq:semisimple_algebra}, which maps $a \in A$ to the tuple $(\rho(a))_{\rho \in \wh{A}}$. A Fourier transform gives rise to a natural basis on $A$, called the Fourier basis:
\begin{definition}
    \label{def:fourier_basis}
     The \textit{Fourier basis} of an algebra $A$ is defined as the preimage of the natural matrix unit basis $\{\ket{i}\bra{j}_\rho\}_{\rho, i, j}$ under the isomorphism in \cref{eq:semisimple_algebra}. We will use $E_{ij}^\rho$ to denote the preimage of $\{\ket{i}\bra{j}_\rho\}_{\rho, i, j}$.
\end{definition}
In \cref{sec:fourier_basis_of_algebra}, we will give a precise characterization of the Fourier basis elements $\{E_{ij}^\rho\}_{\rho, i, j}$ in terms of the matrix elements of irreps. 
\subsection{Trace Forms and Dual Bases}
\label{sec:trace_forms_and_dual_bases}
Let $A$ be an algebra with basis $\mathcal{B}(A) = \{a_i\}_i$.
\begin{definition}
    A \textit{trace form} $\tau: A \mapsto \mathbb{C}$ is a linear function satisfying $\tau(xy) = \tau(yx)$. 
\end{definition}
For any representation $\rho$, one can define the associated trace form $\tau_\rho(x) = \tr(\rho(x))$.\ That $\tau_\rho$ is indeed a trace form follows from $\rho(xy) = \rho(x)\rho(y)$ and the cyclic property of the trace. We will be particularly interested in the (left) \textit{regular} trace form $\tau_\mathbf{L}$, which is the trace forms corresponding to the (left) regular representation $\mathbf{L}$ given by
\begin{equation}
     \mathbf{L}(a_1)\ket{a_2} = \ket{a_1a_2}
\end{equation}
The left regular representation acts on a vector space spanned by all elements from some basis of $A$.
\begin{lemma}[\cite{maslen2016efficientcomputationfouriertransforms}, Theorem 2.5]
    \label{lem:trace_expansion} 
    If $\tau$ is a trace form on a semisimple algebra $A$, then
    \begin{equation}
        \tau(a) = \sum_{\rho \in \wh{A}} c_\rho \tr(\rho(a))
    \end{equation}
    for some constants $c_\rho$.
\end{lemma}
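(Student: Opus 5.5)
We prove the statement by exploiting the block decomposition in \cref{def:semisimple_algebra}. Fix an isomorphism $\Phi: A \to \bigoplus_{\rho \in \wh{A}} M_{d_\rho}(\mathbb{C})$, and let $\{E_{ij}^\rho\}$ be the Fourier basis of \cref{def:fourier_basis}. Since $\tau$ is linear and $\{E_{ij}^\rho\}_{\rho,i,j}$ is a basis of $A$, the form $\tau$ is completely determined by the values $\tau(E_{ij}^\rho)$. The proof therefore consists of computing these values using only the trace property $\tau(xy)=\tau(yx)$ and then matching them against $\sum_\rho c_\rho \tr(\rho(\cdot))$.

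First, we use the matrix unit relations $E_{ij}^\rho E_{kl}^\sigma = \delta_{\rho\sigma}\delta_{jk} E_{il}^\rho$, which hold because $\Phi$ is an algebra isomorphism.

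\begin{itemize}
\item \emph{Off-diagonal elements.} For $i \ne j$, write $E_{ij}^\rho = E_{ii}^\rho E_{ij}^\rho$. Then
\begin{equation}
\tau(E_{ij}^\rho) = \tau(E_{ii}^\rho E_{ij}^\rho) = \tau(E_{ij}^\rho E_{ii}^\rho) = \tau(0) = 0,
\end{equation}
since $E_{ij}^\rho E_{ii}^\rho = \delta_{ji} E_{ii}^\rho = 0$.
\item \emph{Diagonal elements.} Write $E_{ii}^\rho = E_{ij}^\rho E_{ji}^\rho$ and $E_{jj}^\rho = E_{ji}^\rho E_{ij}^\rho$. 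The trace property gives $\tau(E_{ii}^\rho) = \tau(E_{jj}^\rho)$. We call this common value $c_\rho$.
\end{itemize}

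Next, we compute $\tr(\sigma(E_{ij}^\rho))$ for each irrep $\sigma \in \wh{A}$. We take the irreps in $\wh{A}$ to be the coordinate projections $\pi_\sigma \circ \Phi$. This choice is harmless: a complete set of inequivalent irreps of a semisimple algebra is given, up to equivalence, by these block projections, and traces are invariant under equivalence. With this choice,
\begin{equation}
\sigma(E_{ij}^\rho) = \delta_{\rho\sigma}\ket{i}\bra{j}, \qquad \text{so} \qquad \tr(\sigma(E_{ij}^\rho)) = \delta_{\rho\sigma}\delta_{ij}.
\end{equation}
Therefore
\begin{equation}
\sum_\sigma c_\sigma \tr(\sigma(E_{ij}^\rho)) = \delta_{ij}\, c_\rho = \tau(E_{ij}^\rho)
\end{equation}
for every basis element. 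By linearity, $\tau(a) = \sum_\rho c_\rho \tr(\rho(a))$ for all $a \in A$.

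The only step needing care is the identification of $\wh{A}$ with the block projections, i.e., that every irrep of $A$ is equivalent to exactly one $\pi_\sigma\circ\Phi$. If one prefers not to rely on this, note that both sides of the identity are unchanged under replacing each $\rho$ by an equivalent representation, since $\tr(S\rho(a)S^{-1}) = \tr(\rho(a))$. It then suffices that the block projections are irreducible and pairwise inequivalent. Irreducibility holds because the image of each projection is a full matrix algebra. Inequivalence holds because their kernels differ.
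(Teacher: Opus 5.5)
Your proof is correct. The paper does not prove this lemma; it imports it from \cite{maslen2016efficientcomputationfouriertransforms}, Theorem~2.5. Your argument fills that gap with the standard reasoning: a trace form vanishes on commutators, and in each block the commutators $E^\rho_{ij}=[E^\rho_{ii},E^\rho_{ij}]$ ($i\neq j$) and $E^\rho_{ii}-E^\rho_{jj}=[E^\rho_{ij},E^\rho_{ji}]$ span the traceless matrices, so $\tau$ is a multiple of the trace on each block.

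Two small remarks. First, under the paper's own convention the Fourier isomorphism is $a\mapsto(\rho(a))_{\rho\in\wh{A}}$, so $\sigma(E^\rho_{ij})=\delta_{\rho\sigma}\ket{i}\bra{j}$ holds by definition and your identification step is automatic. Second, in your fallback, what makes irreducibility (and inequivalence) of the block projections sufficient is that the statement is existential. Any member of $\wh{A}$ not equivalent to a block projection can simply be given $c_\rho=0$. You should say this explicitly, since you never prove completeness.

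Your approach also gives something extra: the explicit formula $c_\rho=\tau(E^\rho_{ii})$. This yields the claim the paper makes without proof right after the lemma, that $c_\rho=d_\rho$ for $\tau_{\mathbf{L}}$. Left multiplication by $E^\rho_{ii}$ fixes the $d_\rho$ basis elements $E^\rho_{il}$ and annihilates every other $E^\sigma_{kl}$, so $\tau_{\mathbf{L}}(E^\rho_{ii})=\tr(\mathbf{L}(E^\rho_{ii}))=d_\rho$.
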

In particular, for the left regular trace form, $c_\rho = d_\rho$. 
Any symmetric trace form $\tau$ gives rise to a symmetric bilinear form $\braket{a_i, a_j}_\tau = \tau(a_ia_j)$, such as $\braket{\cdot, \cdot}_\mathbf{L}$ from the left regular representation above.\ If this bilinear form is nondegenerate, we can also define \textit{dual basis} with respect to $\tau$:
\begin{definition}
    \label{def:dual_basis}
    For any nondegenerate trace form $\tau$, there exists a basis $\mathcal{B}^*(A) = \{a_i^*\}_i$ such that $\braket{a_i^*, a_j}_\tau = \delta_{ij}$. $\mathcal{B}^*(A)$ is the \textit{dual basis} of $\mathcal{B}(A)$ with respect to $\tau$. 
\end{definition}

\begin{lemma} 
\label{lemma:basis-breakdown}
For any $x \in A$, it holds that 
\begin{equation}
    \label{eq:dual_basis_expansion}
    x = \sum_{a \in \mathcal{B}(A)} \braket{a^*, x}_{\tau} a
\end{equation}
where $a^*$ is the dual basis element with respect to $\tau$.
\end{lemma}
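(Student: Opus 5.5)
Since $\mathcal{B}(A)$ is a basis of $A$, we can write $x = \sum_{b \in \mathcal{B}(A)} c_b\, b$ for unique scalars $c_b \in \mathbb{C}$. The plan is to recover each coefficient $c_a$ by pairing $x$ with the dual basis element $a^*$ from \cref{def:dual_basis}, using only bilinearity of $\braket{\cdot,\cdot}_\tau$ and the defining duality relation.

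Fix $a \in \mathcal{B}(A)$. The form $\braket{u,v}_\tau = \tau(uv)$ is linear in its second argument, because multiplication in $A$ is bilinear and $\tau$ is linear. Therefore
\begin{equation*}
    \braket{a^*, x}_\tau = \sum_{b \in \mathcal{B}(A)} c_b \braket{a^*, b}_\tau = \sum_{b \in \mathcal{B}(A)} c_b\, \delta_{ab} = c_a ,
\end{equation*}
where the middle equality is exactly the defining property $\braket{a_i^*, a_j}_\tau = \delta_{ij}$ of the dual basis. Substituting $c_a = \braket{a^*, x}_\tau$ back into the expansion of $x$ gives \cref{eq:dual_basis_expansion}.

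There is no real obstacle here. The only point needing care is that the dual basis exists and is indexed compatibly with $\mathcal{B}(A)$. This is supplied by the standing assumption that $\tau$ is nondegenerate: nondegeneracy makes the map $u \mapsto \braket{u,\cdot}_\tau$ an isomorphism $A \to A^*$, so each coordinate functional has a unique preimage, and these preimages are the $a^*$.

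The expansion uses the pairing with $a^*$ in the first slot, matching \cref{def:dual_basis}, so symmetry of the form is not actually needed.
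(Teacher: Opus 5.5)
Your proof is correct and is essentially the paper's argument: both expand $x$ in $\mathcal{B}(A)$ and use linearity of $\braket{\cdot,\cdot}_\tau$ in the second argument together with $\braket{a^*,b}_\tau=\delta_{ab}$. You read off each coefficient $c_a$ first and then substitute, whereas the paper collapses the double sum directly, which is the same computation; your remark that nondegeneracy guarantees the dual basis exists is a correct but optional addition.
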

\begin{proof}
    Express $x$ in the basis $\mathcal{B}(A)$ as $x = \sum_{b \in \mathcal{B}(A)} \alpha_b \, b$, where $\alpha_b \in \mathbb{C}$. Then
    \begin{align}
        \sum_{a \in \mathcal{B}(A)} \braket{a^*, x}_{\tau} a
        \;
        =
        \sum_{a \in \mathcal{B}(A)} 
        \sum_{b \in \mathcal{B}(A)} 
        \alpha_b
        \braket{a^*, b}_{\tau} a
        \;
        =
        \sum_{b \in \mathcal{B}(A)} 
        \alpha_b
        \,
        b
        \;
        = 
        \;
        x
    \end{align}
    where the second equality follows from \Cref{def:dual_basis}.  
\end{proof}
The dual basis can be obtained by inverting the \textit{Gram matrix} $(G)_{ij} = \braket{a_i, a_j}_{\tau}$, where $a_i, a_j \in \mathcal{B}(A)$, and then reading off the rows or columns. 

Finally, for any basis $\mathcal{B}(A)$, we can always define an associated computational inner product:
\begin{definition}
    For a basis $\mathcal{B}(A)$, define the \textit{computational inner product} to be the Hermitian form satisfying $\braket{a_i, a_j}_0 = \delta_{ij}$ for all $a_i, a_j\in \mathcal{B}(A)$. 
\end{definition}
The computational inner product is the standard complex inner product, with ``standard basis vectors'' given by some fixed basis of $A$. For example, say we have two elements $x, y \in A$.\ We can express them in the basis $\mathcal{B}(A)$ as $x = \sum_{a \in \mathcal{B}(A)} \alpha_a \, a$ and $y = \sum_{a \in \mathcal{B}(A)} \beta_a \, a$, where $\alpha_a, \beta_a \in \mathbb{C}$.\ The computational inner product is simply $\braket{x, y}_0 = \sum_{a \in \mathcal{B}(A)} \overline{\alpha_a} \beta_a$. That is, if the algebra elements were expressed as (un-normalized) quantum states over a basis labeled by $\mathcal{B}(A)$, this would be the inner product between the quantum states.

Note that the computational inner product is highly basis dependent. The choice of computational basis determines how elements of $A$ are encoded on a quantum computer, as well as the precise form of the Fourier transform.
Both the computational inner product $\braket{\cdot, \cdot}_0$ and the left-regular trace form $\braket{\cdot, \cdot}_{\mathbf{L}}$ are important for the QFT on semisimple algebras. In particular, determining when one approximates the other is a crucial part of our analysis (see \cref{sec:qft_exposition} for further discussion).

\subsection{The Fourier Basis of a Semisimple Algebra}
\label{sec:fourier_basis_of_algebra}
We are now ready to give a precise description of the Fourier basis from \cref{def:fourier_basis}:
\begin{theorem}
Let $A$ be a semisimple algebra. For any choice of basis $\mathcal{B}(A)$, 
\label{thm:description_of_fourier_basis}
\begin{equation}
    E_{ij}^\rho = d_\rho \sum_{a \in \mathcal{B}(A)} \rho(a^*)_{ji} a 
\end{equation} 
Where $a^*$ is the dual element with respect to the left regular trace form $\tau_\mathbf{L}$.
\end{theorem}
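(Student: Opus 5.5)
The plan is to apply \cref{lemma:basis-breakdown} to $x = E_{ij}^\rho$ with $\tau = \tau_\mathbf{L}$. This gives
\begin{equation}
    E_{ij}^\rho = \sum_{a \in \mathcal{B}(A)} \braket{a^*, E_{ij}^\rho}_{\tau_\mathbf{L}}\, a = \sum_{a \in \mathcal{B}(A)} \tau_\mathbf{L}(a^* E_{ij}^\rho)\, a,
\end{equation}
so everything reduces to the scalar identity $\tau_\mathbf{L}(a^* E_{ij}^\rho) = d_\rho\, \rho(a^*)_{ji}$. Before using the lemma I will note that $\tau_\mathbf{L}$ is nondegenerate, so that the dual basis exists. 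This follows because, under the isomorphism of \cref{eq:semisimple_algebra}, $\tau_\mathbf{L}(xy) = \sum_\sigma d_\sigma \tr(\sigma(x)\sigma(y))$, and the trace pairing on each matrix block $M_{d_\sigma}(\mathbb{C})$ is nondegenerate.

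To compute $\tau_\mathbf{L}(a^* E_{ij}^\rho)$, I will invoke \cref{lem:trace_expansion} with $c_\sigma = d_\sigma$, which gives $\tau_\mathbf{L}(y) = \sum_{\sigma \in \wh{A}} d_\sigma \tr(\sigma(y))$. If one wants to justify $c_\sigma = d_\sigma$ directly, observe that $\mathbf{L}$ decomposes as $\bigoplus_\sigma \sigma^{\oplus d_\sigma}$. This holds because left multiplication acts on each block $M_{d_\sigma}(\mathbb{C})$ column by column, and there are $d_\sigma$ columns.

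Next I use \cref{def:fourier_basis}: $\sigma(E_{ij}^\rho) = \delta_{\sigma\rho}\ket{i}\bra{j}$. Here I identify each irrep $\sigma$ with the projection onto its block under the fixed isomorphism, which is legitimate because the irreps in $\wh{A}$ are exactly these block projections, up to the chosen bases. Multiplicativity then gives
\begin{equation}
    \tau_\mathbf{L}(a^* E_{ij}^\rho) = \sum_\sigma d_\sigma \tr\bigl(\sigma(a^*)\sigma(E_{ij}^\rho)\bigr) = d_\rho \tr\bigl(\rho(a^*)\ket{i}\bra{j}\bigr) = d_\rho\, \rho(a^*)_{ji}.
\end{equation}
Substituting this back into the expansion yields the claimed formula.

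The only delicate point is consistency of conventions: the irreps $\rho$ in $\wh{A}$ must be exactly the coordinate maps of the isomorphism that defines $E_{ij}^\rho$, and the pairing must be $\tau(a^* x)$ with $a^*$ on the left, as in \cref{def:dual_basis}. Because $\tau_\mathbf{L}$ is a trace form, the order inside the pairing does not matter anyway. Beyond these conventions the argument is a direct computation.
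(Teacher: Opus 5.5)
Your proposal is correct and follows essentially the same route as the paper: expand $E_{ij}^\rho$ via \cref{lemma:basis-breakdown} with respect to $\tau_\mathbf{L}$, then evaluate the coefficient $\tau_\mathbf{L}(a^* E_{ij}^\rho)$ using \cref{lem:trace_expansion} with $c_\sigma = d_\sigma$ and $\sigma(E_{ij}^\rho) = \delta_{\sigma\rho}\ket{i}\bra{j}$. Your extra justifications are correct and fill in details the paper leaves implicit. These are the nondegeneracy of $\tau_\mathbf{L}$, which guarantees the dual basis exists, and the decomposition of $\mathbf{L}$ as each $\sigma$ with multiplicity $d_\sigma$, which gives $c_\sigma = d_\sigma$.
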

\begin{proof}
    Fix an arbitrary choice of basis $\mathcal{B}(A)$. By~\cref{eq:dual_basis_expansion},
    \begin{equation}
        E_{ij}^\rho = \sum_{a \in \mathcal{B}(A)} \braket{a^*, E_{ij}^\rho}_{\mathbf L} a.
    \end{equation}
    It therefore suffices to compute the coefficient $\braket{a^*, E_{ij}^\rho}_{\mathbf L}$. By definition of $\tau_\mathbf{L}$ and~\cref{lem:trace_expansion},
    \begin{equation}
        \label{eq:coefficient_computation}
        \braket{a^*, E_{ij}^\rho}_{\mathbf L}
        = \tau_{\mathbf L}(a^* E_{ij}^\rho) = \sum_{\sigma \in \wh{A}} d_\sigma \tr\bigl(\sigma(a^*) \sigma(E_{ij}^\rho)\bigr) = d_\rho \tr(\rho(a^*)\ket{i}\bra{j}) = d_\rho\rho(a^*)_{ji}
    \end{equation}
    where in the second to last equality, we used that $E_{ij}^\rho$ is the preimage of $\ket{i}\bra{j}_{\rho}$ under the Fourier isomorphism. Substituting this coefficient gives the result. 
\end{proof}
As a corollary to~\cref{thm:description_of_fourier_basis}, we obtain a generalization of Schur's orthogonality theorem to semisimple algebras: 
\begin{corollary}[Schur Orthogonality for the Irreps of Semisimple Algebras]
    \label{thm:schur_orthogonality}
    For any basis $\mathcal{B}(A)$, and any $\rho, \sigma \in \wh{A}$,
    \begin{equation}
        d_\rho \sum_{a \in \mathcal{B}(A)} \rho(a^*)_{ji} \sigma(a)_{kl}
        = \delta_{\rho\sigma}\delta_{ik}\delta_{jl}.
    \end{equation}
\end{corollary}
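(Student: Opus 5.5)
The plan is to apply the representation $\sigma$ to both sides of the expansion in \cref{thm:description_of_fourier_basis}, and then read off a single matrix entry. Fix $\rho,\sigma\in\wh{A}$ and indices $i,j$ (for $\rho$) and $k,l$ (for $\sigma$).

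\emph{Left-hand side.} By \cref{def:fourier_basis}, $E_{ij}^\rho$ is the preimage of the matrix unit $\ket{i}\bra{j}_\rho$ under the isomorphism $a\mapsto(\tau(a))_{\tau\in\wh{A}}$ of \cref{eq:semisimple_algebra}. Hence $\sigma(E_{ij}^\rho)=\delta_{\rho\sigma}\ket{i}\bra{j}$. Its $(k,l)$ entry is therefore $\delta_{\rho\sigma}\delta_{ik}\delta_{jl}$.

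\emph{Right-hand side.} Since $\sigma$ is linear, applying it to the expansion of \cref{thm:description_of_fourier_basis} gives
\begin{equation}
\sigma(E_{ij}^\rho)=d_\rho\sum_{a\in\mathcal{B}(A)}\rho(a^*)_{ji}\,\sigma(a).
\end{equation}
Taking the $(k,l)$ entry turns the right-hand side into exactly the sum $d_\rho\sum_{a}\rho(a^*)_{ji}\sigma(a)_{kl}$ appearing in the corollary. Equating the two computations proves the claim.

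The only point needing care is the identification $\sigma(E_{ij}^\rho)=\delta_{\rho\sigma}\ket{i}\bra{j}$. This holds because the isomorphism of \cref{eq:semisimple_algebra} is realized componentwise by the irreps themselves: the $\sigma$-component of the Fourier image of $x$ is $\sigma(x)$. This is consistent with the step in the proof of \cref{thm:description_of_fourier_basis}, which already used $\sigma(E_{ij}^\rho)=\delta_{\rho\sigma}\ket{i}\bra{j}$ to collapse the sum over $\sigma$.

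One further remark: the dual basis is taken with respect to $\tau_{\mathbf L}$, which must be nondegenerate. This is implicitly assumed by \cref{thm:description_of_fourier_basis} itself, and it follows for semisimple $A$ from \cref{lem:trace_expansion} with $c_\rho=d_\rho>0$. No further obstacle is expected, since the corollary is essentially a coordinate rewriting of the theorem.
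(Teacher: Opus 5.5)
Your proposal is correct and follows essentially the same route as the paper: apply $\sigma$ to the expansion in \cref{thm:description_of_fourier_basis}, take the $(k,l)$ entry, and use $\sigma(E_{ij}^\rho)=\delta_{\rho\sigma}\ket{i}\bra{j}$, which holds by the definition of the Fourier basis. Your added remark that $\tau_{\mathbf L}$ is nondegenerate is a harmless extra justification that the paper leaves implicit.
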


\begin{proof}
    Apply any irrep $\sigma \in \wh{A}$ to the formula of \cref{thm:description_of_fourier_basis}:
    \begin{equation}
        \sigma(E_{ij}^\rho)
        = d_\rho \sum_{a \in \mathcal{B}(A)} \rho(a^*)_{ji} \sigma(a).
    \end{equation}
    Taking the $(k,l)$ entry of both sides gives
    \begin{equation}
        \sigma(E_{ij}^\rho)_{kl}
        = d_\rho \sum_{a \in \mathcal{B}(A)} \rho(a^*)_{ji} \sigma(a)_{kl}.
    \end{equation}
    By definition of $E_{ij}^\rho$, $\sigma(E_{ij}^\rho)_{kl} = \delta_{\rho\sigma}\delta_{ik}\delta_{jl}$.
\end{proof}

Interestingly,~\cref{thm:description_of_fourier_basis} implies that the Fourier basis is independent of the algebra basis, even though the definition involves summing over a fixed basis $\mathcal{B}(A)$. Moreover, the Fourier basis is only defined up to some choice of basis for each of the matrix algebras $M_{d_\rho}(\mathbb{C})$ (see \cref{eq:semisimple_algebra}).\ In \cref{sec:subalgebra_adapteD_basis}, we will fix a specific choice of basis on the matrix algebras, known as a \textit{subalgebra adapted basis}, which will enable the corresponding Fourier transform to be carried out efficiently. 
\begin{lemma}
Independent of the basis for the matrix algebras, the Fourier basis states have several important properties:
\begin{enumerate}
 \item Matrix Multiplication Rule: $ E_{ij}^\rho E_{kl}^\sigma = \delta_{\rho\sigma}\delta_{jk} E_{il}^\rho$.
 \item Index Swap Duality: $(E_{ji}^\rho)^* = E_{ij}^\rho/d_\rho$, for the dual basis corresponding to $\tau_\mathbf{L}$. 
 \item Left Action Invariance: $\mathbf{L}(a) E_{ij}^\rho = \sum_i \rho(a)_{ki}E^\rho_{kj}$.
\end{enumerate}
\end{lemma}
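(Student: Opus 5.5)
The plan is to prove all three properties by working through the Fourier isomorphism $\Phi: A \to \bigoplus_{\rho} M_{d_\rho}(\mathbb{C})$, $\Phi(x) = (\rho(x))_\rho$, which is an algebra isomorphism. By \cref{def:fourier_basis}, $\Phi(E_{ij}^\rho)$ is the matrix unit $\ket{i}\bra{j}$ placed in the $\rho$ block. Every identity we want is linear or multiplicative in algebra elements, so it suffices to check it after applying $\Phi$, using injectivity.

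\textbf{Matrix multiplication rule.} We use that $\Phi$ is multiplicative:
\begin{equation}
\Phi(E_{ij}^\rho E_{kl}^\sigma) = \Phi(E_{ij}^\rho)\Phi(E_{kl}^\sigma).
\end{equation}
The right-hand side is a product of matrix units in the blocks $\rho$ and $\sigma$. It vanishes unless $\rho = \sigma$, since distinct blocks multiply to zero in a direct sum of algebras. When $\rho=\sigma$, it equals $\ket{i}\braket{j|k}\bra{l} = \delta_{jk}\ket{i}\bra{l}$. Pulling back along $\Phi$ gives $\delta_{\rho\sigma}\delta_{jk}E_{il}^\rho$.

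\textbf{Left action invariance.} We compute $\mathbf{L}(a)E_{ij}^\rho = aE_{ij}^\rho$ in the image. Only the $\rho$ block survives, and there
\begin{equation}
\rho(a)\ket{i}\bra{j} = \sum_k \rho(a)_{ki}\ket{k}\bra{j}.
\end{equation}
Pulling back gives $\sum_k \rho(a)_{ki}E_{kj}^\rho$. This also fixes the summation index in the statement, which should be $k$ rather than $i$.

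\textbf{Index swap duality.} Here we must identify the dual basis of the Fourier basis with respect to $\tau_{\mathbf{L}}$. The claim is that $E_{ij}^\rho/d_\rho$ is dual to $E_{ji}^\rho$. By \cref{def:dual_basis} it suffices to check
\begin{equation}
\tau_{\mathbf{L}}\!\left(\tfrac{1}{d_\rho}E_{ij}^\rho E_{lk}^\sigma\right) = \delta_{\rho\sigma}\delta_{jl}\delta_{ik}.
\end{equation}
Uniqueness of the dual basis then follows from nondegeneracy of the form, since the Gram matrix is invertible. For the check, we first apply the multiplication rule, which gives $\delta_{\rho\sigma}\delta_{jl}E_{ik}^\rho/d_\rho$. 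We then evaluate $\tau_{\mathbf{L}}$ using \cref{lem:trace_expansion} with $c_\tau = d_\tau$:
\begin{equation}
\tau_{\mathbf{L}}(E_{ik}^\rho) = \sum_\tau d_\tau \tr(\tau(E_{ik}^\rho)) = d_\rho\,\tr(\ket{i}\bra{k}) = d_\rho\delta_{ik}.
\end{equation}
Dividing by $d_\rho$ gives the required $\delta$'s. Nondegeneracy of $\tau_{\mathbf{L}}$ also follows from this computation, because the Gram matrix in the Fourier basis is a permutation matrix scaled by the $d_\rho$.

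The main obstacle is interpretive rather than technical. We must state clearly that the dual is taken with respect to the Fourier basis $\{E_{ij}^\rho\}$ itself, as opposed to the diagram basis. We also rely on the fact that $c_\rho = d_\rho$ for the left regular trace form. That fact follows from the decomposition $A \cong \bigoplus_\rho V^\rho$ as a left module, in which each irrep appears $d_\rho$ times, once for each column of $M_{d_\rho}$. We will note this briefly so that the constants are justified.
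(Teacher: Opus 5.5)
Your proof is correct. The matrix multiplication rule and index swap duality follow the paper's argument: you pull matrix-unit multiplication back through the isomorphism, and you evaluate $\tau_{\mathbf L}(E^\rho_{ij}E^\sigma_{lk})/d_\rho$ using property 1 and $c_\rho=d_\rho$ from \cref{lem:trace_expansion}.

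Left action invariance is where you take a different route.
\begin{itemize}
\item \emph{The paper} expands $a$ in the Fourier basis using its $\tau_{\mathbf L}$-dual (\cref{lemma:basis-breakdown}). It then rewrites the coefficients as $\sigma(a)_{kl}$ using the just-proved index swap duality and \cref{eq:coefficient_computation}. Finally it multiplies by $E^\rho_{ij}$ using property 1.
\item \emph{You} apply $\Phi$ directly and read off $\rho(a)\ket{i}\bra{j}=\sum_k\rho(a)_{ki}\ket{k}\bra{j}$ in the $\rho$ block.
\end{itemize}

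Your route is shorter and needs only that $\Phi$ is an algebra isomorphism. It uses no trace form, no dual basis, and not the constant $c_\rho=d_\rho$. The paper's route has a side benefit: it expresses the expansion $a=\sum_{\rho,i,j}\rho(a)_{ij}E^\rho_{ij}$ as a $\tau_{\mathbf L}$-duality statement, and that is the form it reuses in \cref{lem:approx_nice}.

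Two smaller points. Your reading of the index typo is right: the sum should be over $k$, which matches the last line of the paper's own computation. Your column-decomposition argument for $c_\rho=d_\rho$ fills in a step the paper only asserts.
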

\begin{proof}
    The first property follows by applying the inverse Fourier transform on two matrix units. 
    The second property follows from Schur orthogonality (\cref{thm:schur_orthogonality}): 
    \begin{align}
        \frac{1}{d_{\rho}}\braket{E_{ij}^\rho, E_{lk}^\sigma}_{\mathbf{L}} &= \frac{1}{d_{\rho}} \tr(\mathbf{L}(E_{ij}^\rho  E_{lk}^\sigma)) \\ = \frac{1}{d_{\rho}} \delta_{\rho\sigma} \delta_{jl} \tr(\mathbf{L}(E_{ik}^\rho)) &= \frac{1}{d_{\rho}} \delta_{\rho\sigma} \delta_{jl} \sum_{\sigma \in \wh{A}} d_\sigma \tr(\sigma(E_{ik}^\rho)) \\
       = \frac{1}{d_{\rho}} \delta_{\rho\sigma} \delta_{jl} d_\rho \tr(\ket{i}\bra{k}) &= \delta_{\rho\sigma}\delta_{ik}\delta_{jl} 
    \end{align}
    The first equality uses the definition of $\braket{\cdot, \cdot}_\mathbf{L}$, and the second uses the matrix multiplication property (the first property above). The third uses~\cref{lem:trace_expansion}, and the fourth uses the definition of the Fourier basis. Finally, the last equality uses $\tr(\ket{i}\bra{k}) = \delta_{ik}$. 
    
    For the third property, we evaluate
    \begin{equation}
        \mathbf{L}(a) E_{ij}^\rho 
        = 
        aE_{ij}^\rho   
        = 
        \sum_{\sigma, k, l} \left\langle\left(E_{kl}^\sigma\right)^*, a\right\rangle_\mathbf{L} E_{kl}^\sigma E_{ij}^\rho    
        = 
        \sum_{\sigma, k, l} \frac{1}{d_\sigma}\braket{E_{lk}^\sigma, a }_\mathbf{L} E_{kl}^\sigma E_{ij}^\rho    
        = 
        \sum_k \rho(a)_{ki}E^\rho_{kj}
    \end{equation}
    where the second equality follows from~\Cref{lemma:basis-breakdown}, the third follows from the second property above, and we have used~\cref{eq:coefficient_computation} for the final substitution.
\end{proof}

\section{The Quantum Fourier Transform For Semisimple Algebras}
\label{sec:qft_exposition}
The main goal of this work is to efficiently implement a basis change from the computational basis $\mathcal{B}(A)$ to the Fourier basis $\{E_{ij}^\rho\}_{\rho, i, j}$:
\begin{equation}
    \label{eq:incorrect_ft}
    \ket{a} \mapsto  \sum_{\rho \in \wh{A}} \; \sum_{i, j = 1}^{d_\rho} \braket{{E_{ij}^\rho}, a}_0 \ket{\rho, i, j}
\end{equation}
In particular, for several different algebras $A$, we will construct a quantum algorithm which approximately carries out this basis change using only $\polylog |A|$ gates.

However, even before we can consider an efficient algorithm, there are two difficulties that we need to handle for it to be well-defined as a unitary. 
Specifically, while the Fourier basis as defined in~\Cref{thm:description_of_fourier_basis} is orthonormal with respect to the left-regular trace form $\braket{\cdot, \cdot}_\mathbf{L}$, it is in general neither normalized nor orthogonal with respect to the computational inner product $\braket{\cdot, \cdot}_0$. 
In other words, the transformation in~\cref{eq:incorrect_ft} may be far from unitary, and thus cannot be implemented even approximately on a quantum computer. 

Even for a group algebra, it is necessary to normalize the Fourier basis in terms of $\braket{\cdot, \cdot}_0$.  For example, even when $A =\mathbb{C}[G]$ is a group algebra, we have that
\begin{align}
    \label{eq:group_ex}
     ||E_{ij}^\rho||_2^2 
     :=  
     \braket{E_{ij}^\rho, E_{ij}^\rho}_0 
     &
     = 
     d_\rho^2 \sum_{g \in G} \overline{\rho(g^{*})_{ij}} \rho(g^{*})_{ij} 
     = 
     \frac{d_\rho^2}{|G|} \sum_{g \in G} \overline{\rho(g^{-1})_{ij}} \rho(g^{*})_{ij} 
     \\
     &
     = 
     \frac{d_\rho^2}{|G|} \sum_{g \in G} \rho(g)_{ji}\rho(g^{*})_{ij}  
     = 
     \frac{d_\rho^2}{|G|} \cdot \frac{1}{d_\rho} 
     = 
     \frac{d_\rho}{|G|}
\end{align}
where the second equality follows from the definitions of the Fourier basis and the computational inner product; the third equality uses the fact that for group algebras, the dual basis has the form $g^* = g^{-1}/|G|$; and the fifth equality follows from Schur orthogonality (\Cref{thm:schur_orthogonality}). 

In the standard definition of the quantum Fourier transform on a group, it is sufficient to account for this by mapping the computational basis to the \textit{normalized} Fourier basis instead:  
\begin{definition}
    \label{def:algebra_qft}
    Let $A$ be a semisimple algebra with computational basis $\mathcal{B}(A)$. The \textit{(normalized) Fourier Transform over $A$}, denoted $\ft_{A}$, is the linear transformation which maps $\mathcal{B}(A)$ to the normalized Fourier basis, i.e.\
    \begin{equation}
        \ft_A\ket{a} = \sum_{\rho \in \wh{A}}  \sum_{i, j = 1}^{d_\rho} \frac{1}{||E_{ij}^\rho||_2}\braket{{E_{ij}^\rho}, a}_0 \ket{\rho, i, j} =  \sum_{\rho \in \wh{A}} d_\rho \sum_{i, j = 1}^{d_\rho} \frac{\overline{\rho(a^*)_{ji}}}{||E_{ij}^\rho||_2}\ket{\rho, i, j} 
    \end{equation}
\end{definition}
\noindent From~\cref{eq:group_ex}, $||E_{ij}^\rho||_2 = \sqrt{d_\rho/|G|}$ if $A = \mathbb{C}[G]$ is a group algebra, and so
 \begin{equation*}
        \ft_{\mathbb{C}[G]}\ket{g} =  \sum_{\rho \in \wh{A}}  \sqrt{\frac{d_\rho}{|G|}}\sum_{i, j = 1}^{d_\rho} \rho(g)_{ij} \ket{\rho, i, j} 
    \end{equation*}
 which is the standard definition of the quantum Fourier transform on a group. 
 
 Like the choice of Fourier basis states, $\ft_A$ is only defined up to a choice of basis on the matrix algebras in~\cref{eq:semisimple_algebra}. Ultimately, we will implement $\ft_A$ with respect to a \textit{subalgebra adapted basis} (\cref{sec:subalgebra_adapteD_basis}).

 However, even after normalization, the transformation may be non-unitary, since for non-group algebras, the Fourier basis may not be orthogonal under $\braket{\cdot, \cdot}_0$. Consider the following example, for a non-group algebra spanned by just two basis elements: 
 \begin{example}
    As a simple example, we will use the partition algebra $P_1(d)$. It is an algebra spanned by two elements, $I$ and $P$ (see \cref{sec:diagram_algebras} for a formal definition). These elements satisfy the relations $I^2 = I$, $IP = PI = P$, and $P^2 = dP$.\footnote{Later, when we analyze $P_n(d)$ for general $n$, we will first rescale the basis elements $I$ and $P$. See~\cref{eq:basis_for_a_diagram_algebra}).} $P_1(d)$ has two one-dimensional irreps: the \textit{one-box irrep} $\rho_{\Box}$, and the \textit{zero-box irrep} $\rho_{\emptyset}$. The action of these irreps on $I$ and $P$ is given as follows:%
    \footnote{
        It is easy to verify that (1) these are valid representations, as they satisfy the relations; (2) they are irreducible since they are one-dimensional; and (3) they are the only valid irreps of $P_1(d)$, since the algebra is 2-dimensional.
    }
    \begin{equation}
        \rho_{\Box}(I) = 1,\; \rho_{\Box}(P) = 0\qquad \rho_{\emptyset}(I) = 1,\; \rho_{\emptyset}(P) = d
    \end{equation}
     
    To compute the Fourier basis states of $P_1(d)$, we must first find the dual basis with respect to the regular trace form $\tau_\mathbf{L}$ 
    (see~\cref{sec:trace_forms_and_dual_bases}). To do so, we compute the Gram matrix of $\braket{\cdot, \cdot}_{\tau_\mathbf{L}}$:  
     
    \begin{equation}
        G = \begin{pmatrix}
            \tr(\mathbf{L}(I)) & \tr(\mathbf{L}(P)) \\ 
            \tr(\mathbf{L}(P)) & d\cdot \tr(\mathbf{L}(P)) \\ 
        \end{pmatrix}
        = \begin{pmatrix}
            2 & d \\ 
            d & d^2 \\ 
        \end{pmatrix},
    \end{equation}
    which has inverse
    \begin{equation}
        G^{-1} = \begin{pmatrix}
            1 & -1/d 
            \\ 
            -1/d & 2/d^2
        \end{pmatrix}.
    \end{equation}
    \noindent Reading off the rows of $G^{-1}$, we can see that the dual basis elements are $I^* = I - \frac1d P$ and $P^* = -\frac1d I + \frac{2}{d^2} P$.
    By linearity, the action of the two irreps above on the dual basis elements can be evaluated as
    \begin{align}
        \rho_{\Box}(I^*) = \rho_{\Box}\left(I - \frac1d P\right) = 1,
        &\quad 
        \rho_{\Box}(P^*) = \rho_{\Box}\left(-\frac1d I + \frac{2}{d^2} P\right) = -\frac1d 
        \\
        \rho_{\emptyset}(I^*) = \rho_{\emptyset}\left(I - \frac1d P\right) = 0,
        &\quad 
        \rho_{\emptyset}(P^*) = \rho_{\emptyset}\left(-\frac1d I + \frac{2}{d^2} P\right) = \frac1d
    \end{align}
    The Fourier basis elements of $P_1(d)$ are therefore 
    \begin{align}
        E_{11}^{\rho_{\Box}} 
        &= 
        \rho_{\Box}(I^*) \; I
        +
        \rho_{\Box}(P^*) \; P
        =
        I - \frac1d P
        \\
        E_{11}^{\rho_{\emptyset}} 
        &= 
        \rho_{\emptyset}(I^*) \; I
        +
        \rho_{\emptyset}(P^*) \; P
        =
        \frac1d P. 
    \end{align}
    These have norms 
    $
        \lVert E_{11}^{\rho_{\Box}} \rVert_2 
        = 
        \sqrt{1 + \frac{1}{d^2}}
        = 
        \frac{\sqrt{d^2 + 1}}{d}
    $
    and
    $
        \lVert E_{11}^{\rho_{\emptyset}} \rVert_2 
        = 
        \sqrt{\frac{1}{d^2}}
        = 
        \frac{1}{d}
    $.
    
    Finally, we can plug these values, as well as the norms of the Fourier basis states, into the definition of the Fourier transform (\cref{def:algebra_qft}): 
    \begin{align}
        \ft_{P_1(d)}\ket{I} 
        &= 
        \frac{
            \rho_{\Box}(I^*)
        }{
            \lVert E_{11}^{\rho_{\Box}} \rVert_2
        }
        \ket{\rho_{\Box}, 1, 1} 
        + 
        \frac{
            \rho_{\emptyset}(I^*)
        }{
            \lVert E_{11}^{\rho_{\emptyset}} \rVert_2
        }
        \ket{\rho_{\emptyset}, 1, 1} 
        \\ 
        &= 
        \frac{d}{\sqrt{d^2 + 1}}
        \;
        \rho_{\Box}(I^*)
        \ket{\rho_{\Box}, 1, 1} 
        + 
        d
        \;
        \rho_{\emptyset}(I^*)
        \ket{\rho_{\emptyset}, 1, 1} 
        \\ 
        &= 
        \frac{d}{\sqrt{d^2 + 1}}
        \ket{\rho_{\Box}, 1, 1} 
        \\ 
        \ft_{P_1(d)}\ket{P} 
        &= 
        \frac{
            \rho_{\Box}(P^*)
        }{
            \lVert E_{11}^{\rho_{\Box}} \rVert_2
        }
        \ket{\rho_{\Box}, 1, 1} 
        + 
        \frac{
            \rho_{\emptyset}(P^*)
        }{
            \lVert E_{11}^{\rho_{\emptyset}} \rVert_2
        }
        \ket{\rho_{\emptyset}, 1, 1} 
        \\ 
        &= 
        \frac{d}{\sqrt{d^2 + 1}}
        \;
        \rho_{\Box}(P^*)
        \ket{\rho_{\Box}, 1, 1} 
        + 
        d
        \;
        \rho_{\emptyset}(P^*)
        \ket{\rho_{\emptyset}, 1, 1} 
        \\ 
        &= 
        -\frac{1}{\sqrt{d^2 + 1}}
        \ket{\rho_{\Box}, 1, 1} 
        + 
        \ket{\rho_{\emptyset}, 1, 1} 
    \end{align}
     Therefore, $\ft_{P_1(d)}$ is not unitary, and cannot be implemented exactly on a quantum computer. Nonetheless, when the parameter $d$ is large, $\ft_{P_1(d)}$ is extremely close to unitary (in this example, 
     $
         \ft_{P_1(d)}\ket{I}
         \to
         \ket{\rho_{\Box}, 1, 1}
     $
     and
     $
         \ft_{P_1(d)}\ket{P}
         \to
         \ket{\rho_{\emptyset}, 1, 1}
     $
     \!), and so an \textit{approximate} unitary implementation does exist.
\end{example}
In~\cref{sec:diagram_algebras_nice}, we will generalize this observation, and show that the Fourier basis elements of the relevant diagram algebras become orthonormal in the limit of large $d$. 
   
\subsection{\texorpdfstring{\(\delta\)-nice}{delta-nice} Algebras}
As one may infer from the previous section, we will primarily be interested in algebras for which the (normalized) Fourier basis elements are approximately orthonormal under $\braket{\cdot, \cdot}_0$. We call such algebras $\delta$-nice: 
\begin{definition}
    Let $A$ be an algebra with computational basis $\mathcal{B}(A)$, with computational inner product $\braket{\cdot, \cdot}_0$.
    Let $G$ be the Gram matrix of the normalized Fourier basis with respect to $\braket{\cdot, \cdot}_0$, i.e. 
    \begin{equation}
        \label{eq:d_nice_equation}
        G_{(\rho, i, j), (\sigma, k, l)} = \frac{1}{||E_{ij}^\rho||_2||E_{kl}^\sigma||_2}\braket{E_{ij}^\rho, E_{kl}^\sigma}_0
    \end{equation} We say that $A$ is \textit{$\delta$-nice} (with respect to $\mathcal{B}(A)$) if $||G - I||_{\infty} \le \delta$. 
\end{definition}
When the choice of basis is fixed or clear from context, we say that $A$ is $\delta$-nice. In the several lemmas, it will useful to assume that $\delta \le 1/2$, so we make this assumption throughout.  However, we will later require $\delta = 1/\poly(|A|)$ as well.  

Many properties of a group algebra (which is $0$-nice) generalize to $\delta$-nice algebras. For example, when $A$ is $\delta$-nice, $\ft_A$ is closely approximated by a simpler transformation that resembles the standard definition of the group QFT: 
\begin{theorem}
    \label{thm:approx_qft}
    Let 
    \begin{equation}
        \wt{\ft_A}\ket{a} = \sum_{\rho \in \wh{A}}  \sum_{i, j = 1}^{d_\rho} ||E_{ij}^\rho||_2\cdot  \rho(a)_{ij} \ket{\rho, i, j} 
    \end{equation}
   Then, $||\ft_A - \wt{\ft_A}||_{\infty} \le O(\delta \cdot |A|^{3/2})$.
\end{theorem}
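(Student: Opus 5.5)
The plan is to recast both transforms as matrices built from one matrix: the matrix $M$ whose columns are the normalized Fourier vectors. Then the comparison reduces to a perturbation bound on the Gram matrix $G$ of \cref{eq:d_nice_equation}.

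\textbf{Step 1: identify $\ft_A$ with $M^\dagger$.} Write $e_{\rho ij} = E_{ij}^\rho/\|E_{ij}^\rho\|_2$, viewed as a vector in $\mathbb{C}^{\mathcal{B}(A)}$. Let $M$ be the $|A|\times|A|$ matrix whose column indexed by $(\rho,i,j)$ is $e_{\rho ij}$. By \cref{def:algebra_qft}, $\bra{\rho,i,j}\ft_A\ket{a} = \braket{e_{\rho ij}, a}_0 = \overline{M_{a,(\rho,i,j)}}$. Hence $\ft_A = M^\dagger$, and $G = M^\dagger M$.

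\textbf{Step 2: identify $\wt{\ft_A}$ with $M^{-1}$.} Since the Fourier map is an algebra isomorphism, every $a\in\mathcal{B}(A)$ expands as $a = \sum_{\rho,i,j}\rho(a)_{ij}E_{ij}^\rho$. Apply the inverse isomorphism to $(\rho(a))_\rho = \sum_{\rho,i,j}\rho(a)_{ij}\ket{i}\bra{j}_\rho$; equivalently, use \cref{thm:schur_orthogonality}. In coordinates this reads
\[
\ket{a} = \sum_{\rho,i,j} \|E_{ij}^\rho\|_2\,\rho(a)_{ij}\, e_{\rho ij} = M\,\wt{\ft_A}\ket{a}.
\]
So $M\wt{\ft_A} = I$, and because the Fourier basis is a basis, $M$ is invertible with $\wt{\ft_A} = M^{-1}$. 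Consequently
\[
\ft_A - \wt{\ft_A} = M^\dagger - M^{-1} = (M^\dagger M - I)M^{-1} = (G-I)M^{-1}.
\]

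\textbf{Step 3: bound the two factors.} First convert the entrywise hypothesis $\|G-I\|_\infty\le\delta$ into an operator-norm bound. Since $G-I$ is $|A|\times|A|$, we get $\|G-I\|_{\op}\le |A|\delta$, for instance via Frobenius norm or Schur's row/column-sum test. Next, $\|M^{-1}\|_{\op}^2 = \|G^{-1}\|_{\op}$. When $|A|\delta\le 1/2$, Weyl's inequality gives $\lambda_{\min}(G)\ge 1/2$, so $\|M^{-1}\|_{\op}\le\sqrt2$. Then $\|\ft_A-\wt{\ft_A}\|_{\op} = O(|A|\delta)$, and the stated $O(\delta|A|^{3/2})$ bound follows with room to spare. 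If instead the norm in the theorem is entrywise, the same product bound loses at most one more factor: each entry of $(G-I)M^{-1}$ is a sum of $|A|$ terms, each at most $\delta\cdot\|M^{-1}\|_{\op}$.

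\textbf{Main obstacle.} The delicate point is controlling $\|M^{-1}\|$, i.e.\ ensuring $G$ stays well conditioned. The standing assumption $\delta\le1/2$ alone does not give this once passing to operator norm costs a factor of $|A|$. I would first try the regime $\delta|A| \le 1/2$, which is harmless since the paper later takes $\delta = 1/\poly(|A|)$. Failing a sharper argument, note that if $\delta|A|^{3/2}\gtrsim 1$ the claim is essentially vacuous. The reason is that both operators have bounded norms whenever $G$ is well conditioned: $\|M^\dagger\|_{\op}^2 = \|G\|_{\op}\le 1+|A|\delta$, and $\|M^{-1}\|_{\op}$ is controlled as above. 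So the bound need only be proved in the small-$\delta|A|$ regime.
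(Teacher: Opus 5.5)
Your proof is correct, and it yields a much stronger bound than the theorem states. You and the paper use the same identity. The paper's \cref{lem:approx_nice} expands $a$ in the normalized Fourier basis via the $\langle\cdot,\cdot\rangle_0$-dual basis. In your notation this says $\wt{\ft_A}=G^{-1}M^\dagger=G^{-1}\ft_A$, which is equivalent to your $\ft_A-\wt{\ft_A}=(G-I)M^{-1}$. The two arguments differ only in how the norm is estimated.

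The paper works entrywise. It applies Cauchy--Schwarz to $\langle E^\sigma_{kl},a\rangle_0$, then uses \cref{lem:off_diagonal_sums} to bound the absolute row sums of $G^{-1}-I$ by $2\delta\sqrt{|A|}$. It then converts to operator norm with \cref{lem:opnorm_bound}, losing another factor $|A|$; that is where $|A|^{3/2}$ comes from. You stay in operator norm via submultiplicativity and lose nothing.

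One correction. In this paper $\|\cdot\|_\infty$ is the operator norm (see the start of the appendix), so $\delta$-niceness already says that $G-I$ has operator norm at most $\delta$. Your entrywise-to-operator conversion and your ``main obstacle'' therefore vanish. The standing assumption $\delta\le 1/2$ gives $\|M^{-1}\|_\infty=\|G^{-1}\|_\infty^{1/2}\le\sqrt2$, hence $\|\ft_A-\wt{\ft_A}\|_\infty\le\sqrt2\,\delta$, with no dependence on $|A|$. Your fallback for the regime $\delta|A|>1/2$ would not have worked as written, because it presupposes the conditioning of $G$ it is trying to avoid, but it is not needed.

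Your bound also gives the coefficientwise form of \cref{lem:approx_nice} with error $O(\delta)$ rather than $O(\delta\sqrt{|A|})$. So the paper's entrywise route buys nothing extra, while yours would remove $\poly(|A|)$ factors downstream, e.g.\ in $\eta$ in the proof of \cref{thm:fourier_concentration_on_propagating_number}.
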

\begin{proof}
The proof of \cref{thm:approx_qft} is a corollary of the following lemma: 
\begin{lemma}
\label{lem:approx_nice}
If $A$ is $\delta$-nice, then
\begin{equation}
    \rho(a)_{ij} = \frac{d_\rho\overline{\rho(a^*)_{ji}}}{||E_{ij}^\rho||_2^2} \pm \frac{O(\delta \cdot  \sqrt{|A|})}{||E_{ij}^\rho||_2}  
\end{equation}
or equivalently, 
\begin{equation}
    ({\rho(a)_{ij} \cdot ||E_{ij}^\rho||_2) \pm {O(\delta \cdot  \sqrt{|A|})}}  = d_\rho \cdot \frac{\overline{\rho(a^*)_{ji}}}{||E_{ij}^\rho||_2} 
\end{equation}
\end{lemma}
\noindent Note that this equation is true with no error when $A$ is a group algebra.
\begin{proof}
    
     For any $a \in A$, we can expand it in the (normalized) Fourier basis using the $\braket{\cdot, \cdot}_0$ inner product:
     \begin{equation}
     \label{eq:coeff_compare_one}
         a = \sum_{\rho, i, j}  \left\langle{\left(\frac{E_{ij}^\rho}{||E_{ij}^\rho||_2}\right)^*, a} \right\rangle_{0} \frac{1}{||E_{ij}^\rho||_2}E_{ij}^\rho
     \end{equation}
     Where the superscript $*$ denotes the dual element of $E_{ij}^\rho/||E_{ij}^\rho||_2$ \textit{with respect to} $\braket{\cdot, \cdot}_0$. Using the definition of the dual basis and the Gram matrix $G$ (see \Cref{def:dual_basis}),  
    \begin{equation}
        \label{eq:coeff_compare_two}
       \left(\frac{E_{ij}^\rho}{||E_{ij}^\rho||_2}\right)^* = \sum_{\sigma, k, l} G^{-1}_{(\rho, i, j), (\sigma, k, l)} \frac{E_{kl}^\sigma}{||E_{kl}^\sigma||_2}
    \end{equation}
     substituting~\cref{eq:coeff_compare_two} into~\cref{eq:coeff_compare_one}, 
     \begin{align}
          a &= \sum_{\rho, i, j} \sum_{\sigma, k, l} G^{-1}_{(\rho, i, j), (\sigma, k, l)} \braket{E_{kl}^\sigma, a}_0 \frac{1}{||E_{kl}^\sigma||_2||E_{ij}^\rho||_2}E_{ij}^\rho \\ 
          &= \sum_{\rho, i, j} \sum_{\sigma, k, l} G^{-1}_{(\rho, i, j), (\sigma, k, l)}  \frac{d_\sigma \overline{\sigma(a^*)_{lk}}}{||E_{kl}^\sigma||_2||E_{ij}^\rho||_2}E_{ij}^\rho
     \end{align}
    We can also expand $a$ using the bilinear form $\braket{\cdot, \cdot}_{\mathbf{L}}$, defined in~\cref{sec:trace_forms_and_dual_bases}:
    \begin{equation}
        a 
        = \sum_{\rho, i, j} \left\langle\left(E_{ij}^\rho\right)^*, a\right\rangle_{\mathbf{L}} E_{ij}^\rho 
        = \sum_{\rho, i, j} \frac{1}{d_\rho} \braket{E_{ji}^\rho, a}_{\mathbf{L}} E_{ij}^\rho 
        = \sum_{\rho, i, j} \rho(a)_{ij} E_{ij}^\rho
    \end{equation}
    Since this is true for all $a \in A$, comparing coefficients implies that 
    \begin{align}
        \rho(a)_{ij} &=  \sum_{\sigma, k, l} G^{-1}_{(\rho, i, j), (\sigma, k, l)}  \frac{d_\sigma \overline{\sigma(a^*)_{lk}}}{||E_{kl}^\sigma||_2||E_{ij}^\rho||_2} \\
        &= \frac{d_\rho \overline{\rho(a^*)_{ji}}}{||E_{ij}^\rho||_2^2} +  \frac{1}{||E_{ij}^\rho||_2}  \sum_{\sigma, k, l} (G^{-1} - I)_{(\rho, i, j), (\sigma, k, l)} \frac{\braket{E_{kl}^\sigma, a}_0}{||E_{kl}^\sigma||_2}
    \end{align}
    
    So, it suffices to bound the right most term in the second line. By Cauchy-Schwarz, $\lvert\braket{E_{kl}^\sigma, a}_0\rvert \le ||E_{kl}^\sigma||_2$. \cref{lem:operator_norm_of_inverse}, \cref{lem:off_diagonal_sums}, and the $\delta$-nice assumption imply the absolute value of the error term is bounded as
    \begin{equation}
        \frac{1}{||E_{ij}^\rho||_2}\sum_{\sigma, k, l} (G^{-1} - I)_{(\rho, i, j), (\sigma, k, l)} \le \frac{\delta}{1 - \delta} \cdot \frac{ \sqrt{|A|}}{||E_{ij}^\rho||_2}
    \end{equation}
    Finally, $\delta \le 1/2 \implies \delta/(1- \delta) \le 2\delta$.
\end{proof}
Returning to the proof of \cref{thm:approx_qft},~\cref{lem:approx_nice} and \cref{lem:opnorm_bound} with $M = \ft_A - \wt{\ft_A}$ implies that 
\begin{equation}
    ||(\ft_A - \wt{\ft_A} )||_\infty \le O(\delta \cdot |A|^{3/2})
\end{equation}
\end{proof}

\section{Diagram Algebras}
\label{sec:overall_diagram_algebras}
The semisimple algebras we will consider in this paper belong to a family of algebras known as \textit{diagram algebras}. Diagram algebras are natural generalizations of the symmetric group algebra $\mathbb{C}[S_n]$, and have many applications in quantum information—see~\cref{sec:intro_semisimple} for an overview.  
\subsection{Partition Diagrams}
In order to define a diagram algebra, we first need the notion of a partition diagram:
\begin{definition}
A \emph{partition diagram} is a set partition of the $2n$ vertices $ \{1,\dots,n\}\ \cup\ \{1',\dots,n'\}$, usually visualized with a $n \times 2$ grid. For example, the following is a partition diagram with $n=4$:
\begin{figure}[H]
    \centering
    \includegraphics[width=0.3\linewidth]{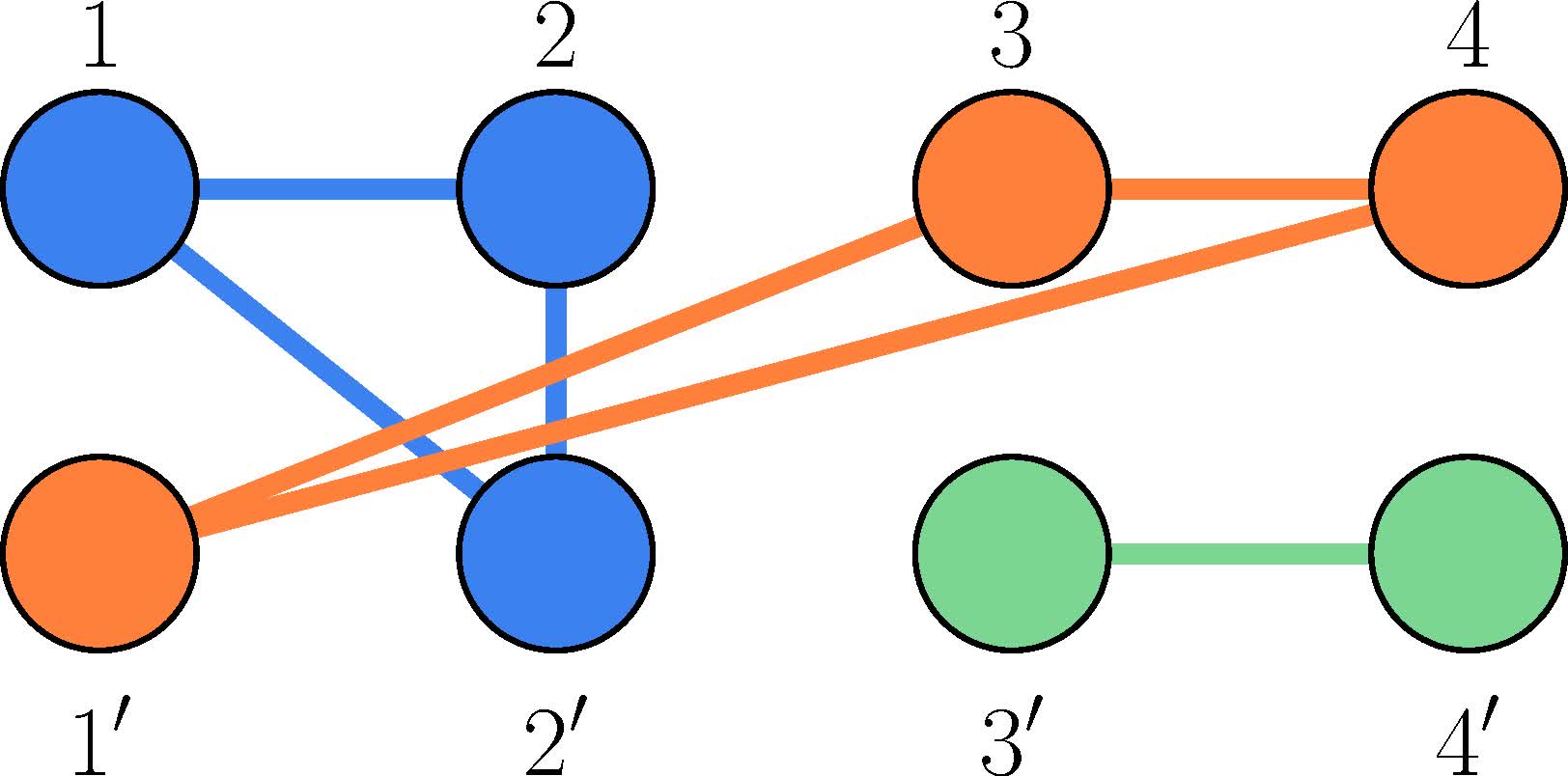}
\end{figure}
\end{definition}
Given a partition diagram $D$, we use $\cc(D)$ to denote the number of connected components in $D$.\ For example, $\cc(D) = 3$ for the above diagram. These connected components are often referred to as \textit{blocks}.

Another important statistic of a partition diagram is its propagating number:
\begin{definition}
Let $D$ be a partition diagram. A block of $D$ is \emph{propagating} if it intersects both $\{1,\dots,n\}$ and $\{1',\dots,n'\}$. The \emph{propagating number} of $D$ is
\begin{equation}
    \label{eq:propagating_number}
    \pn(D)\ :=\ \#\{\text{propagating blocks of }D\}.
\end{equation}
For example, in the diagram above, $\pn(D) =  2$.
\end{definition} 
\subsection{The Partition Algebra and its Subalgebras}
\label{sec:diagram_algebras} 
The \textit{partition algebra} endows the set of partition diagrams with additive and multiplicative operations. Diagram algebras are the subalgebras of the partition algebra that restrict the set of valid diagrams.%
\footnote{
    Diagram algebras are sometimes defined more broadly to include additional algebras which attach extra features to the diagrams, which we do not consider here.
}
In this section, we define the partition algebra, as well as the relevant subalgebras considered in this work. 
\paragraph{The Partition Algebra}
\begin{definition}
The \emph{partition algebra} $P_n(d)$ is the $\mathbb{C}$-algebra spanned by all partition diagrams with $n$ columns. Multiplication of two diagrams is defined as follows: given diagrams $D_1$ and $D_2$, $D_1D_2$ is the stacked diagram obtained by identifying the vertices in the bottom row of $D_1$ with the vertices in the top row of $D_2$. The vertices in the identified middle rows are then removed—if $c$ connected components are removed this way, the product is multiplied by $d^c$:
\end{definition}
\begin{figure}[H]
    \centering
    \includegraphics[width=\linewidth]{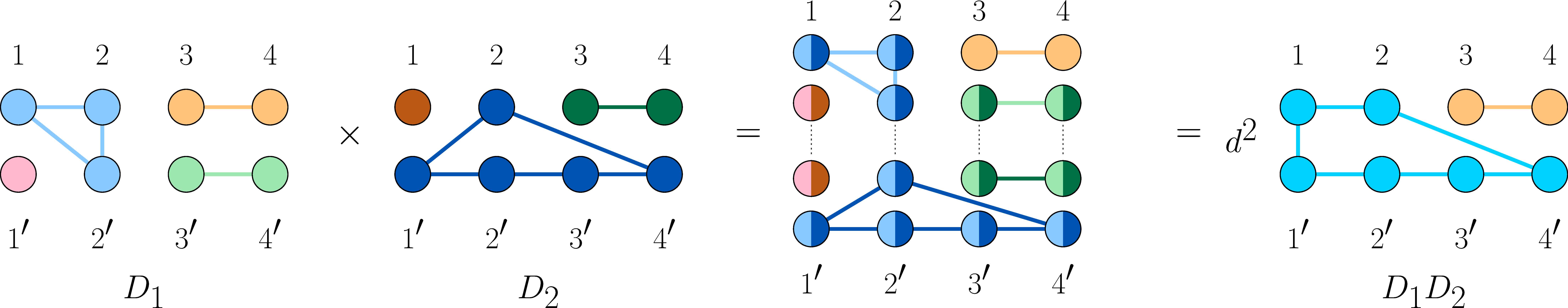}
    \label{fig:multiplicaiton_partition_algebra}
    \caption{An example of multiplying two partition diagrams.}
\end{figure}
\begin{lemma}[\cite{halverson2004partitionalgebras}]
    \label{lem:propgation_number_sub_mult}
    For all partition diagrams $D_1$ and $D_2$, $\pn(D_1D_2) \le \min \{\pn(D_1), \pn(D_2)\}$. 
\end{lemma}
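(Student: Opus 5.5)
By symmetry it suffices to show $\pn(D_1D_2) \le \pn(D_1)$; the bound $\pn(D_1D_2)\le \pn(D_2)$ follows by the same argument with the roles of top and bottom rows exchanged. The scalar $d^c$ produced by removed middle components plays no role here. So we work with the underlying diagram of the product, obtained from the stacked three-row graph (top row of $D_1$, identified middle row, bottom row of $D_2$) by taking connected components and restricting them to the outer two rows.

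The plan is to build an injection from the propagating blocks of $D_1D_2$ into the propagating blocks of $D_1$. Let $C$ be a propagating block of $D_1D_2$. Then $C$ is the restriction of a connected component $\widetilde C$ of the stacked graph, and $\widetilde C$ contains a top vertex $x$ and a bottom vertex $y'$.

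Take a path in $\widetilde C$ from $x$ to $y'$. Its edges come either from blocks of $D_1$, which join the top row to the middle row, or from blocks of $D_2$, which join the middle row to the bottom row. Since the path starts in the top row and ends in the bottom row, it must cross the middle row at some point. Hence the block $B$ of $D_1$ containing $x$ can be extended along the path until it first reaches a middle vertex. More carefully, follow the path from $x$ and stop at the first middle-row vertex $m$. Every edge before $m$ joins top-row vertices, which can only happen inside blocks of $D_1$, so $x$ and $m$ lie in the same block $B$ of $D_1$. Because $m$ is a middle vertex, which is a bottom vertex of $D_1$, the block $B$ is propagating in $D_1$. Define $\phi(C)=B$.

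To see that $\phi$ is injective, note that $B \subseteq \widetilde C$. Distinct propagating blocks $C \neq C'$ of the product come from distinct components $\widetilde C \neq \widetilde C'$ of the stacked graph, and these are disjoint. So $\phi(C)$ and $\phi(C')$ are blocks contained in disjoint sets, and therefore they are distinct. This gives $\pn(D_1D_2)\le \pn(D_1)$.

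The only delicate step is formalizing the product as a component structure on the three-row graph, consistent with the paper's definition of multiplication. Once that is in place, the rest is the elementary path argument above.
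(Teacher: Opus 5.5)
Your proof is correct. The first-middle-vertex argument shows that the $D_1$-block of any top vertex in a propagating component of the stacked graph is propagating in $D_1$. This relies on no edge joining the top row directly to the bottom row; $D_1$-blocks can also join top--top or middle--middle vertices, contrary to your phrasing, but your argument handles this anyway. Disjointness of components then makes $\phi$ injective, and the second bound follows symmetrically, or via $(D_1D_2)^{\op}=D_2^{\op}D_1^{\op}$ together with $\pn(D^{\op})=\pn(D)$.

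The paper gives no proof of its own and simply cites Halverson--Ram. Your injection argument is the standard elementary justification of this fact.
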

\noindent $P_n(d)$ has the following canonical generating set:
\begin{lemma}[\cite{halverson2004partitionalgebras}]
\label{lem:partition_generators}
 $P_n(d)$ is generated by the the \textit{swap} diagrams $ \{s_i\}_{i=1}^{n-1}$, the \textit{point} diagrams $ \{p_i\}_{i=1}^{n}$, and the \textit{bridge} diagrams $\{b_i\}_{i=1}^{n-1}$, shown below:
 \begin{figure}[H]
     \centering
     \includegraphics[width=\linewidth]{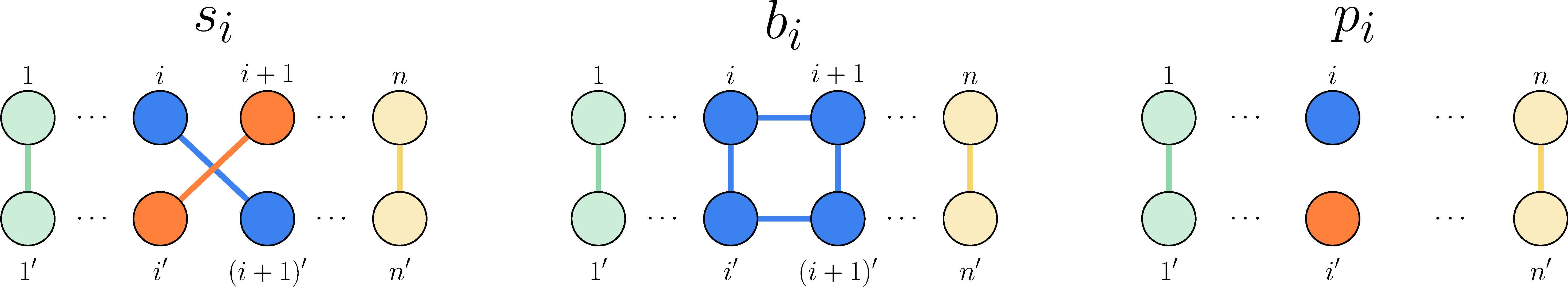}
     \label{fig:partition_algebra_generators}
 \end{figure}
\end{lemma}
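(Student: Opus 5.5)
The claim is that every partition diagram on $n$ columns is a product of the swaps $s_i$, points $p_i$ and bridges $b_i$. The natural route is to factor an arbitrary partition diagram $D$ into three pieces: a ``top'' part that builds the non-propagating and propagating blocks meeting the top row, a permutation in the middle, and a ``bottom'' part doing the same for the bottom row. It is enough to show that every diagram is a scalar multiple of such a word, since the scalars $d^c$ are nonzero (for $d\neq 0$) and can be divided out.

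\textbf{Step 1 (symmetric group).} The swaps $s_i$ generate all permutation diagrams, since adjacent transpositions generate $S_n$. In particular, conjugating $p_1$ and $b_1$ by permutations gives every $p_i$. It also gives every ``bridge'' $b_{ij}$, which joins all four vertices of columns $i$ and $j$ for arbitrary $i<j$.

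\textbf{Step 2 (normal form).} Let $D$ have propagating number $k$. Choose a permutation $\pi$ on top so that:
\begin{itemize}
\item the top vertices of the $k$ propagating blocks occupy columns $1,\dots,k$, with one representative each;
\item the other top vertices are grouped into consecutive runs by block.
\end{itemize}
Choose $\pi'$ on the bottom similarly. Then $D = \pi\, T\, \sigma\, T'\, \pi'$, where:
\begin{itemize}
\item $\sigma \in S_k \subseteq S_n$ matches the propagating blocks;
\item $T$ is a ``top-only'' diagram whose blocks are consecutive intervals on top, and it is the identity on the bottom for columns $1,\dots,k$;
\item $T'$ is the analogous bottom-only diagram.
\end{itemize}
I would verify this factorization by directly tracing connectivity through the stacked diagram. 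The care needed is in ensuring no closed middle components are created, or else counting them to get the factor $d^c$.

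\textbf{Step 3 (building interval blocks).} Consider a top interval $[i,j]$ that lies in a propagating block. Multiplying $b_i b_{i+1}\cdots b_{j-1}$ joins all of columns $i..j$ top and bottom. Then applying $p_{i+1},\dots,p_j$ to the bottom, and using $b_\ell p_{\ell+1} b_\ell=b_\ell$-type manipulations, detaches the bottom vertices $i+1,\dots,j$. This gives a block containing top $i..j$ and bottom $i$, plus singletons on the bottom.

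The remaining bottom singletons and the non-propagating top blocks come from products of points. For a non-propagating block on top columns $i..j$, take $p_i\, b_i\cdots b_{j-1}\, p_i$ suitably placed; the outer points cut the connection to the opposite row. These pieces sit on disjoint column sets, so they commute, and their product is $T$ up to a power of $d$. The bottom part $T'$ is handled by the symmetric (flipped) construction.

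\textbf{Main obstacle.} The delicate step is Step 3: producing exactly the desired block shape rather than something merging too much or too little, with the correct power of $d$. I would first check small cases ($n=2,3$) by explicit stacking. Then I would argue by induction on the number of non-singleton blocks, showing that multiplying by one bridge merges exactly two adjacent blocks while leaving the others unchanged.
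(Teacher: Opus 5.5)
Your overall plan (sandwich a ``top'' piece and a ``bottom'' piece built from bridges and points between permutations) is the standard route; the paper itself gives no proof and only cites Halverson--Ram. However, two steps fail as written.

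First, the piece you propose for a non-propagating top block on columns $i,\dots,j$ is wrong when $j>i$. Stacking shows that $p_i\,(b_i b_{i+1}\cdots b_{j-1})\,p_i$ is the diagram with singletons $\{i\}$, $\{i'\}$ and the \emph{propagating} block $\{i+1,\dots,j,(i+1)',\dots,j'\}$, with coefficient $1$. The two outer points only sever column $i$; columns $i+1,\dots,j$ still pass straight through. The correct piece is $b_i\cdots b_{j-1}\,p_ip_{i+1}\cdots p_j$. It has coefficient $1$, the top block $\{i,\dots,j\}$, and bottom singletons $\{i'\},\dots,\{j'\}$. For $T'$, use $p_i\cdots p_j\,b_i\cdots b_{j-1}$. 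Your propagating piece $b_i\cdots b_{j-1}\,p_{i+1}\cdots p_j$ is already correct with coefficient $1$, and no relations like $b_\ell p_{\ell+1}b_\ell=b_\ell$ are needed.

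Second, the normal form in Step~2 is internally inconsistent. Disjoint intervals that each contain one of the columns $1,\dots,k$ force every interval except the one containing column $k$ to be a singleton. So in general you cannot have all three of the following at once:
\begin{itemize}
\item one representative of each propagating block in columns $1,\dots,k$;
\item interval blocks in $T$;
\item $T$ equal to ``the identity on the bottom for columns $1,\dots,k$''.
\end{itemize}
Moreover, Step~3 places the live bottom vertex of each interval piece at its left endpoint $i$, not in $\{1,\dots,k\}$. The fix is to drop the representative condition:
\begin{itemize}
\item choose $\pi$ so that the top part of every block is an interval;
\item build $T$ from the corrected interval pieces, so its bottom row has live vertices exactly at the left endpoints $i_1,\dots,i_k$ of the propagating intervals and singletons elsewhere;
\item build $T'$ the same way, with live top vertices at $i'_1,\dots,i'_k$;
\item take $\sigma\in S_n$ (not $S_k$) sending each $i_m$ to the live column of the matching bottom block.
\end{itemize}
Since $\sigma$ carries the $n-k$ non-live middle columns to non-live ones, each closes into a loop, giving $\pi T\sigma T'\pi'=d^{\,n-k}D$. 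This suffices because $d\neq 0$ in the paper's regime. With these two repairs your argument is a complete proof.
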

\begin{lemma}[\cite{halverson2004partitionalgebras}]
    $|P_n(d)|$ is the $2n$-th Bell number, which is upper-bounded by $2^{O(n \log n)}$.
\end{lemma}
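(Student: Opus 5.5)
The plan is to prove the two parts of the statement separately: first identify $|P_n(d)|$ with a count of set partitions, then bound that count by an elementary injection.

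\textbf{Identifying the dimension.} By definition, $P_n(d)$ is the $\mathbb{C}$-algebra whose underlying vector space has the partition diagrams with $n$ columns as a basis. The diagrams are formal basis vectors, and the multiplication rule only specifies products, so distinct diagrams are linearly independent. Hence $|P_n(d)|$ equals the number of partition diagrams. A partition diagram is, by definition, exactly a set partition of the $2n$-element vertex set $\{1,\dots,n\}\cup\{1',\dots,n'\}$. The number of set partitions of an $m$-element set is the Bell number $B_m$ (this is its definition), so $|P_n(d)| = B_{2n}$. This count does not depend on $d$, which only enters the structure constants.

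\textbf{Bounding the Bell number.} I will show $B_m \le m^m$ with an explicit injection. Fix a total order on the $m$ elements, and send a set partition $\pi$ to the function $f_\pi : [m]\to[m]$ with $f_\pi(x) = \min(\text{block of } x)$. The map $\pi\mapsto f_\pi$ is injective, because $\pi$ is recovered as the partition into fibers of $f_\pi$. Therefore $B_m \le m^m$. Taking $m=2n$ gives
\[
|P_n(d)| = B_{2n} \le (2n)^{2n} = 2^{2n\log_2(2n)} = 2^{O(n\log n)}.
\]

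There is no real obstacle here. The only point needing care is that "spanned by all partition diagrams" is read as "having the diagrams as a basis", which is how the algebra is constructed. If a sharper constant were wanted, one could instead cite $B_m \le (0.792\,m/\ln(m+1))^m$, but the crude bound above already suffices for the $2^{O(n\log n)}$ claim.
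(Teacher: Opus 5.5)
Your proof is correct. The paper offers no argument of its own here beyond citing Halverson--Ram. Your reasoning is the standard justification for that citation: the diagram basis is exactly the set of set partitions of the $2n$-element vertex set, so the dimension is $B_{2n}$, and the injection $\pi \mapsto f_\pi$ then gives $B_{2n} \le (2n)^{2n} = 2^{O(n\log n)}$. The injectivity step is sound, because distinct blocks have distinct minima, so the fibers of $f_\pi$ are exactly the blocks of $\pi$.
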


The first subalgebra of $P_n(d)$ we consider is the related \textit{half partition algebra}, denoted $P_{n - \frac12}(d)$.\ $P_{n - \frac12}(d)$ is the subalgebra of $P_n(d)$ spanned by all partition diagrams where $n$ and $n^\prime$ are in the same block. Equivalently, $P_{n - \frac12}(d)$ is the subalgebra of $P_n(d)$ generated by $ \{s_i\}_{i=1}^{n-2}$, $ \{p_i\}_{i=1}^{n -1 }$, and $\{b_i\}_{i=1}^{n-1}$. 
In $P_{n-\frac12}(d)$, the block containing $n$ and $n^\prime$ is omitted when computing the propagating number of a diagram in $P_{n - \frac12}(d)$.

\paragraph{The Brauer Algebra}
\begin{definition}
The \emph{Brauer algebra} $B_n(d)$ is the subalgebra of $P_n(d)$ spanned by partition diagrams in which every block has size $2$. That is, every diagram is a perfect matching among the $2n$ vertices. It is straightforward to check that $B_n(d)$ is indeed closed under diagram multiplication. $B_n(d)$ has the following generating set~\cite{Nazarov1996Brauer}: 
\begin{equation}
    \label{eq:brauer_generators}
    \{s_i\}_{i=1}^{n-1}\ \cup\ \{e_i\}_{i=1}^{n-1}
\end{equation}
where $e_i = b_ip_ip_{i+1}b_i$ is a \textit{contraction diagram}:
\begin{figure}[H]
    \centering
    \includegraphics[width=0.28\linewidth]{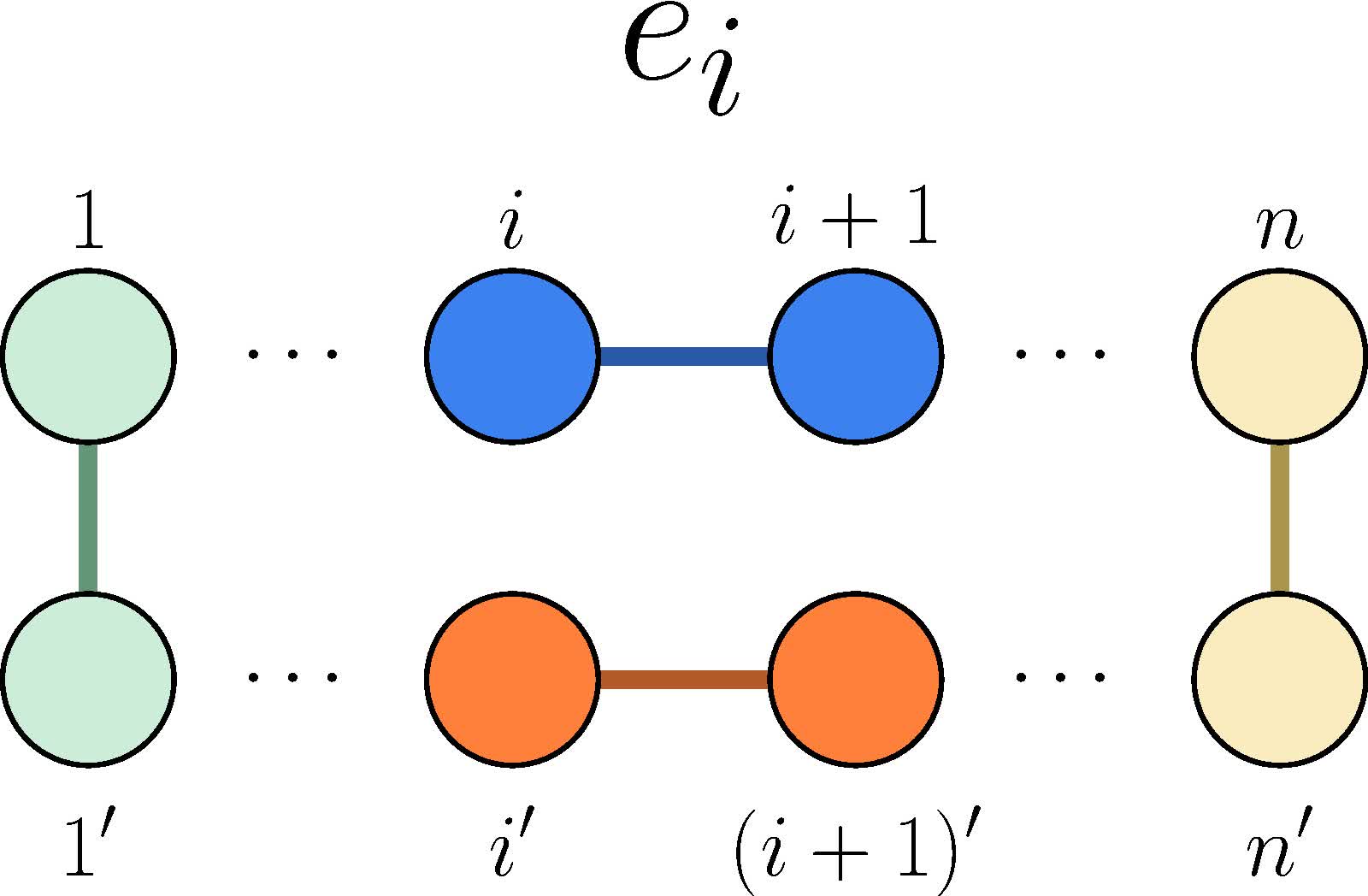}
    \label{fig:contraction}
\end{figure}
\begin{lemma}[\cite{Brauer1937AlgebrasSemisimpleContinuousGroups}]
    $|B_n(d)| = (2k - 1)!! = (2k - 1)(2k - 3)(2k - 5) \dots$
\end{lemma}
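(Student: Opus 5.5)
The statement should be read with $k = n$: the claim is that $\dim B_n(d) = (2n-1)!!$. The plan is to reduce this to counting perfect matchings of a $2n$-element set, and then count those by a one-step recursion.

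\emph{Step 1 (dimension equals number of diagrams).} By definition, $B_n(d)$ is the span, inside $P_n(d)$, of the partition diagrams all of whose blocks have size $2$. The partition diagrams form a basis of $P_n(d)$, since $P_n(d)$ is defined as the algebra with the set of all partition diagrams as its spanning set, and distinct diagrams are distinct basis vectors. So the Brauer diagrams are a subset of a basis and hence linearly independent. Therefore $|B_n(d)|$ equals the number of set partitions of $\{1,\dots,n\}\cup\{1',\dots,n'\}$ into blocks of size $2$, i.e.\ the number $M(2n)$ of perfect matchings on $2n$ labeled vertices. Note that the parameter $d$ plays no role in this count.

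\emph{Step 2 (recursion).} Let $M(2m)$ denote the number of perfect matchings on a fixed set of $2m$ labeled points, with $M(0)=1$. Fix the smallest vertex $v$ under some total order, say $1 < \dots < n < 1' < \dots < n'$. In any perfect matching, $v$ lies in a block $\{v,w\}$, and there are exactly $2m-1$ choices for $w$. Once $\{v,w\}$ is fixed, the remaining blocks form an arbitrary perfect matching of the other $2m-2$ points. The map
\[
\text{matching} \;\mapsto\; \bigl(w,\ \text{matching restricted to the remaining } 2m-2 \text{ points}\bigr)
\]
is a bijection, so $M(2m) = (2m-1)\,M(2m-2)$.

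\emph{Step 3 (conclusion).} Induction on $m$ gives $M(2m) = (2m-1)(2m-3)\cdots 3\cdot 1 = (2m-1)!!$. Taking $m=n$ yields $|B_n(d)| = (2n-1)!!$.

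There is no genuine obstacle in this argument. The only point needing care is Step 1: one must use that diagrams are linearly independent in $P_n(d)$, so that the dimension is exactly the count of diagrams. This holds by the construction of $P_n(d)$ as the formal span of partition diagrams.
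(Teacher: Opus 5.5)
Your proof is correct, and you are right to read the paper's $k$ as $n$, so the claim is $\dim B_n(d) = (2n-1)!!$. The paper gives no proof of its own and simply cites Brauer (1937); your argument (diagrams form a basis, then count perfect matchings via $M(2m) = (2m-1)\,M(2m-2)$) is the standard one underlying that citation.
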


\end{definition}

\paragraph{The Walled Brauer Algebra}

\begin{definition}
For $r + s = n$, the \emph{walled Brauer algebra} $B_{r,s}(d) \subseteq B_n(d)\subseteq P_n(d)$ is spanned by all Brauer diagrams with the following additional constraint: blocks with both vertices in the first $r$ columns or both vertices in the last $s$ columns must be in opposite rows. Otherwise, the vertices must be in the same row (see the diagram on the next page). $B_{r,s}(d)$ is generated by $ \{s_i\}_{i\ne r} \cup\{e_r\}$, and $|B_{r,s}(d)| = (r + s)!$ (\cite{cox2007blockswalledbraueralgebra}).
\begin{figure}[H]
    \centering\includegraphics[width=0.35\linewidth]{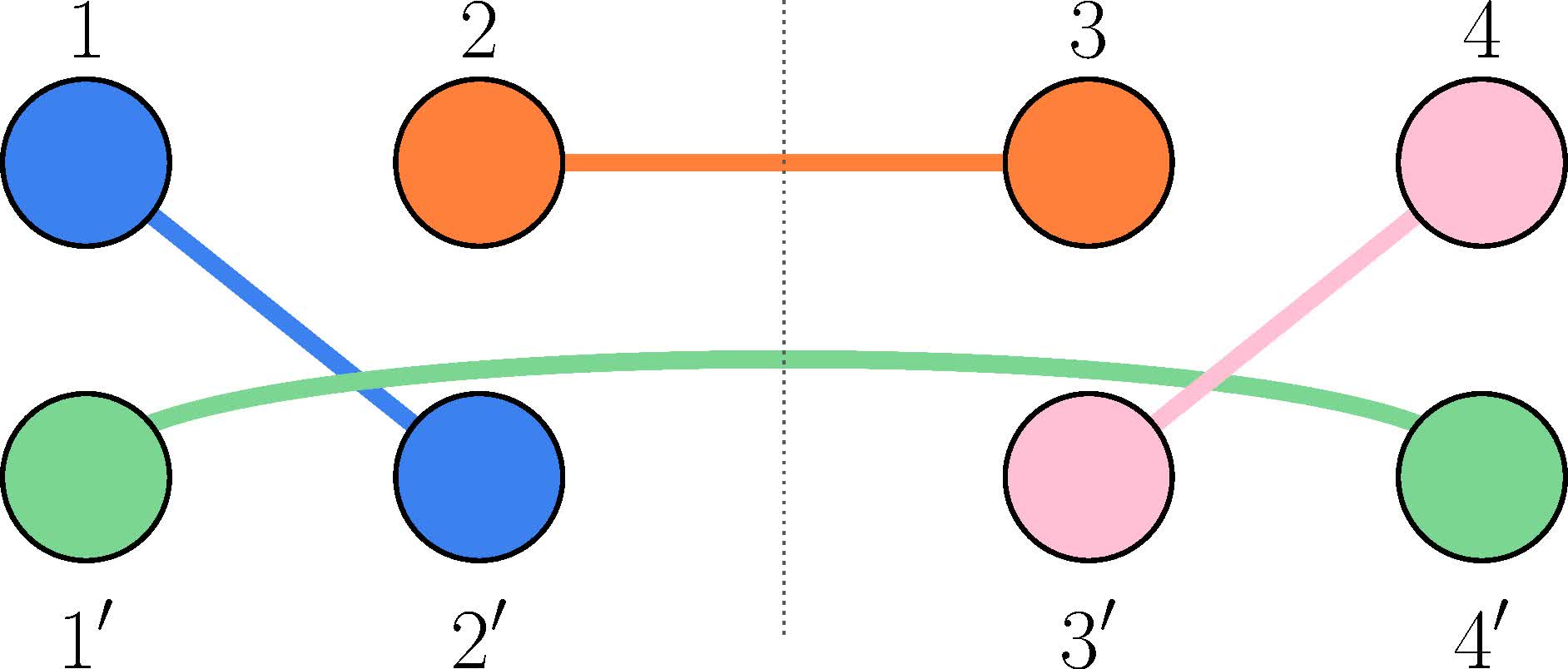}
    \caption{An example of a walled Brauer diagram in $B_{2,2}(d)$.}
    \label{fig:walled_brauer}
\end{figure}
\noindent 
\end{definition}
\noindent In addition to $s_i$ and $e_r$, we will also sometimes use $f_i$ to denote the contraction between column $i$ and column $s + i$.

\paragraph{The Symmetric Group Algebra}
\begin{definition}
The \emph{symmetric group algebra} $\mathbb{C}[S_n] \subseteq B_n(d)\subseteq P_n(d)$ is spanned by all partition diagrams that are perfect matchings between the $n$ vertices in the bottom row and the $n$ vertices in the top row.
In other words, each component is of size two, and contains a single bottom vertex and a single top vertex. These diagrams, of course, correspond to permutations on $n$ items. 
$\mathbb{C}[S_n]$ is generated by the transpositions
$
    \{s_i\}_{i= 1}^{n-1}
$.
\end{definition}

In this work, we focus on the partition algebra, Brauer algebra, and walled Brauer algebra. However, we believe our techniques could be generalized to obtain efficient quantum Fourier transforms for other diagram algebras, such as the Temperley-Lieb algebra $TL_n(d)$ or the planar partition algebra $PP_n(d)$. 

\subsection{Irreps of Diagram Algebras}
\label{sec:young_diagrams}
Since the goal of this work is to efficiently perform the Fourier transform on these diagram algebras, we will need to study their irreducible representations.
 All the diagram algebras we study in this paper have irreps indexed by \textit{Young diagrams}, which are integer partitions traditionally visualized using non-increasing, left-justified rows. Below is a Young diagram corresponding to the partition $(4,2,2,1)$ of $n = 9$: 
\begin{figure}[H]
    \centering
    \includegraphics[width=0.15\linewidth]{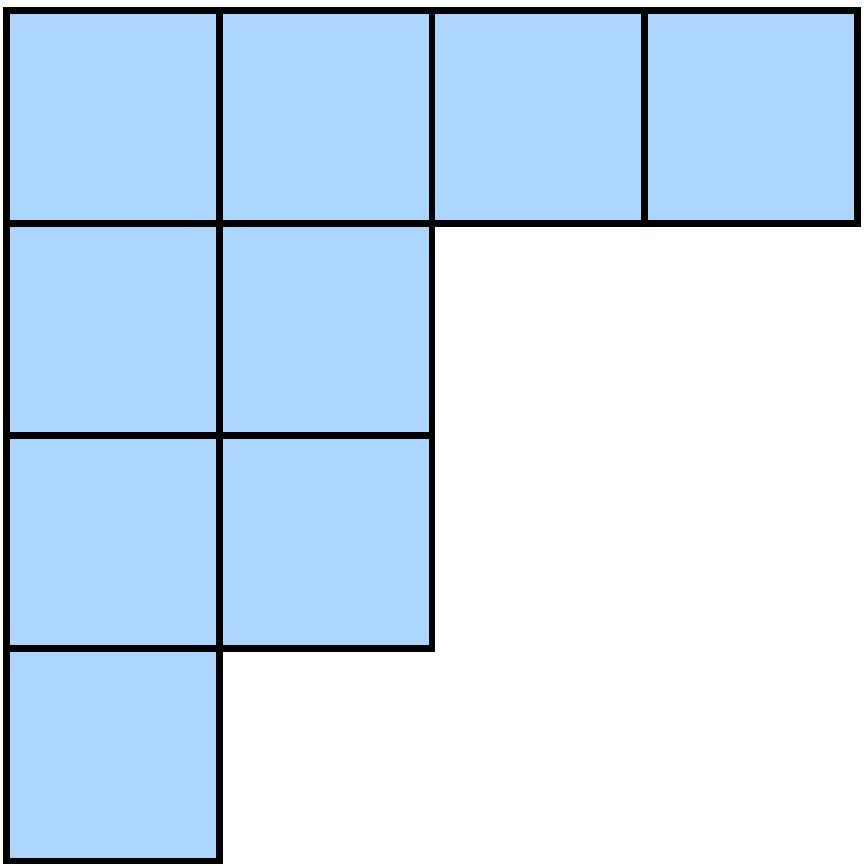}
    \label{fig:young_diagram}
\end{figure}
In this work, we will slightly abuse notation and use $\lambda$ to refer to both a Young diagram and the irrep it indexes, depending on the context. We also use $|\lambda|$ to denote the total number of boxes in $\lambda$, and $\lambda_i$ to denote the number of boxes in the $i$th row. Boxes in $\lambda$ can be given a label $(i, j)$ specifying the row and column of a box, and the \textit{content} of a box $b$, denoted $\cont(b)$, is $j - i$. 
The \textit{hook length}, $h(i,j)$, of a box $(i,j)$ is given by the number of boxes $(k, l)$ with $k = i$ and $l \ge j$ or $k \ge i$ and $l = j$. For example, the hook lengths of each box in the Young diagram above are as follows:  
\begin{figure}[H]
    \centering
    \includegraphics[width=0.15\linewidth]{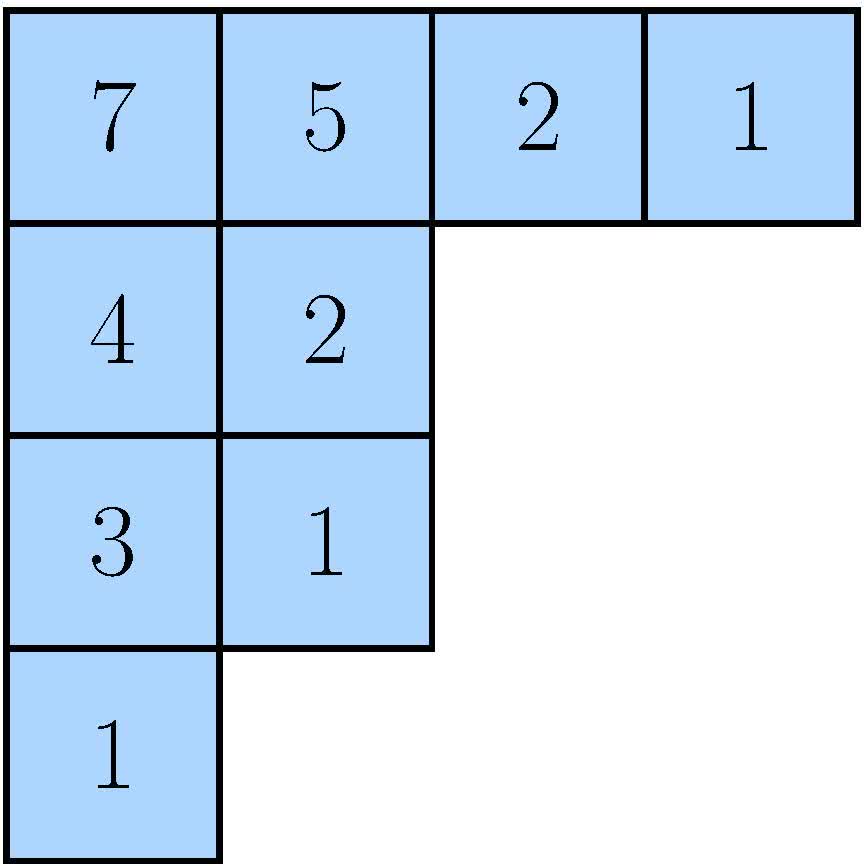}
    \label{fig:hook_length}
\end{figure}
\begin{definition}
    \label{def:hook_length_formula}
    Given a Young diagram $\lambda$, the \textit{hook length formula} is 
    \begin{equation}
        f^\lambda = \frac{|\lambda|!}{\prod_{(i,j) \in \lambda} h(i,j)}
    \end{equation}
\end{definition}

$f^\lambda$ plays an important role in the representation theory of diagram algebras, as it relates to the dimension of the irrep indexed by $\lambda$. In the symmetric group algebra $\mathbb{C}[S_n]$, which is the diagram algebra spanned by all permutation diagrams, $d_\lambda = f^\lambda$. Using the fact that the irreps of $\mathbb{C}[S_n]$ are indexed by Young diagrams with exactly $n$ boxes, one can derive the following well known sum-of-squares identity~\cite{Sagan2001}: 
\begin{equation}
        \label{eq:dim_identity_for_symmetric_group}
        \sum_{\lambda: |\lambda| = n} {(f^\lambda)^2} = |S_n| = n!
\end{equation}
Now, we give the irreps for each of the diagram algebras $P_n(d), P_{n - \frac12}(d), B_n(d)$, and $B_{r,s}(d)$.%
\footnote{
    As noted, for certain small values of $d$, these algebras are not semisimple, and thus do not decompose nicely into irreps. However, since we will take $d \sim 2^{\poly(n)}$, this will not be an issue in this regime. 
}
In the following theorems, assume $d \in \mathbb{Z}_{\ge 0}$. 
\begin{theorem}[\cite{halverson2004partitionalgebras}]
   $P_n(d)$ is semisimple whenever $d \not\in \{0, 1, \dots ,2n - 2\}$. In this regime,
   \begin{equation}
      \label{eq:irreps_of_parition_algebra}
      \widehat{P_n(d)} = \{\lambda:\ 0 \le |\lambda|\le n\},\qquad
      d_\lambda =f^\lambda \sum_{l = |\lambda|}^n \stirlingii{n}{l}\binom{l}{|\lambda|}.
   \end{equation}
\end{theorem}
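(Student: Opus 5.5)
The statement has two parts: semisimplicity of $P_n(d)$ for $d\notin\{0,1,\dots,2n-2\}$, and the explicit list of irreps together with their dimensions. I would take the standard route of Halverson--Ram and treat $P_n(d)$ as a tower of algebras,
\begin{equation}
P_0(d)\subseteq P_{\frac12}(d)\subseteq P_1(d)\subseteq\cdots\subseteq P_n(d),
\end{equation}
organised around the ideal filtration by propagating number. By \cref{lem:propgation_number_sub_mult}, the span $J_r$ of diagrams with $\pn(D)\le r$ is a two-sided ideal. The quotient $J_r/J_{r-1}$ has the shape of an inflation of $\mathbb{C}[S_r]$. Concretely, a diagram with $\pn=r$ factors uniquely as (top half-diagram with $r$ marked blocks) $\times$ (permutation in $S_r$) $\times$ (bottom half-diagram with $r$ marked blocks).

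This factorization gives $P_n(d)$ a cellular structure. The cell modules are
\begin{equation}
\Delta_n(\lambda)=\mathbb{C}\{\text{half-diagrams on } n \text{ vertices with } |\lambda| \text{ marked propagating blocks}\}\otimes_{\mathbb{C}[S_{|\lambda|}]} S^\lambda,
\end{equation}
with $0\le|\lambda|\le n$.

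The dimension formula then comes from counting. A half-diagram with exactly $r$ marked blocks is determined by:
\begin{itemize}
\item a set partition of $\{1,\dots,n\}$ into $l$ blocks, counted by $\stirlingii{n}{l}$;
\item a choice of which $r$ of those blocks propagate, counted by $\binom{l}{r}$;
\item an ordering of the marked blocks, counted by $r!$.
\end{itemize}
Tensoring over $\mathbb{C}[S_r]$ removes the $r!$ and multiplies by $f^\lambda$. This gives
\begin{equation}
\dim\Delta_n(\lambda)=f^\lambda\sum_{l=|\lambda|}^n\stirlingii{n}{l}\binom{l}{|\lambda|}.
\end{equation}
As a consistency check, $\sum_\lambda(\dim\Delta_n(\lambda))^2$ equals the $2n$-th Bell number. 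One uses \cref{eq:dim_identity_for_symmetric_group} to replace $\sum_{|\lambda|=r}(f^\lambda)^2$ by $r!$. The remaining sum counts diagrams by propagating number via the unique factorization above.

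Given this count, semisimplicity is equivalent to each cell module being irreducible with the $\Delta_n(\lambda)$ pairwise non-isomorphic. By cellular algebra theory, this in turn is equivalent to nondegeneracy of the cell bilinear forms. The form on half-diagrams is built by pairing two half-diagrams and reading off a power of $d$ together with a permutation. Irreducibility then follows from Wedderburn's theorem once the sum-of-squares identity holds.

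To establish nondegeneracy, I would argue inductively along the tower using the Jones basic construction. The isomorphism $p_nP_n(d)p_n\cong P_{n-1}(d)$ (for $d\neq0$) combined with restriction rules yields Bratteli-diagram branching. Semisimplicity of $P_{n-\frac12}(d)$ and $P_n(d)$ then propagates from $P_{n-1}(d)$, as long as certain scalar "Jones–Wenzl" denominators are nonzero. Those denominators are linear in $d$, of the form $d-|\lambda|-k$. Along the tower they vanish exactly for $d\in\{0,\dots,2n-2\}$.

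\textbf{Main obstacle.} The hardest step is tracking these denominators precisely, to pin down the exact excluded set. If this proves difficult, the paper only needs the fallback: for $d\ge 2n$, or for generic $d$, the Gram determinant of the cell forms is a nonzero polynomial in $d$. Its leading term $d^{\max}$ comes from the diagonal pairings, so it is nonzero for all sufficiently large $d$. That suffices for the paper's regime $d\sim2^{\poly(n)}$. The precise root set I would cite from Martin and Halverson--Ram.
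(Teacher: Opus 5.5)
The paper gives no proof of this statement; it simply cites Halverson--Ram. Your outline is the standard Xi/Halverson--Ram route and is sound as far as it goes. It consists of the ideal filtration by propagating number, the inflation (cellular) structure with cell modules $\Delta_n(\lambda)$, the count giving $f^\lambda\sum_{l}\stirlingii{n}{l}\binom{l}{|\lambda|}$, and the Graham--Lehrer criterion. That criterion says $P_n(d)$ is semisimple iff every cell form $\phi_\lambda$ is nondegenerate, iff the cell modules are irreducible and pairwise non-isomorphic. Once semisimplicity holds, this machinery does deliver the list of irreps and their dimensions.

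For the sharp exceptional set, however, you also end up citing Martin and Halverson--Ram, so on that point you are in the same position as the paper. Your denominator claim is plausible: $|\lambda|+\cont(b)\in[0,2n-2]$ for $b$ addable or removable and $|\lambda|\le n-1$. But the substantive step is not argued. That step is Wenzl's criterion, which says nonvanishing of these weights carries semisimplicity from $P_{n-1}(d)$ through $P_{n-\frac12}(d)$ to $P_n(d)$.

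The exact set is genuinely delicate. For integer $d\ge 2n$ you could use Schur--Weyl duality instead. The map $P_n(d)\to\mathrm{End}_{S_d}((\mathbb{C}^d)^{\otimes n})$ is injective precisely when $d\ge 2n$, and the commutant of a finite group action is semisimple. At $d=2n-1$, though, $P_n(d)$ is still semisimple while that map is not injective, so the tower analysis is really needed there.

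Two smaller corrections.

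First, ``irreducibility then follows from Wedderburn's theorem once the sum-of-squares identity holds'' has the logic reversed. The identity $\sum_\lambda(\dim\Delta_n(\lambda))^2=\dim P_n(d)$ holds for every cellular algebra, semisimple or not, so it cannot produce irreducibility. Nondegeneracy of $\phi_\lambda$ gives $\Delta_n(\lambda)=L(\lambda)$ directly. The count plus Wedderburn then forces the radical to vanish.

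Second, in your fallback, ``the leading term comes from the diagonal pairings'' needs more than $c(x,y)<c(x,x)$, because the diagonal exponents $l_x-r$ vary with $x$. What you actually need is the averaged inequality $c(x,y)<\tfrac12\big((l_x-r)+(l_y-r)\big)$ for $x\neq y$ with nonzero entry. This holds because $x\vee y$ has at most $\min(l_x,l_y)$ blocks, with equality only when the underlying set partitions coincide; in that case differing marks make the entry vanish. This is the same mechanism as the paper's $2\,\cc(D\vee E)\le\cc(D)+\cc(E)$. After conjugating by $\mathrm{diag}(d^{-(l_x-r)/2})$, the Gram matrix is $I+O(d^{-1/2})$. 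That gives invertibility for sufficiently large $d$, not for all $d\ge 2n$.

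With the paper's own tools the large-$d$ case is even shorter, and no circularity arises because the operator-norm estimate below does not use semisimplicity.
\begin{itemize}
\item Since $\mathbf{S}(a^{\op})=\mathbf{S}(a)^\dagger$, the image of $\mathbf{S}$ is a $\dagger$-closed matrix algebra, hence semisimple.
\item The estimate $\|d^{-n}G^{\op}_{\mathbf S}(\mathcal{B}(A))-I\|_\infty\le d^{-1/2}|A|<1$ forces $\tr(\mathbf S(a)^\dagger\mathbf S(a))>0$ for $a\neq0$, so $\mathbf S$ is faithful.
\end{itemize}
Combined with your cell-module count, this proves the theorem in the regime $d\gg\poly(|A|)$, which is the only regime the paper actually uses.
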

Here, $\binom{l}{|\lambda|}$ is a binomial coefficient, and $\stirlingii{n}{l}$ is a Stirling number of the second kind.

\begin{theorem}[\cite{halverson2004partitionalgebras}]
   $P_{n - \frac12}(d)$ is semisimple whenever $d \not\in \{0, 1, \dots ,2n - 2\}$. In this regime,
   \begin{equation}
      \label{eq:irreps_of_half_parition_algebra}
      \widehat{P_{n - \frac12}(d)}
      = \{\lambda:\ 0\le |\lambda|\le n-1\},\qquad
      d_\lambda
      = f^\lambda \sum_{l=|\lambda|}^{\,n-1}
      \stirlingii{n}{\,l+1\,}\binom{l}{|\lambda|}.
   \end{equation}
\end{theorem}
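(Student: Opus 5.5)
The plan is to run the standard cellular / Schur--Weyl argument of Halverson and Ram, adapted so that the block containing $n$ and $n'$ plays a passive role. There are three steps. First, build cell (standard) modules indexed by Young diagrams $\lambda$ with $|\lambda|\le n-1$ and count their dimensions. Second, show these modules are irreducible and pairwise inequivalent when $d\notin\{0,\dots,2n-2\}$. Third, conclude semisimplicity by comparing dimensions.

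\textbf{Step 1 (cell modules and the dimension count).} Filter $P_{n-\frac12}(d)$ by propagating number. Using \cref{lem:propgation_number_sub_mult}, let $J_k$ be the span of the diagrams $D$ with $\pn(D)\le k$, where the block containing $n,n'$ is not counted. Each $J_k$ is a two-sided ideal. A diagram with $\pn(D)=k$ factors uniquely as a top ``half diagram'', a permutation in $S_k$ of the propagating blocks, and a bottom half diagram. A half diagram here is a set partition of $\{1,\dots,n\}$ with two pieces of marking:
\begin{itemize}
\item one distinguished block, namely the block containing $n$;
\item $k$ further marked blocks, which become the propagating blocks.
\end{itemize}
If such a partition has $l+1$ blocks, there are $\binom{l}{k}$ ways to choose the marked blocks. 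Hence the number of half diagrams is $h_k=\sum_{l\ge k}\stirlingii{n}{l+1}\binom{l}{k}$. For $|\lambda|=k$, define $\Delta(\lambda)$ as the span of half diagrams tensored with the Specht module $S^\lambda$. The algebra acts on it by concatenation, and any product that lowers the propagating number is set to zero. This module has dimension $f^\lambda h_k$, which is exactly the claimed $d_\lambda$.

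As a sanity check, I would verify the identity
\[
\sum_{k=0}^{n-1}\sum_{|\lambda|=k}(f^\lambda h_k)^2=\sum_k k!\,h_k^2=|P_{n-\frac12}(d)|,
\]
where the first equality uses \cref{eq:dim_identity_for_symmetric_group} and the right-hand side is the Bell number $B_{2n-1}$. This identity is just the triple factorization of diagrams counted above.

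\textbf{Step 2 (irreducibility via Gram forms).} Each $\Delta(\lambda)$ carries a bilinear form $\langle x,y\rangle$ with values in $\operatorname{End}(S^\lambda)$. It is obtained by gluing two half diagrams and reading off the resulting permutation when the propagating number is preserved, weighted by $d^{\#\text{closed blocks}}$. Standard cellular-algebra theory (Graham--Lehrer) says the following: if every one of these Gram matrices is nondegenerate, then the $\Delta(\lambda)$ form a complete set of pairwise inequivalent irreducibles, and the algebra is semisimple.

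\textbf{Step 3 (semisimplicity for the stated $d$).} This is the main obstacle, namely pinning down exactly which $d$ make a Gram determinant vanish. I would try two routes.
\begin{itemize}
\item \emph{Schur--Weyl route for integer $d\ge 2n-1$.} Use the duality from \cref{tab:schur-weyl-partition}: $P_{n-\frac12}(d)$ maps onto the centralizer of $S_{d-1}$ acting on $(\mathbb{C}^d)^{\otimes n}$ (equivalently on $V^{\otimes (n-1)}\otimes$ the fixed vector). The centralizer of a finite group representation is semisimple. When $d\ge 2n-1$ the diagram operators are linearly independent, because there are enough distinct coordinate values to separate all set partitions of the $2n-1$ relevant vertices. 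So the map is an isomorphism, and the algebra is semisimple.
\item \emph{Generic and non-integer $d$.} Each Gram determinant is a polynomial in $d$ with nonzero leading term, since the diagonal entries carry the highest power of $d$. So it has finitely many roots. The integer case above rules out every integer $\ge 2n-1$. The remaining candidate roots are exactly the exceptional set, which I would take from Martin--Saleur / Halverson--Ram's explicit factorization of the Gram determinants, whose roots lie in $\{0,\dots,2n-2\}$.
\end{itemize}

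With nondegeneracy established, Step 2 gives the set of irreps $\{\lambda: 0\le|\lambda|\le n-1\}$, Step 1 gives their dimensions, and the sum-of-squares identity confirms that the Wedderburn decomposition \cref{eq:semisimple_algebra} accounts for the whole algebra.
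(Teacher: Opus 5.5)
The paper gives no proof of this statement; it quotes Halverson--Ram. Your proposal is therefore an independent reconstruction, and in the regime the paper uses it is essentially sound. Just before these theorems the paper fixes $d\in\mathbb{Z}_{\ge 0}$, so the hypothesis simply means $d\ge 2n-1$, and your Schur--Weyl route settles semisimplicity outright:
\begin{enumerate}
\item $P_{n-\frac12}(d)$ preserves $V^{\otimes(n-1)}\otimes e_d\subseteq V^{\otimes n}$, because $n$ and $n'$ share a block.
\item Its image there commutes with $S_{d-1}$, and that image is the full centralizer.
\item Orbit-basis (M\"obius) triangularity makes the $B_{2n-1}$ diagram operators independent exactly when $d-1\ge 2n-2$.
\item So the map is an isomorphism onto a centralizer of a finite group, which is semisimple.
\end{enumerate}
Your first phrasing, ``the centralizer of $S_{d-1}$ on $(\mathbb{C}^d)^{\otimes n}$'', is wrong: that centralizer is $P_{n+\frac12}(d)$, as in the paper's table. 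Your parenthetical version is the right one. The half-diagram count $h_k=\sum_l\stirlingii{n}{l+1}\binom{l}{k}$ and the check $\sum_k k!\,h_k^2=B_{2n-1}$ are correct. Compared with Halverson--Ram, who treat arbitrary complex $d$ by an inductive argument along the tower of partition algebras, your route is more elementary but inherently reaches only integers $d\ge 2n-1$.

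Two repairs are needed. First, the Schur--Weyl route yields semisimplicity, not nondegeneracy of the cell forms. Your Step 2 is stated only as ``nondegenerate $\Rightarrow$ semisimple, with the $\Delta(\lambda)$ as the irreducibles''. You must invoke the other half of Graham--Lehrer's equivalence: semisimple $\Rightarrow$ every cell form is nondegenerate, so the $\Delta(\lambda)$ are irreducible and pairwise inequivalent. You must also cite, not assume, that $P_{n-\frac12}(d)$ is cellular with your inflation datum, e.g.\ as an iterated inflation of the $\mathbb{C}[S_k]$ in the sense of K\"onig--Xi.

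A cleaner alternative avoids cellular theory. Once $A$ is semisimple, each $J_k/J_{k-1}$ is a unital ideal of $A/J_{k-1}$. It is the sandwich algebra over $\mathbb{C}[S_k]$ whose sandwich matrix is your Gram matrix, so unitality forces that matrix to be invertible. Hence $J_k/J_{k-1}\cong M_{h_k}(\mathbb{C}[S_k])$ and $A\cong\bigoplus_k J_k/J_{k-1}$. This gives exactly the irreps $\lambda\vdash k$, $0\le k\le n-1$, of dimension $f^\lambda h_k$.

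Second, Route B is not a proof for non-integer $d$. Knowing each Gram determinant is a nonzero polynomial with no integer roots $\ge 2n-1$ says nothing about non-integer roots. Taking the roots from the Martin--Saleur/Halverson--Ram factorization is citing the very statement being proved. Your claim that the bad values are ``exactly'' $\{0,\dots,2n-2\}$ is also false. For $n=2$ the two Gram determinants are $d-1$ and $1$, so $P_{\frac32}(d)$ is semisimple at $d=0$ and $d=2$. This only matters if you want the result for general complex $d$, which the paper does not need.
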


\begin{theorem}[\cite{Nazarov1996Brauer}]
   $B_n(d)$ is semisimple whenever $d \not\in \{0, 1, 2, \dots, n\}$. In this regime,
   \begin{equation}
      \label{eq:irreps_of_brauer_algebra}
      \widehat{B_n(d)}
      = \{\lambda: |\lambda| \le n,\; |\lambda| \equiv n \!\!\!\!\!\mod 2\},\qquad
      d_\lambda = f^\lambda \binom{n}{|\lambda|}(n - |\lambda| - 1)!!.
   \end{equation}
\end{theorem}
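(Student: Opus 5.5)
The plan is to combine a cellular (propagating-number) filtration of $B_n(d)$, which gives the candidate irreps and their dimensions, with a semisimplicity argument for the stated range of $d$. Since the paper assumes $d \in \mathbb{Z}_{\ge 0}$, the excluded set $\{0,1,\dots,n\}$ means we must treat all integers $d \ge n+1$. That lets us use Schur--Weyl duality with $O(d)$ for semisimplicity, rather than Wenzl's generic Gram-determinant argument.

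\emph{Step 1: filtration.} By \cref{lem:propgation_number_sub_mult}, for each $k \equiv n \pmod 2$ the span $J_k$ of Brauer diagrams with $\pn \le k$ is a two-sided ideal. We get a chain $J_{n \bmod 2} \subseteq \dots \subseteq J_{n-2} \subseteq J_n = B_n(d)$.

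A diagram $D$ with $\pn(D)=k$ factors uniquely as three pieces:
\begin{itemize}
\item a top \emph{half-diagram}: a partial matching of the $n$ top vertices with $k$ free vertices;
\item a permutation $\pi \in S_k$ linking the free vertices in order;
\item a bottom half-diagram.
\end{itemize}
The number of half-diagrams is $\binom{n}{k}(n-k-1)!!$: choose the $k$ free vertices, then perfectly match the remaining $n-k$. Hence $J_k/J_{k-2} \cong V_k \otimes \mathbb{C}[S_k] \otimes V_k^*$ as a bimodule, where $V_k$ is the span of the half-diagrams.

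For $|\lambda| = k$, define the cell module $W(\lambda) = V_k \otimes S^\lambda$, where $S^\lambda$ is the Specht module of dimension $f^\lambda$. A diagram acts by concatenation with the half-diagram. If propagation drops, the result is set to $0$; otherwise the induced permutation acts on $S^\lambda$. This gives $\dim W(\lambda) = f^\lambda\binom{n}{k}(n-k-1)!!$, which is the claimed $d_\lambda$.

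\emph{Step 2: consistency check.} Decompose the basis by propagating number. Each block contributes $(\binom{n}{k}(n-k-1)!!)^2 k!$ diagrams, and $k! = \sum_{|\lambda|=k}(f^\lambda)^2$ by \eqref{eq:dim_identity_for_symmetric_group}. Therefore
\[
\sum_\lambda d_\lambda^2 = (2n-1)!! = |B_n(d)| .
\]
So once $B_n(d)$ is known to be semisimple, it suffices to show that the $W(\lambda)$ are pairwise non-isomorphic and irreducible. They are then a complete list of irreps by the sum-of-squares count in \cref{def:semisimple_algebra}.

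\emph{Step 3: semisimplicity and irreducibility (main obstacle).} For integer $d \ge n+1$ I would use the tensor representation on $(\mathbb{C}^d)^{\otimes n}$: $s_i$ acts as a swap, and $e_i$ acts via $\sum_{x,y}\ket{xx}\bra{yy}$ on factors $i,i+1$. Brauer's theorem says the image is the commutant of $O(d)$. That commutant is semisimple, being the commutant of a completely reducible (compact group) action. The key input is faithfulness for $d \ge n$: the operators of distinct diagrams are linearly independent. I would prove this by evaluating on basis vectors with distinct indices, each used at most twice, so that every diagram is recovered from a nonzero matrix entry. Faithfulness gives $B_n(d) \cong$ image, hence semisimple.

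For irreducibility, define on each $W(\lambda)$ the bilinear form pairing two half-diagrams. Concatenation produces $d^{\#\text{loops}}$ times a permutation if all $k$ strands propagate, and $0$ otherwise; this is tensored with the invariant form on $S^\lambda$. In a semisimple cellular algebra each $W(\lambda)$ with nonzero form has an irreducible head. Combined with the exact dimension count of Step 2, this forces every form to be nondegenerate and every $W(\lambda)$ to be irreducible. Distinct $\lambda$ give non-isomorphic modules: $J_{k-2}$ annihilates $W(\lambda)$ with $|\lambda| = k$ but not those with larger $k$, and for equal $k$ the $S_k$-action separates them.

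I expect the faithfulness bound to be the delicate point. The head and nondegeneracy argument is the standard cellular-algebra machinery (Graham--Lehrer), which I would cite rather than reprove.
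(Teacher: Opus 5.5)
Your proof is correct for every integer $d\ge n$. Under the paper's convention $d\in\mathbb{Z}_{\ge 0}$, that covers the stated range $d\ge n+1$.

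It is a genuinely different route from the paper, which gives no proof here and simply imports the result from Nazarov. There, the irreducibles are built explicitly in Young's orthogonal form via Jucys--Murphy elements on a basis of up--down tableaux, and $d_\lambda$ is read off as the number of Bratteli paths from $\emptyset$ to $\lambda$. Your argument is structural instead:
\begin{itemize}
\item the propagating-number cell filtration supplies the candidate modules and their dimensions;
\item faithfulness of the Schur representation gives semisimplicity;
\item cellular theory turns semisimplicity into irreducibility.
\end{itemize}
What your route buys is a self-contained argument that fits the paper's own machinery. Your labeling trick is an exact, $d\ge n$ version of a faithfulness statement the paper only gets approximately, for $d>|A|^2$, from $\langle D,E\rangle_{\mathbf S}^{\op}=d^{\cc(D\vee E)}$. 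What Nazarov's route buys is the Gelfand--Tsetlin basis, branching rule and explicit matrix entries that the paper relies on later. It also covers non-integer $d$, which your tensor-space argument cannot reach.

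Three small fixes would tighten the write-up.

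(1) Faithfulness is not the delicate point. Once the $n$ blocks of $D$ carry distinct labels, the colour classes are exactly the blocks of $D$, so that entry of $\mathbf S(E)$ is $\delta_{DE}$. You can also drop Brauer's first fundamental theorem: since $\mathbf S(D^{\op})=\mathbf S(D)^\dagger$ with real entries, the image is a $*$-closed matrix algebra. Such an algebra is automatically semisimple, because a radical element $x$ would have $x^\dagger x$ nilpotent and positive semidefinite, hence $x=0$.

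(2) An irreducible head for each $W(\lambda)$ with $\phi_\lambda\neq 0$, together with the dimension count, is not enough by itself. You need the completeness half of Graham--Lehrer: every simple module is some $L(\lambda)$. Then $\dim B_n(d)=\sum_{\lambda:\phi_\lambda\neq 0}(\dim L(\lambda))^2\le\sum_\lambda(\dim W(\lambda))^2=\dim B_n(d)$ forces equality throughout. Alternatively, you can avoid the forms entirely. In a semisimple algebra the cell filtration splits, so the regular module is $\bigoplus_\lambda W(\lambda)^{\oplus\dim W(\lambda)}$. Comparing this with the Wedderburn decomposition forces each $W(\lambda)$ to be simple and the $W(\lambda)$ to be pairwise distinct.

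(3) Your non-isomorphism argument has the direction reversed. $J_{k-2}$ annihilates every $W(\mu)$ with $|\mu|\ge k$. It is the modules with $|\mu|\le k-2$ that it does not kill (for $d\neq 0$), and that is what separates different values of $k$.
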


\begin{theorem}[\cite{cox2007blockswalledbraueralgebra}]
   $B_{r,s}(d)$ is semisimple whenever $d \not\in \{0, 1, 2, \dots, n - 2\}$. In this regime, the irreps are parametrized by a pair of Young diagrams $(\lambda, \mu)$:
   \begin{equation}
      \widehat{B_{r,s}(d)}
      = \{(\lambda, \mu): 0 \le k \le \min(r, s),\; |\lambda| = r-k,\; |\mu| = s-k\}.
   \end{equation}
   with
   \begin{equation}
      \label{eq:irrep_dim_of_walleD_brauer_algebra}
      d_{(\lambda, \mu)} = k!\binom{r}{k}\binom{s}{k}f^\lambda f^\mu 
   \end{equation}
\end{theorem}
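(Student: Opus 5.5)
The plan is to treat $B_{r,s}(d)$ as a cellular algebra and read off its irreps from the cell modules, checking semisimplicity by a dimension count together with nondegeneracy of the cell forms.

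\textbf{Step 1: filter by propagating number.} Every walled Brauer diagram $D$ has $\pn(D) = n - 2k$, where $k$ is the number of horizontal arcs in the top row. Each top arc crosses the wall, and the bottom row has the same number of arcs. By \cref{lem:propgation_number_sub_mult}, the span $J_k$ of diagrams with at least $k$ top arcs is a two-sided ideal. This gives a filtration $J_{\min(r,s)} \subseteq \dots \subseteq J_0 = B_{r,s}(d)$.

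\textbf{Step 2: factor each diagram and build cell modules.} A diagram with exactly $k$ top arcs factors uniquely as a top half-diagram, a permutation, and a bottom half-diagram. A top half-diagram consists of $k$ wall-crossing arcs together with the $n-2k$ free vertices. The permutation lies in $S_{r-k}\times S_{s-k}$ and matches the free top vertices left of the wall with free bottom vertices left of the wall, and similarly on the right.

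The number of half-diagrams is $k!\binom{r}{k}\binom{s}{k}$: choose $k$ left vertices, choose $k$ right vertices, then choose a bijection between them. For $|\lambda| = r-k$ and $|\mu| = s-k$, I would define the cell module
\[
W(\lambda,\mu) = \mathrm{span}\{\text{half-diagrams with } k \text{ arcs}\} \otimes S^\lambda \otimes S^\mu .
\]
A diagram acts by concatenation. The result is reduced to zero if the number of arcs increases, and otherwise the induced permutation acts on $S^\lambda\otimes S^\mu$, with factors of $d$ for closed loops. Its dimension is $k!\binom{r}{k}\binom{s}{k} f^\lambda f^\mu$, which is exactly the claimed $d_{(\lambda,\mu)}$.

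\textbf{Step 3: dimension count.} The number of walled diagrams with $k$ top arcs is $\bigl(k!\binom{r}{k}\binom{s}{k}\bigr)^2 (r-k)!(s-k)!$. Combined with \cref{eq:dim_identity_for_symmetric_group} for $S_{r-k}$ and $S_{s-k}$, this gives
\[
\sum_{(\lambda,\mu)} d_{(\lambda,\mu)}^2 = \sum_k \#\{D : D \text{ has } k \text{ top arcs}\} = (r+s)! = |B_{r,s}(d)| .
\]
So once each $W(\lambda,\mu)$ is shown to be irreducible, with the modules pairwise inequivalent, semisimplicity and the classification in \cref{def:semisimple_algebra} follow from Wedderburn theory.

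\textbf{Step 4: nondegeneracy, the main obstacle.} Irreducibility reduces to nondegeneracy of the natural bilinear form on half-diagrams. That form is defined by pairing two half-diagrams and extracting the coefficient $d^{\#\text{loops}}$ when no arcs are lost. Its Gram determinant is a polynomial in $d$, so it is nonzero for all but finitely many $d$. The hard part is pinning the bad set down to $\{0,\dots,n-2\}$.

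My first approach would be the standard inductive restriction argument along $B_{r,s-1}(d)\subseteq B_{r,s}(d)$ using Jucys--Murphy elements, which shows that the Gram determinant factors into linear terms $d - c$ with $c$ ranging over box contents. Failing that, for integer $d \ge n$ I would argue directly. Walled Brauer Schur--Weyl duality makes the action on $V_d^{\otimes r}\otimes (V_d^*)^{\otimes s}$ faithful and equal to the commutant of $U^{\otimes r}\otimes\overline{U}^{\otimes s}$. That commutant is a $*$-closed matrix algebra and hence semisimple. This covers the regime $d\sim 2^{\poly(n)}$ used in this paper, and for the remaining values I would cite \cite{cox2007blockswalledbraueralgebra}.
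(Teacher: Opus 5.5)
The paper gives no proof of this statement; it imports it from \cite{cox2007blockswalledbraueralgebra}. The only related argument in the paper is the appendix check that $\sum_{|\rho|=n-2k} d_\rho^2$ equals the number of diagrams with propagating number $n-2k$, which is exactly your Step 3 count. Your Steps 1--3 are correct: the ideals $J_k$ from \cref{lem:propgation_number_sub_mult}, the factorization into half-diagram, permutation in $S_{r-k}\times S_{s-k}$, and half-diagram, the cell modules $W(\lambda,\mu)$ of dimension $k!\binom{r}{k}\binom{s}{k}f^\lambda f^\mu$, and the Wedderburn conclusion. This cellular route is what actually explains the dimension formula. The trouble is Step 4, which carries the real content of the theorem.

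\textbf{Gap 1: semisimplicity does not yet give the irreps.} Your Schur--Weyl fallback shows semisimplicity for integer $d\ge n$. But Step 3 only runs in the direction ``cell modules irreducible and pairwise inequivalent $\Rightarrow$ semisimple.'' Semisimplicity alone therefore does not tell you that the $W(\lambda,\mu)$ are the irreducibles, so the dimension formula is still unproved. You need the converse, which you can get in either of two ways:
\begin{enumerate}
\item Use the Graham--Lehrer theorem: a cellular algebra is semisimple iff every cell form is nondegenerate. This requires verifying the cellular axioms, in particular that the product modulo $J_{k+1}$ of two $k$-arc diagrams depends on the inner halves only through a factor $d^{\ell}\pi$ with $\pi\in S_{r-k}\times S_{s-k}$.
\item Argue directly. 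Semisimplicity makes $J_k/J_{k+1}$ a unital algebra. A sandwich algebra over $\mathbb{C}[S_{r-k}\times S_{s-k}]$ is unital only if its sandwich (Gram) matrix is invertible, in which case $J_k/J_{k+1}\cong M_N(\mathbb{C}[S_{r-k}\times S_{s-k}])$ with $N=k!\binom{r}{k}\binom{s}{k}$.
\end{enumerate}

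\textbf{Gap 2: the endpoint $d=n-1$ is not covered.} The action on $V_d^{\otimes r}\otimes (V_d^*)^{\otimes s}$ is faithful only for $d\ge r+s$. For example, $B_{1,1}(1)$ is semisimple, since $e_1^2=e_1$, yet $\mathbf S(e_1)=\mathbf S(I)=1$ on $(\mathbb{C}^1)^{\otimes 2}$. Your Jucys--Murphy sketch does not reach $d=n-1$ either. Knowing only that the Gram determinants split into factors $d-c$ with $c$ built from box contents does not exclude $d=n-1$; that needs the precise root bookkeeping of \cite{cox2007blockswalledbraueralgebra}. So the exact range of $d$ still rests on the citation, just as in the paper.

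\textbf{A simplification in the paper's regime.} For $d>|A|^2$ you can drop even the faithfulness citation. The paper's bound $\lVert G_{\mathbf S}^{\op}(\mathcal B(A))/d^n - I\rVert_\infty\le d^{-1/2}|A|$ is derived without assuming semisimplicity. So for $d>|A|^2$ the Gram matrix of the operators $\mathbf S(a)$ is positive definite, and $\mathbf S$ is injective. Its image is $\dagger$-closed because $\mathbf S(D^{\op})=\mathbf S(D)^\dagger$, so it is semisimple, and no commutant theorem is needed.
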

For ease of notation, we will sometimes write $|(\lambda, \mu)| = |\lambda| + |\mu|$.
\subsubsection{Diagram Algebra Conventions}
\label{sec:diagram_algebra_conventions}
For a diagram algebra $A$, a natural choice of basis is the diagram basis, i.e., each partition diagram taken with unit coefficient. If $A$ is the Brauer or walled Brauer algebra, we will show that $A$ is $\delta$-nice with respect to the diagram basis, where $\delta$ vanishes as $d \rightarrow \infty$.\ However, for the partition and half partition algebra, we must rescale each diagram $D$ by a factor depending on its number of connected components $\cc(D)$ in order to recover $\delta$-niceness.\footnote{As we will see in~\cref{sec:diagram_algebras_nice}, this reflects the fact that $\cc(D)=n$ for every Brauer diagram, whereas $\cc(D)$ varies across general partition diagrams.}
\begin{equation}
    \label{eq:basis_for_a_diagram_algebra}
    \mathcal{B}(A) = \{d^{\frac{n - \cc(D)}{2}}D\}_D
\end{equation}
Going forward, we take this ``scaled diagram basis'' to be the computational basis $\mathcal{B}(A)$. We use $\mathcal{D}(A)$ to denote the set of unscaled diagrams, i.e. with coefficient $1$. 

\subsection{Properties of Diagram Algebra Irreps}
\label{sec:properties_of_irreps}
In this section, we give two useful theorems of the irreducible representations of diagram algebras. First, we give a correspondence between the number of diagrams with $\pn(D) = k$, and the total dimension of $k$-box irreps: 
\begin{theorem}
\label{thm:prop_number_k_equals_irrep_dimension_at_level_k}
    For all $A \in \{P_n(d), P_{n - \frac12}(d),  B_n(d), B_{r,s}(d)\}$, and $0 \le k \le n$, 
    \begin{equation}
        |\{D \in \mathcal{D}(A): \pn(D) = k\}| = \sum_{\rho \in \wh{A}: |\rho| = k} d_\rho^2
    \end{equation}
\end{theorem}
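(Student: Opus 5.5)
We will prove the identity by a direct combinatorial factorization of diagrams with a prescribed propagating number, and then match the count against the dimension formulas in \cref{eq:irreps_of_parition_algebra}, \cref{eq:irreps_of_half_parition_algebra}, \cref{eq:irreps_of_brauer_algebra} and \cref{eq:irrep_dim_of_walleD_brauer_algebra}. The key observation is that every diagram $D$ with $\pn(D) = k$ decomposes uniquely into three pieces of data. The first is the restriction of $D$ to the top row, recording which top blocks exist and which $k$ of them are propagating. The second is the analogous restriction to the bottom row. The third is a bijection (a permutation in $S_k$) matching the $k$ marked top blocks to the $k$ marked bottom blocks. Order the marked blocks, say by their minimal vertex. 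Then $D$ corresponds bijectively to a triple $(T, B, \pi)$ with $T,B$ ranging over ``one-row half diagrams with $k$ marked blocks'' and $\pi \in S_k$. Hence
\begin{equation}
    |\{D : \pn(D)=k\}| = k!\, N_k^2 ,
\end{equation}
where $N_k$ denotes the number of admissible half diagrams with exactly $k$ marked (propagating) blocks.

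On the representation side, each dimension formula factors as $d_\rho = f^\rho \cdot m_k$, where $m_k$ depends only on $k=|\rho|$. Using the identity \cref{eq:dim_identity_for_symmetric_group}, $\sum_{|\lambda|=k}(f^\lambda)^2 = k!$, we obtain
\begin{equation}
    \sum_{|\rho|=k} d_\rho^2 = k!\, m_k^2 .
\end{equation}
It therefore suffices to check, algebra by algebra, that $m_k = N_k$. We will verify each case as follows.

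For $P_n(d)$, a half diagram is a set partition of $n$ vertices together with a choice of $k$ marked blocks. This gives $N_k=\sum_l \stirlingii{n}{l}\binom{l}{k}$, matching the formula.

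For $B_n(d)$, a half diagram is a choice of $k$ unmatched (propagating) vertices plus a perfect matching on the remaining $n-k$ vertices. This gives $N_k=\binom{n}{k}(n-k-1)!!$, and the parity constraint arises naturally because $n-k$ must be even.

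For $P_{n-\frac12}(d)$, vertices $n$ and $n'$ lie in a common block, which we exclude from the count. The half diagram is then a set partition of $n$ vertices in which the block containing $n$ is distinguished and unmarked, and $k$ of the other $l$ blocks are marked. This gives $\sum_l \stirlingii{n}{l+1}\binom{l}{k}$. One must also check that $\pi$ still ranges over all of $S_k$ here.

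The walled Brauer case needs the most care. Here the propagating blocks come in two types: left-side ones connecting within the first $r$ columns, and right-side ones within the last $s$. Same-row arcs must cross the wall. The irreps are indexed by pairs $(\lambda,\mu)$, so the analogue of the argument uses $k$-many cross-wall arcs per row. The top half diagram consists of $r-k$ unmatched left vertices, $s-k$ unmatched right vertices, and a bijection of the $k$ leftover left vertices to $k$ right vertices, giving $\binom{r}{k}\binom{s}{k}k!$ choices. Propagating connections preserve sides, so the gluing data is $S_{r-k}\times S_{s-k}$. The count becomes $(r-k)!(s-k)!\,\bigl(k!\binom rk\binom sk\bigr)^2$. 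On the irrep side, $\sum f^{\lambda 2}f^{\mu 2}=(r-k)!(s-k)!$, which matches.

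The main obstacle is exactly this indexing: the statement's ``$|\rho|=k$'' must be read with $|(\lambda,\mu)|=|\lambda|+|\mu|$, and the propagating number then equals $n-2k$ rather than $k$. We will verify that this convention is consistent, so that propagating number $j$ corresponds to irreps with $|\lambda|+|\mu|=j$, and then apply the bijection above.
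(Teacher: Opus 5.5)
Your proposal is correct and takes essentially the same route as the paper. You split a diagram with propagating number $k$ into top-row data, bottom-row data, and a bijection between the propagating blocks, then match this against $d_\rho = f^\rho$ times a factor depending only on $|\rho|$, via $\sum_{|\lambda|=k}(f^\lambda)^2=k!$; the paper also treats the walled Brauer case separately, counting $\ell$ cross-wall arcs per row, gluing the remaining vertices via $S_{r-\ell}\times S_{s-\ell}$, and matching propagating number $n-2\ell$ to irreps with $|\lambda|+|\mu|=n-2\ell$, as you propose.
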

We give the proof of~\cref{thm:prop_number_k_equals_irrep_dimension_at_level_k} in~\cref{app:properties_of_irreps}. Given~\cref{thm:prop_number_k_equals_irrep_dimension_at_level_k}, one might conjecture that the Fourier transform maps a diagram $D$ with $\pn(D) = k$ to a state supported on the subspace of $\ket{\lambda, i, j}$ states with $|\lambda| = k$. In turns out that this is indeed the case in the large-$d$ limit (we give a proof in \cref{sec:propagating_number_to_box_duality}). A key component of the proof is the following theorem, whose proof we also defer to~\cref{app:properties_of_irreps}. 
\begin{theorem}
    \label{thm:zero_irrep_matrix}
    Assume $A \in \{P_n(d), P_{n - \frac12}(d),  B_n(d), B_{r,s}(d)\}$, and $\lambda \in \wh{A}$. For any partition diagram $D \in \mathcal{D}(A)$ with $\pn(D) < |\lambda|$, $\lambda(D) = 0$.
\end{theorem}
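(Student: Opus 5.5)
The plan is to use the standard cellular (Jones basic construction) description of the irreps of diagram algebras. For $A$ one of $P_n(d)$, $P_{n-\frac12}(d)$, $B_n(d)$, $B_{r,s}(d)$, the span $J_k$ of diagrams with $\pn(D) \le k$ is a two-sided ideal; this is exactly \cref{lem:propgation_number_sub_mult}. Together these give a filtration
\[
J_0 \subseteq J_1 \subseteq \dots \subseteq J_n = A .
\]
In the semisimple regime each $J_k$ is a semisimple ideal, so it is a direct sum of simple blocks $M_{d_\rho}(\mathbb{C})$. The goal is to show that $J_k$ is precisely the sum of the blocks indexed by $\rho$ with $|\rho| \le k$. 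Then any irrep $\lambda$ with $|\lambda| > \pn(D)$ annihilates $J_{\pn(D)}$, and in particular $\lambda(D) = 0$.

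To identify the blocks, I would use the quotients $J_k/J_{k-1}$. Any semisimple ideal $J_k$ equals $\bigoplus_{\rho \in S_k} M_{d_\rho}$ for some set $S_k \subseteq \wh{A}$, and the $S_k$ are nested. So $J_k/J_{k-1} \cong \bigoplus_{\rho \in S_k \setminus S_{k-1}} M_{d_\rho}$, and its dimension is $|\{D : \pn(D) = k\}|$. By \cref{thm:prop_number_k_equals_irrep_dimension_at_level_k} this equals $\sum_{|\rho| = k} d_\rho^2$.

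Dimension counting alone does not pin down which irreps lie in $S_k \setminus S_{k-1}$. I will therefore use the explicit construction of the $k$-box irreps from \cite{halverson2004partitionalgebras}, \cite{Nazarov1996Brauer} and \cite{cox2007blockswalledbraueralgebra}. The irrep $\lambda$ with $|\lambda| = k$ is realized on the span of pairs (a ``$k$-partial diagram'' with $k$ marked propagating positions, a vector in the $S_k$-irrep $\lambda$). In the $d$-dependent cell-module action, a diagram $D$ acts by concatenating it onto the top half. If the result has fewer than $k$ propagating components, the action is defined to be zero. Since $\pn(Dx) \le \pn(D) < k$ by \cref{lem:propgation_number_sub_mult}, every basis vector is sent to zero, which gives the claim directly.

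So the cleanest route is to verify, algebra by algebra, that the paper's indexed irrep $\lambda$ coincides with this cell module, which is the standard identification in the cited references. The concatenation argument then finishes the proof in one line. The case $B_{r,s}(d)$ uses pairs $(\lambda,\mu)$ with $|(\lambda,\mu)| = r+s-2k$ and propagating number counted accordingly. The case $P_{n-\frac12}(d)$ excludes the $n$-block in both counts, consistent with the convention stated earlier.

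The main obstacle is that the bare statement ``$\lambda$ indexed by a $k$-box Young diagram'' is only a label. One must make sure the labeling used for $d_\lambda$ in the dimension formulas is the cell-module labeling, in which $|\lambda|$ equals the number of propagating strands. I would justify this by matching the dimension formulas: $d_\lambda$ equals $f^\lambda$ times the number of $k$-propagating half diagrams, exactly the cell-module dimension. As a fallback, I would use the ideal-filtration argument above, with the induction that $J_k/J_{k-1}$ is Morita-equivalent to $\mathbb{C}[S_k]$ (respectively $\mathbb{C}[S_{|\lambda|}\times S_{|\mu|}]$), to identify $S_k \setminus S_{k-1}$ with the $k$-box diagrams.
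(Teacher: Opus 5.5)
Your proposal is correct and is essentially the paper's argument. Both rest on two facts: diagrams with fewer than $|\lambda|$ propagating blocks span a two-sided ideal $J$ (by \cref{lem:propgation_number_sub_mult}), and the irrep labeled $\lambda$ is by construction a module for $A/J$, so $J$ acts by zero. The paper realizes this module as the left ideal $(A/J_r)\overline{y_\lambda}$ generated by a padded minimal idempotent of $\mathbb{C}[S_r]\cong P_r(d)/J_r$. You use the explicit half-diagram cell module with truncated action instead, which is the same module in the semisimple regime and spares you the normalization bookkeeping for $y_\lambda$.

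In both versions, the match between the paper's labels and this construction is taken from the cited references rather than proved. Your filtration paragraph is not needed. Matching dimension formulas would not by itself fix the labels: in $B_4(d)$, the $0$-box irrep and the irrep $(3,1)$ both have dimension $3$.
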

\subsection{Diagram Algebras are \texorpdfstring{$\delta$}{delta}-nice}
\label{sec:diagram_algebras_nice}
In this section, we will show that as $d$ gets large, the normalized Fourier basis converges towards an orthonormal basis with respect to $\braket{\cdot, \cdot}_0$. In other words, the algebras become $\delta$-nice. To prove this result, it will be useful to first equip a diagram algebra $A$ with an involution operator $\op$:
\subsubsection{An Involution on Diagram Algebras}
\begin{definition}
    An \textit{involutive $\mathbb{C}$-algebra} $A$ is an algebra equipped with an operator $\op: A \rightarrow A$ satisfying the following properties: 
    \begin{itemize}
        \item $(a^{\op})^{\op} = a$
        \item $(ab)^{\op} = (b^{\op} a^{\op})$
        \item $(\alpha a+ \beta b)^{\op} = \overline{\alpha}a^{\op} + \overline{\beta}b^{\op}$
    \end{itemize}
\end{definition}
Typically, the involution operator $\op$ is denoted by $*$. In this work, we use $*$ for a dual basis element, so we use $\op$ to avoid confusion. Some examples of involutive algebras include the complex numbers with $\op$ given by complex conjugation, and a matrix algebra with $\op$ given by the adjoint operation.\ A diagram algebra is also an involutive algebra, with $\op$ given by exchanging the two rows of the diagram. 

Given a involutive $\mathbb{C}$-algebra and a trace form $\tau$, one can define an inner product $\braket{x, y}_{\tau}^{\op} = \tau(x^{\op} y)$. That $\braket{\cdot, \cdot}_{\tau}^{\op}$ is in fact an inner product follows from the definition of an involutive operator. Note that this is distinct from the bilinear form $\braket{x, y}_{\tau} = \tau(xy)$ introduced in \cref{sec:trace_forms_and_dual_bases}.

\subsubsection{The Schur Inner Product}
\label{sec:schur_rep}
In addition to the regular representation, another particularly important representation of a diagram algebra $A$ is the \textit{Schur} representation $\mathbf{S}$. The Schur representation generalizes the permutation representation of $\mathbb{C}[S_n]$, and is a $d^n$-dimensional representation defined as follows:  
\begin{equation}
    \mathbf{S}(D)
    =
    \sum_{\mathbf{x},\mathbf{y}\in [d]^n}
    \left(
        \prod_{\substack{u,v\in \{1,\dots,n,1',\dots,n'\}\\ \text{$u$, $v$ in the same block}}}
        \delta_{z_u,z_v}
    \right)
    \ket{\mathbf{x}}\bra{\mathbf{y}},
\end{equation}
with 
\begin{equation}
    z_r
    =
    \begin{cases}
        x_r 
        & 
        r \in  \{1,\dots,n\}
        \\
        y_r
        & 
        r \in  \{1',\dots,n'\}
    \end{cases}
\end{equation}
In other words, $\mathbf{S}(D)_{\mathbf{x},\mathbf{y}} = 1$ if the labeling of vertices given by $\mathbf{x},\mathbf{y}$ is constant on every connected component of $D$, and $0$ otherwise.  
\begin{figure}[H]
    \centering
    \includegraphics[width=0.9\linewidth]{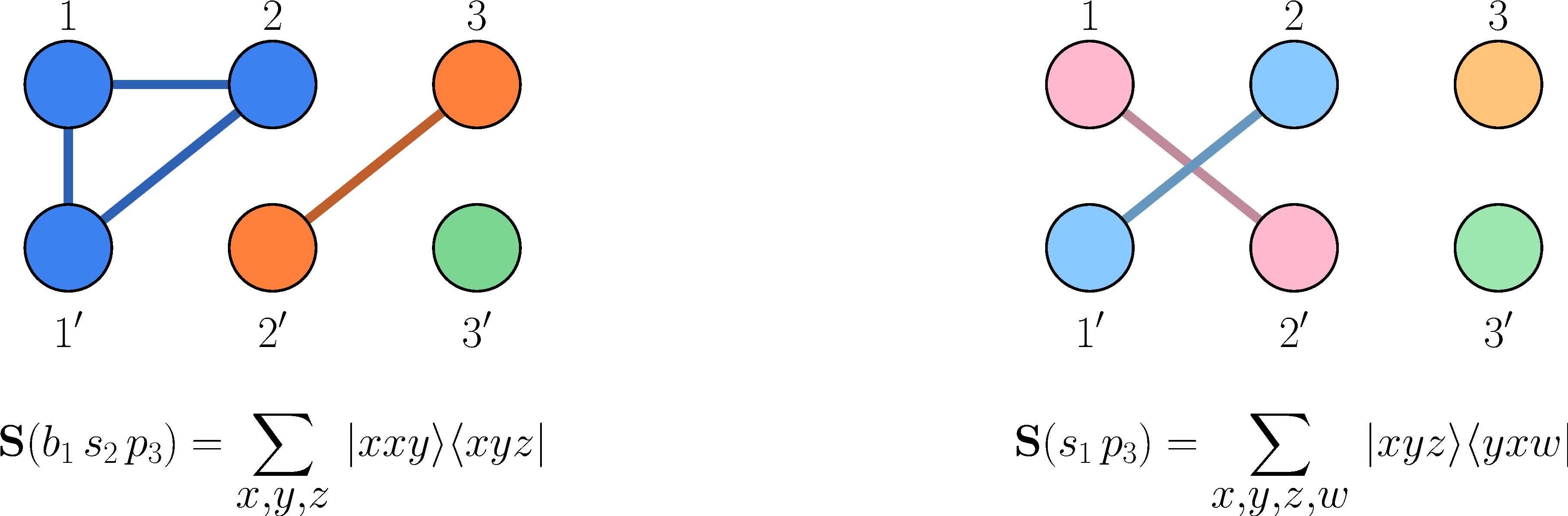}
    \caption{Examples of the Schur representation for different partition diagrams.}
    \label{fig:schur_reps}
\end{figure}
\noindent The Schur representation gives rise to the Schur inner product $\braket{x, y}_{\mathbf{S}}^{\op}$, which satisfies
\begin{equation}
    \label{eq:schur_is_antihomomorphism}
    \braket{a, b}_{\mathbf{S}}^{\op} = \tau_S(a^{\op} b) = \tr(\mathbf{S}(a^{\op})\mathbf{S}(b)) = \tr(\mathbf{S}(a)^\dagger\mathbf{S}(b))
\end{equation}
The last equality holds because $\op$ is a anti-homomorphism, i.e. $\mathbf{S}(a^{\op}) = \mathbf{S}(a)^\dagger$.\ This can be observed from definition of the Schur representation, where exchanging the rows of the diagram corresponds to exchanging the bras and kets in the corresponding matrix representation. The next important fact about Schur representation is that the induced inner product orthogonalizes the Fourier basis: 
\begin{lemma}
    \label{lem:fourier_states_orthogonal_under_schur}
    $\braket{E_{ij}^\rho, E_{kl}^\sigma}_{\mathbf{S}}^{\op} = m_\rho \delta_{\rho\sigma} \delta_{ik}\delta_{jl}$, where $m_\rho$ denotes the multiplicity of $\rho$ in $\mathbf{S}(\cdot)$.
\end{lemma}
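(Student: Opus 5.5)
The plan is to compute the inner product directly from the decomposition of the Schur representation into irreps, using the identity $\braket{a,b}_{\mathbf{S}}^{\op} = \tr(\mathbf{S}(a)^\dagger\mathbf{S}(b))$ from \cref{eq:schur_is_antihomomorphism}. Since $A$ is semisimple and $\mathbf{S}$ is a finite-dimensional representation, there is an invertible change of basis $T$ on $(\mathbb{C}^d)^{\otimes n}$ with
\begin{equation}
    T\,\mathbf{S}(a)\,T^{-1} = \bigoplus_{\rho \in \wh{A}} \rho(a) \otimes I_{m_\rho}
\end{equation}
for all $a \in A$, where $\rho(a)$ are the fixed irrep matrices used to define the Fourier basis. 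If $T$ can be taken unitary, the claim follows. By definition of $E_{ij}^\rho$ we have $\sigma(E_{ij}^\rho) = \delta_{\rho\sigma}\ket{i}\bra{j}$, so $\mathbf{S}(E_{ij}^\rho) = T^{\dagger}(\ket{i}\bra{j}_\rho \otimes I_{m_\rho})T$ and $\mathbf{S}(E_{ij}^\rho)^\dagger = T^\dagger(\ket{j}\bra{i}_\rho \otimes I_{m_\rho})T$. Hence
\begin{equation}
    \tr\bigl(\mathbf{S}(E_{ij}^\rho)^\dagger \mathbf{S}(E_{kl}^\sigma)\bigr) = \delta_{\rho\sigma}\tr\bigl(\ket{j}\bra{i}\ket{k}\bra{l} \otimes I_{m_\rho}\bigr) = m_\rho\,\delta_{\rho\sigma}\delta_{ik}\delta_{jl}.
\end{equation}
The case $m_\rho = 0$ is included, since both sides vanish.

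The real content is therefore showing that $T$ may be chosen unitary, and this is where I expect the main obstacle. I will proceed in two steps.

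First, $\mathbf{S}$ is a $*$-representation: $\mathbf{S}(a^{\op}) = \mathbf{S}(a)^\dagger$, as already noted. Hence the image $\mathbf{S}(A)$ is a $\dagger$-closed matrix algebra. The orthogonal complement of any invariant subspace is then invariant, so $(\mathbb{C}^d)^{\otimes n}$ splits orthogonally into irreducible subspaces. Choosing orthonormal bases of these subspaces gives a unitary $U$ with $U\mathbf{S}(a)U^\dagger = \bigoplus_{\rho} \rho'(a)\otimes I_{m_\rho}$, where each $\rho'$ is a $*$-representation equivalent to $\rho$.

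Second, I need the chosen irrep matrices to satisfy $\rho(a^{\op}) = \rho(a)^\dagger$. This is a hypothesis on the choice of basis of each $M_{d_\rho}(\mathbb{C})$, and it holds for an orthogonal-form, subalgebra-adapted basis. It also explains the remark in the text that the lemma is sensitive to this choice. Given that hypothesis, suppose $S_\rho$ intertwines $\rho'$ and $\rho$. Then $S_\rho^\dagger S_\rho$ commutes with every $\rho'(a)$, because both representations are $*$-representations. By Schur's lemma, $S_\rho^\dagger S_\rho$ is a positive scalar. Rescaling makes $S_\rho$ unitary.

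Composing $U$ with $\bigoplus_\rho S_\rho\otimes I_{m_\rho}$ yields the desired unitary $T$, and the computation in the first paragraph completes the proof.
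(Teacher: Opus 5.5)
Your proof is correct and follows the paper's argument. Both decompose $\mathbf{S}$ by a unitary into $\bigoplus_\rho \rho(a)\otimes I_{m_\rho}$, observe that $\mathbf{S}(E_{ij}^\rho)$ then becomes a matrix unit tensored with $I_{m_\rho}$, and take the trace.

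The difference is in how unitarity of this change of basis is obtained. The paper simply asserts it, citing the Schur transform. You derive it from $\mathbf{S}(a^{\op})=\mathbf{S}(a)^\dagger$ and Schur's lemma. In doing so you make explicit a hypothesis the paper leaves unstated: the irrep matrices defining $E_{ij}^\rho$ must satisfy $\rho(a^{\op})=\rho(a)^\dagger$, as they do in the orthogonal form.
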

\begin{proof}
Decompose the Schur representation as
\begin{equation}
\mathbf S(a) = U_{\mathbf{sch}}\bigoplus_{\rho\in\widehat A}\left(I_{m_\rho}\otimes \rho(a)\right)U_{\mathbf{sch}}^\dagger
\end{equation}
where $U_{\mathbf{sch}}$ is the Schur transform.  
Since Fourier states are by definition the preimages of irrep matrix units, this implies that $\mathbf{S}(E_{ij}^{\rho}) = U_{\mathbf{sch}}\left(I_{m_\rho} \otimes \ket{i}\bra{j}_\rho \right) U_{\mathbf{sch}}^\dagger$. Therefore,
\begin{equation}
\braket{E_{ij}^{\rho},E_{kl}^{\sigma}}_{\mathbf{S}}^{\op}
=\tr\Big(U_{\mathbf{sch}}\big(I_{m_\rho}\otimes |i\rangle\!\langle j|\big)^\dagger
           \big(I_{m_\sigma}\otimes |k\rangle\!\langle l|\big)U_{\mathbf{sch}}^\dagger\Big) =\delta_{\rho\sigma}\,
\tr\Big(I_{m_\rho}\otimes \big(|j\rangle\!\langle i|\big)\big(|k\rangle\!\langle l|\big)\Big) = m_\rho \delta_{\rho\sigma} \delta_{ik}\delta_{jl}
\end{equation}
Where we used the fact $U_{\mathbf{sch}}$ is a unitary operator~\cite{bacon2005quantumschurtransformi}.
\end{proof}
 Conveniently, the Schur inner product can be easily computed when both arguments are diagrams: 
\begin{lemma}
    For any two partition diagrams $D, E$, $\braket{D, E}_{\mathbf{S}}^{\op} = d^{\cc(cl(D^{\op}E))}$. Here, $cl(\cdot)$ denotes the \textit{closure} of $D^{\op}E$, obtained by identifying node $k$ with $k^\prime$ for all $k \in [n]$.
\end{lemma}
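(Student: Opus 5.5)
Since $\op$ is an anti-homomorphism with $\mathbf{S}(D^{\op}) = \mathbf{S}(D)^\dagger$, I will start from \cref{eq:schur_is_antihomomorphism}:
\begin{equation}
    \braket{D, E}_{\mathbf{S}}^{\op} = \tr\bigl(\mathbf{S}(D^{\op})\mathbf{S}(E)\bigr) = \tr\bigl(\mathbf{S}(D^{\op}E)\bigr).
\end{equation}
I first need the product inside the trace to be a single diagram. By the multiplication rule, $D^{\op}E = d^{c}F$, where $F$ is the stacked diagram with the middle row removed and $c$ is the number of blocks lying entirely in the middle row. So the first step is to verify that $\mathbf{S}$ is multiplicative on diagrams with exactly this factor:
\begin{equation}
    \mathbf{S}(D_1)\mathbf{S}(D_2) = d^{c}\,\mathbf{S}(D_1D_2).
\end{equation}
Concretely, the $(\mathbf{x},\mathbf{y})$ entry of the left side is $\sum_{\mathbf{w}\in[d]^n}$ of the product of two indicators. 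Each indicator asks that the labeling be constant on the blocks of $D_1$, with top $\mathbf{x}$ and bottom $\mathbf{w}$, and on the blocks of $D_2$, with top $\mathbf{w}$ and bottom $\mathbf{y}$. Together these say that the three-row labeling $(\mathbf{x},\mathbf{w},\mathbf{y})$ is constant on the connected components of the stacked graph.

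I then split the components of the stacked graph into two types:
\begin{itemize}
    \item Components touching the top or bottom row. Their middle labels are forced to equal the outer labels, so these contribute exactly the constraint of $\mathbf{S}(D_1D_2)_{\mathbf{x},\mathbf{y}}$.
    \item Components lying entirely in the middle row. Their label is free, so each one contributes a factor of $d$.
\end{itemize}
This gives the displayed identity. The paper treats $\mathbf{S}$ as a representation, so this step is presumably implicit in its assumptions, but I would still record the bookkeeping.

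Next I compute the trace of $\mathbf{S}$ of a single diagram $F$:
\begin{equation}
    \tr\mathbf{S}(F) = \sum_{\mathbf{x}} \mathbf{S}(F)_{\mathbf{x},\mathbf{x}}.
\end{equation}
Setting $\mathbf{y}=\mathbf{x}$ is the same as identifying $k$ with $k'$, and the indicator is $1$ exactly when the labeling is constant on the blocks of $F$ after this identification. These blocks are the components of $cl(F)$, so the number of admissible $\mathbf{x}$ is $d^{\cc(cl(F))}$.

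Combining the two steps gives
\begin{equation}
    \braket{D,E}_{\mathbf{S}}^{\op} = d^{c+\cc(cl(F))}.
\end{equation}
It remains to identify $c+\cc(cl(F))$ with $\cc(cl(D^{\op}E))$, interpreting the closure of the stacked three-row graph so that middle-row loops are retained. Closing $F$ merges top and bottom. The $c$ removed middle components are disjoint from all outer vertices, so they stay as separate components of the closed stacked graph.

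The main obstacle is therefore only notational. The statement writes $cl(D^{\op}E)$ for the closure of the product, and I must make precise that this means the closure of the stacked graph, so that the removed loops (equivalently, the scalar $d^c$) are counted as components. I will state this convention explicitly, since with it the exponents match exactly and the lemma follows.
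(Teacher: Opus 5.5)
Your argument is correct, but it takes a different route from the paper's proof. The paper never forms the algebra product $D^{\op}E$. It writes $\tr(\mathbf{S}(D)^\dagger\mathbf{S}(E))=\sum_{\mathbf{x},\mathbf{y}}\mathbf{S}(D)_{\mathbf{x},\mathbf{y}}\,\mathbf{S}(E)_{\mathbf{x},\mathbf{y}}$, an entrywise inner product of two $0/1$ matrices. It then counts the pairs $(\mathbf{x},\mathbf{y})$ for which a single labeling, with $\mathbf{x}$ on the top vertices and $\mathbf{y}$ on the bottom vertices, is constant on the blocks of both $D$ and $E$. Equivalently, the labeling is constant on the blocks of the join $D\vee E$. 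This join is exactly the closed stacked graph, with the middle row carrying $\mathbf{x}$ and the closed outer row carrying $\mathbf{y}$. Components lying only in the middle row appear there automatically as blocks of $D\vee E$ inside $\{1,\dots,n\}$ with free $\mathbf{x}$-labels. So the closure convention you flag is built into the paper's count, and no reconciliation step is needed.

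Your route instead passes through $\mathbf{S}(D^{\op})\mathbf{S}(E)=\mathbf{S}(D^{\op}E)$, which requires three steps:
\begin{itemize}
\item (i) check that $\mathbf{S}$ is multiplicative with the factor $d^{c}$, which the paper simply assumes by calling $\mathbf{S}$ a representation;
\item (ii) compute $\tr\mathbf{S}(F)=d^{\cc(cl(F))}$ for a single diagram $F$;
\item (iii) match $c+\cc(cl(F))$ with the number of components of the closed stacked graph.
\end{itemize}
For (iii), state explicitly that two outer vertices lie in the same component of the unclosed stacked graph if and only if they lie in the same block of $F$. Then the closure identifications act identically in both graphs, and the components meeting the outer row are in bijection with the components of $cl(F)$.

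Your approach gives a self-contained verification that $\mathbf{S}$ is a representation, plus the formula for the Schur trace form $\tau_{\mathbf{S}}$ on a single diagram. The paper's approach is shorter and leads directly to the join corollary $\braket{D,E}_{\mathbf{S}}^{\op}=d^{\cc(D\vee E)}$.

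One notational slip: your multiplicativity display should read $\mathbf{S}(D_1)\mathbf{S}(D_2)=d^{c}\,\mathbf{S}(F)$, equivalently $=\mathbf{S}(D_1D_2)$ for the algebra product. As written, with $D_1D_2$ read as the algebra product, it counts $d^{c}$ twice, although the computation you actually carry out uses the correct quantity.
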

\begin{proof}
Our goal is to compute
\begin{align}
    \tr(\mathbf{S}(D)^\dagger\mathbf{S}(E))   &= \sum_{x,y} (\mathbf{S}(D))_{x,y}\,(\mathbf{S}(E))_{x,y} = \#\Big\{(x,y)\;:\;(\mathbf{S}(D))_{x,y}=1 \ \text{and}\ (\mathbf{S}(E))_{x,y}=1\Big\} 
\end{align}
Now, note that $\mathbf{S}(E)_{x,y} = 1$ iff the labeling of the $2n$ boundary vertices induced by the output index-string $x = (x_1, x_2, \dots, x_n)$ and input index-string $y = (y_1, y_2, \dots, y_n)$ is constant on each connected component of $E$. Hence, $\mathbf{S}(D)_{x,y}$ and $\mathbf{S}(E)_{x,y}$ are both equal to 1 iff, when the vertices of $D^{\op}$ and $E$ are labeled according to $x$ and $y$, all connected components in the closure of $D^{\op}E$ have a constant label. Since there are $d$ possibilities for each $x_i$ and $y_i$, there are exactly $d^{\cc(cl(D^{\op}E))}$ different $(x,y)$ pairs for which this occurs. 
\end{proof}
Another way to understand the connected components of the closure is via the \textit{join} of $D$ and $E$, denoted $D \vee E$. The join is defined as the finest partition which coarsens both $D$ and $E$, or equivalently the partition obtained by taking the transitive closure of the two equivalence relations implied by the partitions. Intuitively, this corresponds to identifying each vertex in $D$ with its corresponding vertex in $E$, which is equivalent to taking the closure of $D^{\op}E$ as a graph. Hence, we obtain the following corollary: 
\begin{corollary}
    For any diagrams $D, E$, $\braket{D, E}_{\mathbf{S}}^{\op} = d^{\cc(D \vee E)}$.
\end{corollary}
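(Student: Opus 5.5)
The corollary follows from the preceding lemma, $\braket{D, E}_{\mathbf{S}}^{\op} = d^{\cc(cl(D^{\op}E))}$. So the plan is to identify the connected components of $cl(D^{\op}E)$ with the blocks of the join $D \vee E$, both viewed as set partitions of the same $2n$ vertices $\{1,\dots,n\}\cup\{1',\dots,n'\}$.

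\textbf{Step 1: a graph whose components are the blocks of $D \vee E$.} Define an auxiliary graph $\Gamma$ on the $2n$ boundary vertices. Put an edge between two vertices whenever they lie in a common block of $D$, or in a common block of $E$. The equivalence relation generated by these edges is the transitive closure of the two relations defined by $D$ and $E$. By definition this is $D \vee E$, so $\cc(D \vee E)$ equals the number of connected components of $\Gamma$.

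\textbf{Step 2: the same count from the Schur representation.} It is cleanest to bypass the stacked picture and use the counting in the proof of the lemma directly. There, $\braket{D,E}_{\mathbf{S}}^{\op}$ was computed as
\[
\#\{(x,y) : \mathbf{S}(D)_{x,y} = \mathbf{S}(E)_{x,y} = 1\},
\]
since $\mathbf{S}(D)$ is a real $0/1$ matrix and $\tr(\mathbf{S}(D)^\dagger \mathbf{S}(E)) = \sum_{x,y} \mathbf{S}(D)_{x,y}\mathbf{S}(E)_{x,y}$.

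A pair $(x,y)$ is exactly a labeling $z$ of the $2n$ vertices by elements of $[d]$. The condition $\mathbf{S}(D)_{x,y}=1$ means $z$ is constant on every block of $D$, and likewise for $E$. Both conditions hold iff $z$ is constant on every edge of $\Gamma$. That in turn holds iff $z$ is constant on every connected component of $\Gamma$, i.e.\ on every block of $D\vee E$.

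\textbf{Step 3: count.} Labelings constant on each block of a partition with $c$ blocks are in bijection with $[d]^c$. With $c = \cc(D \vee E)$, this gives $\braket{D,E}_{\mathbf{S}}^{\op} = d^{\cc(D\vee E)}$.

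The only step needing care is the identification in Step 2: the closure operation must not introduce stray middle-row vertices that add extra components. Rederiving the count directly from the matrix entries sidesteps this, so I expect no real obstacle.
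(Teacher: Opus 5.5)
Your argument is correct and is essentially the paper's: the paper's preceding lemma already reduces $\braket{D,E}_{\mathbf{S}}^{\op}$ to counting pairs $(x,y)$ with $\mathbf{S}(D)_{x,y}=\mathbf{S}(E)_{x,y}=1$, and the corollary then identifies the components of $cl(D^{\op}E)$ with the blocks of $D\vee E$. You go straight from that count to the blocks of the join, which rigorously justifies the identification the paper only states informally.
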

Since $D \vee D = D$, $\braket{D, D}_{\mathbf{S}}^{\op} = d^{\cc(D)}$. Hence, for any $A \in \{P_n(d), P_{n - \frac12}(d), B_n(d), B_{r,s}(d)\}$, 
\begin{equation}
    \label{eq:dn_inner_product}
    \forall a\in \mathcal{B}(A) \implies  \braket{a, a}_{\mathbf{S}}^{\op} = d^n
\end{equation}
The simplicity of~\cref{eq:dn_inner_product} suggests why the rescaling in~\cref{eq:basis_for_a_diagram_algebra} is useful. It is also necessary; one can show that $P_n(d)$ is not $\delta$-nice without rescaling the basis. 

Now, let $G_\mathbf{S}^{\op}(\mathcal{B}(A))$ denote the Gram matrix of $\braket{\cdot, \cdot}_{\mathbf{S}}^{\op}$ with respect to $\mathcal{B}(A)$. In addition to determining the diagonal entries of $G_\mathbf{S}^{\op}(\mathcal{B}(A))$, we would also like to bound the magnitudes of off-diagonal entries in any column:
\begin{lemma}
    \label{lem:off_diagonal_schur_elements_are_small}
    For any two partitions $D$ and $E$, $$2 \cdot \cc(D \vee E) \le {\cc(D) + \cc(E)}$$ with equality iff $D = E$. Hence, for $a \ne b \in \mathcal{B}(A)$, $\abs{\braket{a, b}_{\mathbf{S}}^{\op}} \le d^{n-1/2}$. 
\end{lemma}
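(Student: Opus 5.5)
We aim to prove the combinatorial inequality $2\,\cc(D\vee E)\le \cc(D)+\cc(E)$ with equality iff $D=E$, and then deduce the Schur inner product bound.

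The join $D\vee E$ coarsens both $D$ and $E$, so each of its blocks is a union of blocks of $D$ and also a union of blocks of $E$. Hence $\cc(D\vee E)\le \cc(D)$ and $\cc(D\vee E)\le \cc(E)$. Adding the two gives $2\,\cc(D\vee E)\le \cc(D)+\cc(E)$.

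For the equality case, equality forces both inequalities to be tight, since each is individually $\le$. So $\cc(D\vee E)=\cc(D)$. A coarsening of $D$ with the same number of blocks must be $D$ itself, because any merge strictly reduces the block count. Thus $D\vee E=D$, and by symmetry $D\vee E=E$, so $D=E$. The converse is immediate from $D\vee D=D$. Consequently, for $D\neq E$ the inequality is strict: $2\,\cc(D\vee E)\le \cc(D)+\cc(E)-1$, by integrality.

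Now take $a=d^{(n-\cc(D))/2}D$ and $b=d^{(n-\cc(E))/2}E$ in $\mathcal{B}(A)$ with $a\neq b$, so $D\neq E$. The scalars are real, so conjugate-linearity of $\op$ causes no issue. By the corollary $\braket{D,E}_{\mathbf S}^{\op}=d^{\cc(D\vee E)}$, we get
\begin{equation}
\abs{\braket{a,b}_{\mathbf S}^{\op}} = d^{\,n-\frac{\cc(D)+\cc(E)}{2}+\cc(D\vee E)} \le d^{\,n-\frac12},
\end{equation}
using the strict inequality from the previous paragraph (with $d\ge1$). For the Brauer and walled Brauer algebras the basis is unscaled, but there $\cc(D)=n$ for every diagram, so the scaling factor is $1$ and the same computation applies. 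For $P_{n-\frac12}(d)$ the basis consists of the same rescaled diagrams, so nothing changes.

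There is no real obstacle. The only point needing care is that a coarsening with equal block count is the identity, which follows because strict coarsening merges at least two blocks. One should also check that $\cc$ here counts blocks of the join over all $2n$ vertices, consistent with the corollary.
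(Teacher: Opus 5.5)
Your proof is correct and follows essentially the same route as the paper. Both arguments bound $\cc(D\vee E)$ by $\min\{\cc(D),\cc(E)\}$, show one of the two bounds must be strict when $D\neq E$, and then use integrality. For strictness you note that a coarsening with the same block count is trivial, whereas the paper exhibits overlapping distinct blocks that the join merges. You also spell out the exponent computation $n-\tfrac{\cc(D)+\cc(E)}{2}+\cc(D\vee E)\le n-\tfrac12$ for the rescaled basis, which the paper leaves implicit.
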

\begin{proof}
    The bound holds because $\cc(D \vee E)$ is also at most $\cc(D)$, as the join operation can never increase the number of blocks relative to either of the starting partitions (similarly for $\cc(E)$). Thus $\cc(D \vee E) \le \min\{\cc(D), \cc(E)\}$. 
    If $D = E$, then $D \vee D = D$ and equality holds. 
    Otherwise, there exists sets $P \in D$, $Q \in E$ with $P \ne Q$, but $P \cap Q \ne \emptyset$. This also implies that there exists some element $p$ that is in one set and not the other (without loss of generality, $p \in P$ and $p \notin Q$). In the join, and vertices in $P \cup Q$ lie in the same partition, and hence $\cc(D \vee E)$ is strictly less than $\cc(E)$.
    Finally, the result follows since the number of connected components is always an integer. 
\end{proof}
We can now show that the diagram algebras $\delta$-nice. We can do so using by aggregating some of the above results: 
\begin{corollary}
    Let $\mathcal{F}(A) = \{E_{ij}^\rho/\sqrt{m_\rho}\}_{\rho, i, j}$ be the Fourier Basis rescaled by the (square root of) the Schur inner product, and $\mathcal{B}(A)$ be the computational basis as defined in~\cref{sec:diagram_algebra_conventions}. 
    Let $G_0$ and $G_{\mathbf{S}}^{\op}$ denote the Gram matrices with respect to the computational inner product and Schur inner product (as above), respectively. Then,
    \begin{equation}
        \label{eq:d_nice_corollary_one}
        G_0(\mathcal{B}(A)) = G_\mathbf{S}^{\op}(\mathcal{F}(A)) = I
    \end{equation}
    and 
    \begin{equation}
    \label{eq:d_nice_corollary_two}
        ||G_\mathbf{S}^{\op}(\mathcal{B}(A))/d^{n} - I||_{\infty} \le  d^{-1/2} \cdot |A|
    \end{equation}
    Therefore, 
    \begin{equation}
    \label{eq:d_nice_corollary_three}
        ||d^{n} \cdot G_0(\mathcal{F}(A)) - I||_{\infty} \le O(d^{-1/2} \cdot |A|)
    \end{equation}
\end{corollary}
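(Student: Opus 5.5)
The three displays will be established in order; the third follows from the first two by a change-of-basis argument.

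\textbf{The identities in \cref{eq:d_nice_corollary_one}.} The equality $G_0(\mathcal{B}(A)) = I$ is the definition of the computational inner product. For $G_\mathbf{S}^{\op}(\mathcal{F}(A)) = I$, I will invoke \cref{lem:fourier_states_orthogonal_under_schur}. It gives $\braket{E_{ij}^\rho, E_{kl}^\sigma}_{\mathbf S}^{\op} = m_\rho\delta_{\rho\sigma}\delta_{ik}\delta_{jl}$. The multiplicities $m_\rho$ are positive integers, since every irrep appears in the Schur representation in the semisimple large-$d$ regime. Dividing by $\sqrt{m_\rho m_\sigma}$ therefore gives the identity.

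\textbf{The bound \cref{eq:d_nice_corollary_two}.} For scaled basis elements $a = d^{(n-\cc(D))/2}D$ and $b = d^{(n-\cc(E))/2}E$, the join corollary gives
\[
\braket{a,b}_{\mathbf S}^{\op} = d^{\,n - \frac{\cc(D)+\cc(E)}{2} + \cc(D\vee E)}.
\]
On the diagonal this equals $d^n$ by \cref{eq:dn_inner_product}. Off the diagonal, \cref{lem:off_diagonal_schur_elements_are_small} shows it is at most $d^{n-1/2}$. Hence every entry of $\Delta := G_\mathbf{S}^{\op}(\mathcal{B}(A))/d^n - I$ has modulus at most $d^{-1/2}$. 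Bounding the operator norm by $|A|$ times the maximal entry (equivalently, Schur's test with row sums) yields $\|\Delta\|_\infty \le |A|\, d^{-1/2} =: \varepsilon$. The matrix $\Delta$ is Hermitian because $\braket{\cdot,\cdot}_{\mathbf S}^{\op}$ is a Hermitian form.

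\textbf{The bound \cref{eq:d_nice_corollary_three}.} Let $M$ be the square, invertible change-of-basis matrix expressing $\mathcal{F}(A)$ in $\mathcal{B}(A)$. For any Hermitian form, the Gram matrices transform as $G(\mathcal{F}) = M^\dagger G(\mathcal{B}) M$. Set $N = d^{n/2}M$. The two inputs then become
\[
N^\dagger N = d^n G_0(\mathcal{F}(A)), \qquad N^\dagger (I+\Delta) N = I.
\]
The second relation gives $I+\Delta = N^{-\dagger}N^{-1} = (NN^\dagger)^{-1}$, so $NN^\dagger = (I+\Delta)^{-1}$. The matrices $N^\dagger N$ and $NN^\dagger$ are Hermitian with the same spectrum. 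Therefore
\[
\|N^\dagger N - I\|_\infty = \|(I+\Delta)^{-1} - I\|_\infty \le \frac{\varepsilon}{1-\varepsilon},
\]
where the last step is a Neumann series bound (the same estimate as \cref{lem:operator_norm_of_inverse}). This is $O(d^{-1/2}|A|)$ whenever $\varepsilon \le 1/2$.

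\textbf{Main obstacle.} The delicate point is this last step: passing from closeness under the Schur form to closeness under the computational form without any a priori control on $\|M\|$. The $NN^\dagger$ versus $N^\dagger N$ spectral trick handles it cleanly. It does require $\varepsilon \le 1/2$, i.e.\ $d \ge 4|A|^2$, which is the regime assumed throughout. For smaller $d$ the bound is vacuous in the intended applications, and I would simply state that restriction.
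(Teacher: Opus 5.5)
Your proof is correct and follows the paper's own route: the first display from \cref{lem:fourier_states_orthogonal_under_schur} and the definition of $\braket{\cdot,\cdot}_0$, the second from the entrywise bound of \cref{lem:off_diagonal_schur_elements_are_small} together with \cref{lem:opnorm_bound}, and the third by transferring Gram-matrix closeness through the change of basis. The only difference is cosmetic: the paper packages the last step as \cref{lem:close_gram_implies_close_norms}, proved by the Loewner sandwich $(1-\varepsilon)X^\dagger X \preceq \tfrac{1}{c}I \preceq (1+\varepsilon)X^\dagger X$, whereas you compute $NN^\dagger = (I+\Delta)^{-1}$ directly and use that $N^\dagger N$ and $NN^\dagger$ have the same spectrum, which gives the same $\varepsilon/(1-\varepsilon)$ bound under the same hypothesis $\varepsilon<1$.
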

\begin{proof}
    $G_\mathbf{S}^{\op}(\mathcal{F}(A)) = I$ follows from~\cref{lem:fourier_states_orthogonal_under_schur}, and $G_0(\mathcal{B}(A)) = I$ by definition of the computational inner product.
    \cref{eq:d_nice_corollary_two} follows because the off-diagonal entries of $G_\mathbf{S}^{\op}(\mathcal{B}(A))$ are at most $d^{n - 1/2}$ (\cref{lem:off_diagonal_schur_elements_are_small}) and the diagonal entries are $d^n$. Therefore, when dividing by $d^n$ and subtracting $I$, all of the matrix entries have absolute value at most $d^{-1/2}$, after which we can apply~\cref{lem:opnorm_bound}. Finally,~\cref{eq:d_nice_corollary_three} follows from \cref{lem:close_gram_implies_close_norms}, taking $B = \mathcal{B}(A)$, $C = \mathcal{F}(A)$, $\braket{\cdot, \cdot}_g = \braket{\cdot, \cdot}_0$, and $\braket{\cdot, \cdot}_h = \braket{\cdot, \cdot}_{\mathbf{S}}^{\op}$.
\end{proof}
\begin{corollary}
    \label{cor:norm_of_fourier_states}
    We have the following bounds
    \begin{enumerate}
        \item 
        $||E_{ij}^\rho||^2_2 \cdot \frac{m_\rho}{d^n} = 1 \pm O(d^{-1/2} \cdot |A|)$.
        \item 
        $||E_{ij}^\rho||_2 \cdot \sqrt{\frac{m_\rho}{d^n}} = 1 \pm O(d^{-1/2} \cdot |A|)$.
    \end{enumerate}
\end{corollary}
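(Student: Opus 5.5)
The first bound is the diagonal entry of \cref{eq:d_nice_corollary_three}; the second follows from it by a square root. The plan is to read off the $((\rho,i,j),(\rho,i,j))$ diagonal entry of the matrix $d^n \cdot G_0(\mathcal{F}(A)) - I$. By definition of $\mathcal{F}(A)$, the elements are $E_{ij}^\rho/\sqrt{m_\rho}$. Since $\braket{\cdot,\cdot}_0$ is sesquilinear and $m_\rho > 0$ is real, this diagonal entry equals
\begin{equation*}
    d^n \cdot \frac{\braket{E_{ij}^\rho, E_{ij}^\rho}_0}{m_\rho} - 1 = \frac{d^n}{m_\rho}\,||E_{ij}^\rho||_2^2 - 1 .
\end{equation*}

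Any entry of a matrix is bounded in absolute value by its operator norm: $|M_{xx}| = |\braket{x|M|x}| \le ||M||_\infty$. Hence \cref{eq:d_nice_corollary_three} gives $\bigl|\frac{d^n}{m_\rho}||E_{ij}^\rho||_2^2 - 1\bigr| \le O(d^{-1/2}|A|)$.

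This differs from item 1 in one respect. Item 1 as written carries the factor $m_\rho/d^n$, whereas what we derived is the statement for the product with the reciprocal factor $d^n/m_\rho$. The discrepancy only inverts the quantity, so I would handle it as follows.

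Let $\epsilon = O(d^{-1/2}|A|)$ and write $x = \frac{d^n}{m_\rho}||E_{ij}^\rho||_2^2 = 1 \pm \epsilon$. Item 1 asks about $1/x$, which is $1 \pm O(\epsilon)$ whenever $\epsilon \le 1/2$. If $\epsilon > 1/2$, the bound holds trivially once we show the quantity is $O(1)$; alternatively we stay in the regime $\delta \le 1/2$ that the paper assumes throughout. Either way item 1 holds for the stated product, and the reciprocal version holds as well.

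For item 2, take square roots. From $y = 1 \pm \epsilon$ with $y \ge 0$ we get $|\sqrt{y} - 1| = |y-1|/(\sqrt{y}+1) \le |y - 1| \le \epsilon$. This gives item 2 in either orientation of the factor.

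The only real obstacle is the orientation of $m_\rho/d^n$ versus $d^n/m_\rho$. This is resolved by the inversion step above, which is valid only in the small-$\epsilon$ regime. I would therefore state the result under the standing assumption $d^{-1/2}|A| \le c$ for a small constant $c$, and note that outside this regime the $O(\cdot)$ bound is vacuous.
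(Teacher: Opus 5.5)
Your diagonal-entry computation is correct, and it is the route the paper intends. The $((\rho,i,j),(\rho,i,j))$ entry of $d^n G_0(\mathcal{F}(A)) - I$ equals $\frac{d^n}{m_\rho}\|E_{ij}^\rho\|_2^2 - 1$. So \cref{eq:d_nice_corollary_three} gives $\|E_{ij}^\rho\|_2^2\cdot\frac{d^n}{m_\rho} = 1 \pm O(d^{-1/2}|A|)$. Your square-root step then gives the matching version of item 2, and your bound is two-sided, which is cleaner than the paper's one-sided $\sqrt{1+O(\cdot)}\le 1+O(\cdot)$. This orientation, $\|E^\rho\|_2^2 \approx m_\rho/d^n$, is also exactly the one used downstream in \cref{lem:appendix_ratio_closeness}.

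The gap is the ``inversion'' step. With $x = \frac{d^n}{m_\rho}\|E_{ij}^\rho\|_2^2$, the reciprocal is $1/x = \frac{m_\rho}{d^n\,\|E_{ij}^\rho\|_2^2}$, not $\frac{m_\rho}{d^n}\|E_{ij}^\rho\|_2^2$. The latter equals $x\cdot(m_\rho/d^n)^2$, and $m_\rho/d^n$ is generally far from $1$. In fact item 1 as literally written is false.

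Take $P_1(d)$ with the rescaled basis $\{I, d^{-1/2}P\}$. There $E^{\emptyset}_{11} = \frac1d P = d^{-1/2}\,(d^{-1/2}P)$, so $\|E^\emptyset_{11}\|_2^2 = 1/d$. Also $m_\emptyset = 1$, since $\mathbf{S}(P)$ is the all-ones matrix on $\mathbb{C}^d$, whose $d$-eigenspace is one-dimensional. Hence $\|E^\emptyset_{11}\|_2^2\, m_\emptyset/d = d^{-2}$, while $\|E^\emptyset_{11}\|_2^2\, d/m_\emptyset = 1$. So you cannot claim that both orientations hold.

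The correct conclusion is that the factor in the statement is inverted. The paper's one-line proof has the same slip: it writes $d^n\langle E_{ij}^\rho/m_\rho, E_{ij}^\rho/m_\rho\rangle_0$, which neither matches the normalization $E_{ij}^\rho/\sqrt{m_\rho}$ of $\mathcal{F}(A)$ nor equals the left-hand side. What you should state and prove is $\|E_{ij}^\rho\|_2^2 = \frac{m_\rho}{d^n}\bigl(1 \pm O(d^{-1/2}|A|)\bigr)$, together with its square-root version. Drop the inversion paragraph and the claim that item 1 and item 2 hold ``in either orientation''. The discussion of the regime $\epsilon > 1/2$ then becomes unnecessary.
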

\begin{proof}
By~\cref{eq:d_nice_corollary_three}, 
    \begin{equation}
        ||E_{ij}^\rho||^2_2 \cdot \frac{m_\rho}{d^n} - 1  =  d^n \cdot \left\langle{\frac{E_{ij}^\rho}{m_\rho}, \frac{E_{ij}^\rho}{m_\rho}}\right\rangle_0  - 1 \le O(d^{-1/2} \cdot |A|)
    \end{equation}
    The second statement follows since $\sqrt{1 + O(d^{-1/2} \cdot |A|)} \le 1 + O(d^{-1/2} \cdot |A|)$.
\end{proof}
\begin{corollary}
    \label{cor:d_nice_algebras}
    For any $A \in \{P_n(d), P_{n - \frac12}(d), B_n(d), B_{r,s}(d)\}$, $A$ is $O(d^{-1/2} \cdot |A|)$-nice. 
\end{corollary}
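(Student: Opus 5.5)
The plan is to read off $\delta$-niceness directly from \cref{eq:d_nice_corollary_three} together with the norm estimates of \cref{cor:norm_of_fourier_states}. By definition, $A$ is $\delta$-nice when the Gram matrix $G$ of the \emph{normalized} Fourier basis $\{E_{ij}^\rho/||E_{ij}^\rho||_2\}$ under $\braket{\cdot,\cdot}_0$ satisfies $||G - I||_\infty \le \delta$. The previous corollary controls instead the Gram matrix of the basis $\mathcal{F}(A)$, rescaled by $d^{n/2}$, namely $H := d^n G_0(\mathcal{F}(A))$, whose $(\rho,i,j),(\sigma,k,l)$ entry is $d^n\braket{E_{ij}^\rho,E_{kl}^\sigma}_0/\sqrt{m_\rho m_\sigma}$. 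It satisfies $||H-I||_\infty \le O(d^{-1/2}|A|)$.

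The key observation is that $G$ and $H$ differ only by a diagonal congruence:
\begin{equation}
G = \Lambda H \Lambda, \qquad \Lambda = \mathrm{diag}\!\left(\frac{\sqrt{m_\rho}}{d^{n/2}\,||E_{ij}^\rho||_2}\right)_{(\rho,i,j)}.
\end{equation}
By the second item of \cref{cor:norm_of_fourier_states}, each diagonal entry of $\Lambda$ equals $1 \pm O(d^{-1/2}|A|)$, since the reciprocal of $1\pm\epsilon$ is $1\pm O(\epsilon)$ for small $\epsilon$. Hence $||\Lambda - I||_\infty \le O(d^{-1/2}|A|)$.

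Then expand
\begin{equation}
G - I = \Lambda(H-I)\Lambda + (\Lambda^2 - I),
\end{equation}
and use submultiplicativity of the operator norm together with $||\Lambda||_\infty \le 1 + O(d^{-1/2}|A|) \le 2$. This bounds the first term by $4\,||H-I||_\infty$ and the second by $O(d^{-1/2}|A|)$. Together these give $||G-I||_\infty \le O(d^{-1/2}|A|)$, which is the claim. The bound covers all four algebras, since the preceding corollaries were stated for every $A \in \{P_n(d), P_{n-\frac12}(d), B_n(d), B_{r,s}(d)\}$ with the scaled diagram basis.

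The only delicate point is the regime. The estimates $1/(1\pm\epsilon) = 1 \pm O(\epsilon)$ and $||\Lambda||_\infty \le 2$ require $d^{-1/2}|A|$ to be at most a small constant. When this fails, the claimed bound $O(d^{-1/2}|A|)$ is at least a constant. We then handle it crudely: $|G_{xy}| \le 1$ by Cauchy--Schwarz, so $||G - I||_\infty \le |A| + 1$. This is not by itself $O(1)$, so in this regime one either absorbs the extra factor into the $\poly(|A|)$ slack or states the corollary under the standing assumption $d \gg |A|^2$, which the paper adopts anyway. I expect no other obstacle; the substance was already done in \cref{lem:off_diagonal_schur_elements_are_small} and \cref{lem:close_gram_implies_close_norms}.
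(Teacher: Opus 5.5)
Your proposal is correct and is essentially the argument the paper leaves implicit. The paper presents the corollary as an aggregate of \cref{eq:d_nice_corollary_three} and \cref{cor:norm_of_fourier_states}, i.e.\ the normalized Gram matrix is the diagonal congruence $G=\Lambda H\Lambda$ of $H=d^nG_0(\mathcal{F}(A))$, and your regime caveat matches the paper's standing assumption $\delta\le 1/2$.

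One correction to your citation: \cref{cor:norm_of_fourier_states} as printed has its scaling factor inverted, so you should not rely on it for this step. Consistency with \cref{eq:d_nice_corollary_three}, and the paper's own later use in \cref{lem:appendix_ratio_closeness}, require $\|E_{ij}^\rho\|_2^2\, d^n/m_\rho = 1\pm O(d^{-1/2}|A|)$. Instead, observe directly that $\Lambda=\mathrm{diag}(H)^{-1/2}$ and $|H_{xx}-1|\le\|H-I\|_\infty$, which gives $\|\Lambda-I\|_\infty=O(d^{-1/2}|A|)$ with no detour.
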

Going forward, we let $\delta \coloneqq O(d^{-1/2} \cdot |A|)$ when the algebra $A$ is clear from context. \cref{cor:d_nice_algebras} implies that the Fourier transform $\ft_A$ is unitary up to negligible error only when $d \gg \poly(|A|)$. For the reason, we will assume this regime for $d$ going forward.   
\subsection{\texorpdfstring{$\wt{\ft_A}\ket{D}$}{FT} is Concentrated on the Propagating Number}
\label{sec:propagating_number_to_box_duality}
While not strictly necessary to derive an efficient implementation of $\wt{\ft_A}$, the results of~\cref{sec:properties_of_irreps} imply an interesting invariant preserved by the Fourier transform. If $\ket{D}$ is a diagram with propagating number $r$, then after applying a Fourier transform the resulting state is almost entirely supported on $\ket{\rho, i, j}$ with $|\rho| = r$.\ A somewhat trivial example of this phenomenon is the symmetric group algebra $\mathbb{C}[S_n]$. Diagrams in $\mathcal{D}(\mathbb{C}[S_n])$ are precisely those with propagating number $n$, and all irreps of ${\mathbb{C}[S_n]}$ are $n$ box Young diagrams. \cref{thm:fourier_concentration_on_propagating_number} implies that this relationship holds for more general diagram algebras as well.
\begin{theorem}
\label{thm:fourier_concentration_on_propagating_number}
    Let $A \in \{P_n(d), P_{n - \frac12}(d), B_n(d), B_{r,s}(d)\}$ be $\delta$-nice. Let $a = d^{\frac{n - \cc(D)}{2}}D \in \mathcal{B}(A)$, and define $\ket{\wt{a}} = \wt{\ft_A}\ket{a}$.
    
    Define $\Pi_r$ to be the projection onto the subspace $S_r$, spanned by states of the form 
\begin{equation}
    \{\ket{\lambda, i, j}: |\lambda| = r\}
\end{equation}
If $D$ has propagating number $r$, then 
\begin{equation}
    \abs{\braket{\wt{a}|\Pi_r |\wt{a}}_0 - \braket{\wt{a}, \wt{a}}_0} = O(\poly(|A|) \cdot \delta)
\end{equation}
\end{theorem}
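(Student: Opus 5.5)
Write $\ket{\wt a} = \sum_{\rho,i,j} \|E_{ij}^\rho\|_2\,\rho(a)_{ij}\ket{\rho,i,j}$. Then
\begin{equation}
\braket{\wt a,\wt a}_0-\braket{\wt a|\Pi_r|\wt a}_0=\sum_{|\rho|\neq r}\sum_{i,j}\|E_{ij}^\rho\|_2^2\,|\rho(a)_{ij}|^2 ,
\end{equation}
which is a nonnegative sum. It suffices to show that this sum is $O(\poly(|A|)\delta)$, and I would split it into the irreps with $|\rho|>r$ and those with $|\rho|<r$.

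\emph{Irreps with $|\rho|>r$.} By \cref{thm:zero_irrep_matrix}, since $\pn(D)=r<|\rho|$, we have $\rho(D)=0$, so $\rho(a)=0$. These terms vanish exactly.

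\emph{Irreps with $|\rho|<r$.} Here no exact vanishing is available, so I would use a counting argument plus near-unitarity. Let $K_{k}=\mathrm{span}\{\ket{b}: b\in\mathcal{B}(A),\ \pn(b)=k\}$ and let $S_k$ be as in the statement.
\begin{enumerate}
\item By \cref{thm:approx_qft} and \cref{cor:d_nice_algebras}, $\wt{\ft_A}$ is within $O(\delta|A|^{3/2})$ of $\ft_A$. Since $A$ is $\delta$-nice, $\ft_A$ is itself within $O(\poly(|A|)\delta)$ in operator norm of a unitary $U$: its columns are the dual coordinates, and their Gram matrix is $G^{-1}$, with $\|G^{-1}-I\|\le 2\delta\cdot\poly(|A|)$ by \cref{lem:operator_norm_of_inverse}. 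So $\wt{\ft_A}=U+O(\poly(|A|)\delta)$.
\item By the first case, $\wt{\ft_A}$ maps $K_{<k}:=\bigoplus_{k'<k}K_{k'}$ into $S_{<k}$ exactly, for every $k$.
\item By \cref{thm:prop_number_k_equals_irrep_dimension_at_level_k}, $\dim K_{<k}=\dim S_{<k}$. Hence $U$ nearly maps $K_{<k}$ onto $S_{<k}$: writing $P=\Pi_{S_{<k}}$, we have $\|(I-P)U|_{K_{<k}}\|\le\poly(|A|)\delta$.
\item Since $U$ is unitary and the two subspaces have equal dimension, it follows that $\|PU|_{K_{<k}^{\perp}}\|\le\poly(|A|)\delta$. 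This is the standard fact that a near-isometry between equal-dimensional subspaces has nearly zero leakage in the orthogonal complements; I would verify it by comparing $\mathrm{tr}(PUQU^\dagger)$ with $\dim S_{<k}$, where $Q$ is the projection onto $K_{<k}$.
\end{enumerate}
Take $k=r$. Our $\ket a$ lies in $K_r\subseteq K_{<r}^\perp$, so the mass of $\ket{\wt a}$ on $S_{<r}$ is $\poly(|A|)\delta$. Combined with the first case, this gives the claim, after absorbing $\|\wt a\|\le 1+O(\poly(|A|)\delta)$.

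The main obstacle is the counting step in the $|\rho|<r$ case: turning the equality of dimensions into a quantitative bound with only polynomial loss. In particular, the trace comparison picks up factors of $\dim K_{<k}\le|A|$ and a square root when passing from a trace (Frobenius-type) bound to an operator-norm bound. These are absorbed into $\poly(|A|)$, but must be tracked carefully.

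A fallback for this case is a direct estimate of the coefficients. Write $\rho(a)_{ij}\|E_{ij}^\rho\|_2\approx d_\rho\overline{\rho(a^*)_{ji}}/\|E^\rho_{ij}\|_2$ via \cref{lem:approx_nice}, and then show that the dual element $a^*$ is concentrated, up to $O(\delta)$, on diagrams with propagating number at least $\pn(a)$. This would use the near-diagonal Gram matrix from \cref{lem:off_diagonal_schur_elements_are_small}. Applying \cref{thm:zero_irrep_matrix} to those diagrams would then kill the small irreps.
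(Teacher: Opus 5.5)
Your proof is correct, and it uses the same three ingredients as the paper by a different mechanism. The ingredients are:
\begin{itemize}
\item exact vanishing of $\rho(D)$ for $|\rho|>\pn(D)$;
\item the identity between the number of diagrams with propagating number $k$ and $\sum_{|\rho|=k}d_\rho^2$;
\item near-orthonormality coming from $\delta$-niceness.
\end{itemize}

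The paper first shows that the Gram matrix of the vectors $\{\wt{\ft_A}\ket{a}\}$ is within some $\eta=O(\poly(|A|)\,\delta)$ of the identity. It then argues, framed as a strong induction on $k$, that $\{\ket{\wt a}:\pn(a)\le k\}$ is a basis of $\bigoplus_{i\le k}S_i$. This rests on exact containment, linear independence via that Gram matrix, and the dimension count. Finally it expands the component of $\ket{\wt b}$ on $\bigoplus_{i<r}S_i$ in this non-orthogonal basis, and bounds its overlap with $\ket{\wt b}$ using the small off-diagonal entries $\braket{\wt a,\wt b}_0$.

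You instead replace the non-orthogonal frame by an exact unitary $U$ and reduce everything to an equal-rank projection estimate. This isolates exactly where the dimension count enters: it is $\mathrm{tr}\,P=\mathrm{tr}\,Q$ in your trace comparison. It also avoids the basis expansion and any bounds on expansion coefficients. Your route further shows that the leakage onto $\bigoplus_{i<r}S_i$ is $O(\poly(|A|)\,\delta)$ in norm. The quantity in the theorem is therefore actually $O(\poly(|A|)\,\delta^2)$. The paper's route, in exchange, stays with the natural vectors $\wt{\ft_A}\ket{a}$ and produces the structural byproduct that these vectors, restricted to $\pn\le k$, form a basis of $\bigoplus_{i\le k}S_i$.

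One small correction to your step~1. The Gram matrix of the columns of $\ft_A$ is $\ft_A^\dagger\ft_A$, not $G^{-1}$. What holds exactly is $\ft_A\ft_A^\dagger=G$, because the computational basis is orthonormal for $\braket{\cdot,\cdot}_0$. Since $\ft_A$ is square and invertible, its polar decomposition $\ft_A=G^{1/2}U$ gives $\|\ft_A-U\|_\infty=\|G^{1/2}-I\|_\infty\le\|G-I\|_\infty\le\delta$. This is exactly the near-unitarity you need, so nothing downstream changes.
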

\noindent In other words, almost all of $\wt{\ft_A}\ket{a}$ is concentrated on irreps with a number of boxes equal to $\pn(D)$.\ We give the proof of~\cref{thm:fourier_concentration_on_propagating_number} in~\cref{sec:proof_of_invariant}.
\begin{figure}[H]
    \centering
    \includegraphics[width=\linewidth]{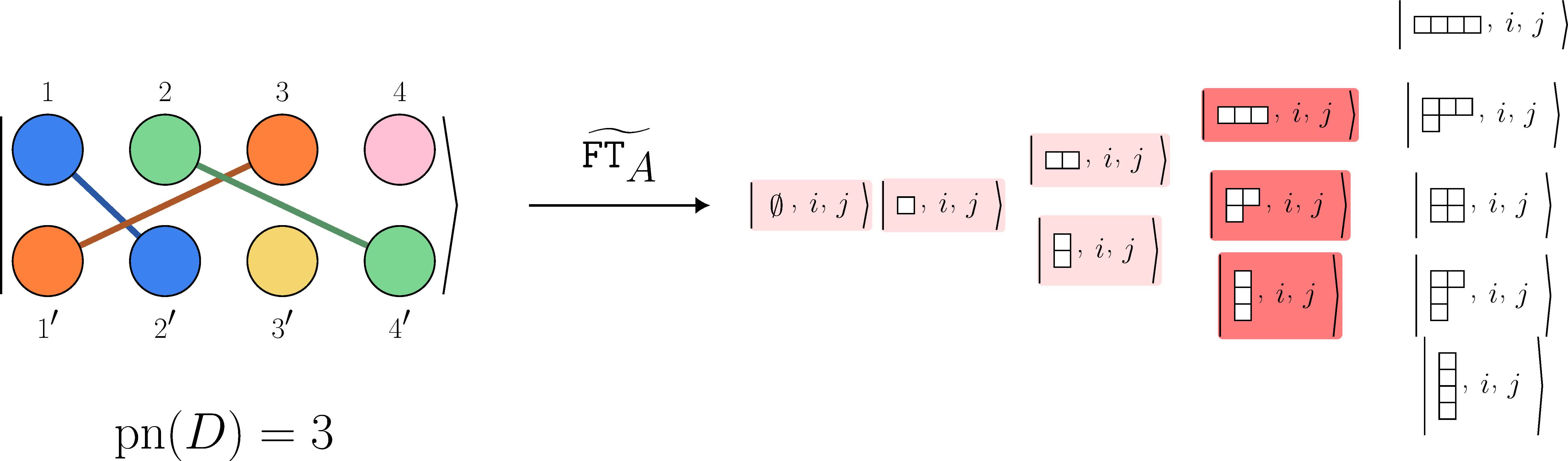}
    \caption{The Fourier transform maps diagrams with propagating number $r$ to irreps with at most $r$ boxes (\cref{thm:zero_irrep_matrix}). By~\cref{thm:fourier_concentration_on_propagating_number}, all but $O(\poly(|A|) \cdot \delta)$ of the total magnitude is concentrated on irreps with exactly $r$ boxes.}
    \label{fig:prop_invariant}
\end{figure}
\section{The Subalgebra Adapted Fourier Basis}
\label{sec:subalgebra_adapteD_basis}
When performing the Fourier transform, it will be useful to choose a basis for the irrep spaces $V^\rho$ with some structure. Concretely, we will choose a basis which block-diagonalizes the action of $\rho(b)$, for all $b$ in a specially chosen subalgebra $B$. The resulting basis is known as a \textit{subalgebra adapted basis}. Such subalgebra adapted bases are standard in efficient implementations of Fourier transforms and Schur transforms~\cite{DiaconisRockmore1990, moore2003genericquantumfouriertransforms,bacon2005quantumschurtransformi, maslen2016efficientcomputationfouriertransforms}.   

\subsection{Restricted Representations and Branching Rules}
In order to define subalgebra adapted bases, we first define the notion of a restricted representation:
\begin{definition}
    Suppose that $A$ is a semisimple algebra which has a semisimple subalgebra $B$.\ Given a representation $\rho: A \rightarrow \text{End}(V)$, the \textit{restricted representation} $\rho|_B: B \rightarrow \text{End}(V)$ is obtained by restricting the domain of $\rho$ to $B$. 
\end{definition}
Note that $\rho|_B$ will in general not be an irrep of $B$, even if $\rho$ is an irrep of $A$. Instead, it will be isomorphic to a direct sum of irreps of $B$, $\rho|_B \cong \bigoplus_{\sigma \in \wh{B}} m_\sigma\sigma$, or equivalently
\begin{equation}
    \label{eq:branching_rules}
     V^{\rho}\downarrow_B\; \cong \bigoplus_{\sigma \in \text{Res}(\rho)_B^A} (I_{m_\sigma} \otimes V^\sigma)
\end{equation}
Where $\text{Res}(\rho)_B^A$ denotes the set of all irreps $\sigma \in \wh{B}$ with $m_\sigma > 0$. These ``branching rules'' have particularly nice forms for diagram algebras: 
\begin{theorem}
 \label{thm:branching_rules}
Write $\mu=\lambda+\Box$ if $\mu$ is obtained from $\lambda$ by adding one box, and
$\mu=\lambda-\Box$ if $\mu$ is obtained from $\lambda$ by removing one box.
The branching rules for $P_n(d)$, $P_{n-\frac12}(d)$, $B_n(d)$, and $B_{r,s}(d)$ are as follows:

\paragraph{The Partition Algebra, $P_n(d)\supset P_{n-\frac12}(d)$ (\normalfont \cite[Theorem 2.24]{halverson2004partitionalgebras}).} 
\begin{equation}
V^\lambda\downarrow_{P_{n-\frac12}(d)}
\;\cong\;
\bigoplus_{\substack{\mu \in \wh{P_{n - \frac12}(d)} \\\mu=\lambda-\Box}} V^\mu
\;\oplus\;
\bigoplus_{\substack{\mu \in \wh{P_{n - \frac12}(d)} \\\mu=\lambda}} V^\mu 
\end{equation}

\paragraph{The Half Partition Algebra, $P_{n-\frac12}(d)\supset P_{n-1}(d)$ (\normalfont \cite[Theorem 2.24]{halverson2004partitionalgebras}).}
\begin{equation}
V^\lambda\downarrow_{P_{n-1}(d)}
\;\cong\;
\bigoplus_{\substack{\mu \in \wh{P_{n - 1}(d)} \\\mu=\lambda}} V^\mu
\;\oplus\;
\bigoplus_{\substack{\mu \in \wh{P_{n - 1}(d)} \\\mu=\lambda + \Box}} V^\mu 
\end{equation}

\paragraph{The Brauer Algebra, $B_n(d)\supset B_{n-1}(d)$ (\normalfont \cite[Theorem~4.8]{cox2008alcovegeometrytranslationprinciple}).}
\begin{equation}
V^\lambda\downarrow_{B_{n-1}(d)}
\;\cong\;
\bigoplus_{\substack{\mu \in \wh{B_{n-1}(d)} \\\mu=\lambda-\Box}} V^\mu
\;\oplus\;
\bigoplus_{\substack{\mu \in \wh{B_{n-1}(d)} \\\mu=\lambda+\Box}} V^\mu 
\end{equation}

\paragraph{The Walled Brauer Algebra, Left Restriction: $B_{r,s}(d)\supset B_{r-1,s}(d)$
(\normalfont \cite[Theorem~2.5]{brundan2011gradingswalledbraueralgebras}).}
\begin{equation}
V^{(\lambda,\mu)}\downarrow_{B_{r-1,s}(d)}
\;\cong\;
\bigoplus_{\substack{\lambda^\prime \in \wh{B_{r-1, s}(d)} \\\lambda'=\lambda-\Box}} V^{(\lambda',\mu)}
\;\oplus\;
\bigoplus_{\substack{\mu^\prime \in \wh{B_{r-1, s}(d)} \\\mu'=\mu+\Box}} V^{(\lambda,\mu')} 
\end{equation}
\paragraph{The Walled Brauer Algebra, Right Restriction: $B_{r,s}(d)\supset B_{r,s-1}(d)$
(\normalfont \cite[Theorem~2.5]{brundan2011gradingswalledbraueralgebras}).}
\begin{equation}
V^{(\lambda,\mu)}\downarrow_{B_{r,s-1}(d)}
\;\cong\;
\bigoplus_{\substack{\lambda^\prime \in \wh{B_{r, s-1}(d)} \\\lambda'=\lambda+\Box}} V^{(\lambda',\mu)}
\;\oplus\;
\bigoplus_{\substack{\mu^\prime \in \wh{B_{r, s-1}(d)} \\\mu'=\mu-\Box}} V^{(\lambda,\mu')} 
\end{equation}
\end{theorem}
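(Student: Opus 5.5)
The plan is not to re-derive these branching rules from scratch. Each is a known theorem (cited in the statement), so the task is to check that the cited results apply verbatim in our regime and notation. The common check has three parts. First, $d$ lies in the semisimple range for both the algebra and the subalgebra; since we take $d \gg \poly(|A|)$, this holds by the semisimplicity theorems quoted in \cref{sec:young_diagrams}. Second, the subalgebra embedding used in the cited source agrees with ours: $P_{n-\frac12}(d)$ consists of diagrams with $n,n'$ in one block, $P_{n-1}(d)\subset P_{n-\frac12}(d)$ consists of diagrams with $\{n,n'\}$ a block by itself, and $B_{n-1}(d)\subset B_n(d)$ adjoins the through-strand $\{n,n'\}$. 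For the walled Brauer algebra, $B_{r-1,s}$ and $B_{r,s-1}$ are obtained by adjoining a vertical strand at the last left or right column. Third, the irrep labels agree with those in \cref{eq:irreps_of_parition_algebra}, \cref{eq:irreps_of_half_parition_algebra}, \cref{eq:irreps_of_brauer_algebra} and the bipartition labelling.

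As an independent sanity check that also pins down the multiplicity-freeness, I would verify the dimension identity $d_\lambda=\sum_{\mu} d_\mu$ over the claimed summands.

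\textbf{Partition tower.} Take $P_n\supset P_{n-\frac12}$. Using $d_\lambda=f^\lambda\sum_l \stirlingii{n}{l}\binom{l}{|\lambda|}$ together with $f^\lambda=\sum_{\mu=\lambda-\Box}f^\mu$, the required identity reduces to a Stirling recurrence, $\stirlingii{n}{l}=\stirlingii{n-1}{l-1}+l\stirlingii{n-1}{l}$, combined with Pascal's rule. This is a routine generating-function calculation.

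\textbf{Brauer.} Here I would use the identity $\binom{n}{k}(n-k-1)!!=\binom{n-1}{k-1}(n-k-1)!!+\binom{n-1}{k+1}(n-k-3)!!\cdot(\text{hook factor})$ after expressing $\sum_{\mu=\lambda+\Box}f^\mu=(|\lambda|+1)f^\lambda$.

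\textbf{Walled Brauer.} The identity $(|\lambda|+1)f^\lambda=\sum_{\mu=\lambda+\Box}f^\mu$ and the analogous count for $k!\binom rk\binom sk$ do the work.

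The dimension checks alone do not prove the decomposition. The substantive content is the representation-theoretic argument, which I would sketch via the standard cellular/Jones basic construction. For $A_n\supset A_{n-1}$ with idempotent $e$ satisfying $eA_ne\cong A_{n-2}$ (or $A_{n-1}$ for the half step), irreps of $A_n$ split into those pulled back from the quotient $A_n/A_neA_n\cong\mathbb{C}[S_{|\lambda|}]$ (the top layer, propagating number $n$) and those induced from $A_{n-2}$. Restriction then follows the pattern of adding or removing a box, as in Wenzl's argument for Brauer and Halverson--Ram for the partition algebra.

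The main obstacle is the bookkeeping of conventions. One issue is the indexing shift in $P_{n-\frac12}$: the quoted $d_\lambda$ uses $\stirlingii{n}{l+1}$, and propagating blocks exclude the $\{n,n'\}$ block. The other is the direction of box addition and removal in the walled Brauer left versus right restriction, which depends on which side the wall label $\lambda$ or $\mu$ refers to. I would settle both by checking small cases, for example $n=1,2$, against the displayed formulas, as was done for $P_1(d)$ in the example above.
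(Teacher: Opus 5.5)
Your proposal is correct and is essentially the paper's approach. The paper gives no independent argument for these branching rules; it simply cites Halverson--Ram, Cox--De Visscher--Martin, and Brundan--Stroppel, relying on $d \gg \poly(|A|)$ for semisimplicity, just as you do. Your convention and dimension checks are a harmless extra, with one small slip: for $P_n(d)\supset P_{n-\frac12}(d)$, Pascal's rule alone suffices, and the recurrence $\stirlingii{n}{l+1}=\stirlingii{n-1}{l}+(l+1)\stirlingii{n-1}{l+1}$ is instead what the $P_{n-\frac12}(d)\supset P_{n-1}(d)$ step needs.
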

\subsection{The Subalgebra Adapted Basis}
\cref{eq:branching_rules} implies that there exists a basis for $V^\rho$ such that 
\begin{equation}
    \label{eq:subalgebra_adapted_matrices}
    \rho(b) = \bigoplus_{\sigma \in \text{Res}(\rho)_B^A} \bigoplus_{i=1}^{m_\sigma}  \sigma(b) 
\end{equation}
for all $b \in \mathcal{B}(B) = B \cap \mathcal{B}(A)$.\footnote{Implicitly, we assume that $B$ is spanned by a subset of $\mathcal{B}(A)$. For each diagram algebra we consider, the subalgebras in~\cref{eq:branching_rules} have this property.} This is known as the \textit{$B$-adapted basis} for $V^\rho$. Taking this basis for $V^\rho$ for each $\rho \in \wh{A}$ gives a basis for $\bigoplus_{\rho \in \wh{A}} V^\rho$, the \textit{$B$-adapted basis}. 

If $m_\sigma$ is equal to $1$ for all $\sigma \in \text{Res}(\rho)_B^A$, then the branching rule is said to be \textit{multiplicity-free}. All of the branching rules we consider in \cref{thm:branching_rules} are multiplicity-free. In a $B$-adapted basis with multiplicity-free branching, basis vectors of $V^\rho$ are typically indexed as $(b_\sigma \circ \rho)$, where $\rho$ is the irrep label and  $b_\sigma$ indexes basis vectors of $V^\sigma$, for some $\sigma \in \text{Res}(\rho)_B^A$. Using this notation and \cref{eq:subalgebra_adapted_matrices}, 
\begin{equation}
    \label{eq:subalgebra_adapted_element}
    \sigma(b)_{kl} = \rho(b)_{k \circ\rho, l \circ\rho}
\end{equation}
A $B$-adapted basis is a useful assumption when computing the Fourier transform. In particular, our algorithm will rely on the following relationship between the Fourier states of $B$ and the (subalgebra-adapted) Fourier states of $A$:
\begin{lemma}
    \label{thm:subalgebra_restrictions} Assume the Fourier basis elements are given with respect to a $B$-adapted basis with a multiplicity-free branching rule. For any $\sigma \in \wh{B}$, 
    \begin{equation}
        \label{eq:fourier_state_subalgebra_equality}
        E_{kl}^\sigma = \sum_{\rho: \sigma \in \text{Res}(\rho)^A_B} E^\rho_{k \circ\rho, l \circ\rho}
    \end{equation}
\end{lemma}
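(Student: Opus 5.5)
Both sides of \cref{eq:fourier_state_subalgebra_equality} are elements of $A$, and the Fourier transform of $A$ is an isomorphism $A \to \bigoplus_{\tau \in \wh{A}} M_{d_\tau}(\mathbb{C})$. So it suffices to show that both sides have the same image under every irrep $\tau \in \wh{A}$. Here $E_{kl}^\sigma \in B \subseteq A$ is the Fourier element of the subalgebra $B$. It is the preimage of the matrix unit $\ket{k}\bra{l}_\sigma$ under the Fourier isomorphism of $B$, so $\sigma'(E_{kl}^\sigma) = \delta_{\sigma\sigma'}\ket{k}\bra{l}$ for all $\sigma' \in \wh{B}$. By the matrix multiplication rule of the Fourier basis, the right-hand side maps under $\tau$ to $\ket{k\circ\tau}\bra{l\circ\tau}$ if $\sigma \in \text{Res}(\tau)^A_B$, and to $0$ otherwise. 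It remains to compute $\tau(E_{kl}^\sigma)$.

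Since $E_{kl}^\sigma \in B$, we have $\tau(E_{kl}^\sigma) = \tau|_B(E_{kl}^\sigma)$. In the $B$-adapted basis, \cref{eq:subalgebra_adapted_matrices} with multiplicity-free branching gives $\tau|_B(b) = \bigoplus_{\sigma' \in \text{Res}(\tau)^A_B} \sigma'(b)$ for all $b$ in $B$. The identity is stated for $b \in \mathcal{B}(B)$, so it extends to all of $B$ by linearity, using that $B$ is spanned by $\mathcal{B}(B) = B\cap\mathcal{B}(A)$. Applying this to $b = E_{kl}^\sigma$, each block $\sigma'(E_{kl}^\sigma)$ equals $\delta_{\sigma\sigma'}\ket{k}\bra{l}$. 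Hence $\tau(E_{kl}^\sigma)$ is zero if $\sigma \notin \text{Res}(\tau)^A_B$. Otherwise it is the matrix unit placed in the $\sigma$-block, namely $\ket{k\circ\tau}\bra{l\circ\tau}$ in the indexing of \cref{eq:subalgebra_adapted_element}. This matches the right-hand side for every $\tau$, so the two sides are equal.

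The main point needing care is bookkeeping, not any analytic difficulty. Multiplicity-freeness is what makes the label $k\circ\tau$ unambiguous: with multiplicity $m_\sigma > 1$, the matrix unit would appear once in each copy and the formula would need a sum over copies. We should also check that the unit of $B$ acts as the identity on $V^\tau$, so that $\tau|_B$ really is a unital representation decomposing as in \cref{eq:branching_rules}. For the diagram subalgebras considered, the unit of $B$ is the identity diagram, which is the unit of $A$, so this holds.
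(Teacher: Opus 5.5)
Your proof is correct and uses the same strategy as the paper: compare the images of both sides under every $\tau \in \wh{A}$, with the right-hand side evaluated via the defining property of the Fourier basis (this is the definition rather than the multiplication rule, but that is only a citation slip). The one difference is that you compute $\tau(E^\sigma_{kl})$ directly from $\sigma'(E^\sigma_{kl}) = \delta_{\sigma\sigma'}\ket{k}\bra{l}$ and the block-diagonal form of $\tau|_B$, whereas the paper expands $E^\sigma_{kl}$ through the dual-basis formula (\cref{thm:description_of_fourier_basis}) and then applies Schur orthogonality (\cref{thm:schur_orthogonality}); your route is slightly more direct and also shows explicitly that the blocks with $\sigma' \neq \sigma$ vanish.
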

\noindent Although $E_{kl}^\sigma$ is a Fourier basis element in $B$, we consider it here to be an element of the larger algebra $A$ in~\cref{eq:fourier_state_subalgebra_equality}.
\begin{proof}
     Apply any irrep $\tau \in \wh{A}$ to the right hand side. By definition of the Fourier basis, the resulting matrix is exactly
     \begin{equation}
        \sum_{\rho: \sigma \in \text{Res}(\rho)^A_B} \tau\left(E^\rho_{k \circ\rho, l \circ\rho}\right)
        =
        \sum_{\rho: \sigma \in \text{Res}(\rho)^A_B} \delta_{\tau\rho} \ket{k 
        \circ\tau}\bra{l\circ\tau} 
        = 
        \begin{cases}
            \ket{k \circ\tau}\bra{l\circ\tau}
            &
            \sigma \in \text{Res}(\tau)^A_B
            \\
            0
            &
            \text{otherwise}
        \end{cases}
     \end{equation}
     on the left hand side, we obtain 
     \begin{equation}
         \tau(E^{\sigma}_{kl})_{k^\prime \circ\tau,  l^\prime \circ\tau} = d_\sigma \sum_{b \in \mathcal{B}(B)} \sigma(b^*)_{lk}\tau(b)_{k^\prime \circ\tau,  l^\prime \circ\tau} = d_\sigma \sum_{b \in \mathcal{B}(B)} \sigma(b^*)_{lk}\sigma(b)_{k^\prime l^\prime}
     \end{equation}
     where the first equality is by definition of a Fourier basis element. The second equality follows from substituting~\cref{eq:subalgebra_adapted_element}, and requires $k^\prime$ and $l^\prime$ to be basis vectors of the same irrep space $V^\sigma$, where $\sigma$ also appears in the restriction of $\tau$ to $B$. Applying Schur Orthogonality (\cref{thm:schur_orthogonality}), we conclude that 
     \begin{equation}
         \tau(E^{\sigma}_{kl}) 
         =
         \begin{cases}
            \ket{k \circ\tau}\bra{l\circ\tau}
            &
            \sigma \in \text{Res}(\tau)^A_B
            \\
            0
            &
            \text{otherwise}
        \end{cases}
     \end{equation}
    Since $\tau$ was arbitrary, it holds for all irreps, and therefore the two sides are equal.
\end{proof}
\noindent When $A$ is $\delta$-nice,~\cref{thm:subalgebra_restrictions} has the following corollary: 
  \begin{lemma}
     \label{thm:approx_norm_restriction}
    Assuming that $A$ is $\delta$-nice, and that $B$ is a multiplicity-free subalgebra of $A$,
    \begin{equation}
        ||E_{kl}^\sigma||_2^2 
        = 
        \left(1 + (K - 1) \varepsilon \right)
        \left(\sum_{\rho: \sigma \in \text{Res}(\rho)^A_B} ||E^\rho_{k\circ\rho, l\circ\rho}||_2^2\right)
    \end{equation} 
    with $|\varepsilon| \le \delta$, 
    where $K$ denotes the number of irreps $\rho \in A$ such that $\sigma \in \text{Res}(\rho)^A_B$.
\end{lemma}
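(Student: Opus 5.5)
The approach is to expand the squared norm of the sum in \cref{thm:subalgebra_restrictions} directly and control the cross terms using $\delta$-niceness. By \cref{eq:fourier_state_subalgebra_equality}, $E_{kl}^\sigma = \sum_{\rho} E^\rho_{k\circ\rho,l\circ\rho}$, where the sum runs over the $K$ irreps $\rho$ with $\sigma \in \text{Res}(\rho)^A_B$. Write $N_\rho := ||E^\rho_{k\circ\rho,l\circ\rho}||_2$ and $S := \sum_\rho N_\rho^2$. Bilinearity of the computational inner product gives
\begin{equation*}
    ||E_{kl}^\sigma||_2^2 = S + \sum_{\rho \ne \rho'} \braket{E^\rho_{k\circ\rho,l\circ\rho}, E^{\rho'}_{k\circ\rho',l\circ\rho'}}_0 .
\end{equation*}
For $\rho \ne \rho'$ the two Fourier states are distinct, since their labels differ. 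The $\delta$-nice hypothesis bounds every off-diagonal entry of the normalized Gram matrix $G$ by $\delta$, using that the max-entry norm is at most the operator norm $||G-I||_\infty \le \delta$. Hence each cross term has absolute value at most $\delta N_\rho N_{\rho'}$.

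Next, the cross terms are compared to $S$. Using $N_\rho N_{\rho'} \le (N_\rho^2 + N_{\rho'}^2)/2$, or Cauchy--Schwarz in the form $(\sum_\rho N_\rho)^2 \le K S$, gives
\begin{equation*}
    \Bigl|\sum_{\rho\ne\rho'} \braket{E^\rho_{k\circ\rho,l\circ\rho}, E^{\rho'}_{k\circ\rho',l\circ\rho'}}_0\Bigr| \le \delta\Bigl(\bigl(\textstyle\sum_\rho N_\rho\bigr)^2 - S\Bigr) \le \delta (K-1) S .
\end{equation*}
Defining $\varepsilon$ by $||E_{kl}^\sigma||_2^2 = (1 + (K-1)\varepsilon) S$ then forces $|\varepsilon| \le \delta$. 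If $K = 1$ there are no cross terms, the identity holds exactly, and $\varepsilon$ can be taken to be $0$. The cross terms are complex a priori, but the left-hand side and $S$ are real, so the total cross-term sum is real and $\varepsilon$ is a real number.

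The only subtle point is the reading of $\delta$-niceness. The definition uses $||G-I||_\infty$, which I interpret as the operator norm, and entrywise bounds follow from it. If the paper instead means the max-entry norm, the bound holds directly. Either way each normalized cross inner product is at most $\delta$, and the rest is the elementary inequality above.
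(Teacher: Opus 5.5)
Your proposal is correct and follows essentially the same route as the paper. Both proofs expand $\|E_{kl}^\sigma\|_2^2$ using the decomposition $E_{kl}^\sigma=\sum_\rho E^\rho_{k\circ\rho,l\circ\rho}$, bound each cross term by $\delta N_\rho N_{\rho'}$ via $\delta$-niceness, and conclude with $\sum_{\rho\neq\rho'}N_\rho N_{\rho'}\le (K-1)\sum_\rho N_\rho^2$. Your remarks that entrywise bounds follow from the operator-norm bound and that $\varepsilon$ is real fill in small details the paper leaves implicit.
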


\begin{proof}
    We have that, 
    \begin{align}
        ||E_{kl}^\sigma||_2^2 
        % &
        = 
        \left\langle 
        E_{kl}^\sigma,
        E_{kl}^\sigma
        \right\rangle_0 
        % \\
        &
        = 
        \left\langle 
        \sum_{\rho: \sigma \in \text{Res}(\rho)^A_B} 
        E^\rho_{k \circ\rho, l \circ\rho},
        \sum_{\tau: \sigma \in \text{Res}(\tau)^A_B} 
        E^\tau_{k \circ\tau, l \circ\tau}
        \right\rangle_0 
        \\
        &
        = 
        \sum_{
            \substack{
                \rho: \sigma \in \text{Res}(\rho)^A_B 
                \\ 
                \tau: \sigma \in \text{Res}(\tau)^A_B
            }
        } 
        \left\langle 
        E^\rho_{k \circ\rho, l \circ\rho},
        E^\tau_{k \circ\tau, l \circ\tau}
        \right\rangle_0 
        \\
        &
        = 
        \sum_{
            \rho: \sigma \in \text{Res}(\rho)^A_B 
        } 
        \left\lVert 
        E^\rho_{k \circ\rho, l \circ\rho}
        \right\rVert_2^2 
        +
        \sum_{
            \substack{
                \rho \ne \tau
                \\
                \sigma \in \text{Res}(\rho)^A_B 
                \\ 
                \sigma \in \text{Res}(\tau)^A_B
            }
        } 
        \left\langle 
        E^\rho_{k \circ\rho, l \circ\rho},
        E^\tau_{k \circ\tau, l \circ\tau}
        \right\rangle_0 
    \end{align}
    Next, we bound the absolute value of the right-most term:
    \begin{align}
         \left|\sum_{
            \substack{
                \rho \ne \tau
                \\
                \sigma \in \text{Res}(\rho)^A_B 
                \\ 
                \sigma \in \text{Res}(\tau)^A_B
            }
        } 
        \left\langle 
        E^\rho_{k \circ\rho, l \circ\rho},
        E^\tau_{k \circ\tau, l \circ\tau}
        \right\rangle_0\right| \le& \sum_{
            \substack{
                \rho \ne \tau
                \\
                \sigma \in \text{Res}(\rho)^A_B 
                \\ 
                \sigma \in \text{Res}(\tau)^A_B
            }
        } 
        \left|\left\langle 
        E^\rho_{k \circ\rho, l \circ\rho},
        E^\tau_{k \circ\tau, l \circ\tau}
        \right\rangle_0\right| \\ 
        \le
        & 
        \sum_{
            \substack{
                \rho \ne \tau
                \\
                \sigma \in \text{Res}(\rho)^A_B 
                \\ 
                \sigma \in \text{Res}(\tau)^A_B
            }
        }  \delta \cdot  ||E^\rho_{k\circ\rho, l\circ\rho}||_2 ||E^\tau_{k\circ\tau, l\circ\tau}||_2
    \end{align}
    First, we used the triangle inequality, followed by the $\delta$-nice assumption (\cref{eq:d_nice_equation}), which bounds the inner product of the (normalized) Fourier basis elements. Finally, 
    for any $K$ real numbers $\{x_i\}_i$, $\sum_{i \ne j} x_ix_j \le (K - 1) \sum_{i} x_i^2$, so we can obtain the upper bound 
    \begin{equation}
        (K-1)\delta \cdot \sum_{\rho: \sigma \in \text{Res}(\rho)^A_B} ||E^\rho_{ k\circ\rho, l\circ\rho}||_2^2 
    \end{equation}
\end{proof}
\subsection{Subalgebra Adapted Chains}
\label{sec:subalgebra_chains}
While a $B$-adapted basis only gives a basis for $\{V^\rho\}_{\rho \in \wh{A}}$ in terms of bases for $\{V^\sigma\}_{\sigma \in \wh{B}}$, the idea can be applied inductively to a chain of semisimple subalgebras:
\begin{equation}
    \label{eq:subalgebra_chain}
    \mathbb{C} =  A_0 \subseteq A_1 \subseteq A_2 \subseteq \dots \subseteq A_{n-1} \subseteq A_n = A
\end{equation}
 In this case, we obtain a basis for $\bigoplus_{\rho \in \wh{A}} V^\rho$ indexed by \textit{Bratteli paths} 
\begin{equation}
   P = (\emptyset = \rho_0 \circ \rho_1 \circ \rho_2 \circ \dots \circ \rho_{n-1} \circ \rho_n =\rho) 
\end{equation}
where $\rho_i$ is an irrep of $A_i$, and $\rho_{i-1} \in \text{Res}(\rho_i)^{A_i}_{A_{i-1}}$.\ This is known as a \textit{subalgebra adapted basis} or \textit{Gelfand-Tsetlin basis} with respect to the chain $A_0, A_1, \dots, A_{n-1}, A_n$ \cite{OkounkovVershik1996}. Note that $\rho_0 = \emptyset$ without loss of generality, since a one-dimensional algebra has a trivial Fourier transform. 

A particularly nice way to visualize a subalgebra adapted chain is with a \textit{Bratteli Diagram}, which is a leveled graph where vertices at level $i$ index irreps of $A_i$. An edge connects $\rho_{i-1}$ and $\rho_{i}$ iff $\rho_{i-1} \in \text{Res}(\rho_i)^{A_i}_{A_{i-1}}$. For example, below is the Bratteli diagram of $B_{3,2}(d)$, with respect to the subalgebra chain in~\cref{eq:walleD_brauer_chain}: 
\begin{figure}[H]
    \centering
    \includegraphics[width=0.8\linewidth]{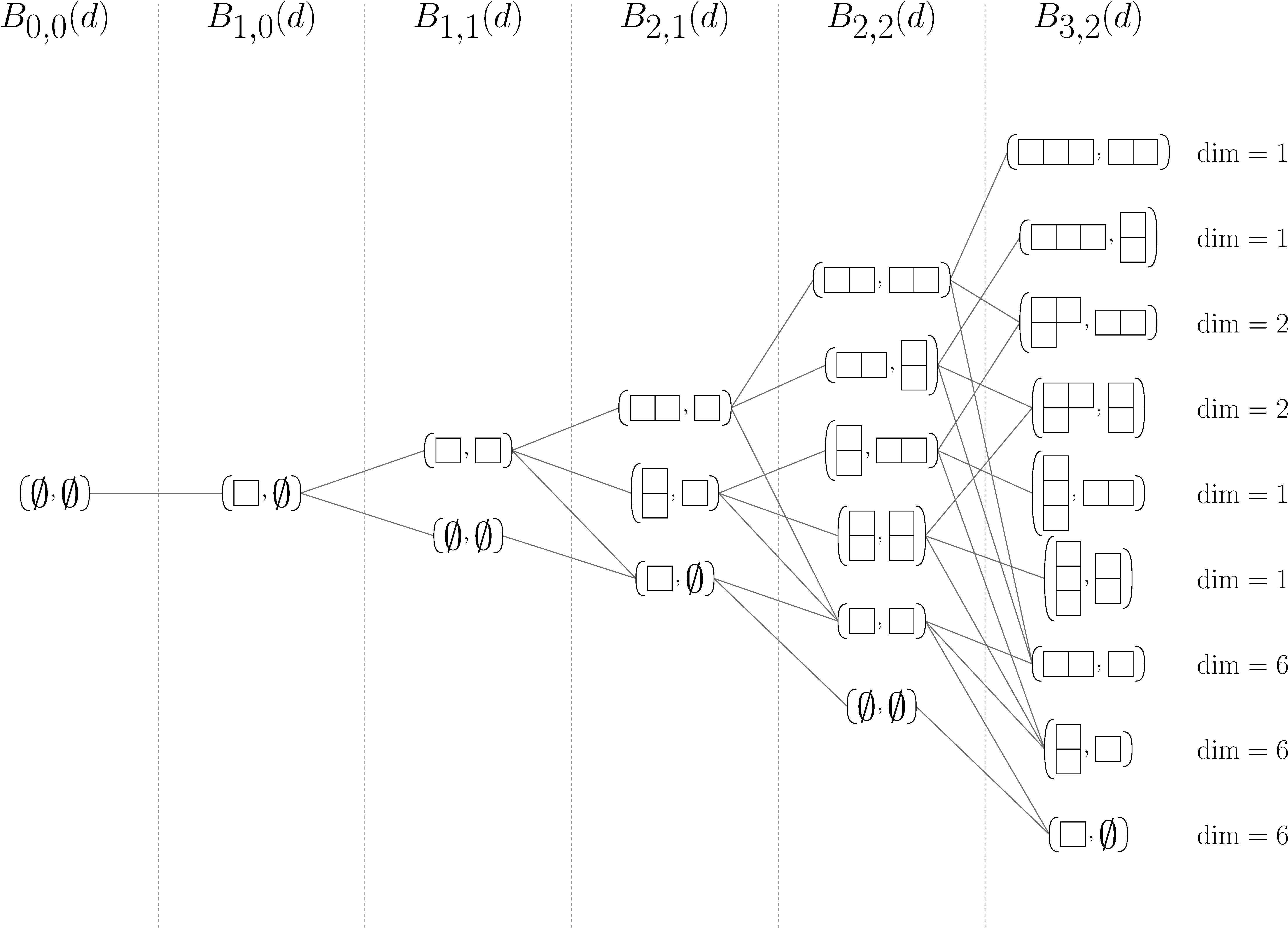}
    \caption{The Bratteli diagram of $B_{3,2}(d)$. The dimension of an irrep $\lambda \in \wh{B_{3,2}(d)}$ is equivalent to the number of paths from the root $(\emptyset, \emptyset) \in \wh{B_{0,0}(d)}$ to the label $\lambda$. Note that the sum of the squares of the dimensions is $120 = |B_{3,2}(d)|$.}
    \label{fig:Bratteli_diagram}
\end{figure}
We will sometimes refer to the subalgebra adapted basis vectors as \textit{Bratteli paths}, and use the letters $P$, $Q$, $R$, $\dots$ to denote such paths. For a Bratteli path $P$, we write $P(i)$ to denote the irrep in the chain corresponding to $A_i$. We also use $P_{\le k}$ to denote the truncated path 
\begin{equation}
    (\rho_0, \rho_1, \dots, \rho_{k-1}, \rho_k)
\end{equation}
\noindent $P_n(d)$ and $B_n(d)$ have natural subalgebra adapted bases with respect to the following chains:
\begin{equation}
    \mathbb{C} \cong P_0(d) \subseteq P_{\frac12}(d) \subseteq P_1(d) \subseteq \dots \subseteq P_{n - \frac12}(d) \subseteq P_n(d), \;\; \mathbb{C} \cong B_0(d) \subseteq B_1(d) \subseteq \dots \subseteq B_n(d)
\end{equation}
Informally, a Bratteli path in $P_n(d)$ has length $2n+1$: on each half-step $P(2i)\to P(2i+1)$ the shape either stays put or loses a box, and on the subsequent step $P(2i+1)\to P(2i+2)$ it either stays put or gains a box. A Bratteli path in $B_n(d)$ has length $n+1$, and at each step the shape changes by adding or removing a single box.

For the walled Brauer algebra, we will consider the subalgebra chain which alternates adding left and right columns, until all left columns have been added. We assume without loss of generality that $r \le s$:
\begin{equation}
    \label{eq:walleD_brauer_chain}
    B_{0,0}(d) \subseteq B_{1, 0}(d) \subseteq B_{1, 1}(d) \subseteq \dots \subseteq B_{r,r}(d) \subseteq B_{r, r + 1}(d) \subseteq \dots\subseteq B_{r,s}(d) 
\end{equation}
When moving from $B_{r,s}(d)$ to $B_{r + 1, s}(d)$, a box is either added to the left irrep or removed from the right irrep, and vice versa for $B_{r,s}(d)$ to $B_{r, s + 1}(d)$.

\subsection{The Orthogonal Form}
\label{sec:main_body_orthogonal_form}
A subalgebra adapted basis is only defined up to rescaling each path $P$ by some constant $\alpha_P$, as~\cref{eq:subalgebra_adapted_matrices} will still be satisfied. With a clever choice of prefactors $\alpha_P$, one can show (\cite{Lehrer1996}) that for any irrep $\rho \in \wh{A}$ and diagram $D$, when expressed in the basis $\{\alpha_P\ket{P}\}_P$,
\begin{equation}
    \label{eq:orthogonal_form}
    \rho'(D) = \rho'(D^{\op})^\dagger
\end{equation}
Here, $\rho'$ is the irrep matrix for $\rho$ expressed in this rescaled basis. To find these constants, one can define a bilinear form on Bratteli paths $\braket{P, Q}_{\text{path}}$, known as the \textit{cellular bilinear form} (\cite{Lehrer1996}, Proposition 2.4), and show that 
\begin{equation}
    \braket{P|\rho(D^{\op})^\dagger|Q}_{\text{path}} = \braket{P|\rho(D)|Q}_{\text{path}} \iff \rho(D^{\op})^\dagger G = G\rho(D),\;\; \text{where } (G_{PQ}) = \braket{P, Q}_{\text{path}}
\end{equation}
For all the diagram algebras, $\braket{\cdot, \cdot}_{\text{path}}$ is \textit{orthogonal} (\cite{Enyang_2013}, Proposition 4.2 (6); \cite{Nazarov1996Brauer}, Theorem 3.12; \cite{grinko2023gelfandtsetlinbasispartiallytransposed}, Theorem 3.2) implying that $G$, the Gram matrix on Bratteli paths with respect to the cellular bilinear form, is a diagonal matrix. Taking $S$ to be the square root of $G$, we see that $S^{\dagger}\rho(D)S = S^{\dagger}\rho(D^{\op})^\dagger S$. This implies that taking $\alpha_P =\sqrt{\braket{P, P}_{\text{path}}}$ achieves \cref{eq:orthogonal_form}. This particular rescaling is often called the \textit{orthogonal form} for the subalgebra adapted basis, or just the orthogonal form for the algebra $A$. 

Orthogonal forms are known for each of the diagram algebras we consider, but the formulas for the irrep matrix elements are somewhat involved. We provide them in~\cref{sec:appendix_matrix_algebras}. In addition to~\cref{eq:orthogonal_form}, an important property of the orthogonal form is that resulting irrep matrix elements of diagram algebra generators are efficiently computable, given an irrep label $\ket{\rho}$ and Bratteli path $\ket{P}$:  
\begin{lemma}
\label{lem:efficient_computation_generator_matrix_elements_all}
For $A \in \{B_n(d), B_{r,s}(d), P_n(d)\}$, there is a quantum algorithm which, given an irrep label $\rho \in \wh{A}$ and a Bratteli path $P \in V^{\rho}$, implements the following isometry to operator norm error $O(\varepsilon \cdot |A|)$: 
\begin{equation}
\ket{s_i,\rho,P}\ \mapsto\ 
\ket{s_i,\rho,P}\otimes\!\!\bigotimes_{Q:\,\rho(s_i)_{QP}\neq 0}\!\ket{\mathrm{loc}(Q),\,\rho(s_i)_{QP}}.
\end{equation}
where $s_i$ is the swap generator (see~\cref{sec:diagram_algebras}), and $\mathrm{loc}(Q)$ succinctly encodes where the Bratteli path $Q$ differs from $P$.
\begin{enumerate}
\item \textbf{Brauer.}
If $A=B_n(d)$, then $\mathrm{loc}(Q)=Q(i)$ and the gate complexity is
\begin{equation}
\wt{O}(n \cdot (n + \log d + \log(1/\varepsilon)))
\end{equation}

\item \textbf{Walled Brauer.}
If $A=B_{r,s}(d)$, then $\mathrm{loc}(Q)=Q(i)$ and the gate complexity is also 
\begin{equation}
\wt{O}(n \cdot (n + \log d + \log(1/\varepsilon))).
\end{equation}
where $n\coloneqq r+s$.

\item \textbf{Partition.}
If $A=P_n(d)$, then $\mathrm{loc}(Q)=(Q(2i-1),Q(2i),Q(2i+1))$ and the gate complexity is
\begin{equation}
\wt{O}(n^{5/2} \cdot (n + \log d + \log(1/\varepsilon)))
\end{equation}
\end{enumerate}
In each case, $\wt{O}(\cdot)$ hides $\polylog (n)$, $\poly\log\log d$, and $\poly\log\log(1/\varepsilon)$ factors.
\end{lemma}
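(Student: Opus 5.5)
We will build the isometry from explicit orthogonal-form formulas for the swap generators. These formulas are recorded in \cref{sec:appendix_matrix_algebras}. In every case they are \emph{local}: $\rho(s_i)_{QP}\neq 0$ only if $Q$ agrees with $P$ outside a window of $O(1)$ levels around level $i$ (around level $2i$ for $P_n(d)$). The nonzero entries in this window are closed-form rational functions, or square roots of rational functions, of $d$ and of contents of the boxes added or removed along $P$ in that window.

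The plan therefore has three steps.
\begin{enumerate}
\item Read off from $\ket{P}$ the few irreps $P(i-1),P(i),P(i+1)$, or the analogous window of $P(2i-2),\dots,P(2i+2)$ for the partition algebra. From these, compute the relevant box contents, which are integers of magnitude $O(n)$.
\item Enumerate the $O(1)$ candidate replacements $Q(i)$, or $(Q(2i-1),Q(2i),Q(2i+1))$. A candidate is valid if it differs from $P$ only in the window and satisfies the branching rules of \cref{thm:branching_rules}. For Brauer and walled Brauer each window has only a constant number of options. This holds because adding or removing a box at level $i$ is constrained by both $P(i-1)$ and $P(i+1)$.
\item For each valid candidate, evaluate the matrix entry with a reversible arithmetic circuit. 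The entry is computed to $\log(1/\varepsilon)+O(\log n+\log d)$ bits of precision and written into the output register together with $\mathrm{loc}(Q)$.
\end{enumerate}

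Since the computation only writes new registers and leaves the input intact, it is an isometry. If each entry is accurate to $\varepsilon$, the operator norm error is at most $\varepsilon$ times the number of basis states, giving the stated $O(\varepsilon\cdot|A|)$.

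For the gate counts, extracting the window from a path encoded as $n$ Young diagrams (each $O(n)$ bits, or $O(n\log n)$) costs $\wt O(n\cdot n)$ for Brauer and walled Brauer by a controlled scan. Evaluating the entry formulas needs $O(1)$ multiplications, divisions and square roots on numbers of $O(n+\log d+\log(1/\varepsilon))$ bits. With fast arithmetic these are nearly linear each, and scanning to locate the relevant row contents contributes the extra factor of $n$. Together this gives $\wt O(n(n+\log d+\log 1/\varepsilon))$.

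The partition algebra is the main obstacle. The orthogonal-form entries for $s_i$ there are not simple content expressions. They involve the three-step window and, per Halverson--Ram/Enyang, sums over boxes of the current shape of rational functions of contents and $d$; equivalently, they involve norms of seminormal vectors that are products over up to $O(n)$ factors. In addition, the number of valid $Q$ in the window may be $O(\sqrt n)$ rather than $O(1)$, because a removed box can be re-added at any of the $O(\sqrt n)$ addable corners. I would handle this in two ways. First, bound the addable and removable corners by $O(\sqrt n)$. Second, compute each needed product or sum over the $O(n)$ boxes, or corners, with precision-controlled arithmetic. Combining $O(\sqrt n)$ candidates, $O(n)$-term products, and the $O(n)$ path scan accounts for the $\wt O(n^{5/2})$ prefactor. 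The delicate point is error propagation through products of many factors near zero, for example factors like $(d-c)$ with large $d$. I would control it by computing in logarithmic or relative-precision form, using that every factor is bounded away from zero by $\Omega(1)$ in the regime $d\gg n$.
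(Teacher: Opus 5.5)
Your overall plan matches the paper's: read off the local window, enumerate the $Q$ allowed by the branching rules, evaluate the explicit orthogonal-form entries by reversible arithmetic to additive precision $\varepsilon$, and bound the operator norm entrywise. The gap is that the structural facts you use to count the work are wrong, so the complexity claims are not established.

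For $B_n(d)$, you say the window has $O(1)$ options ``because level $i$ is constrained by both $P(i-1)$ and $P(i+1)$.'' This fails exactly when $P(i-1)=P(i+1)$. Then $Q(i)$ may be $P(i-1)$ with any addable box added or any removable box deleted. The swap entries are $\lambda(s_i)_{QP}=(\lambda(e_i)_{QP}-\delta_{QP})/(x_i(P)+x_i(Q))$, with $\lambda(e_i)_{QP}=\sqrt{\lambda(e_i)_{PP}\lambda(e_i)_{QQ}}$ for every $Q\sim_i P$. Two consequences follow:
\begin{itemize}
\item A column can have $\Theta(\sqrt n)$ nonzero entries, and the output register must list all of them.
\item Each entry needs $\lambda(e_i)_{PP}$, a product over \emph{all} $O(\sqrt n)$ addable/removable corners of $P(i-1)$. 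It is not a closed form in the boxes moved inside the window.
\end{itemize}
The bound $\wt O(n(n+\log d+\log(1/\varepsilon)))$ actually comes from $O(\sqrt n)$ entries at $\wt O(\sqrt n\,(n+\log d+\log(1/\varepsilon)))$ each, plus $\wt O(n^{3/2})$ to list the candidate $Q(i)$. Your ``$O(1)$ multiplications, with the scan contributing the extra factor of $n$'' reaches the same expression for the wrong reason. For $B_{r,s}(d)$ the swap formulas only involve the case $P(i-1)\neq P(i+1)$, so your constant count is consistent there.

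For $P_n(d)$, the missing idea is how to obtain the orthogonal-form swap entries at all, since Enyang's formulas do not give $\lambda(s_i)$ directly. The paper proceeds as follows:
\begin{itemize}
\item It writes $s_i=\sigma_{2i}\sigma_{2i+1}$, so $\lambda(s_i)_{QP}=\sum_R\lambda(\sigma_{2i})_{QR}\lambda(\sigma_{2i+1})_{RP}$ over $O(\sqrt n)$ intermediate paths $R$.
\item Each seminormal $\sigma$-entry is one of five case formulas: a constant-size expression in the quantities $c_P(\cdot)$ and in bridge/point entries, each costing $\wt O(n(n+\log d+\log(1/\varepsilon)))$.
\item The result is rescaled by $\sqrt{\braket{Q,Q}_{\mathrm{path}}/\braket{P,P}_{\mathrm{path}}}$ using the telescoping product of path norms.
\end{itemize}

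Your candidate count is also off: there are $O(n)$ valid $Q$, not $O(\sqrt n)$. For example, when $Q(2i)=Q(2i-2)=Q(2i+2)$, the levels $Q(2i-1)$ and $Q(2i+1)$ range independently over the $O(\sqrt n)$ removable boxes. Your own ``remove a box, then re-add at an addable corner'' family is likewise indexed by pairs of boxes, hence has size $O(n)$.

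The correct accounting is $O(n)$ entries times $\wt O(n^{3/2}(n+\log d+\log(1/\varepsilon)))$ per entry. Your product ``$O(\sqrt n)$ candidates $\times$ $O(n)$-term products $\times$ $O(n)$ path scan'' multiplies in a one-time additive scan cost, so it is not a derivation of $n^{5/2}$.

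Finally, your precision worry is unnecessary. All numerators and denominators are integers or half-integers of $\wt O(\log d)$ bits, and content differences are nonzero integers. So the products can be formed exactly, and only the final divisions and square roots need precision $\varepsilon$.
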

\noindent We defer the proof of~\cref{lem:efficient_computation_generator_matrix_elements_all} to~\cref{sec:appendix_matrix_algebras}. While~\cref{lem:efficient_computation_generator_matrix_elements_all} also holds for diagram algebra generators other than $s_i$ (the proofs will in fact require computing these as subroutines), we will only need~\cref{lem:efficient_computation_generator_matrix_elements_all} to hold for swap generators to efficiently implement the Fourier transform.
\section{An Efficient Quantum Fourier Transform for Diagram Algebras}
\label{sec:efficient_qft_for_algebras}
Equipped with the tools developed in previous sections, we now present an efficient quantum Fourier transform for the partition, Brauer, and walled Brauer algebras.\ Our algorithm uses the \textit{separation of variables} framework of Diaconis and Rockmore~\cite{DiaconisRockmore1990}, and its quantum generalization due to Beals~\cite{beals1997qft} and Moore, Rockmore, and Russell~\cite{moore2003genericquantumfouriertransforms}. 

The separation of variables technique consists of five steps. First, on an input $\ket{a}$, we factor $a$ into an element $b$ of a subalgebra and some ``transversal'' element $t$.\footnote{In our case, we will actually have both a left transversal element and right transversal element.} Second, we recursively perform a Fourier transform for the subalgebra. Third, we promote $\wt{\ft_B}\ket{b}$ to $\wt{\ft_A}\ket{b}$ with an embedding step. Fourth, we apply an irrep matrix for the transversal element to obtain $\wt{\ft_A}\ket{a}$. Finally, we use a ``sum over transversals'' procedure to perform the entire computation in superposition over different transversals. 

To generalize from group to non-group algebras, we will have to make a few modifications. Since the Fourier transform on diagram algebras is not unitary, we target an approximation: specifically, our algorithm approximately implements \(\wt{\ft_A}\) from \cref{thm:approx_qft}, which itself approximates \(\ft_A\). In the regime \(d \gg \poly(|A|)\), where \(\ft_A\) is negligibly close to unitary, our algorithm also obtains negligible operator norm error using only \(\poly(n)\) gates. 

Besides performing each step approximately as opposed to exactly, the other way our algorithm differs from the typical approach taken for a group algebra is that several steps are modified when the propagating number of the input diagram is zero (and certain nonzero cases for the partition algebra). Crucially, these additional cases allow us to bypass implementing highly non-unitary irrep matrices such as $\lambda(p_i)$ or $\lambda(e_i)$, which cannot be realized on a quantum computer.

The five subsections below detail the five steps of the algorithm. The input state is
\begin{equation}
    \ket{\psi_0} = \sum_{a \in \mathcal{B}(A)} f(a) \ket{a}
\end{equation}
where $A \in \{P_n(d), P_{n - \frac12}(d), B_n(d), B_{r,s}(d)\}$.\ Before presenting the algorithm, we briefly explain how basis elements \(a\in\mathcal{B}(A)\) are encoded into qubits. When \(A\) is the Brauer or walled Brauer algebra, \(a\) is simply a diagram \(D\), so \(\ket{a}\) can be encoded by storing \(\ket{D}\) using \(\widetilde{O}(n)\) qubits. When $A$ is the partition or half-partition algebra, \(a\) is a scalar multiple of some underlying diagram \(D\) (see \cref{eq:basis_for_a_diagram_algebra}). However, the scaling is determined by \(D\), so \(a\in\mathcal{B}(A)\) is uniquely identified by its unscaled diagram, and we represent \(\ket{a}\) by encoding that diagram \(D\) (also \(\widetilde{O}(n)\) qubits). For this reason, we will implicitly associate $a$ with its underlying diagram $D_a$. 
\begin{figure}[H]
    \centering
    \includegraphics[width=0.9\linewidth]{figures/factor_circuit_psi1.jpg}
    \caption{A circuit for the quantum Fourier transform, following the separation of variables approach. On input $\ket{D_a}$, we first factor $D_a$ as $w_1 D_b w_2$, where $D_b$ lies in the subalgebra $B$ and $w_1,w_2$ are left and right ``transversal'' elements. The precise factorization depends on the algebras $A$ and $B$ (see~\cref{sec:factor}). We then recursively apply the Fourier transform over $B$ to $\ket{D_b}$ (see~\cref{sec:recurse}). The final three steps are grouped into a single postprocessing step, which is expanded in~\cref{fig:postprocess}.}
    \label{fig:algorithm_sketch}
\end{figure}
After each step of the algorithm, we give upper bounds on the gate complexity, leaving optimizations for future work.\ We also upper bound the operator norm error incurred from the ``ideal'' version of the step, which correspond to the operators whose products would exactly equal $\wt{\ft_A}$. 
\subsection{Step 1: Factor}
\label{sec:factor}
Given $\ket{\psi_0}$, the first step is to coherently factor $\ket{a}$ into a ``left transversal'' $\ket{w_1}$, an element of a subalgebra $\ket{b}$, and a ``right transversal'' $\ket{w_2}$. How we perform this factorization depends both on $n$ and the propagating number of $D_a$. There are several cases to consider, so it will be helpful to illustrate each with an example.
\paragraph{Case 1: $\pn(D_a) > 0$.}
\begin{itemize}
     \item \textbf{(Brauer 1)} When $A = B_n(d)$ and $n$ is even, there exist permutations $\pi_1 = (i\; n- 1)(j \;n)$, $\pi_2 = (k\; n - 1)(l \; n)$ such that
    \begin{equation}
        D_a = \pi_1D_b\pi_2,\;\; D_b \in B_{n-2}(d)
    \end{equation}
    \item \textbf{(Brauer 2)} When $A = B_n(d)$ and $n$ is odd, there exist permutations $\pi_1 = (i\;n)$, $\pi_2 = (k \; n)$ such that
    \begin{equation}
        D_a = \pi_1D_b\pi_2,\;\; D_b \in B_{n-1}(d)
    \end{equation}
    \begin{figure}[H]
        \centering
        \includegraphics[width=0.7\linewidth]{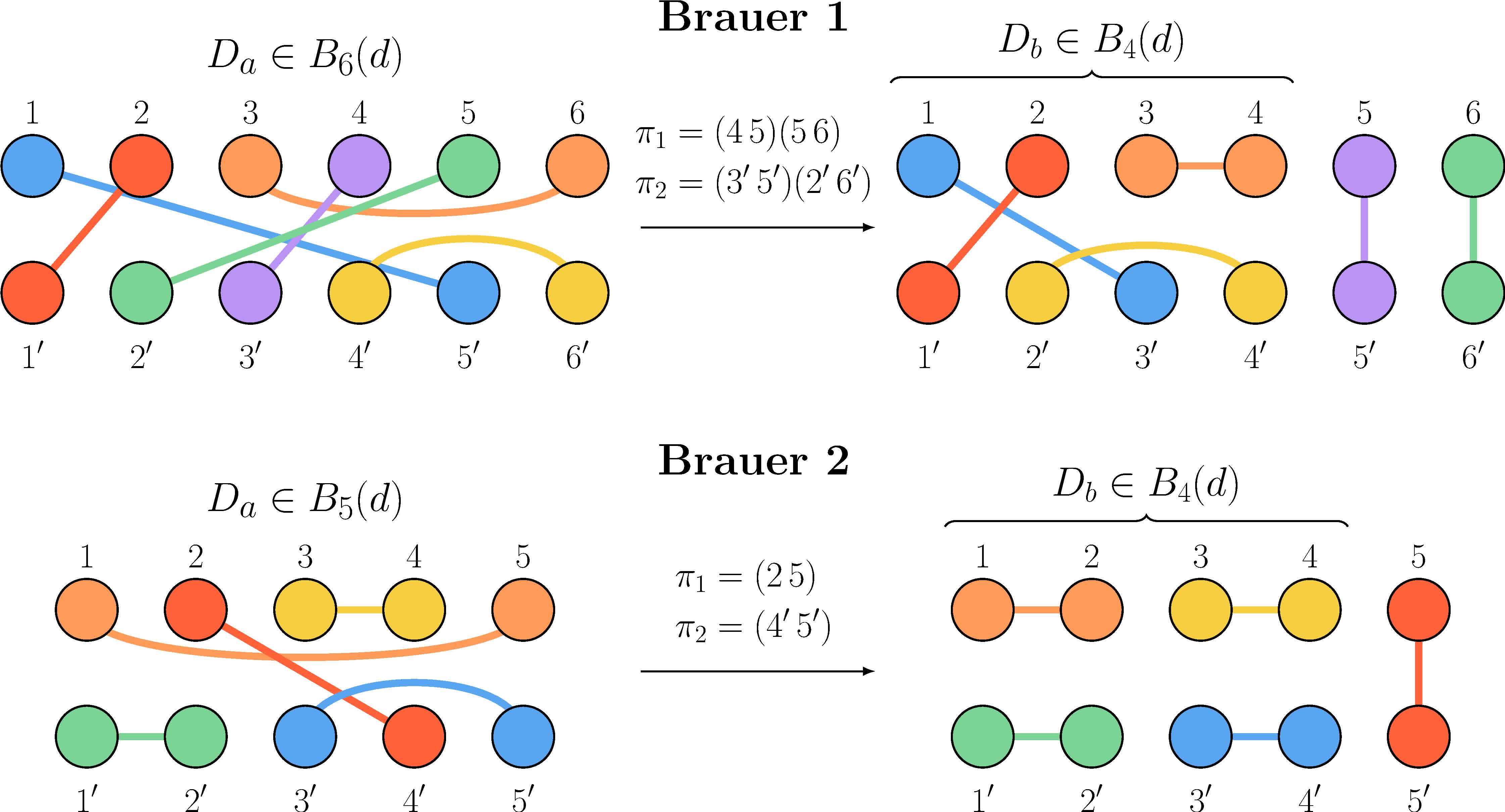}
        \label{fig:brauer_prop}
    \end{figure}
    \item \textbf{(Walled Brauer 1)}  When $A = B_{r,s}(d)$ and $s > r$, there exist permutations $\pi_1 = (j \;n)$, $\pi_2 = (l \; n)$ $\pi_1, \pi_2 \in S_s$ such that
    \begin{equation}
        D_a = \pi_1D_b\pi_2,\;\; D_b \in B_{r, s - 1}(d)
    \end{equation}
     \item \textbf{(Walled Brauer 2)} When $A = B_{r,r}(d)$, there exist permutations $\pi_1 = (i\; r)(j \;n)$, $\pi_2 = (k\;r)(l\; n)$, with $\pi_1, \pi_2 \in S_r \times S_r$ such that
    \begin{equation}
        D_a = \pi_1D_b\pi_2,\;\; D_b \in B_{r - 1, r - 1}(d)
    \end{equation}
   \noindent As in~\cref{eq:walleD_brauer_chain}, we assume without loss of generality that $s \ge r$.
    \begin{figure}[H]
    \centering
    \includegraphics[width=0.8\linewidth]{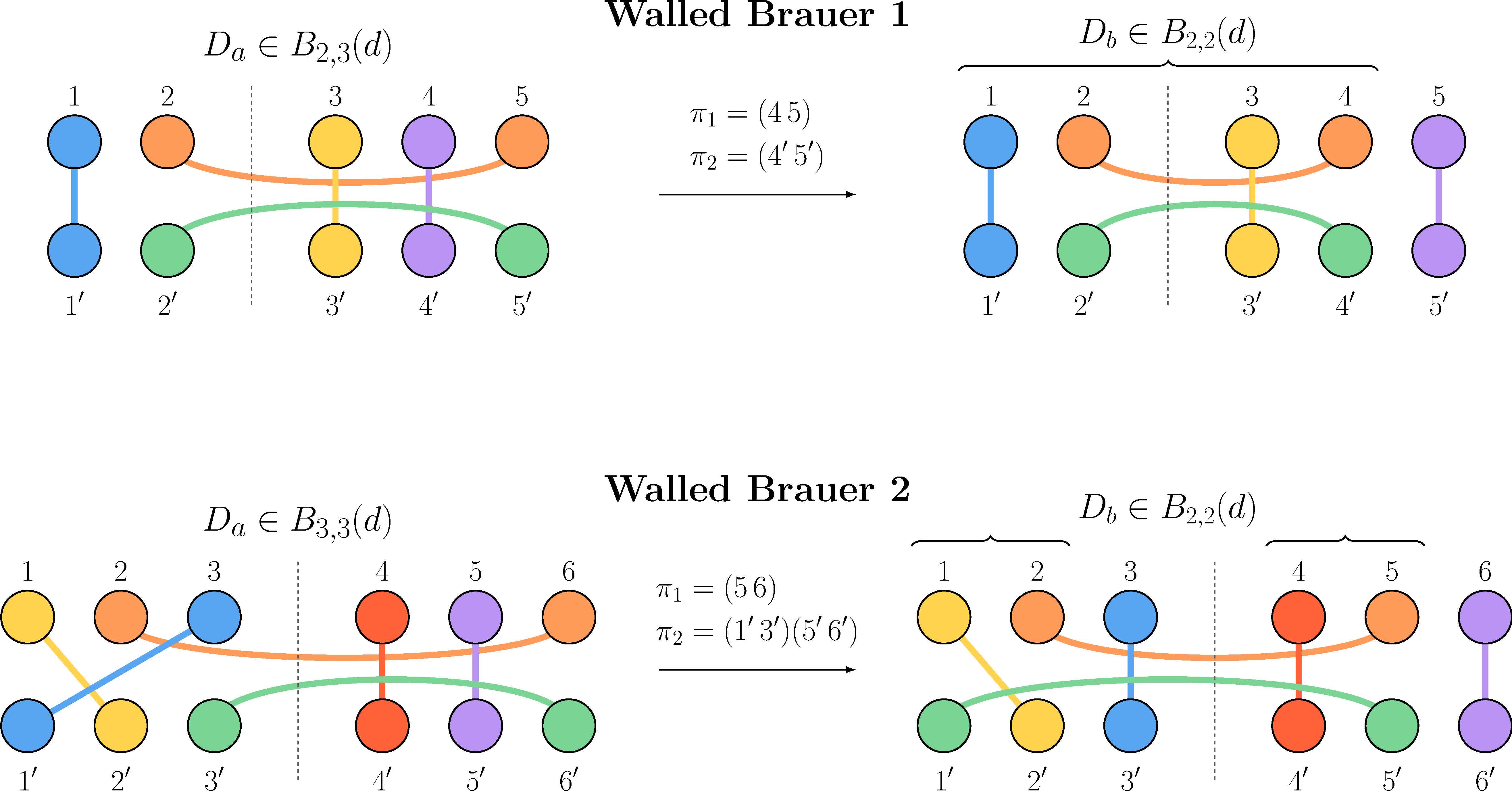}
    \label{fig:walled_brauer_prop_12}
    \end{figure}
\end{itemize}
For $A = P_n(d)$, we split into cases depending on whether there exists a propagating block with exactly two vertices (one in the top row, one on the bottom row):
\begin{itemize}
    \item \textbf{(Partition 1)} When $A$ has a propagating block with exactly two vertices, there exist permutations $\pi_1 = (i\;n)$, $\pi_2 = (k\; n )$ such that
    \begin{equation}
        D_a = \pi_1D_b\pi_2,\;\; D_b \in P_{n-1}(d)
    \end{equation}
     \item \textbf{(Partition 2)} When $A$ has a propagating block with at least two vertices in the top row, there exist permutations $\pi_1 = (i\;n - 1)(j\; n)$, $\pi_2 = (k\; n )$ such that
    \begin{equation}
        D_a = \pi_1b_{n-1}D_b\pi_2,\;\; D_b \in P_{n-1}(d)
    \end{equation}
     \item \textbf{(Partition 3)} When $A$ has a propagating block with at least two vertices in the bottom row, there exist permutations $\pi_1 = (i\;n)$, $\pi_2 = (k\; n - 1 )(l \; n)$ such that
    \begin{equation}
        D_a = \pi_1D_bb_{n-1}\pi_2,\;\; D_b \in P_{n-1}(d)
    \end{equation}
   \begin{figure}[H]
    \centering
    \includegraphics[width=0.7\linewidth]{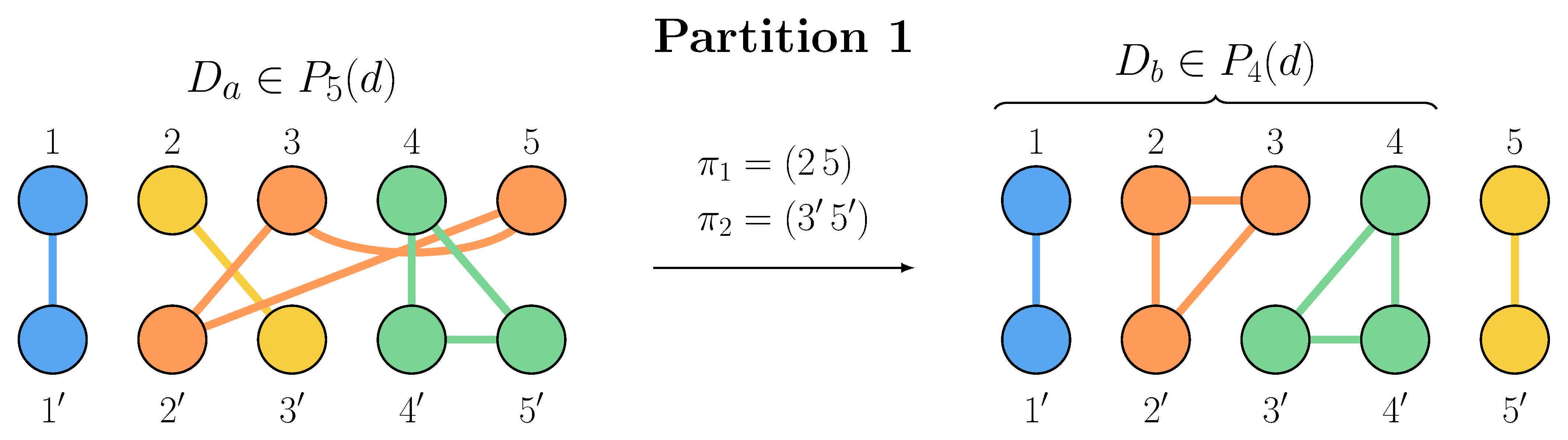}
    \label{fig:partition_prop1}
    \end{figure}
    \begin{figure}[H]
    \centering
    \includegraphics[width=0.7\linewidth]{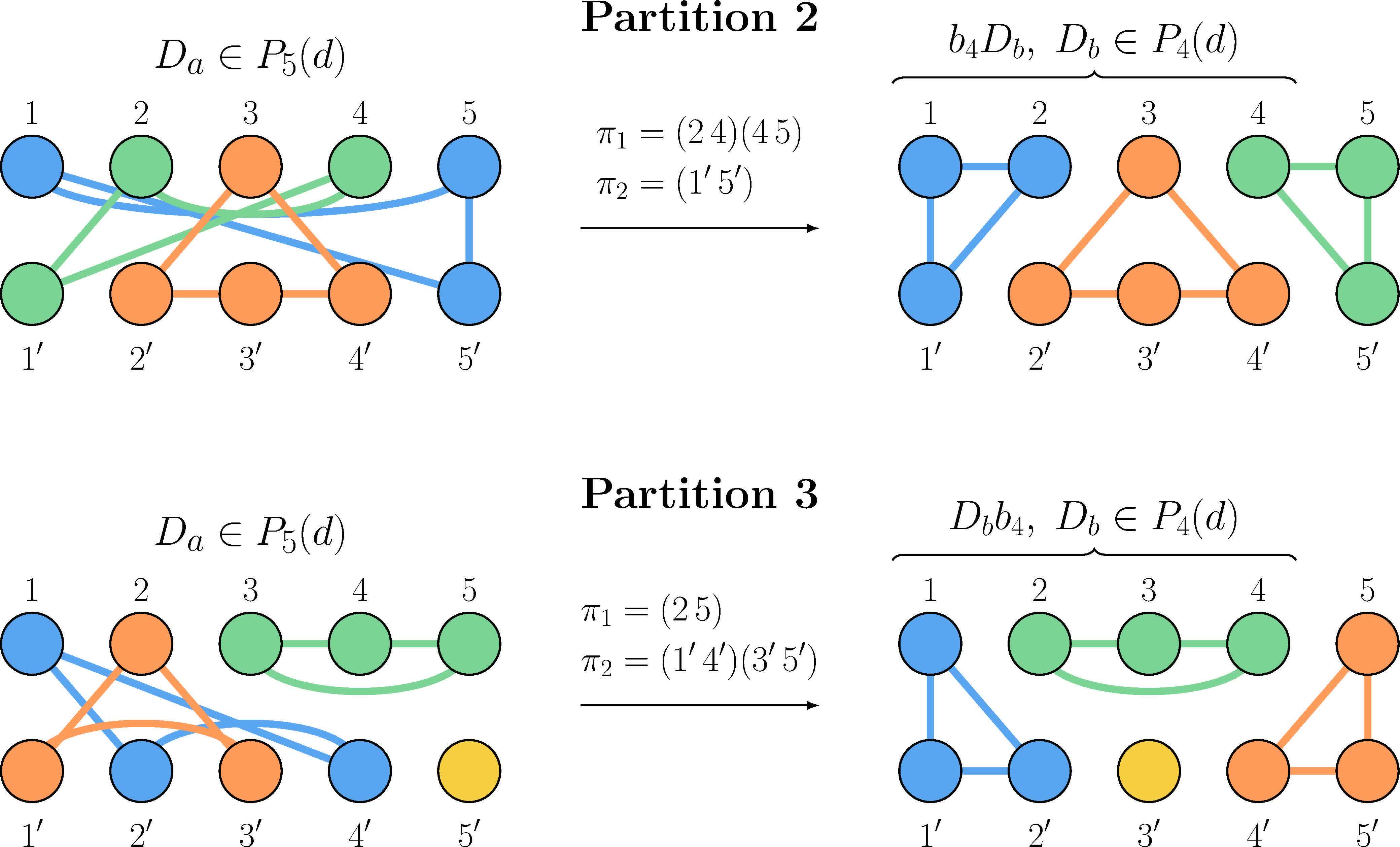}
    \label{fig:partition_prop23}
    \end{figure}
\end{itemize}
\paragraph{Case 2: $\pn(D_a) = 0$.} This case is very similar to the previous one, with the difference being that the factorization includes an extra generator not in the subalgebra. Note that the case with propagating number 0 only appears in the Brauer algebra when $n$ is even and in the walled Brauer algebra when $r = s$. 
\begin{itemize}
    \item \textbf{(Brauer, No Propagation)} When $A = B_n(d)$, there exist permutations $\pi_1 = (i\; n- 1)$, $\pi_2 = (k\; n - 1)$ such that
    \begin{equation}
        D_a = \pi_1D_b e_{n-1}\pi_2,\;\; D_b \in B_{n-2}(d)
    \end{equation}
     \item \textbf{(Walled Brauer, No Propagation)} When $A = B_{r,r}(d)$, there exist permutations $\pi_1 = (i\; r)$, $\pi_2 = (k\; r)$, with $\pi_1, \pi_2 \in S_r \times S_r$ such that
    \begin{equation}
        D_a = \pi_1D_b f_r\pi_2,\;\; D_b \in B_{r-1, r-1}(d)
    \end{equation}
\end{itemize}
\begin{figure}[H]
    \centering
    \includegraphics[width=0.7\linewidth]{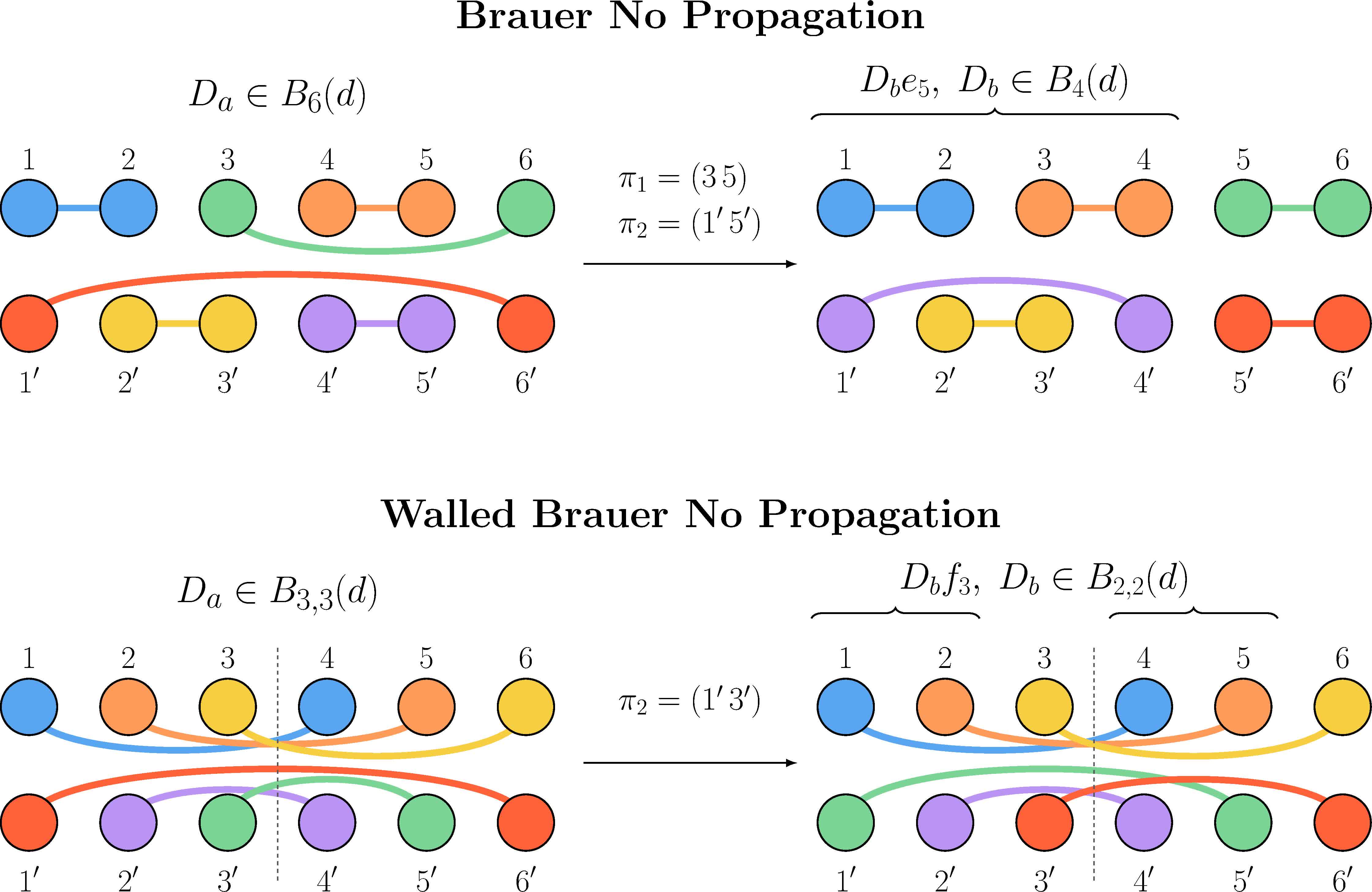}
    \label{fig:brauer_no_prop}
\end{figure}
For the partition algebra, there are four cases depending on whether or not the last two vertices on the top and bottom row belong to the same block: 
\begin{itemize}
    \item \textbf{(Partition, No Propagation 1)} When $A = P_n(d)$ and both $\{n\}$ and $\{n^\prime\}$ are blocks,
    \begin{equation}
        D_a =D_b p_n,\;\; D_b \in P_{n-1}(d)
    \end{equation}
    \item \textbf{(Partition, No Propagation 2)} When $A = P_n(d)$ and $\{n\}$ is a block by itself but not $\{n^\prime\}$, there exists a permutation $\pi = (i \; n - 1)$ such that  
    \begin{equation}
        D_a =D_b p_n b_{n-1} \pi,\;\; D_b \in P_{n-1}(d)
    \end{equation}
    \item \textbf{(Partition, No Propagation 3)} When $A = P_n(d)$ and $\{n^\prime\}$ is a block by itself but not $\{n\}$, there exists a permutation $\pi = (i \; n - 1)$ such that  
    \begin{equation}
        D_a = \pi b_{n-1} D_b p_n,\;\; D_b \in P_{n-1}(d)
    \end{equation}
    \item \textbf{(Partition, No Propagation 4)} When $A = P_n(d)$ and neither $\{n\}$ nor $\{n^\prime\}$ appear as blocks by themselves, there exist permutations $\pi_1 = (i \; n-1)$, $\pi_2 = (j \; n - 1)$
    \begin{equation}
        D_a =\pi_1 b_{n-1} D_b p_n b_{n-1} \pi_2,\;\; D_b \in P_{n-1}(d)
    \end{equation}
    \begin{figure}[H]
        \centering
        \includegraphics[width=\linewidth]{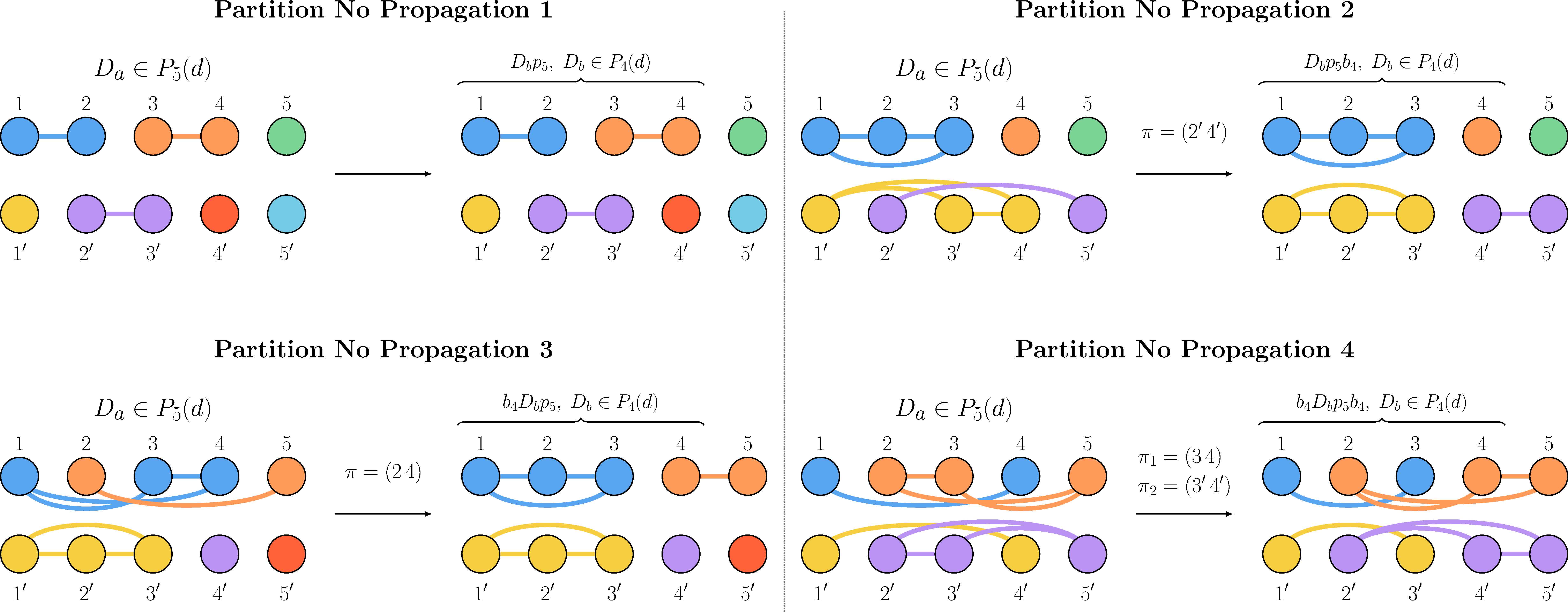}
    \end{figure}
\end{itemize}
 
\paragraph{The Last Possible Factorization.} An important subtlety is that unlike the case of factoring a group element into a subgroup element and a transversal, a factorization of a diagram need not be unique. For example, $s_ib_i = b_i$, so the diagram $b_1b_2$ (the single block element of $P_3(d)$) could also be factored as $s_1b_1b_2$. For our purposes, it will be important to use a specific factorization we call the \textit{last possible factorization}, which plays a crucial role in the final step of the algorithm (\cref{sec:accumulate}). All of the factorizations shown in the examples above are performed with respect to last possible factorizations specific to each algebra (\cref{def:factorings_for_our_algebras}).
\begin{definition}
    \label{def:ordered_factor}
    Let $A$ be an algebra with subalgebra $B$, and let $W = \{(w_1, w_2)\}_{w_1, w_2}$ be a complete set of transversals for $B$ in $A$. Let $<$ be a strict total ordering on $W$. The \textit{last possible factorization} (with respect to $<$) is a factorization of every $a \in \mathcal{D}(A)$ that chooses the last possible valid transversal with respect to $<$.\ That is, we factor $a$ as $a = w_1 b w_2$, with $(w_1, w_2) \in W$,\ $b \in \mathcal{D}(B)$, such that for all $(w_3, w_4) \in W$,
    \begin{equation}
         (w_1, w_2) < (w_3, w_4)  \implies \not\exists b' \in \mathcal{D}(B) \;\text{ such that } a = w_3 b' w_4
    \end{equation}
    In other words, $(w_1, w_2)$ is the greatest transversal (under $<$) for which $a$ can be factored into transversals and an element in $\mathcal{D}(B)$.  
\end{definition}
\noindent We now define orderings for all of the subalgebras we consider above: 
\begin{definition}[Last Possible Factorizations for $B_{n}(d), B_{r,s}(d), P_n(d)$]
    \label{def:factorings_for_our_algebras}
    \phantom{} 
    \newline
     Below are the orderings on the transversals we will use in our algorithm: 
    \begin{itemize}
    \item $A = B_n(d)$ for $n$ odd, so $B = B_{n-1}(d)$. In this case, 
    \begin{equation}
        W = \{(i \; n), (k \; n)\}_{i, k}
    \end{equation}
    corresponding to the case \textbf{Brauer 2} from above. We order $W$ using the natural lexicographic ordering.\footnote{Technically, the order among permutations will not matter, but we choose the lexicographic ordering for concreteness.} 
    \item  $A = B_n(d)$ for $n$ even, so $B = B_{n-2}(d)$. In this case, 
    \begin{equation}
        W =\{(i\; n- 1),\;e_{n-1}(k\; n - 1))\}_{i,k} \cup \{(i\; n- 1)(j \;n),\;(k\; n - 1)(l \; n))\}_{i < j, k <l} 
    \end{equation}
    where the first half corresponds to \textbf{Brauer, No Propagation}, and the second corresponds to \textbf{Brauer 1}. We order both halves lexicographically, but order the \textit{entire first half} (i.e. transversals with a contraction) before the \textit{entire second half}.  
    \item $A = B_{r,s}(d)$, with $s > r$. In this case,
    \begin{equation}
        W = \{(i \; n), (k \; n)\}_{i, k}
    \end{equation}
    corresponding to the case \textbf{Walled Brauer 1} from above.\ We again order $W$ using the natural lexicographic ordering. 
    \item $A = B_{r,r}(d)$. In this case,
     \begin{equation}
        W = \{(i\; r),\;f_r(k\; r))\}_{i, k} \cup \{(i\; r)(j \;n),\;(k\; r)(l \; n))\}_{i, j,k,l} 
    \end{equation}
     where the first half corresponds to \textbf{Walled Brauer, No Propagation}, and the second corresponds to \textbf{Walled Brauer 2}. Like the second case for the Brauer algebra, we order both halves lexicographically, but order the entire first half before the entire second half.  
     \item $A = P_n(d)$, so $B = P_{n-1}(d)$. In this case, 
        \begin{align}
        W_{0,1}
        &=
        \{(I,\;p_n)\},
        \\
        W_{0,2}
        &=
        \{(I,\;p_n b_{n-1}(i\;n-1))\}_{i},
        \\
        W_{0,3}
        &=
        \{((i\;n-1)b_{n-1},\;p_n)\}_{i},
        \\
        W_{0,4}
        &=
        \{((i\;n-1)b_{n-1},\;p_n b_{n-1}(j\;n-1))\}_{i,j},
        \\
        W_{+,1}
        &=
        \{((i\;n),\;(k\;n))\}_{i,k},
        \\
        W_{+,2}
        &=
        \{((i\;n-1)(j\;n)b_{n-1},\;(k\;n))\}_{i <j,k},
        \\
        W_{+,3}
        &=
        \{((i\;n),\;b_{n-1}(k\;n-1)(l\;n))\}_{i,k <l},
        \\[2mm]
        W
        &=
        W_{0,1}
        \cup W_{0,2}
        \cup W_{0,3}
        \cup W_{0,4}
        \cup W_{+,1}
        \cup W_{+,2}
        \cup W_{+,3},
        \\[2mm]
        W_{0,4}
        &<
        W_{0,3}
        <
        W_{0,2}
        <
        W_{0,1}
        <
        W_{+,3}
        <
        W_{+,2}
        <
        W_{+,1}.
        \end{align}
     Here, $I$ is the identity diagram, and the seven component sets correspond to the seven cases (\textbf{Partition, No Propagation 1}), (\textbf{Partition, No Propagation 2}), (\textbf{Partition, No Propagation 3}), (\textbf{Partition, No Propagation 4}), \textbf{Partition 1}, \textbf{Partition 2}, \textbf{Partition 3}. 
     
      The only important property of this ordering is that all component sets corresponding to cases with propagating number zero appear before those corresponding to cases with propagating number greater than zero. This is also the reason for the ordering of the two halves in the second Brauer case and the second walled Brauer case.
     \end{itemize}

\end{definition}
By choosing the orderings on transversals as in~\cref{def:factorings_for_our_algebras}, we ensure that when factoring a diagram $D_a$ with $\pn(D_a) > 0$, the chosen transversal in the factorization contains only permutations (or permutation and bridge generators for $P_n(d)$), and not any additional non-propagating generators such as $e_{i-1}$, $f_r$, or $p_i$. In other words, with the ordering in~\cref{def:factorings_for_our_algebras}, a diagram $D_a$ with $\pn(D_a) > 0$ will be factored according to some subcase of \textbf{Case 1} above.\footnote{Also, a diagram $D_a$ with $\pn(D_a) = 0$ will be factored according to some subcase of \textbf{Case 2}, although this is true for any possible ordering of the transversals, since factorings in \textbf{Case 1} will not be valid for such diagrams.} This is important, as we will see in~\cref{sec:apply_irreps}, because we give efficient implementations for applying these non-propagating generators only when the propagating number is zero.

Regardless of which case we are in, we can perform the corresponding last possible factorization using $\wt{O}(n^5)$ gates for the Brauer and walled Brauer algebras, and $\wt{O}(n^4)$ gates for the partition algebra.\footnote{This step could likely be further optimized. However, since the gate complexity of this step will be dominated by later steps of the algorithm anyways, we do not attempt to optimize it here.}: we simply iterate over all transversals from greatest to least with respect to the ordering, and use $\wt{O}(n)$ gates to check if the transversal would give a valid factorization. There are at most $O(n^4)$ transversals for $B_{n}(d)$ and for $B_{r,s}(d)$, and $O(n^3)$ transversals for $P_n(d)$ (see~\cref{def:factorings_for_our_algebras}).\ Since we are iterating over the transversals from greatest to least with respect to the ordering, the first valid factorization we find will be the last possible factorization. After performing the last possible factorization, we obtain the state
\begin{equation}
    \ket{\psi_1} =   \sum_{w_1, w_2} \sum_{b \in \mathcal{F}_{(w_1, w_2)}} f(w_1bw_2) \ket{w_1}_{\mathsf{L}} \otimes \ket{D_b}_\mathsf{B} \otimes  \ket{w_2}_{\mathsf{R}} \
\end{equation}
Where $D_b \in \mathcal{D}(B)$, $w_1$ and $w_2$ are words in the generator diagrams, and $\mathcal{F}_{(w_1, w_2)} \subseteq \mathcal{D}(B)$ denotes the set of all diagrams in $B$ for which some $D_a \in \mathcal{D}(A)$ is factored as $w_1D_bw_2$, according to the last possible factorization. Note that whether or not $\pn(D_a) = 0$ is implicitly stored in $w_2$, since $\pn(D_a) = 0$ if and only if either $e_{n-1}$, $f_r$, or $p_n$ is present at the start of $w_2$. 
\begin{center}
\fbox{
  \parbox{0.92\linewidth}{
    \centering
         \textbf{Gate Complexity for $B_{r,s}(d), B_n(d)$: $\wt{O}(n^5)$. For $P_n(d)$: $\wt{O}(n^4)$.} \textbf{Operator Norm Error}: $0$. 
  }
}
\end{center}
\subsection{Step 2: Recurse}
\label{sec:recurse}
Next, we recursively apply $\wt{\ft_B}$ on the $\mathsf{B}$ register: 
\begin{equation}
    \label{eq:after_recurse}
    \ket{\psi_2} =    
    \sum_{w_1, w_2} 
    \sum_{b \in \mathcal{F}_{(w_1, w_2)}} 
    f(w_1bw_2) 
    \ket{w_1}_{\mathsf{L}} 
    \otimes   
    \ket{w_2}_{\mathsf{R}} 
    \otimes  
    \sum_{\sigma \in \wh{B}}
    \sum_{P, Q} 
    \left\lVert E_{PQ}^{\sigma}\right\rVert_2 
    \;
    \sigma(b)_{PQ} 
    \;
    \ket{\sigma, P, Q}_\mathsf{S}
\end{equation}
Note that $D_b$, when renormalized as a basis element of $A$, may have a different normalization than as a basis element of $B$. These is handled implicitly when promoting a Fourier state of $B$ to a Fourier state of $A$, as described in the sections below. 
\begin{center}
\fbox{
  \parbox{0.92\linewidth}{
    \centering
         \textbf{Gate Complexity: Complexity of $\wt{\ft_B}$.} \textbf{Operator Norm Error}: Error of $\wt{\ft_B}$. 
  }
}
\end{center}
\subsection{Step 3: Embed}
\label{sec:embedding}
Next, our goal is to promote the $\mathsf{S}$ register, which contains a Fourier state of $B$, to a Fourier state of $A$. This can be accomplished by the following \textit{embedding map}:
\begin{definition}
    Assume that $A$ has a subalgebra $C$. Define the embedding map $\mathbf{E}(A, C)$ as follows: 
\begin{equation*}
         \mathbf{E}(A, C)\ket{\sigma, P, Q} = \sum_{\rho: \sigma \in \text{Res}(\rho)^{A}_{C}} m_{\rho, \sigma} \frac{||E^\rho_{P \circ \rho, Q \circ \rho}||_2}{||E^{\sigma}_{PQ}||_2}\ket{\rho, P \circ \rho, Q \circ \rho}
\end{equation*}
where $m_{\rho, \sigma}$ denotes the multiplicity of $V^\sigma$ in $V^\rho \!\downarrow_C$ (\cref{sec:subalgebra_adapteD_basis}). 
\end{definition}
When $A$ and $C$ are clear from context, we will simply write $\mathbf{E}$. We only define $\mathbf{E}$ on states of the form $\ket{\sigma, P, Q}$, where $\sigma \in \wh{C}$ and $P, Q \in V^{\sigma}$, as we will only apply $\mathbf{E}$ with inputs supported on this subspace. On this subspace, we can check that $\mathbf{E}$ does indeed promote Fourier states of $C$ into Fourier states of $A$:
\begin{lemma}
    For any $c \in C$, $\mathbf{E} \cdot \wt{\ft_C}\ket{c} = \wt{\ft_A}\ket{c}$.
\end{lemma}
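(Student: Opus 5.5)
This is a direct computation: expand both sides in the basis $\ket{\rho, P', Q'}$ and compare coefficients. By definition of $\wt{\ft_C}$,
\begin{equation*}
\wt{\ft_C}\ket{c} = \sum_{\sigma \in \wh{C}} \sum_{P,Q} \|E^\sigma_{PQ}\|_2 \, \sigma(c)_{PQ} \ket{\sigma, P, Q}.
\end{equation*}
Applying $\mathbf{E}$ term by term, the factor $\|E^\sigma_{PQ}\|_2$ cancels against the denominator in the definition of $\mathbf{E}$. This leaves
\begin{equation*}
\sum_{\sigma, P, Q} \sum_{\rho:\, \sigma \in \text{Res}(\rho)^A_C} m_{\rho,\sigma} \, \|E^\rho_{P\circ\rho, Q\circ\rho}\|_2 \, \sigma(c)_{PQ} \ket{\rho, P\circ\rho, Q\circ\rho}.
\end{equation*}
All the branching rules in \cref{thm:branching_rules} are multiplicity-free, so $m_{\rho,\sigma}=1$ throughout. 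For a general chain one could instead fold the multiplicity index into the path label.

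On the other side,
\begin{equation*}
\wt{\ft_A}\ket{c} = \sum_{\rho \in \wh{A}} \sum_{P',Q'} \|E^\rho_{P'Q'}\|_2 \, \rho(c)_{P'Q'} \ket{\rho,P',Q'},
\end{equation*}
where we use the $C$-adapted basis. Each basis vector $P'$ of $V^\rho$ is uniquely of the form $P\circ\rho$ with $P$ a basis vector of some $V^{\sigma}$, $\sigma \in \text{Res}(\rho)^A_C$. The key input is \cref{eq:subalgebra_adapted_matrices}: for $c \in C$, the matrix $\rho(c)$ is block diagonal in this basis. So $\rho(c)_{P'Q'}=0$ unless $P'=P\circ\rho$ and $Q'=Q\circ\rho$ lie in the same block $\sigma$. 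In that case $\rho(c)_{P\circ\rho,Q\circ\rho}=\sigma(c)_{PQ}$ by \cref{eq:subalgebra_adapted_element}.

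We then reindex the double sum $\sum_{\rho}\sum_{P',Q'}$ as $\sum_\sigma\sum_{P,Q}\sum_{\rho:\sigma\in\text{Res}(\rho)}$, which is a bijection on the surviving terms. The coefficients then agree exactly with the expression obtained from $\mathbf{E}$, proving the identity.

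The only point needing care is that this reindexing is a bijection. Off-block terms vanish, and distinct triples $(\sigma,P,Q)$ with the same $\rho$ give distinct $(P\circ\rho,Q\circ\rho)$. Both facts follow from multiplicity-freeness.

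We also need \cref{eq:subalgebra_adapted_element} for general $c\in C$, not just basis elements. This follows by linearity, since $C$ is spanned by $\mathcal{B}(C)\subseteq\mathcal{B}(A)$.

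Finally, the normalizations of $c$ as an element of $C$ versus $A$ (relevant for the scaled partition basis) must be consistent. Here $c$ is one fixed algebra element fed into both transforms, so the identity holds verbatim. Any rescaling of basis vectors is handled separately, as the paper notes in Step 2.
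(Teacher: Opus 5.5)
Your proof is correct and is essentially the paper's argument: expand $\wt{\ft_C}\ket{c}$, cancel the factors $\|E^\sigma_{PQ}\|_2$ against the denominators in $\mathbf{E}$, and use the block-diagonal form \cref{eq:subalgebra_adapted_matrices} to reindex the sum as $\wt{\ft_A}\ket{c}$. You are slightly more careful than the paper on two points: you check explicitly that the reindexing is a bijection, and you note that $m_{\rho,\sigma}=1$. The paper drops the factor $m_{\rho,\sigma}$ without comment, and for non-multiplicity-free inclusions such as $B_{n-2}(d)\subseteq B_n(d)$ one really does have to fold the multiplicity index into the path label, as you suggest.
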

\begin{proof}
    \begin{align}
        \mathbf{E}\cdot \wt{\ft_C}\ket{c} 
        &=
        \sum_{\sigma \in \wh{C}}
        \sum_{P, Q} 
        \left\lVert E_{PQ}^{\sigma}\right\rVert_2 
        \;
        \sigma(c)_{PQ} 
        \;
        \ket{\sigma, P, Q}_\mathsf{S}
        \\ 
        &= \sum_{\substack{\rho, \sigma \\  \sigma \in \text{Res}(\rho)^{A}_{C}}} 
        \sum_{PQ} m_{\rho, \sigma}  
        \frac{\lVert E^\rho_{P \circ \rho, Q \circ \rho}\rVert_2}{\lVert E^{\sigma}_{PQ}\rVert_2} 
        \cdot 
        \left\lVert E_{PQ}^{\sigma}\right\rVert_2 
        \;
        \sigma(c)_{PQ} 
        \;
        \ket{\rho, P \circ \rho, Q \circ \rho} \\ 
        &= 
        \sum_{\rho} \sum_{P^\prime, Q^\prime} 
        \left\lVert E^{\rho}_{P^\prime Q^\prime}\right\rVert_2
        \;
        \rho(c)_{P^\prime, Q^\prime}
        \;
        \ket{\rho, P^\prime, Q^\prime} 
        = 
        \wt{\ft_A}\ket{c}
    \end{align}
    where $P^\prime = P \circ \rho$ and $Q^\prime = Q \circ\rho$. The third equality follows from~\cref{eq:subalgebra_adapted_matrices}.
\end{proof}
As a consequence, if we have the chain $B \subseteq C \subseteq A$ of algebra inclusions, then $\mathbf{E}(A, C)\mathbf{E}(C, B) = \mathbf{E}(A, B)$.\ We emphasize this because in our context, it is conceptually simpler to move two steps at a time along the subalgebra chains defined in~\cref{sec:subalgebra_chains} (i.e. $B_{n-2}(d) \hookrightarrow B_{n}(d)$ or $P_{n-1}(d) \hookrightarrow P_{n}(d)$), in which case we may need to apply the embedding operator twice. 

\paragraph{Implementing the Embedding Operator.} When trying to implement $\mathbf{E}$ on a quantum computer, we immediately run into an obstacle: $\mathbf{E}$ may not be an isometry. Fortunately,~\cref{thm:subalgebra_restrictions} and~\cref{thm:approx_norm_restriction} imply the existence of an \textit{approximate} embedding map $\wt{\mathbf{E}}$ which is an isometry. Moreover, $\wt{\mathbf{E}}$ is efficiently implementable:
\begin{lemma}
    \label{lem:embedding_summary}
Assume that $A \in \{P_n(d), P_{n - \frac12}(d), B_n(d), B_{r,s}(d)\}$, with subalgebra $C$ given by the chains in~\cref{sec:subalgebra_chains}. There exists an isometric operator $\wt{\mathbf{E}}$, with gate complexity $\wt{O}(\sqrt{n} \cdot (n^2\log d + \log(1/\varepsilon)))$, such that 
\begin{equation}
    ||\wt{\mathbf{E}} -\mathbf{E}||_\infty \le O\left(\sqrt{n} \cdot (\delta + \varepsilon)\right)
\end{equation}
\end{lemma}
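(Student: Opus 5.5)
Since every branching rule in \cref{thm:branching_rules} is multiplicity-free, $m_{\rho,\sigma}=1$. So on the input $\ket{\sigma,P,Q}$, $\mathbf{E}$ acts as $\ket{\sigma,P,Q}\mapsto\sum_{\rho}c_\rho\ket{\rho,P\circ\rho,Q\circ\rho}$ with $c_\rho=\|E^\rho_{P\circ\rho,Q\circ\rho}\|_2/\|E^\sigma_{PQ}\|_2$. By \cref{thm:approx_norm_restriction}, $\sum_\rho c_\rho^2=(1+(K-1)\varepsilon')^{-1}$ with $|\varepsilon'|\le\delta$. Here $K$ is the number of $\rho$ with $\sigma\in\text{Res}(\rho)$. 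In a Young diagram with at most $n$ boxes one can add or remove a box in at most $O(\sqrt n)$ ways, so $K=O(\sqrt n)$. Each column of $\mathbf{E}$ therefore has norm $1\pm O(\sqrt n\,\delta)$.

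Distinct inputs map to orthogonal outputs, because $(\sigma,P,Q)$ is recovered from $(\rho,P\circ\rho,Q\circ\rho)$ by truncating the paths. Define $\wt{\mathbf{E}}$ by normalizing each column: $\wt c_\rho=c_\rho/(\sum_{\rho'}c_{\rho'}^2)^{1/2}$. Then $\wt{\mathbf{E}}$ is an isometry, and $\wt{\mathbf{E}}-\mathbf{E}$ is block-diagonal with respect to these orthogonal column supports. This gives $\|\wt{\mathbf{E}}-\mathbf{E}\|_\infty\le\max_{\text{col}}|1-(\sum c_\rho^2)^{1/2}|=O(\sqrt n\,\delta)$.

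The difficulty is computing $\wt c_\rho$, because Fourier-state norms are not directly accessible. Instead I use \cref{cor:norm_of_fourier_states}: $\|E^\rho_{ij}\|_2^2=(d^{n}/m_\rho)(1\pm O(\delta))$, where $m_\rho$ is the Schur multiplicity. I replace $c_\rho^2$ by the surrogate $\hat c_\rho^2=m_\rho^{-1}/\sum_{\rho'}m_{\rho'}^{-1}$. Since each ratio is perturbed by a factor $1\pm O(\delta)$, the normalized vectors differ by $O(\sqrt n\,\delta)$ in norm when $K$ terms are summed, and that is the stated error budget. One point needs care. When $\sigma\in\wh C$ lives in a smaller algebra, the prefactor $d^{n}$ of the subalgebra (and the basis rescaling of \cref{eq:basis_for_a_diagram_algebra}) cancels in the normalized ratio, so only the relative values $m_\rho$ among the $\rho\in\wh A$ matter. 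By Schur--Weyl duality, $m_\rho$ is the dimension of the dual group irrep, namely $U(d)$ or $O(d)$ for the Brauer and walled Brauer cases and $S_d$ for the partition case. These are given by hook-content-type product formulas in $d$ and the boxes of $\rho$, of the kind used in \cref{def:hook_length_formula}. Each ratio $m_\rho/m_{\rho'}$ for shapes differing by one or two boxes therefore reduces to a product of $O(n)$ factors of size $\poly(n,d)$, and can be computed classically to precision $\varepsilon$ with $\wt O(n^2\log d+\log(1/\varepsilon))$ gates per $\rho$.

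The circuit for $\wt{\mathbf{E}}$ works as follows. Controlled on $\sigma$, it enumerates the $K=O(\sqrt n)$ neighbours $\rho$ and coherently computes the amplitudes $\hat c_\rho$ to precision $\varepsilon$. It then prepares $\sum_\rho\hat c_\rho\ket{\rho}$ by standard state preparation over a $K$-dimensional register. It appends $\rho$ to both path registers $P$ and $Q$, which is reversible since $\rho$ is then the last entry of each path. Finally it uncomputes the amplitude registers. Since there are $O(\sqrt n)$ neighbours, each costing $\wt O(n^2\log d+\log 1/\varepsilon)$, the total cost is $\wt O(\sqrt n(n^2\log d+\log(1/\varepsilon)))$. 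Arithmetic precision errors add $O(\sqrt n\,\varepsilon)$ in operator norm. Combined with the previous bound, this gives $\|\wt{\mathbf{E}}-\mathbf{E}\|_\infty\le O(\sqrt n(\delta+\varepsilon))$.

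The step I expect to be the main obstacle is justifying the multiplicity formulas and their efficient evaluation, especially for $P_n(d)$, where $m_\rho$ is an $S_d$-irrep dimension $f^{(d-|\rho|,\rho)}$. For large $d$ this requires computing hook-length ratios involving the long first row, and for the half-algebra steps the relevant algebra is $S_{d-1}$.
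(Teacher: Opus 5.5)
\textbf{Gap: the surrogate amplitudes are inverted.} Your construction breaks at $\hat c_\rho^2=m_\rho^{-1}/\sum_{\rho'}m_{\rho'}^{-1}$. Fourier-state norms grow \emph{with} the Schur multiplicity: $\|E^\rho_{ij}\|_2^2=(m_\rho/d^n)(1\pm O(\delta))$, not $(d^n/m_\rho)(1\pm O(\delta))$. You read the latter off \cref{cor:norm_of_fourier_states}, but its displayed statement is inverted relative to its own derivation. \cref{eq:d_nice_corollary_three} applied to $\mathcal F(A)=\{E^\rho_{ij}/\sqrt{m_\rho}\}$ says the diagonal entries $d^n\|E^\rho_{ij}\|_2^2/m_\rho$ are $1\pm O(\delta)$. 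That is also the orientation the paper's proof uses, via $r_{\rho,\sigma}=m_\rho/(d\,m_\sigma)$ in \cref{eq:step_two_embed_eq} and \cref{lem:appendix_ratio_closeness}.

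A small check in $B_2(d)$: the elements $E^{\emptyset}=e_1/d$, $E^{(2)}=\tfrac12(I+s_1)-e_1/d$, $E^{(1,1)}=\tfrac12(I-s_1)$ have squared norms $d^{-2}$, $\tfrac12+d^{-2}$, $\tfrac12$. The Schur multiplicities are $m_\emptyset=1$, $m_{(2)}=\tfrac{d(d+1)}{2}-1$, $m_{(1,1)}=\tfrac{d(d-1)}{2}$, so the norms track $m_\rho/d^2$. For $\sigma=\Box\in\wh{B_1(d)}$, $\mathbf E$ puts squared amplitude $d^{-2}$ on $\emptyset$ and about $\tfrac12$ on each two-box shape. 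Your surrogate instead puts $1-O(d^{-2})$ on $\emptyset$, so your circuit has $\|\wt{\mathbf E}-\mathbf E\|_\infty=\Theta(1)$. The fix is $\hat c_\rho^2=m_\rho/\sum_{\rho'}m_{\rho'}$. For the one-column Brauer and walled Brauer steps, restricting the Schur representation gives $\sum_{\rho'}m_{\rho'}=d\,m_\sigma$, so these are exactly the paper's $r_{\rho,\sigma}$.

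\textbf{With the fix, comparison to the paper.} The rest of your argument is then sound, and it is organized somewhat differently from the paper's. The paper feeds the unnormalized $r_{\rho,\sigma}$ into sequential Givens rotations. It therefore has to bound separately the residual amplitude $\mu$ left on $\ket{\sigma,P,Q}$, via \cref{thm:approx_norm_restriction} and \cref{lem:appendix_ratio_closeness}. It must also track the factor of $d$ between the $C$-side and $A$-side normalizations case by case, with a separate rule for $P_{n-\frac12}(d)$.

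You normalize each column explicitly, which buys three things:
\begin{itemize}
\item $\wt{\mathbf E}$ is an isometry by construction.
\item $\|E^\sigma_{PQ}\|_2$ and the $d^{n}$ versus $d^{n-1}$ prefactors cancel, so no per-case bookkeeping is needed.
\item The error splits into $\bigl|1-\|c\|_2\bigr|=O(\sqrt n\,\delta)$ from \cref{thm:approx_norm_restriction}, $O(\delta)$ from replacing normalized norms by normalized multiplicities, and $O(\sqrt n\,\varepsilon)$ arithmetic error. All three are linear, and there is no leftover amplitude to control.
\end{itemize}
Your cost accounting matches the paper's: $O(\sqrt n)$ multiplicity evaluations at $\wt O(n^2\log d+\log(1/\varepsilon))$ each. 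The obstacle you flag is exactly what \cref{lem:efficient_to_compute_multiplicities} supplies from the cited product formulas.
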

Therefore, whenever we would like to apply $\mathbf{E}$, we can instead apply $\wt{\mathbf{E}}$ instead. As long as $d$ is sufficiently large, $\delta$ will be small, and the incurred operator norm error will also be small.
\begin{proof}
In giving the implementation, we assume that $C$ is a subalgebra of $A$ with multiplicity-free branching corresponding to one of the chains in~\cref{sec:subalgebra_chains}. 
\begin{enumerate}
    \item First, conditioned on $\ket{\sigma}$, determine all irreps $\rho \in \wh{A}$ such that $\sigma \in \text{Res}(\rho)^{A}_{C}$.\ Using the branching rules from~\cref{thm:branching_rules}, this can be done efficiently. Note that there are at most $O(\sqrt{n})$ such irreps $\rho$.\footnote{The total number of addable and removable boxes for an $n$ box Young diagram is maximized by the Young diagram shaped as a ``staircase'', corresponding to the partition $(k, k-1, \dots, 2, 1)$, where $k = O(\sqrt{n})$.}  Therefore, the following map can be realized with $\wt{O}(n^{3/2})$ gates: 
    \begin{equation}
        \ket{\sigma, P, Q} \mapsto \ket{\sigma, P, Q} \otimes \bigotimes_{\rho: \sigma \in \text{Res}(\rho)^{A}_{C}} \ket{\rho}
    \end{equation}
    \item Next, conditioned on $\rho$, compute ${||E^\rho_{P \circ \rho, Q \circ \rho}||_2^2}/{||E^{\sigma}_{PQ}||_2^2}$. By~\cref{lem:appendix_ratio_closeness}, this is equivalent to the the ratio $r_{\rho, \sigma}$ up to $O(\delta)$ error, defined as follows: 
    \begin{equation}
        \label{eq:step_two_embed_eq}
        r_{\rho, \sigma} =  \begin{cases}
            \frac{{m_\rho}}{{d \cdot m_\sigma}} & \text{if $A = B_n(d)$, $A = B_{r,s}(d)$, or $A = P_n(d)$} \\
           \frac{{m_\rho}}{{m_\sigma}} & \text{if $A = P_{n-\frac12}(d)$}
        \end{cases}
    \end{equation}
     In the latter case, $C = P_{n-1}(d)$, and Schur inner product has scaling factor $d^{n-1}$ for both algebras, so we do not pick up an extra factor of $\frac{1}{d}$ as in the other cases. There are known formulas for $m_\rho, m_\sigma$ that are efficient to compute (see \cref{sec:multiplicity_formulas}). By~\cref{lem:efficient_to_compute_multiplicities}, computing the ratio $r_{\rho,\sigma}$ up to error $\varepsilon$ requires $\wt{O}(n^2\log d + \log(1/\varepsilon))$ gates,\footnote{Although this upper bound is loose for the algebras other than the walled Brauer algebra, later steps of the algorithm will dominate this gate count anyways. To simplify the presentation, we use a uniform bound for all algebras.} and so preparing
    \begin{equation}
       \ket{\sigma, P, Q} \otimes \bigotimes_{\rho: \sigma \in \text{Res}(\rho)^{A}_{C}} \ket{\rho, r_{\rho, \sigma}}
    \end{equation}
    to additive error $\varepsilon$ requires $\wt{O}(\sqrt{n} \cdot (n^2\log d + \log(1/\varepsilon)))$ gates. We will also uncompute $\sigma$, using the final irrep of the Bratteli path for either $P$ or $Q$ (which must itself also be $\ket{\sigma}$). 
    \item Next, for each $\ket{\rho}$, we apply a Givens rotation
    \begin{equation}
        \label{eq:givens_rotation}
        \ket{P, Q} \mapsto \cos \theta_i\ket{P, Q} + \sin \theta_i\ket{P \circ \rho, Q \circ \rho} 
    \end{equation}
    which rotates amplitude from the Bratteli paths for irreps of $C$ to Bratteli paths for irreps of $A$. Here, $\theta_i$ is an angle corresponding to a weighted partial sum over all irreps considered over so far. Choosing the angles $\theta_i$ in this way is a standard state-preparation technique; see, e.g., \cite[Thm.~9]{ShendeBullockMarkov2006Synthesis} or \cite[Eq.~(8)]{GuiDalzellAchilleSucharaChong2024SpacetimeEfficient}. The gate complexity will be dominated by the $\wt{O}(\sqrt{n} \cdot (n^2\log d + \log(1/\varepsilon)))$ gate cost of computing the coefficients. After this step, we obtain
    \begin{equation}
        \left(
            \mu
            \ket{P , Q} 
            + 
            \sum_{ \rho: \sigma \in \text{Res}(\rho)^{A}_{C}} 
            \sqrt{r_{\rho, \sigma}} 
            \; 
            \ket{P \circ \rho, Q \circ \rho}
        \right)
        \otimes 
        \bigotimes_{\rho: \sigma \in \text{Res}(\rho)^{A}_{C}} 
        \ket{\rho, r_{\rho, \sigma}}
    \end{equation}
     where $\mu$ is some error that we will bound shortly.
    \item Finally, $\rho$ can be computed by copying the last element of the Bratteli path, and all intermediate steps can also be uncomputed. This results in the final state
    \begin{equation}
         \mu\ket{\sigma, P , Q} + \sum_{ \rho: \sigma \in \text{Res}(\rho)^{A}_{C}} \sqrt{r_{\rho, \sigma}} \; \ket{\rho, P \circ \rho, Q \circ \rho}
    \end{equation}
\end{enumerate}
Next, we would like to bound $||\mathbf{E} - \mathbf{\wt{E}}||_\infty$. Since $\mathbf{E} - \mathbf{\wt{E}}$ maps distinct $\ket{\sigma, P, Q}$ basis vectors to states with disjoint supports, the operator norm error is attained on some basis vector. For a basis vector $\ket{\sigma, P, Q}$,
\begin{equation}
    \label{eq:basis_vector_bound_embedding}
    (\mathbf{E} - \mathbf{\wt{E}})\ket{\sigma, P, Q} = -\mu\ket{\sigma, P , Q} + \sum_{ \rho: \sigma \in \text{Res}(\rho)^{A}_{C}} \left(\frac{||E^\rho_{P \circ \rho, Q \circ \rho}||_2}{||E^{\sigma}_{PQ}||_2} - \sqrt{r_{\rho, \sigma}}\right)  \ket{\rho, P \circ \rho, Q \circ \rho}
\end{equation}
Our goal is to bound the norm of this residual vector, i.e.
\begin{align}
    \sqrt{|\mu|^2 + \sum_{ \rho: \sigma \in \text{Res}(\rho)^{A}_{C}} \left(\frac{||E^\rho_{P \circ \rho, Q \circ \rho}||_2}{||E^{\sigma}_{PQ}||_2} - \sqrt{r_{\rho, \sigma}}\right)^2 } 
\end{align}
First, from Step 2 (\cref{eq:step_two_embed_eq}), each term in the sum has absolute value at most $O((\delta + \varepsilon)^2)$: 
\begin{align}
     \sqrt{|\mu|^2 + \sum_{ \rho: \sigma \in \text{Res}(\rho)^{A}_{C}} \left(\frac{||E^\rho_{P \circ \rho, Q \circ \rho}||_2}{||E^{\sigma}_{PQ}||_2} - \sqrt{r_{\rho, \sigma}}\right)^2 }  
     \le & 
     \sqrt{|\mu|^2 + \sum_{ \rho: \sigma \in \text{Res}(\rho)^{A}_{C}} O((\delta + \varepsilon)^2) } \\
     \le & \sqrt{|\mu|^2 + O(\sqrt{n} \cdot (\delta + \varepsilon)^2) }
     \\
     \le 
     & 
     |\mu| + O({n}^{1/4} \cdot (\delta + \varepsilon)^2)
    \label{eq:approx_ratio_embed}
\end{align}
In the second inequality, we used the fact the the sum runs over at most $O(\sqrt{n})$ terms (see Step 1 above), and the last inequality is by the triangle inequality. Next, we need to upper bound $|\mu|^2$, the squared amplitude left over on the original input. After applying the Givens rotations in Step 3, the remaining squared amplitude on $\ket{\sigma, P, Q}$ is bounded, up to $\varepsilon$ precision, by the difference
\begin{align}
    % & \;
    \left|1 -  \sum_{ \rho: \sigma \in \text{Res}(\rho)^{A}_{C}} r_{\rho, \sigma}\right| 
    % \\ 
    & 
    \le 
    \left|1 -  \sum_{ \rho: \sigma \in \text{Res}(\rho)^{A}_{C}}\frac{||E^\rho_{P \circ \rho, Q \circ \rho}||^2_2}{||E^{\sigma}_{PQ}||^2_2}\right| + O(\sqrt{n} \cdot (\delta + \varepsilon)) 
    \\
    & 
    = 
    \left|
        1 
        -  
        \frac{
            1
        }{
            ||E^{\sigma}_{PQ}||^2_2
        }
        \sum_{
            \rho: \sigma \in \text{Res}(\rho)^{A}_{C}
        }
        ||E^\rho_{P \circ \rho, Q \circ \rho}||^2_2
    \right| 
    + 
    O(\sqrt{n} \cdot (\delta + \varepsilon)) 
    \\
    & 
    \le 
    \left|
        1 
        -  
        \frac{
            1
        }{
            1 
            + 
            \delta
            \sqrt{n} 
        }
    \right| 
    + 
    O(\sqrt{n} \cdot (\delta + \varepsilon)) 
    \\
    & 
    \le 
    O(\delta \cdot \sqrt{n}) + O(\sqrt{n} \cdot (\delta + \varepsilon)) 
    \\
    & 
    = 
    O(\sqrt{n} \cdot (\delta + \varepsilon)).
    \label{eq:bounding_mu}
\end{align}
\noindent The first inequality follows from~\cref{lem:appendix_ratio_closeness}, and the second inequality uses~\cref{thm:approx_norm_restriction} to bound the sum, taking $K = O(\sqrt{n})$. Finally, substituting this bound into~\cref{eq:approx_ratio_embed} gives the operator norm bound of $O(\sqrt{n} \cdot (\delta + \varepsilon))$.
\end{proof}
 We will not yet apply $\wt{\mathbf{E}}$ to \cref{eq:after_recurse}, since it will used as part of a larger subroutine involving the next step.
\subsection{Step 4: Applying Irreps}
\label{sec:apply_irreps}
Suppose that we have a Fourier state 
\begin{equation}
    \wt{\ft_A}\ket{a} = \sum_{\rho \in \wh{A}}  \sum_{P, Q} ||E_{PQ}^\rho||_2\cdot  \rho(a)_{PQ} \ket{\rho, P, Q} 
\end{equation}
which we would like to map to $\wt{\ft_A}\ket{wa}$. We can accomplish this up to operator norm error $O(\poly(|A|) \cdot \delta)$ by applying $\rho(w)$ to the register containing the left Bratteli path, conditioned on $\ket{\rho}$:
\begin{align}
    \label{eq:applying_irreps}
    \wt{\ft_A}\ket{a} 
    \mapsto
    \left(
        \sum_{\rho \in \wh{A}}
        \ket{\rho}\!\!\bra{\rho} 
        \otimes
        \rho(w)
        \otimes
        I
    \right)
    \wt{\ft_A}\ket{a} 
    =
    &
    \sum_{\rho \in \wh{A}}  \sum_{P, Q, R} ||E_{PQ}^\rho||_2\cdot  \rho(w)_{RP} \; \rho(a)_{PQ} \ket{\rho, R, Q} 
    \\
    = 
    &
    \sum_{\rho \in \wh{A}}  \sum_{Q, R} ||E_{PQ}^\rho||_2\cdot  \rho(wa)_{RQ} \ket{\rho, R, Q} 
    \\
    \approx
    &
    \sum_{\rho \in \wh{A}}  
    \sum_{Q, R} ||E_{RQ}^\rho||_2\cdot  \rho(wa)_{RQ} \ket{\rho, R, Q}
    = 
    \wt{\ft_A}\ket{wa}
\end{align}
The additive error incurred by replacing $||E_{PQ}^\rho||_2$ with $||E_{RQ}^\rho||_2$ on the final line is bounded by $O(\delta)$ (\cref{cor:norm_of_fourier_states}). By~\cref{lem:opnorm_bound}, the total operator norm error incurred is bounded by $O(\poly(|A|) \cdot \delta)$.

Similarly, we can obtain $\wt{\ft_A}\ket{aw}$ by applying $\rho(w^{\op}) = \rho(w)^{\dagger}$ on the column register. Now, assume that $w = s_i$ is a swap diagram. By \cref{eq:orthogonal_form}, $\rho(s_i) = \rho(s_i)^\dagger$ and $\rho(s_i)\rho(s_i)^\dagger = I$. Therefore, $\rho(s_i)$ is unitary, and so for any Bratteli path $P$, $\sum_{Q} |\rho(s_i)_{QP}|^2 = 1$. All non-zero entries in the $P$-column can be computed using~\cref{lem:efficient_computation_generator_matrix_elements_all}, after which the same sequential Givens rotation technique used in~\cref{eq:givens_rotation} can be used to apply $\rho(s_i)$ to either the row or column register. The gate complexity of performing these Givens rotations is dominated by the cost of computing the coefficients.  
\begin{lemma}
    \label{lem:swap_application}
    Up to operator norm error $O(\poly(|A|) \cdot (\delta + \varepsilon))$, the maps
    \begin{equation}
        \wt{\ft_A}\ket{a} \mapsto \wt{\ft_A}\ket{s_ia},\;\;\;\wt{\ft_A}\ket{a} \mapsto \wt{\ft_A}\ket{as_i}
    \end{equation}
    can be performed by applying a controlled-$\rho(s_i)$ gate to either the row or column register. The gate complexities are asymptotically equivalent to the gate complexities in~\cref{lem:efficient_computation_generator_matrix_elements_all}.
\end{lemma}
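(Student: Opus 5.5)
The statement \cref{lem:swap_application} splits into two parts: an exact algebraic identity for the ideal controlled-$\rho(s_i)$ operator, and an implementation and error analysis for it. I would handle the left action first and get the right action by the $\op$ symmetry. Let $U_L = \sum_{\rho} \ket{\rho}\!\bra{\rho}\otimes \rho(s_i)\otimes I$ act on $\ket{\rho,P,Q}$. Expanding $U_L\wt{\ft_A}\ket{a}$ and using $\rho(s_i)\rho(a)=\rho(s_i a)$ gives
\begin{equation}
\sum_{\rho}\sum_{R,Q}\Bigl(\sum_P \|E^\rho_{PQ}\|_2\,\rho(s_i)_{RP}\rho(a)_{PQ}\Bigr)\ket{\rho,R,Q}.
\end{equation}
The target is $\wt{\ft_A}\ket{s_ia}$, whose coefficients are the same sum with $\|E^\rho_{PQ}\|_2$ replaced by $\|E^\rho_{RQ}\|_2$.

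To bound the difference, I would invoke \cref{cor:norm_of_fourier_states}. It shows that $\|E^\rho_{PQ}\|_2 = \sqrt{d^n/m_\rho}\,(1\pm O(\delta))$ for every $P,Q$, so the norm depends on $(P,Q)$ only through an $O(\delta)$ relative correction. The difference of the two operators, $U_L\wt{\ft_A}-\wt{\ft_A}\mathbf{L}(s_i)$, is then a matrix whose entries are sums of at most $|A|$ terms. Each term has the form $\rho(s_i)_{RP}\rho(a)_{PQ}$ times an $O(\delta)$ multiple of the norm scale. Here $\rho(s_i)$ is unitary in the orthogonal form by \cref{eq:orthogonal_form}, so $|\rho(s_i)_{RP}|\le1$.

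The main obstacle is bounding $|\rho(a)_{PQ}|\sqrt{d^n/m_\rho}$ for a scaled basis element $a$, since irrep matrices of diagrams need not be bounded a priori. I would use \cref{lem:approx_nice}: $\|E^\rho_{PQ}\|_2\,\rho(a)_{PQ}$ is, up to $O(\delta\sqrt{|A|})$, a coefficient of $\ft_A\ket{a}$. That coefficient is at most $\|\ft_A\|_\infty\le 1+O(\delta|A|^{3/2})$, which follows from \cref{thm:approx_qft} and near-orthonormality of the normalized Fourier basis. So every entry of the error matrix is $O(\poly(|A|)\delta)$, and \cref{lem:opnorm_bound} converts this entrywise bound to an operator norm bound of $O(\poly(|A|)\delta)$. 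For the right action, I would use $\rho(as_i)_{PR}=\sum_Q\rho(a)_{PQ}\rho(s_i)_{QR}$ with $\rho(s_i)_{QR}=(\rho(s_i)^\dagger)_{RQ}$. Since $s_i^{\op}=s_i$, this means applying $\rho(s_i)^{\dagger}$, equivalently $\rho(s_i)^T$, to the column register, and the same estimate applies verbatim.

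For implementation, $\rho(s_i)$ is a unitary whose $P$-column has $O(1)$ nonzero entries. The nonzeros occur only at paths $Q$ differing from $P$ at $\mathrm{loc}(Q)$, and there are at most two such $Q$ for Brauer and walled Brauer, and a constant number for partition. I would use \cref{lem:efficient_computation_generator_matrix_elements_all} to compute the neighbors and entries coherently to precision $\varepsilon$. Then, as in the embedding step \cref{eq:givens_rotation}, I would apply sequential Givens rotations within each invariant block spanned by $P$ and its neighbors, and uncompute the entries. Because $\rho(s_i)$ is an involution on a block, this defines a genuine unitary. The approximation error is $O(\varepsilon)$ per block, hence $O(\poly(|A|)\varepsilon)$ overall via \cref{lem:opnorm_bound}. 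The gate count is dominated by the coefficient computation, matching \cref{lem:efficient_computation_generator_matrix_elements_all}. Adding the two error contributions gives the claimed $O(\poly(|A|)(\delta+\varepsilon))$.
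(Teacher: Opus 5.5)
Your proposal is correct and follows essentially the same route as the paper. You compare the coefficients of $U_L\wt{\ft_A}\ket{a}$ with those of $\wt{\ft_A}\ket{s_ia}$, absorb the $\|E^\rho_{PQ}\|_2$ versus $\|E^\rho_{RQ}\|_2$ discrepancy via \cref{cor:norm_of_fourier_states} and \cref{lem:opnorm_bound}, and implement the orthogonal-form (hence unitary) $\rho(s_i)$ by computing its column with \cref{lem:efficient_computation_generator_matrix_elements_all} and applying Givens rotations; your use of \cref{lem:approx_nice} to bound $\|E^\rho_{PQ}\|_2\,\rho(a)_{PQ}$ fills in a step the paper leaves implicit.

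One small correction: the columns of $\rho(s_i)$ are not $O(1)$-sparse. There are up to $O(\sqrt{n})$ nonzeros for $B_n(d)$ when $P(i-1)=P(i+1)$, and up to $O(n)$ for $P_n(d)$, per the appendix. This does not break your argument, since the Givens-rotation construction and the cited gate counts already accommodate that many entries.
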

For the contraction, bridge, and point diagrams, the associated irrep matrices are non-unitary. At first glance, this is problematic, since the factorizations in Step 1 involve these generators. However, we do know that if both the input and output state are normalized Fourier basis states, there is some unitary transformation which (approximately) maps one to the other. 

To implement such a unitary, it will be useful to combine the application of a non-invertible generator with the embedding step. The cases for which this can be done precisely correspond to the different factorizations in Step 1 which involve a non-invertible generator, and are given in the statements of~\cref{lem:brauer_extension},~\cref{lem:walled_brauer_extension_rule},~\cref{lem:efficient_add_a_point}, and~\cref{lem:efficient_add_a_bridge}.\ We defer the corresponding proofs to~\cref{lem:brauer_extension_appendix},~\cref{lem:walled_brauer_extension_appendix},~\cref{lem:efficient_add_a_point_appendix}, and~\cref{lem:efficient_add_a_bridge_appendix} respectively. 

\begin{lemma}
    \label{lem:brauer_extension}
    Let $B = B_{n-2}(d)$, $A = B_n(d)$. For any diagram $D \in B$ with $\pn(D) = 0$, the isometry
    \begin{equation}
        \mathbf{F}(A, B)\wt{\ft_B}\ket{D} = \wt{\ft_A}\ket{D e_{n-1}}
    \end{equation}
    can be implemented up to operator norm error $O(\poly(|A|) \cdot \delta)$ by the map
    \begin{equation}
        \wt{\mathbf{F}}(A, B)\ket{\emptyset, P, Q} = \ket{\emptyset, P \circ \Box \circ \emptyset, Q \circ \Box \circ \emptyset}
    \end{equation} 
\end{lemma}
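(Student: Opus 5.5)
Here is the plan. First determine which Fourier coordinates the two sides are supported on. Since $\pn(D)=0$ and every irrep $\lambda$ with $|\lambda|>0$ vanishes on diagrams of propagating number smaller than $|\lambda|$ (\cref{thm:zero_irrep_matrix}), the state $\wt{\ft_B}\ket{D}$ is supported only on $\ket{\emptyset,P,Q}$. For $B=B_{n-2}(d)$ with $n$ even, $\emptyset$ is a single irrep of dimension $(n-3)!!$. Likewise $De_{n-1}$ has propagating number zero, by \cref{lem:propgation_number_sub_mult}, so $\wt{\ft_A}\ket{De_{n-1}}$ is supported on $\ket{\emptyset,P',Q'}$ with $\emptyset\in\wh{A}$. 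In the Bratteli diagram $B_{n-2}\subset B_{n-1}\subset B_n$, the only two-step path from $\emptyset$ back to $\emptyset$ is $\emptyset\circ\Box\circ\emptyset$. Hence the proposed map $\wt{\mathbf{F}}$ sends paths injectively to paths and is an isometry, and the task reduces to comparing coefficients:
\begin{equation*}
\|E^{\emptyset}_{P'Q'}\|_2\,\emptyset_A(De_{n-1})_{P'Q'}\quad\text{versus}\quad \|E^{\emptyset}_{PQ}\|_2\,\emptyset_B(D)_{PQ},
\end{equation*}
with $P'=P\circ\Box\circ\emptyset$.

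The key algebraic step is to compute $\emptyset_A(De_{n-1})$ in the subalgebra-adapted orthogonal form. We have $\emptyset_A(De_{n-1})=\emptyset_A(D)\,\emptyset_A(e_{n-1})$, and $\emptyset_A(D)$ acts block-diagonally via the embedding (\cref{eq:subalgebra_adapted_matrices}), as $\emptyset_B(D)$ on the paths through $\emptyset$ at level $n-2$. So it remains to understand the column structure of $\emptyset_A(e_{n-1})$. The standard fact for Brauer algebras (from the orthogonal form in \cref{sec:appendix_matrix_algebras}) is that $\rho(e_{n-1})$ vanishes unless $P(n-2)=P(n)$. On the block of paths sharing the prefix up to level $n-2$, it acts as a rank-one operator proportional to $\ket{u}\bra{u}$, where $u$ ranges over the middle vertex. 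For $\rho=\emptyset$ there is exactly one middle vertex, $\Box$, so $\emptyset_A(e_{n-1})$ is diagonal, equal to $c\cdot I$ on all paths of the form $P\circ\Box\circ\emptyset$. The constant is $c=d$, as for $\rho(e)$ on a trivial-type block (one checks $e^2=de$ in $B_2(d)$). Therefore
\begin{equation*}
\emptyset_A(De_{n-1})_{P'Q'}=d\cdot\emptyset_B(D)_{PQ}.
\end{equation*}

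Next, the normalization ratio. By \cref{cor:norm_of_fourier_states} we have $\|E^\rho\|_2\approx\sqrt{d^{n}/m_\rho}$ in $A$ and $\sqrt{d^{n-2}/m_\sigma}$ in $B$, up to $1\pm O(\delta)$. Here the multiplicities are the $O(d)$-dimensions for the empty shape, which are $1$ in both cases. This gives a ratio $\|E^{\emptyset}_{P'Q'}\|_2/\|E^{\emptyset}_{PQ}\|_2\approx d$. This does not cancel the factor $d$ but compounds it. The resolution is the basis rescaling: diagrams are unit basis elements in $B_n$, but $D$ viewed as an element of $B_n$ is $D\otimes(\text{identity on two strands})$, whereas $De_{n-1}$ carries the extra cap/cup. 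So I must recheck the conventions carefully, in particular whether $\wt{\ft}$ is defined with $\|E\|_2$ or $1/\|E\|_2$.

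I expect this bookkeeping of powers of $d$ to be the main obstacle. In \cref{thm:approx_qft} the coefficient is $\|E\|_2\,\rho(a)$, and the norm is of order $d^{-n/2}$ rather than $d^{n/2}$, since the Schur form is the one of size $d^n$ and the Gram matrix satisfies $d^nG_0(\mathcal F)\approx I$. Hence $\|E^\emptyset\|_2\approx\sqrt{m/d^{n}}$, so the ratio is $d^{-1}$, which cancels the factor $d$ from $\emptyset_A(e_{n-1})$, leaving exactly $1\pm O(\delta)$. I would verify this against the $P_1(d)$ example as a sanity check.

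Finally, assemble the error bound. Each coefficient agrees up to multiplicative $1\pm O(\delta)$, using \cref{cor:norm_of_fourier_states} and \cref{lem:appendix_ratio_closeness}. The entrywise error is then converted to an operator norm error of $O(\poly(|A|)\delta)$ via \cref{lem:opnorm_bound}. Because $\wt{\mathbf F}$ is a permutation of basis labels, it is exactly isometric, and implementing it only requires appending two fixed symbols to each path register.
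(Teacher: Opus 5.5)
Your proposal is correct. Its first half is the paper's argument: the support analysis via \cref{thm:zero_irrep_matrix}, and the identity $\emptyset(De_{n-1})_{P\circ\Box\circ\emptyset,\,Q\circ\Box\circ\emptyset}=d\,\emptyset(D)_{PQ}$, which follows from the block-diagonal action of $D$ and from $\emptyset(e_{n-1})$ being diagonal and vanishing on paths with $P(n-2)\neq\emptyset$. One small gap: your justification of the constant via $e^2=de$ only shows $c\in\{0,d\}$. The paper reads off $c=d$ from the diagonal formula in \cref{lem:brauer_subalgebra_matrix_elements}, where $x_{n-1}(P)=(d-1)/2$ and the product is empty, so the entry is $2x_{n-1}(P)+1=d$.

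The two routes differ at the norm ratio. The paper proves the exact identity $d\,E^{\emptyset}_{P\circ\Box\circ\emptyset,\,Q\circ\Box\circ\emptyset}=E^{\emptyset}_{PQ}\,e_{n-1}$ in $B_n(d)$; the $V^{\Box\Box}$ and $V^{(1,1)}$ components of $E^\emptyset_{PQ}$ are annihilated by $e_{n-1}$. It then notes that $a\mapsto ae_{n-1}$ sends distinct diagrams of $B_{n-2}(d)$ to distinct diagrams of $B_n(d)$ with no closed loops, so $\|E^\emptyset_{PQ}e_{n-1}\|_2=\|E^\emptyset_{PQ}\|_2$. This gives $d\,\|E^{\emptyset}_{P\circ\Box\circ\emptyset,\,Q\circ\Box\circ\emptyset}\|_2=\|E^{\emptyset}_{PQ}\|_2$ exactly, hence $\wt{\mathbf F}\wt{\ft_B}\ket{D}=\wt{\ft_A}\ket{De_{n-1}}$ with no error at all on these inputs.

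You instead use the Schur-multiplicity asymptotics with $m_\emptyset=1$ in both algebras. Your final orientation $\|E^\rho_{ij}\|_2^2\approx m_\rho/d^n$ is the right one. It is what \cref{eq:d_nice_corollary_three} gives; the display in \cref{cor:norm_of_fourier_states} has the ratio inverted, which explains your initial confusion.

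Your route matches how the embedding step handles norm ratios. It would also work in settings where no loop-free injection of diagrams is available. However, it only yields coefficients equal up to $1\pm O(\delta)$. Turning that into an operator-norm bound additionally requires the near-orthonormality of $\{\wt{\ft_B}\ket{D}\}_{\pn(D)=0}$ (\cref{eq:new_gram_matrix}), at a $\poly(|A|)$ cost that still fits the claimed $O(\poly(|A|)\cdot\delta)$. The paper's exact route needs neither the normalization corollary nor this conversion.
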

 \noindent Like with the embedding map $\mathbf{E}$, we will write $\mathbf{F} \coloneqq \mathbf{F}(A, B)$, $\wt{\mathbf{F}} \coloneqq \wt{\mathbf{F}}(A,B)$ when $A$ and $B$ are clear from context. An analogy of~\cref{lem:brauer_extension} holds for the walled Brauer algebra:
\begin{lemma}
    \label{lem:walled_brauer_extension_rule}
    Let $B = B_{r-1, r-1}(d)$, $A = B_{r,r}(d)$. For any diagram $D \in B$ with $\pn(D) = 0$, the isometry
    \begin{equation}
        \mathbf{F}(\wt{\ft_B}\ket{D}) = \wt{\ft_A}\ket{D f_{r}}
    \end{equation}
    can be implemented up to operator norm error $O(\poly(|A|) \cdot \delta)$ by the map 
    \begin{equation}
        \wt{\mathbf{F}}\ket{(\emptyset, \emptyset), P, Q} = \ket{(\emptyset, \emptyset), \;P \circ (\Box, \emptyset) \circ (\emptyset, \emptyset), \;Q \circ (\Box, \emptyset) \circ (\emptyset, \emptyset)}
    \end{equation}
\end{lemma}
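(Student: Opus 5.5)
We need to show that, for $D \in B_{r-1,r-1}(d)$ with $\pn(D)=0$, the Fourier state $\wt{\ft_A}\ket{Df_r}$ is (up to $O(\poly(|A|)\cdot\delta)$) obtained from $\wt{\ft_B}\ket{D}$ by extending every Bratteli path through $(\Box,\emptyset)$ back to $(\emptyset,\emptyset)$. The plan mirrors the Brauer case (\cref{lem:brauer_extension}).

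The first step is to locate the support of both sides. Since $\pn(D)=0$, \cref{thm:zero_irrep_matrix} gives $\sigma(D)=0$ for every $\sigma \in \wh{B}$ with $|\sigma|>0$. Hence $\wt{\ft_B}\ket{D}$ is supported only on $\sigma=(\emptyset,\emptyset)$. Likewise $\pn(Df_r)=0$ by \cref{lem:propgation_number_sub_mult}, so $\wt{\ft_A}\ket{Df_r}$ is supported only on $\rho=(\emptyset,\emptyset)\in\wh{A}$.

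By the branching rules (\cref{thm:branching_rules}), paths reaching $(\emptyset,\emptyset)$ at level $(r,r)$ must pass through $(\emptyset,\emptyset)$ at $(r-1,r-1)$, then $(\Box,\emptyset)$ at $(r,r-1)$. Indeed, from $(\emptyset,\emptyset)$ at $(r-1,r-1)$, adding a left column either adds a box to $\lambda$ or removes one from $\mu=\emptyset$, which is impossible. Paths through $(\emptyset,\Box)$ at $(r-1,r)$ are excluded by the chain ordering. It remains to check that no other path reaches $(\emptyset,\emptyset)$ at level $(r,r)$: a path through $(\lambda,\mu)$ with $|\lambda|+|\mu|=2$ at level $(r-1,r-1)$ would then have to lose two boxes in two steps, but the walled rules only allow a left step to add to $\lambda$ or remove from $\mu$. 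I will verify this carefully, possibly concluding that the map $P\mapsto P\circ(\Box,\emptyset)\circ(\emptyset,\emptyset)$ is a bijection onto the $(\emptyset,\emptyset)$-paths of $A$. Consistently, $d_{(\emptyset,\emptyset)}$ equals $r!$ in $A$ and $(r-1)!$ in $B$, so I must recheck the chain. Most likely the extended paths form only a proper subset, and the map is then an isometry into it, which is all the statement asks.

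Next I compute the matrix entries. In the subalgebra-adapted orthogonal form, I would show that
\[
(\emptyset,\emptyset)_A(Df_r)_{P',Q'} = c\,(\emptyset,\emptyset)_B(D)_{PQ}
\]
for extended paths $P'$, $Q'$, with a scalar $c$ depending only on $d$ (roughly $1$ or $d^{\pm1}$). To get this I write $(\emptyset,\emptyset)_A(Df_r)=(\emptyset,\emptyset)_A(D)\,(\emptyset,\emptyset)_A(f_r)$. I use \cref{eq:subalgebra_adapted_element} to express $D$ blockwise, and use the appendix formula for $\rho(f_r)$ on the $(\emptyset,\emptyset)$ block, which should act as a scaled rank-one projector onto the extended paths. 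The new entries of $\wt{\ft_A}\ket{Df_r}$ are then $\|E^{(\emptyset,\emptyset)}_{P'Q'}\|_2\, c\,(\emptyset,\emptyset)(D)_{PQ}$.

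Finally I compare norms. By \cref{cor:norm_of_fourier_states}, $\|E^\rho_{P'Q'}\|_2\approx\sqrt{d^{n}/m_\rho}$. The multiplicities of the trivial-type irrep in the Schur representations of $B$ and $A$ differ by a factor that, combined with $c$ and with the rescaling of the basis element $Df_r$ versus $D$, makes the ratio $1\pm O(\delta)$. This is the analogue of the ratio $r_{\rho,\sigma}$ in \cref{lem:embedding_summary}, and it is the step I expect to be the main obstacle: it requires matching the explicit $f_r$ matrix element against the multiplicity formulas of \cref{sec:multiplicity_formulas}, so that the factor $d$ cancels exactly at leading order. Once that ratio holds, \cref{lem:opnorm_bound} upgrades the entrywise $O(\delta)$ error to $O(\poly(|A|)\cdot\delta)$ in operator norm. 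Since $\wt{\mathbf F}$ merely relabels basis vectors injectively, it is an exact isometry.
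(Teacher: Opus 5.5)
Your skeleton (support on $(\emptyset,\emptyset)$, then $(\emptyset,\emptyset)_A(Df_r)_{P'Q'}=c\,(\emptyset,\emptyset)_B(D)_{PQ}$, then a norm comparison) is the paper's, which just reruns the Brauer proof (\cref{lem:brauer_extension_appendix}). The gap is that the step carrying the lemma, matching the coefficients, is left open, and as you set it up it would fail.

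You take $\|E^\rho_{P'Q'}\|_2\approx\sqrt{d^n/m_\rho}$ from the displayed form of \cref{cor:norm_of_fourier_states}, but that ratio is inverted. The Schur-inner-product argument (via \cref{lem:close_gram_implies_close_norms}) gives $\|E^\rho_{ij}\|_2^2=(m_\rho/d^n)(1\pm O(\delta))$. For example, for $\mathbb{C}[S_n]$ one has $\|E^\lambda_{ij}\|_2^2=f^\lambda/n!$ while $m_\lambda=\frac{f^\lambda}{n!}\prod_{b\in\lambda}(d+\cont(b))$. This is also the orientation behind $r_{\rho,\sigma}=m_\rho/(d\,m_\sigma)$ in \cref{lem:embedding_summary}.

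Nor do the multiplicities differ. The label $(\emptyset,\emptyset)$ is paired with the trivial $U(d)$-irrep, so $m=1$ for both $B$ and $A$. Walled Brauer diagrams are also never rescaled, since $\cc(D)=n$. With $c=d$, your formula would give a coefficient ratio of $d^2$. The correct orientation gives $\|E^A_{P'Q'}\|_2\approx d^{-1}\|E^B_{PQ}\|_2$ (from $d^{2r}$ versus $d^{2r-2}$), which cancels $c$.

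The paper sidesteps multiplicities with an exact identity. Under the Fourier isomorphism of $A$, $E^B_{PQ}$ maps to $\sum_{\rho}\ket{P\circ(\Box,\emptyset)\circ\rho}\bra{Q\circ(\Box,\emptyset)\circ\rho}$ over $\rho\in\{(\emptyset,\emptyset),(\Box,\Box)\}$. Right multiplication by $\rho(f_r)$ kills the $(\Box,\Box)$ term and scales the other by $d$, so $E^B_{PQ}f_r=d\,E^A_{P'Q'}$. Since $a\mapsto af_r$ sends distinct diagrams of $B$ to distinct diagrams of $A$ with coefficient $1$, it preserves $\|\cdot\|_2$. Hence $\|E^A_{P'Q'}\|_2=\|E^B_{PQ}\|_2/d$ exactly, and the coefficients agree exactly.

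The support is the other gap. Your tentative claim that no other path ends at $(\emptyset,\emptyset)$ is false. From $(\Box,\Box)$ at level $(r-1,r-1)$, the left step can remove the box of $\mu$ and the right step the box of $\lambda$. This gives $(r-1)\cdot(r-1)!$ further paths, accounting for $r!=(r-1)!+(r-1)(r-1)!$.

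Being an isometry into the span of the extended paths is also not all the statement asks. It asserts $\wt{\mathbf{F}}\,\wt{\ft_B}\ket{D}\approx\wt{\ft_A}\ket{Df_r}$, so you must show $\wt{\ft_A}\ket{Df_r}$ has no amplitude on row or column paths through $(\Box,\Box)$. That follows from two facts:
\begin{itemize}
\item $(\emptyset,\emptyset)_A(D)$ is block-diagonal in the $B$-adapted basis, and its $(\Box,\Box)$-block is $(\Box,\Box)(D)=0$ by \cref{thm:zero_irrep_matrix}.
\item $(\emptyset,\emptyset)_A(f_r)$ is $d$ times the orthogonal projector onto the span of the $(r-1)!$ extended paths (not a rank-one operator), and it vanishes on paths through $(\Box,\Box)$. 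This is the walled analogue of the Brauer fact that $\lambda(e_i)\ket{P}=0$ when $P(i-1)\neq P(i+1)$, and it also pins down $c=d$.
\end{itemize}
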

\noindent We also derive similar results for the partition algebra, where we handle various extensions by bridge and point diagrams. Recall that basis elements for the partition algebra are rescaled according to the number of connected components (\cref{eq:basis_for_a_diagram_algebra}). 
\begin{lemma}
    \label{lem:efficient_add_a_point}
    Let $B = P_{n - 1}(d)$, $A = P_{n}(d)$. For any basis element $b = d^{({n - 1 - \cc(D)})/2}D \in \mathcal{B}(B)$, the isometries
    \begin{equation}
        \mathbf{F}_1(\wt{\ft_B}\ket{b}) = \wt{\ft_A}\ket{d^{(n - 2 - \cc(D))/2} Dp_{n}}
    \end{equation}
    \begin{equation}
        \mathbf{F}_2(\wt{\ft_B}\ket{b}) = \wt{\ft_A}\ket{d^{(n - 1 - \cc(D))/2}b_{n-1}D p_{n}}
    \end{equation}
    \begin{equation}
        \mathbf{F}_3(\wt{\ft_B}\ket{b}) = \wt{\ft_A}\ket{d^{(n - 1 - \cc(D))/2}D p_{n}b_{n-1}}
    \end{equation}
    \begin{equation}
        \mathbf{F}_4(\wt{\ft_B}\ket{b}) = \wt{\ft_A}\ket{d^{(n - \cc(D))/2} b_{n-1}D p_{n}b_{n-1}}
    \end{equation}
    can be implemented up to operator norm error $O(\poly(|A|) \cdot \delta)$ by the maps 
    \begin{equation}
        \label{eq:partition_on_zero_first}
        \wt{\mathbf{F}}_1\ket{\emptyset, P, Q} = \ket{\emptyset,\;\;P_{\le 2n - 4} \circ \emptyset \circ \emptyset \circ \emptyset \circ \emptyset,\;\; Q_{\le 2n - 4}\circ \emptyset \circ \emptyset \circ \emptyset \circ \emptyset}
    \end{equation}
    \begin{equation}
        \wt{\mathbf{F}}_2\ket{\emptyset, P, Q} = \ket{\emptyset,\;\;P_{\le 2n - 4} \circ \emptyset \circ \Box \circ \emptyset \circ \emptyset,\;\; Q_{\le 2n - 4}\circ \emptyset \circ \emptyset \circ \emptyset \circ \emptyset}
    \end{equation}
     \begin{equation}
        \wt{\mathbf{F}}_3\ket{\emptyset, P, Q} = \ket{\emptyset,\;\;P_{\le 2n - 4} \circ \emptyset \circ \emptyset \circ \emptyset \circ \emptyset,\;\; Q_{\le 2n - 4}\circ \emptyset \circ \Box \circ \emptyset \circ \emptyset}
    \end{equation}
     \begin{equation}
     \label{eq:partition_on_zero_fourth}
        \wt{\mathbf{F}}_4\ket{\emptyset, P, Q} = \ket{\emptyset,\;\;P_{\le 2n - 4} \circ \emptyset \circ \Box \circ \emptyset \circ \emptyset,\;\; Q_{\le 2n - 4}\circ \emptyset \circ \Box \circ \emptyset \circ \emptyset}
    \end{equation}
     Note that $\pn(D) = 0$ implies that $P(2n - 3) = P(2n - 2) = Q(2n - 3) = Q(2n - 2) = \emptyset$.
\end{lemma}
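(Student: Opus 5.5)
Since $\pn(D)=0$, \cref{thm:zero_irrep_matrix} kills every irrep $\lambda$ of $B$ with $|\lambda|>0$, so $\wt{\ft_B}\ket{b}$ lies entirely on $\ket{\emptyset,P,Q}$. The same holds in $A$ for each of the four products, since by \cref{lem:propgation_number_sub_mult} each has propagating number $0$. The plan is therefore to compare two vectors supported only on the $\emptyset$ block: $\wt{\ft_A}\ket{a'}$, where $a'$ is the relevant rescaled product, and $\wt{\mathbf F}_k\wt{\ft_B}\ket{b}$.

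\textbf{Step 1 (dimension count).} The irrep $\emptyset$ of $P_n(d)$ is the one-dimensional quotient in which every diagram acts through the block structure. Concretely, $\emptyset(D)=d^{c(D)}$, where $c(D)$ counts the blocks of $D$ lying entirely in one row, normalized so that the values agree with the $P_1(d)$ example. A Bratteli path ending at $\emptyset$ is not unique, so I will parametrize the $\emptyset$-paths and identify the tail segments. Only the final four half-steps differ between $P$ and $\wt{\mathbf F}_k$, and each tail ($\emptyset\circ\emptyset\circ\emptyset\circ\emptyset$ or $\emptyset\circ\Box\circ\emptyset\circ\emptyset$) corresponds to the last column pair being a singleton or being joined to column $n-1$. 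This is an injective relabeling of basis vectors, so every $\wt{\mathbf F}_k$ is an isometry.

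\textbf{Step 2 (matrix entries).} I will compute $\emptyset(Dp_n)_{P',Q'}$ in terms of $\emptyset(D)_{P,Q}$ for the extended paths, using the orthogonal-form formulas in \cref{sec:appendix_matrix_algebras}. Multiplying by $p_n$ or by $b_{n-1}$ on the row or column side should act on the $\emptyset$-isotypic component as a fixed scalar power of $d$ times a path relabeling. These powers of $d$ must then be matched against the prefactors $d^{(n-k-\cc(D))/2}$ and the norm ratios. For the norms I use \cref{cor:norm_of_fourier_states}: $\|E^\rho_{PQ}\|_2\approx\sqrt{d^n/m_\rho}$, and the multiplicity of $\emptyset$ in $P_n(d)$ versus $P_{n-1}(d)$ carries explicit $d$-factors (Bell-type sums, via \cref{lem:efficient_to_compute_multiplicities}). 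Checking that all these powers of $d$ cancel, up to a factor $1\pm O(\delta)$, is the heart of the proof.

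\textbf{Step 3 (error and main obstacle).} Each coefficient then agrees up to a relative $O(\delta)$ error. Applying \cref{lem:opnorm_bound} to a matrix of dimension at most $|A|$ gives the $O(\poly(|A|)\cdot\delta)$ operator-norm bound, both for $\wt{\mathbf F}_k-\mathbf F_k$ and for the claim that $\mathbf F_k$ is approximately isometric. The main obstacle is Step 2, specifically the orthogonal-form entries of the point and bridge generators on paths that pass through $\emptyset$. The approach I would try first is to avoid explicit entries altogether. The vectors $\wt{\ft}\ket{\cdot}$ restricted to $\emptyset$ are rank-one-like, with $\emptyset(x)=\lambda(x)$ for a character $\lambda$, and Schur orthogonality (\cref{thm:schur_orthogonality}) pins down their norms. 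Both sides therefore have approximately unit norm and are supported on the same relabeled coordinate. It would then suffice to check that the phases and signs are positive, which should follow because the orthogonal-form entries of these generators on $\emptyset$-paths are nonnegative scalars.
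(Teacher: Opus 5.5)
Your overall frame matches the paper's: use \cref{thm:zero_irrep_matrix} to confine both sides to the $\emptyset$ block, then compare coefficients. But the proposal has a genuine gap, and it starts in Step~1.

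The irrep $\emptyset$ of $P_n(d)$ is not one-dimensional. By \cref{eq:irreps_of_parition_algebra} its dimension is the Bell number $\sum_{l}\stirlingii{n}{l}$. Already for $P_2(d)$ it is $2$, with $\emptyset(p_1)=\mathrm{diag}(d,0)$ and $\emptyset(b_1)$ a rank-one projector. So no character $D\mapsto d^{c(D)}$ realizes it.

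This breaks the shortcut you favour in Step~3. $\wt{\ft_B}\ket{b}$ is a vector in the span of all $\ket{\emptyset,P,Q}$, which has dimension $\bigl(\sum_{l}\stirlingii{n-1}{l}\bigr)^2$; it is not a single coordinate. Knowing that both sides have norm $\approx 1$, are supported on relabeled paths, and have nonnegative entries therefore does not make them close. You cannot avoid the coefficient-by-coefficient comparison.

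The inputs you planned for that comparison are also off. The Bell number is $\dim V^\emptyset$, whereas the Schur multiplicity is $m_\emptyset=1$ in both $P_n(d)$ and $P_{n-1}(d)$ (\cref{eq:partition_multiplicity}). Moreover, the powers of $d$ cancel only with the orientation $\|E^\rho_{PQ}\|_2^2\approx m_\rho/d^n$. That is the orientation behind \cref{eq:step_two_embed_eq}, and it is borne out by $P_1(d)$, where $\|E^\emptyset_{11}\|_2^2=1/d$ in the rescaled basis. Read as $\sqrt{d^n/m_\rho}$, as you wrote it, a stray factor of $d$ remains.

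What is missing is the explicit computation you left as ``should act as a power of $d$ times a relabeling.''

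\emph{For $\mathbf F_1$.} On $\emptyset$-paths of $P_n(d)$, \cref{eq:partition_point_generator_one} gives $\emptyset(p_n)=d$ on paths with $P(2n-2)=\emptyset$, gives $0$ on paths through $\Box$ at level $2n-2$, and has no off-diagonal entries. Combined with the block-diagonal subalgebra-adapted form of $\emptyset(D)$, this yields $\emptyset(Dp_n)_{P\circ\emptyset\circ\emptyset,\,Q\circ\emptyset\circ\emptyset}=d\,\emptyset(D)_{PQ}$. Every entry whose row or column path passes through $\Box$ at level $2n-2$ vanishes. The paper then matches norms exactly rather than approximately. It uses $E^\emptyset_{PQ}\,p_n=d\,E^\emptyset_{P\circ\emptyset\circ\emptyset,\,Q\circ\emptyset\circ\emptyset}$ together with the injectivity of $a\mapsto ap_n$ on basis diagrams; the rescaling supplies the extra $d^{1/2}$.

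\emph{For $\mathbf F_2,\mathbf F_3,\mathbf F_4$.} The $\Box$ is not an exact relabeling. By \cref{lem:bridge_equivalence_class_structure} together with \cref{eq:bridge_gen_formula_one} and \cref{eq:bridge_gen_formula_two}, $\emptyset(d^{1/2}b_{n-1})$ sends $\ket{P_{\le 2n-4}\circ\emptyset\circ\emptyset\circ\emptyset\circ\emptyset}$ to $d^{-1/2}\ket{P_{\le 2n-4}\circ\emptyset\circ\emptyset\circ\emptyset\circ\emptyset}+\sqrt{1-1/d}\,\ket{P_{\le 2n-4}\circ\emptyset\circ\Box\circ\emptyset\circ\emptyset}$. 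Keeping only the second path costs $O(d^{-1/2})\le O(\delta)$ per coefficient. This is applied after $\wt{\mathbf F}_1$ on the row register, the column register, or both, according to $k$.

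Carried out with these facts, your Step~2 becomes essentially the paper's proof. As written, the proposal does not establish the lemma.
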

\noindent $\pn(D) = 0$ implies that the outputs of~\cref{eq:partition_on_zero_first} to~\cref{eq:partition_on_zero_fourth} are Fourier transforms of the basis states in $\mathcal{B}(A)$. 

The maps in~\cref{lem:brauer_extension},~\cref{lem:walled_brauer_extension_rule}, and~\cref{lem:efficient_add_a_point} correspond to the different possible factorizations in Step 1 with $\pn(D_a) = 0$.  
In addition to these, we also give an efficient implementation of extending a diagram by multiplication with a bridge generator, which corresponds to the factorizations \textbf{Partition 2} and \textbf{Partition 3}. Since $D_b$ may not have propagating number zero, this lemma is somewhat more involved. Nevertheless, we are still able to obtain an efficient implementation.
\begin{lemma}
    \label{lem:efficient_add_a_bridge}
    Let $B = P_{n-1}(d)$, $A = P_n(d)$. For any basis element $b = d^{({n - 1 - \cc(D)})/2}D \in \mathcal{B}(B)$, the maps 
    \begin{equation}
        \label{eq:bridge_add_one}
        \mathbf{G}_1(\wt{\ft_B}\ket{b}) = \wt{\ft_A}\ket{d^{(n - \cc(D))/2} b_{n-1}D}
    \end{equation}
    \begin{equation}
    \label{eq:bridge_add_two}
        \mathbf{G}_2(\wt{\ft_B}\ket{b}) = \wt{\ft_A}\ket{d^{(n - \cc(D))/2} Db_{n-1}}
    \end{equation}
    can be implemented to operator norm error $O(\poly(|A|) \cdot (\delta + \varepsilon))$ by operators $\wt{\mathbf{G}}_1$ and $\wt{\mathbf{G}}_2$. Both $\wt{\mathbf{G}}_1$ and $\wt{\mathbf{G}}_2$ can be implemented with $\wt{O}(n^2 \cdot (n + \log d + \log(1/\varepsilon)))$ gates. 
\end{lemma}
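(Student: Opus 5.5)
We construct $\wt{\mathbf{G}}_1$ directly on Fourier coefficients, and obtain $\wt{\mathbf{G}}_2$ by the same argument on the column register, since $(Db_{n-1})^{\op} = b_{n-1}D^{\op}$ and $\rho(w^{\op}) = \rho(w)^\dagger$ in the orthogonal form (\cref{eq:orthogonal_form}).

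The starting point is that multiplication by $b_{n-1}$ on the left factors through the chain $P_{n-1}(d)\subseteq P_{n-\frac12}(d)\subseteq P_n(d)$. Write $\wt{\ft_A}\ket{d^{(n-\cc(D))/2} b_{n-1}D}$ as
\begin{equation*}
\sum_{\rho}\sum_{R,Q}\|E^\rho_{RQ}\|_2\, d^{(n-\cc(D))/2}\,\rho(b_{n-1})_{RP'}\,\rho(D)_{P'Q}\,\ket{\rho,R,Q}.
\end{equation*}
Here $\rho(D)$ is obtained from the $B$-Fourier state via the embedding $\mathbf{E}(A,B)=\mathbf{E}(A,P_{n-\frac12})\mathbf{E}(P_{n-\frac12},B)$.

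The key structural input is the form of $\rho(b_{n-1})$ in the Gelfand--Tsetlin basis. Since $b_{n-1}\in P_{n-\frac12}(d)$, it acts block-diagonally with respect to the $P_{n-\frac12}$ irrep, that is, on the coordinate $P(2n-1)$. It also commutes with $P_{n-1}(d)$ up to the factor $p_n$. So I expect $\rho(b_{n-1})$ to act only on the local window $(P(2n-2),P(2n-1))$, with the prefix $P_{\le 2n-3}$ held fixed.

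From the explicit orthogonal-form matrices in \cref{sec:appendix_matrix_algebras}, $b_{n-1}$ should be, up to the scalar $d$, a rank-one projector on each such local window (the Jones--Wenzl-type idempotent $b_{n-1}p_n b_{n-1}$ structure). Its nonzero column entries are therefore computable, with $O(\sqrt{n})$ of them, using the same routine as \cref{lem:efficient_computation_generator_matrix_elements_all}.

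The map $\wt{\mathbf{G}}_1$ is then an isometry in three stages:
\begin{enumerate}
\item Embed $\ket{\sigma,P,Q}$ using $\wt{\mathbf{E}}$ (\cref{lem:embedding_summary}).
\item Conditioned on $\rho$ and the local window, rotate the left path into the normalized range vector of the local projector, using sequential Givens rotations as in \cref{eq:givens_rotation}.
\item Rescale amplitudes by the ratio of Fourier norms, computed via \cref{cor:norm_of_fourier_states} and \cref{lem:appendix_ratio_closeness} in terms of Schur multiplicities $m_\rho$.
\end{enumerate}
The powers $d^{(n-\cc(D))/2}$ against $d^{(n-1-\cc(D))/2}$ contribute a factor $\sqrt d$. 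This should exactly cancel the factor $1/\sqrt d$ from the projector's norm, together with the norm ratio $\sqrt{m_\rho/(d\,m_\sigma)}$.

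To prove correctness, I would show the map is an isometry on the image of $\wt{\ft_B}$ up to $O(\delta)$, and that it agrees with $\mathbf{G}_1$ there. Since $\|\wt{\ft_A}\ket{a}\|\approx 1$ for basis elements by $\delta$-niceness, I would check that the output coefficient equals $\|E^\rho_{RQ}\|_2\rho(a)_{RQ}$ up to $O(\delta)$ per entry. I would then sum errors with \cref{lem:opnorm_bound} to get $O(\poly(|A|)(\delta+\varepsilon))$.

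The gate count comes from the following costs:
\begin{itemize}
\item $O(\sqrt{n})$ candidate local windows, each requiring coefficient computations costing $\wt{O}(n^{3/2}(n+\log d+\log(1/\varepsilon)))$ (as in the partition case of \cref{lem:efficient_computation_generator_matrix_elements_all});
\item a multiplicity-ratio cost for the embedding.
\end{itemize}
Together these give $\wt{O}(n^2(n+\log d+\log 1/\varepsilon))$.

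The main obstacle is that $\rho(b_{n-1})$ is not unitary. When $\pn(D)>0$, the image $\rho(b_{n-1})\rho(D)$ might not be describable locally: paths with $P(2n-1)=P(2n-2)-\Box$ may get mixed with those with $P(2n-1)=P(2n-2)$.

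To handle this, I would first try to show via \cref{thm:zero_irrep_matrix} and the propagating-number structure that, on the support of $\wt{\ft_B}\ket{b}$, the relevant Fourier mass concentrates (up to $\poly(|A|)\delta$, by \cref{thm:fourier_concentration_on_propagating_number}) on windows where the projector acts as a fixed isometric relabeling. Then only a well-defined isometry needs to be implemented, and the remaining pieces contribute error absorbed into $\poly(|A|)\delta$.
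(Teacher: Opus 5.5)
Routing through $P_{n-\frac12}(d)$ and reducing $\mathbf{G}_2$ to $\mathbf{G}_1$ via $\op$ are both sound. The gap is elsewhere: your three-stage map (embed with $\wt{\mathbf{E}}$, rotate the row path, rescale) is not an isometry approximating $\mathbf{G}_1$, and the propagating-number argument does not repair it.

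\textbf{What the bridge actually does.} Since $b_{n-1}\in P_{n-\frac12}(d)$, the matrix $\tau(b_{n-1})$ fixes the level-$(2n-1)$ label $\tau$ and changes only $P(2n-2)$. It annihilates $\ket{P\circ\tau}$ unless $\tau=P(2n-3)$. When $\tau=P(2n-3)$, it is an honest rank-one orthogonal projector onto $\ket{\psi_{[P]}}=\sum_{R\sim_{2n-2}P}\alpha_R\ket{R\circ\tau}$. Here $\alpha_R^2\approx 1/d$ when $R(2n-2)=\tau$, and $\alpha_R^2\approx f^{\tau\cup\{a\}}/((|\tau|+1)f^{\tau})\ge 1/n$ when $R(2n-2)=\tau\cup\{a\}$.

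\textbf{Where embed-first breaks.} Take an input component $\ket{\rho,P,Q}$ with $\rho=P(2n-2)=P(2n-3)\cup\{a\}$. Such components carry non-negligible weight whenever $\pn(D)>0$. For example, with $n=2$ and $D$ the identity of $P_1(d)$, $\wt{\ft_B}\ket{D}$ is essentially $\ket{\Box,\,\emptyset\circ\emptyset\circ\Box,\,\emptyset\circ\emptyset\circ\Box}$.
\begin{itemize}
\item Your first stage puts all but an $O(n/d)$ fraction of this component's weight on level-$(2n-1)$ label $\rho\neq P(2n-3)$, where the bridge vanishes.
\item The image of this component under $\mathbf{G}_1$, by contrast, has level-$(2n-1)$ label $P(2n-3)$ in both the row and the column path, and its norm is comparable to the input's.
\item No unitary acting on the row path can move amplitude across the irrep label and the column path. ``Rescaling'' here means deleting one branch and boosting the other by $\sqrt d$, which is not isometric.
\item Your cancellation of $\sqrt d$ against a $1/\sqrt d$ from the projector holds only when $P(2n-2)=P(2n-3)$. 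In the other case $\alpha_P$ is of order one, and it is the embedding ratio $\|E^{\tau}\|_2/\|E^{\rho}\|_2$ that is $O(\sqrt{n/d})$.
\end{itemize}
The concentration theorem cannot rescue this. It constrains only $|\lambda|$, not the local steps at levels $2n-3,2n-2,2n-1$, and both local configurations occur with non-negligible weight. Nor does the bridge act as a fixed relabeling, since $\psi_{[P]}$ is spread over $O(\sqrt n)$ paths.

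\textbf{The missing idea.} Fuse the multiplication with the first half of the embedding rather than embedding first. The bridge forces the $P_{n-\frac12}(d)$ label to be $\tau=P(2n-3)$, which is already recorded in the input row path. So the paper implements $\wt{\mathbf{H}}_1\colon\ket{\rho,P,Q}\mapsto\ket{P(2n-3)}\otimes\ket{\psi_{[P]}}\otimes\ket{Q\circ P(2n-3)}$ directly:
\begin{itemize}
\item prepare $\psi_{[P]}$ in the level-$(2n-2)$ slot of the row path by Givens rotations conditioned on $P(2n-3)$ and $Q(2n-2)=\rho$, using $\alpha_R^2=\tau(b_{n-1})_{RR}$;
\item append $P(2n-3)$ to $Q$;
\item relabel $\rho\mapsto P(2n-3)$.
\end{itemize}
Distinct $(P,Q)$ go to orthogonal states, so this is an isometry with no rescaling. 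Correctness reduces to $d\,\alpha_P^2\approx\|E^{\rho}_{PQ}\|_2^2/\|E^{\tau}_{P\circ\tau,Q\circ\tau}\|_2^2$. This is checked separately for $\rho=\tau$ and for $\rho=\tau\cup\{a\}$, using the Schur multiplicity formulas and the content identity for $f^{\tau\cup\{a\}}/f^{\tau}$. Finally, $\wt{\mathbf{G}}_1=\mathbf{E}(P_n(d),P_{n-\frac12}(d))\,\wt{\mathbf{H}}_1$ is an ordinary embedding applied afterwards, valid because $b_{n-1}D\in P_{n-\frac12}(d)$.
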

Like~\cref{lem:efficient_add_a_point}, the outputs of~\cref{eq:bridge_add_one},~\cref{eq:bridge_add_two} are Fourier transforms of states in $\mathcal{B}(A)$—the number of connected components does not change, since the bridge generator will identify $\{n, n^\prime\}$ with some connected component of $D$.

\subsection{Step 5: Accumulate}
\label{sec:accumulate}
Finally, we are ready to apply the results of the previous two sections to update the state from~\cref{eq:after_recurse}. This step generalizes the ``sum over cosets'' procedure used in \cite{beals1997qft, moore2003genericquantumfouriertransforms}, with modifications to account for the presence of non-invertible transversals. For convenience, we copy the state from~\cref{eq:after_recurse} below:
\begin{equation}
    \ket{\psi_2} =   \sum_{w_1, w_2} \sum_{b \in \mathcal{F}_{(w_1, w_2)}} f(w_1bw_2) \ket{w_1}_{\mathsf{L}} \otimes   \ket{w_2}_{\mathsf{R}} \otimes \wt{\ft_B}\ket{D_b}_\mathsf{S}
\end{equation}
The procedure for Step 5 is described in~\cref{alg:accumulate} and~\cref{fig:postprocess}.
In the algorithm, we use $\ket{\perp}$ to denote a fixed symbol which is distinct from any possible valid word $\ket{w}$ in the generators of $A$. 

\begin{figure}[H]
    \centering
    \includegraphics[width=\linewidth]{figures/postprocess_template_equivalence.jpg}
    \caption{The circuit implementing~\cref{alg:accumulate}, equivalent to the postprocessing gate in~\cref{fig:algorithm_sketch}. 
    In this circuit diagram, the loop variables iterate over all possible transversals in the order defined in~\cref{def:factorings_for_our_algebras}, and are denoted by \(w_1^\prime\) and \(w_2^\prime\) to distinguish them from the input registers \(\ket{w_1}\) and \(\ket{w_2}\). The controlled $S((\perp,\perp), (w_1^\prime, w_2^\prime))$ is a gate which exchanges the basis vectors $\ket{(\perp,\perp)}$ and $\ket{(w_1^\prime, w_2^\prime)}$ (Line 13 of~\cref{alg:accumulate}), conditioned on the middle three registers containing a Fourier state of $B$ (this can be done by checking the length of either $\ket{P}$ or $\ket{Q}$; we have chosen to use the former Bratteli path above). 
    At different points in the circuit, the last three wires may be in a superposition of states containing an irrep $\sigma \in \wh{B}$ and states containing an irrep $\rho \in \wh{A}$, along with their corresponding Bratteli paths. However, note that a controlled-$\rho(\pi)$ gate is only ever applied on the part of the superposition where these registers contain irreps and Bratteli paths of $A$ (since the rest of the superposition will not have $\bot$ in the control registers).}
    \label{fig:postprocess}
\end{figure}

\begin{algorithm}[t]
\caption{Accumulate: Obtaining $\wt{\ft_A}$ from $\wt{\ft_B}$}
\label{alg:accumulate}
\begin{algorithmic}[1]
\For{\textbf{each} pair of words $(w_1^\prime, w_2^\prime)$, with respect to the last possible ordering in~\cref{def:factorings_for_our_algebras}}
    \State Split $w_1^\prime$ and $w_2^\prime$ into a permutation and non-permutation: $w_1 = (\pi_1, r_1)$ and $w_2 = (\pi_2, r_2)$.
    \If{$e_{n-1}$, $f_r$, or $p_n$ is in $r_2$}
        \State \texttt{PropZero = True}
    \Else
        \State \texttt{PropZero = False}
    \EndIf
    \If{Both $\mathsf{L}$ and $\mathsf{R}$ are equal to $\ket{\perp}$}
        \State Apply a controlled-$\rho(\pi_1^{-1})$ gate to the row register and a controlled-$\rho(\pi_2^{-1})$ gate to the column register.
        \State Let $\mathbf{U}$ be the embedding operator specified by \Cref{tab:embedding_operators}. Apply $\wt{\mathbf{U}}^\dagger$ to $\mathsf{S}$.
    \EndIf
    \If{$\mathsf S$ contains a Fourier state of $B$}
         \State Apply the operation that exchanges $\ket{\perp}_{\mathsf L}\otimes \ket{\perp}_{\mathsf R}$ with $\ket{w_1^\prime}_{\mathsf L}\otimes \ket{w_2^\prime}_{\mathsf R}$, while fixing all other basis vectors.
    \EndIf
     \If{Both $\mathsf{L}$ and $\mathsf{R}$ are equal to $\ket{\perp}$}
         \State Let $\mathbf{U}$ be the embedding operator specified by \Cref{tab:embedding_operators}. Apply $\wt{\mathbf{U}}$ to $\mathsf{S}$.
        \State Apply a controlled-$\rho(\pi_1)$ gate to the row register and a controlled-$\rho(\pi_2)$ gate to the column register.
    \EndIf
\EndFor
\end{algorithmic}
\end{algorithm}

\begin{table}[t]
    \centering
    \renewcommand{\arraystretch}{1.15}
    \begin{tabular}{|p{0.58\textwidth}|p{0.34\textwidth}|}
        \hline
        \textbf{Case} & \textbf{Embedding Operator ($\mathbf{U}$)} \\
        \hline
        $\neg\texttt{PropZero}$, $A = B_n(d)$, $n$ even
        & $\mathbf{E}(B_n(d), B_{n-2}(d))$ \\
        \hline

        $\neg\texttt{PropZero}$, $A = B_n(d)$, $n$ odd
        & $\mathbf{E}(B_n(d), B_{n-1}(d))$ \\
        \hline

        $\neg\texttt{PropZero}$, $A = B_{r,s}(d)$, $s-r \ge 1$ odd
        & $\mathbf{E}(B_{r,s}(d), B_{r,s-1}(d))$ \\
        \hline

        $\neg\texttt{PropZero}$, $A = B_{r,r}(d)$
        & $\mathbf{E}(B_{r,r}(d), B_{r-1,r-1}(d))$ \\
        \hline

        $\neg\texttt{PropZero}$, $A = P_n(d)$, $b_{n-1} \notin r_1$, $b_{n-1} \notin r_2$
        & $\mathbf{E}(P_n(d), P_{n-1}(d))$ \\
        \hline

        $\neg\texttt{PropZero}$, $A = P_n(d)$, $b_{n-1} \in r_1$, $b_{n-1} \notin r_2$
        & $\mathbf{G}_1(P_n(d), P_{n-1}(d))$ \\
        \hline

        $\neg\texttt{PropZero}$, $A = P_n(d)$, $b_{n-1} \notin r_1$, $b_{n-1} \in r_2$
        & $\mathbf{G}_2(P_n(d), P_{n-1}(d))$ \\
        \hline

        $\texttt{PropZero}$, $A = B_n(d)$
        & $\mathbf{F}(B_n(d), B_{n-2}(d))$ \\
        \hline

        $\texttt{PropZero}$, $A = B_{r,r}(d)$
        & $\mathbf{F}(B_{r,r}(d), B_{r-1,r-1}(d))$ \\
        \hline

        $\texttt{PropZero}$, $A = P_n(d)$, $b_{n-1} \notin r_1$, $b_{n-1} \notin r_2$
        & $\mathbf{F}_1(P_n(d), P_{n-1}(d))$ \\
        \hline

        $\texttt{PropZero}$, $A = P_n(d)$, $b_{n-1} \in r_1$, $b_{n-1} \notin r_2$
        & $\mathbf{F}_2(P_n(d), P_{n-1}(d))$ \\
        \hline

        $\texttt{PropZero}$, $A = P_n(d)$, $b_{n-1} \notin r_1$, $b_{n-1} \in r_2$
        & $\mathbf{F}_3(P_n(d), P_{n-1}(d))$ \\
        \hline

        $\texttt{PropZero}$, $A = P_n(d)$, $b_{n-1} \in r_1$, $b_{n-1} \in r_2$
        & $\mathbf{F}_4(P_n(d), P_{n-1}(d))$ \\
        \hline
    \end{tabular}
    \caption{The choice of embedding operation to use during each iteration of~\cref{alg:accumulate}, which depends on the factors $r_1$ and $r_2$. In the algorithm, we do not apply the embedding operator $\mathbf{U}$ or $\mathbf{U}^\dagger$ directly, but rather the approximate embedding $\wt{\mathbf{U}}$ or $\wt{\mathbf{U}}^\dagger$.}
    \label{tab:embedding_operators}
\end{table}
\begin{lemma}
    Up to $O(\poly(|A|) \cdot (\delta + \varepsilon))$ error, \cref{alg:accumulate} transforms $\ket{\psi_2}$ into 
\begin{equation}
    \label{eq:after_accumulate}
     \sum_{a \in \mathcal{B}(A)} f(a) \ket{\perp}_\mathsf{L} \otimes \ket{\perp}_{\mathsf{R}} \otimes \wt{\ft_A}\ket{D_a}_{\mathsf{S}} 
\end{equation}
which is $\wt{\ft_A}\ket{\psi_0}$ with two unentangled ancilla registers in the $\ket{\perp}$ state. 
\end{lemma}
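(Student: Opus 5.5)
I will first analyze the idealized algorithm, in which every approximate operator $\wt{\mathbf{U}}$ is replaced by its exact counterpart $\mathbf{U}$ from \Cref{tab:embedding_operators} (that is, $\mathbf{E}$, $\mathbf{F}$, $\mathbf{F}_i$, or $\mathbf{G}_i$). In the same idealization, each controlled-$\rho(\pi)$ gate is treated as acting exactly as in \cref{lem:swap_application}. Errors are charged at the end. By linearity, it suffices to track a single term $\ket{w_1}_{\mathsf L}\ket{w_2}_{\mathsf R}\wt{\ft_B}\ket{D_b}_{\mathsf S}$ with $b\in\mathcal{F}_{(w_1,w_2)}$, where $(w_1,w_2)$ is the last possible factorization of $a=w_1 b w_2$.

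The invariant I will maintain is the following. After processing all loop iterations with transversals $(w_1',w_2')$ strictly before $(w_1,w_2)$ in the ordering, the term is unchanged. After the iteration $(w_1',w_2')=(w_1,w_2)$, it becomes $\ket{\perp}\ket{\perp}\wt{\ft_A}\ket{a}$. In every subsequent iteration it is fixed.

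\textbf{Before reaching $(w_1,w_2)$.} The registers $\mathsf L,\mathsf R$ are not $\perp$, so lines 8 and 15 do nothing. Line 12 swaps $\ket{\perp,\perp}$ with $\ket{w_1',w_2'}\neq\ket{w_1,w_2}$, so it also fixes our term.

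\textbf{At $(w_1,w_2)$.} Line 8 is skipped. Line 12 sends $\ket{w_1,w_2}$ to $\ket{\perp,\perp}$, since $\mathsf S$ holds a Fourier state of $B$. Then line 15 applies $\mathbf{U}$, which maps $\wt{\ft_B}\ket{D_b}$ to $\wt{\ft_A}\ket{r_1 D_b r_2}$ with the correct rescaling. This uses the embedding lemma, \cref{lem:brauer_extension}, \cref{lem:walled_brauer_extension_rule}, \cref{lem:efficient_add_a_point}, or \cref{lem:efficient_add_a_bridge}, and the choice of $\mathbf{U}$ is consistent because \texttt{PropZero} and the presence of $b_{n-1}$ are read off from $(w_1',w_2')$. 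The propagating-number-zero lemmas apply because \cref{def:factorings_for_our_algebras} guarantees that $\pn(D_a)=0$ exactly when the chosen transversal contains $e_{n-1}$, $f_r$, or $p_n$, which forces $\pn(D_b)=0$. Finally, the controlled $\rho(\pi_1)$ and $\rho(\pi_2)$ gates (using $\rho(\pi^{\op})=\rho(\pi)^\dagger$) yield $\wt{\ft_A}\ket{\pi_1 r_1 D_b r_2\pi_2}=\wt{\ft_A}\ket{a}$.

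\textbf{After $(w_1,w_2)$.} This is the main obstacle. Now $\mathsf L,\mathsf R=\perp$, so line 9 undoes the permutations and line 10 applies $\mathbf{U}'^\dagger$ for the later transversal $(w_1',w_2')$. I need this to produce a state with no $B$-Fourier component, so that line 12 acts trivially and line 16–17 exactly reverse lines 9–10.

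Write $\rho(\pi_1'^{-1})\otimes\rho(\pi_2'^{-1})$ applied to $\wt{\ft_A}\ket{a}$ as $\wt{\ft_A}\ket{a'}$ with $a'=\pi_1'^{-1}a\pi_2'^{-1}$. Since $\mathbf{U}'$ is an isometry, $\mathbf{U}'^\dagger$ projects onto its range. The range is spanned by $\{\wt{\ft_A}\ket{r_1' c\, r_2'}\}_{c\in B}$ (using approximate orthogonality, \cref{cor:d_nice_algebras}). So the $B$-component of $\mathbf{U}'^\dagger\wt{\ft_A}\ket{a'}$ is, up to $O(\poly(|A|)\delta)$, the $\ft_B$ image of the projection of $a'$ onto $\mathrm{span}\{r_1'D r_2'\}_{D\in\mathcal{D}(B)}$.

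By the last-possible property, $a$ has no factorization through $(w_1',w_2')$. Hence the basis diagram $a'$ is not of the form $r_1' D r_2'$. Since these are distinct scaled basis diagrams, orthogonal under $\braket{\cdot,\cdot}_0$, the projection vanishes. Therefore the output of $\mathbf{U}'^\dagger$ has only approximately zero $B$-component, line 12 is the identity on it (up to $\delta$ error), and lines 15–17 invert lines 9–10.

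I expect the most care to go into making this orthogonality precise. In particular, the range of $\mathbf{U}'$ restricted to $A$-irreps must be exactly $\wt{\ft_A}$ of that span, and the residual part of the state lies in the complement where $\mathbf{U}'\mathbf{U}'^\dagger$ acts as zero. If this fails exactly, I will argue that the swap acts only on the $B$-supported subspace and bound the leak by $\poly(|A|)\delta$.

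\textbf{Error accounting.} Replacing each $\mathbf{U}$ by $\wt{\mathbf{U}}$ and each exact irrep application by its approximation costs $O(\poly(|A|)(\delta+\varepsilon))$ per gate, using \cref{lem:embedding_summary} and \cref{lem:swap_application}. There are $O(n^4)$ iterations, and errors add in operator norm by the triangle inequality since all maps are unitary or contractive. The total error is therefore $O(\poly(|A|)(\delta+\varepsilon))$.
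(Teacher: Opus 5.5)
Your proposal is correct and follows the paper's proof essentially step for step. It uses the same three-stage analysis (before, at, and after the correct transversal), and the same key point: for later transversals, the last-possible-factorization property together with near-orthonormality of the states $\wt{\ft_A}\ket{D}$ makes the undo, swap, redo block act as the identity up to $O(\poly(|A|)\cdot\delta)$.

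Two small clarifications. First, that inversion should be phrased with the implemented unitary $\wt{\mathbf{U}}'$, so that $\wt{\mathbf{U}}'\wt{\mathbf{U}}'^\dagger = I$ exactly and only the $B$-component of $\wt{\mathbf{U}}'^\dagger\wt{\ft_A}\ket{a'}$ needs to be bounded; your idealized non-unitary $\mathbf{U}'$ does not give this. Second, the near-orthonormality you need is the Gram bound for the vectors $\wt{\ft_A}\ket{D}$, which follows from \cref{thm:approx_qft}, rather than \cref{cor:d_nice_algebras} applied directly.
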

\begin{proof}
    To simplify the analysis, we consider the case where the input $\ket{\psi_0}$ is a single diagram $\ket{D_a}$ (the rest will follow by linearity). In this case, 
\begin{equation}
    \ket{\psi_2} =   \ket{w_1}_{\mathsf{L}} \otimes   \ket{w_2}_{\mathsf{R}} \otimes \wt{\ft_B}\ket{D_b}_\mathsf{S}
\end{equation}
where $D_a = w_1D_bw_2$. 

Now, we begin iterating over the ordered set of transversals as in~\cref{alg:accumulate}. 
This is akin to guessing each possible transversal, one by one, in this order, until we happen to correctly guess the correct transversal $(w_1, w_2)$.
There are three stages to analyze: the iterations \textit{before} guessing the correct transversal, the iteration in which we guess the correct transversal, and the subsequent iterations. The idea is that iterations in which we have guessed incorrectly should have negligible effect, and the iteration in which we guess correctly should promote $\wt{\ft_B}\ket{D_b}$ to $\wt{\ft_A} \ket{w_1D_bw_2} = \wt{\ft_A} \ket{D_a}$.

Until we reach the correct transversal $(w_1^\prime, w_2^\prime) = (w_1, w_2)$, nothing happens, as none of the controls are set to $\ket{\bot}$, and the controlled applications of $S((\perp,\perp), (w_1^\prime, w_2^\prime))$ are exchanging basis states with zero amplitude. 

Once we reach the transversal $(w_1, w_2)$ corresponding to the factoring of $D_a$, the $S((\perp,\perp), (w_1^\prime, w_2^\prime))$ operator replaces the state on the $\mathsf L$ and $\mathsf R$ registers with the state $\ket{\perp}_{\mathsf L}\otimes \ket{\perp}_{\mathsf R}$. After this, we apply the embedding operator $\mathbf{U}$, chosen from~\cref{tab:embedding_operators} for the relevant algebra according to $(w_1^\prime, w_2^\prime)$, as well as the controlled $\rho(\pi_1)$ and $\rho(\pi_2)$ irrep matrices, transforming the state into 
\begin{equation}
    \wt{\ft_A}\ket{w_1D_bw_2} = \wt{\ft_A}\ket{D_a}
\end{equation}
up to error $O(\poly(|A|) \cdot (\delta + \varepsilon))$ (see the analysis in~\cref{sec:embedding} and~\cref{sec:apply_irreps}). 

Finally, we argue that in subsequent iterations of the for loop, corresponding to transversals $(w_3^\prime, w_4^\prime) > (w_1, w_2)$, the remaining gates act nearly as identity, and do not significantly alter the state. 
To do so, we first define the operator $T_{(w_3^\prime, w_4^\prime)}$ to be the ideal forward operation implemented by Lines 15--17 for this
choice of transversal, i.e. the combination of the embedding step and the controlled-$\rho(\pi_1)$ and controlled-$\rho(\pi_2)$ steps. 
That is, 
\begin{align}
    T_{(w_3^\prime, w_4^\prime)}
    \;
    \wt{\ft_B}\ket{D_c} = \wt{\ft_A}\ket{w_3^\prime D_cw_4^\prime}, \quad \forall  D_c\in\mathcal D(B)
\end{align}
With this definition, when we apply $T_{(w_3^\prime, w_4^\prime)}$ to the image of the Fourier transform on $B$, we get
\begin{equation}
     \label{eq:accumulate_one}
    T_{(w_3^\prime, w_4^\prime)}
    \left(
        \operatorname{im} \wt{\ft_B}
    \right)
    =
    \operatorname{span}
    \left\{
        \wt{\ft_A}\ket{w_3^\prime D_cw_4^\prime}
        :
        D_c\in\mathcal D(B)
    \right\}
\end{equation}

Since $(w_3^\prime, w_4^\prime) > (w_1, w_2)$ (\cref{def:ordered_factor}), we have that $w_3^\prime D_cw_4^\prime \ne D_a$ for all $D_c \in \mathcal{D}(B)$ by the last possible factorization property (\cref{def:factorings_for_our_algebras}). Moreover, from~\cref{eq:new_gram_matrix}, the Gram matrix $G_{\wt{\ft_A}}$ of the basis $\{\ft_A\ket{D_a}\}_{a \in \mathcal{D}(A)}$ with respect to the computational inner product satisfies
\begin{equation}
    \label{eq:accumulate_two}
    ||G_{\wt{\ft_A}} - I||_\infty \le O(\poly(|A|) \cdot \delta)
\end{equation}
From~\cref{eq:accumulate_one} and~\cref{eq:accumulate_two}, it follows that $\wt{\ft_A}\ket{D_a}$ has at most $O(\poly(|A|) \cdot \delta)$ overlap with the image of $T_{(w_3^\prime, w_4^\prime)}$, i.e.
\begin{align}
    \left\|
        \Pi_{T_{(w_3^\prime, w_4^\prime)}(\operatorname{im} \wt{\ft}_B)}
       \wt{\ft_A}\ket{D_a}
    \right\| = & \sum_{D_c \in \mathcal{D}(B)} \left| \braket{w_3^\prime D_cw_4^\prime | \wt{\ft}_A^\dagger \wt{\ft}_A |D_a} \right|^2 \\
    \le & 
    \;O(|B| \cdot \poly(|A|) \cdot \delta) \\
    = &
   \; O(\poly(|A|) \cdot \delta)
\end{align}
and so up to $O(\poly(|A|) \cdot \delta)$ error, $\wt{\ft_A}\ket{D_a}$ is orthogonal to the subspace $T_{(w_3^\prime, w_4^\prime)}(\operatorname{im}\wt{\ft}_B)$. Up to this error, $T^\dagger_{(w_3^\prime, w_4^\prime)}\wt{\ft_A}\ket{D_a}$ is orthogonal to the subspace corresponding to the image of $\wt{\ft}_B$, which is spanned by all basis states $\ket{\sigma, P, Q}$ for the subalgebra $B$. Therefore, the state after Line 11 has no support on any basis states $\ket{\sigma, P, Q}$ (again up to the aforementioned error), and so the control on Line 12 will not trigger. Hence, up to $O(\poly(|A|) \cdot \delta)$ error, the effect of the entire iteration will be to apply $T_{(w_3^\prime, w_4^\prime)}T_{(w_3^\prime, w_4^\prime)}^\dagger  = I$, which preserves the state. Repeating this argument for all remaining iterations of the for loop (at most $\poly(n) = \polylog(|A|)$ iterations), the final state of the algorithm is
\begin{equation}
      \ket{\perp}_\mathsf{L} \otimes \ket{\perp}_{\mathsf{R}} \otimes \wt{\ft_A}\ket{D_a}_{\mathsf{S}} 
\end{equation}
up to $O(\poly(|A|) \cdot (\delta + \varepsilon))$ error (combining the error from all stages of the algorithm). By linearity, we recover the state in~\cref{eq:after_accumulate} when the input is an arbitrary superposition of diagrams in $\mathcal{D}(A)$. 
\end{proof}

\paragraph{Error Analysis.} Each iteration of the for loop incurs an operator norm error of $O(\poly(|A|) \cdot (\delta + \varepsilon))$. There are at most $O(n^4)$ transversals for the Brauer and walled Brauer algebras, and $O(n^3)$ transversals for the partition algebra  (see~\cref{def:factorings_for_our_algebras}). Since $\poly(n) = \polylog(|A|)$, the total operator norm error for the entire algorithm is also $O(\poly(|A|) \cdot (\delta + \varepsilon))$.

\paragraph{Gate Complexity.} First, we need to bound the gate complexity of a single iteration of the loop. Within each iteration, we apply a controlled-$\mathbf{U}$, controlled-$\mathbf{U}^\dagger$, and four controlled applications of $\rho(\pi)$. There are also some extra processing steps to compute the values of certain conditional expressions, which have lower-order gate complexity. 

In all cases, $\pi$ is the product of at most two transpositions, each of which can be (efficiently) factored as a product of $O(n)$ swap generators $s_i$. Using~\cref{lem:swap_application} to implement each controlled-$\rho(s_i)$, we obtain the gate complexities for each controlled-$\rho(\pi)$:
\begin{itemize}
    \item Brauer Algebra: $\wt{O}(n^{2} \cdot (n + \log d + \log(1/\varepsilon)))$
    \item Walled Brauer Algebra: $\wt{O}(n^2 \cdot (n + \log d + \log(1/\varepsilon)))$ 
    \item Partition Algebra: $\wt{O}(n^{7/2} \cdot (n + \log d + \log(1/\varepsilon)))$ 
\end{itemize}
The embedding operator $\wt{\mathbf{U}}$ has the following gate complexity:
\begin{itemize}
    \item Brauer Algebra: $\wt{O}(\sqrt{n} \cdot (n^2\log d + \log(1/\varepsilon)))$ (\cref{lem:embedding_summary})
    \item Walled Brauer Algebra: $\wt{O}(\sqrt{n} \cdot (n^2\log d + \log(1/\varepsilon)))$ (\cref{lem:embedding_summary})
    \item Partition Algebra: $\max\left\{\wt{O}(\sqrt{n} \cdot (n^2\log d + \log(1/\varepsilon))), \;\wt{O}(n^2 \cdot (n + \log d + \log(1/\varepsilon)))\right\}$ (\cref{lem:embedding_summary}, \cref{lem:efficient_add_a_bridge})
\end{itemize}
In total, we have gate complexity at most:\footnote{These upper bounds are quite loose. Since we do not attempt to optimize the algorithm, we expect that these could be significantly improved with further analysis, which we leave to future work.}
\begin{itemize}
    \item Brauer Algebra: $\wt{O}(n^{5/2} \cdot (\sqrt{n} + \log d + \log(1/\varepsilon)))$
    \item Walled Brauer Algebra: $\wt{O}(n^{5/2} \cdot (\sqrt{n} + \log d + \log(1/\varepsilon)))$ 
    \item Partition Algebra: $\wt{O}(n^{7/2} \cdot (n + \log d + \log(1/\varepsilon)))$ 
\end{itemize}
Finally, multiplying by the total number of transversals ($O(n^4)$ for the Brauer and walled Brauer algebras, and $O(n^3)$ for the partition algebra, see~\cref{def:factorings_for_our_algebras}) gives the final gate counts of~\cref{alg:accumulate}:
\begin{center}
\fbox{
  \parbox{0.92\linewidth}{
    \centering
        \textbf{Gate Complexity For $B_n(d), B_{r,s}(d)$: $\wt{O}(n^{13/2}\cdot (\sqrt{n} + \log d + \log(1/\varepsilon)))$} \\
        \textbf{Gate Complexity For $P_n(d)$: $\wt{O}(n^{13/2}\cdot (n + \log d + \log(1/\varepsilon)))$} \\
      \textbf{Operator Norm Error: $O(\poly(|A|) \cdot (\delta + \varepsilon))$}
  }
}
\end{center}
\subsection{An Efficient Quantum Fourier Transform for Diagram Algebras}
Finally, we combine the boxed results of Steps 1, 2, and 5 from~\cref{sec:factor},~\cref{sec:recurse}, and~\cref{sec:accumulate}. We begin with the Brauer algebra, where the following recurrence holds when $n$ is even: 
\begin{equation}
    \label{eq:brauer_recurrence}
     \text{size}(\wt{\ft_{B_{n}(d)}}) = \text{size}(\wt{\ft_{B_{n - 2}(d)}}) + \wt{O}(n^{13/2}\cdot (\sqrt{n} + \log d + \log(1/\varepsilon)))
\end{equation}
\begin{equation}
     \label{eq:brauer_recurrence_error}
    \text{error}(\wt{\ft_{B_{n}(d)}}) = \text{error}(\wt{\ft_{B_{n - 2}(d)}}) + O(\poly(|B_{n}(d)|) \cdot (\delta + \varepsilon))
\end{equation}
When $n$ is odd, $\ft_{B_{n}(d)}$ instead reduces to $\ft_{B_{n - 1}(d)}$ on the first recursion level, but since $n$ then becomes even, the asymptotics of the recurrence relations will remain the same. When $n = 2$, $\wt{\ft_{B_{n}(d)}}$ is a two-dimensional operator, and therefore has constant gate complexity. Solving the recurrences in~\cref{eq:brauer_recurrence} and~\cref{eq:brauer_recurrence_error},
\begin{equation}
     \text{size}(\wt{\ft_{B_{n}(d)}}) =  \wt{O}(n^{15/2}\cdot (\sqrt{n} + \log d + \log(1/\varepsilon))),\;\; \text{error}(\wt{\ft_{B_{n}(d)}}) =   O(\poly(|B_{n}(d)|) \cdot (\delta + \varepsilon))
\end{equation}
Similar recurrence relations yield analogous statements for the walled Brauer algebra and partition algebra:
\begin{equation}
     \text{size}(\wt{\ft_{B_{r,s}(d)}}) =  \wt{O}(n^{15/2}\cdot (\sqrt{n} + \log d + \log(1/\varepsilon))),\;\; \text{error}(\wt{\ft_{B_{r, s}(d)}}) =   O(\poly(|B_{r,s}(d)|) \cdot (\delta + \varepsilon))
\end{equation}
\begin{equation}
     \text{size}(\wt{\ft_{P_{n}(d)}}) =  \wt{O}(n^{15/2}\cdot (n + \log d + \log(1/\varepsilon))),\;\; \text{error}(\wt{\ft_{P_{n}(d)}}) =   O(\poly(|P_{n}(d)|) \cdot (\delta + \varepsilon))
\end{equation}
Finally,~\cref{thm:approx_qft} allows us to replace $\wt{\ft_A}$ with $\ft_A$, at the cost of additional $\poly(|A|)$ factors in the operator norm error. We also substitute $\delta = O(|A| \cdot d^{-1/2})$ from \cref{cor:d_nice_algebras}. We now state the main result of the paper:  
\begin{theorem}
    \label{thm:qft_is_efficient_to_implement}
    Let $A \in \{B_n(d), B_{r,s}(d), P_n(d)\}$.%
    \footnote{
        \cref{thm:qft_is_efficient_to_implement} also extends to the half partition algebra $P_{n - \frac12}(d)$. While we omit it from~\cref{thm:qft_is_efficient_to_implement}, the proof is essentially identical to the $P_n(d)$ case.
    }
    The Fourier transform $\ft_A$ can be implemented on a quantum computer up to operator norm error
    \begin{equation}
        O\!\left(\poly(|A|)\cdot \bigl(d^{-1/2} + \varepsilon\bigr)\right),
    \end{equation}
    by quantum circuits with gate complexity
    \begin{equation}
        \wt{O}\!\left(n^{15/2}\cdot (\sqrt{n} + \log d + \log(1/\varepsilon))\right)
    \end{equation}
    for $B_n(d)$ and $B_{r,s}(d)$, and
    \begin{equation}
        \wt{O}\!\left(n^{15/2}\cdot (n + \log d + \log(1/\varepsilon))\right)
    \end{equation}
    for $P_n(d)$. The Fourier transform is performed with respect to the subalgebra adapted bases from~\cref{sec:subalgebra_chains}, and the orthogonal form in~\cref{sec:main_body_orthogonal_form}.
\end{theorem}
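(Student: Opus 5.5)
The proof assembles the pieces already established: the recursive separation-of-variables construction gives a unitary circuit approximating $\wt{\ft_A}$, and \cref{thm:approx_qft} transfers this to $\ft_A$. I will argue by induction along the subalgebra chains of \cref{sec:subalgebra_chains}. For each level there is a step $A \supset B$ with $B\in\{B_{n-2}(d), B_{n-1}(d)\}$, $B_{r,s-1}(d)$ or $B_{r-1,r-1}(d)$, or $P_{n-1}(d)$. The inductive claim is that there is a unitary circuit $U_A$ with $\|U_A - \wt{\ft_A}\|_\infty \le \mathrm{err}(A)$ and gate count $\mathrm{size}(A)$. The circuit $U_A$ is built in three parts. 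First comes the last-possible factorization of Step 1 (\cref{sec:factor}), which is exact. Second, $U_B$ is applied to the $\mathsf{B}$ register. Third comes \cref{alg:accumulate}, whose correctness up to $O(\poly(|A|)(\delta+\varepsilon))$ was shown in the lemma of \cref{sec:accumulate}. The base case is the constant-dimensional algebra at the bottom of the chain, for example $B_2(d)$, $B_{1,1}(d)$ or $P_1(d)$, where the transform is implemented directly.

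For the error, I combine two facts: errors of unitary approximations add in operator norm, and the factorization isometry is exact. Together they give the recurrence
\[
\mathrm{err}(A)\le \mathrm{err}(B)+O(\poly(|A|)(\delta+\varepsilon)),
\]
as in \cref{eq:brauer_recurrence_error}. Since $|B|\le|A|$ and there are only $O(n)$ levels, with $n=O(\log|A|)$, the sum stays $O(\poly(|A|)(\delta+\varepsilon))$. The recursive call uses the $\delta$ of $B$. For this I need that $\delta_B=O(|B|d^{-1/2})\le O(|A|d^{-1/2})$, which follows from \cref{cor:d_nice_algebras} applied at each level. I then add $\|\ft_A-\wt{\ft_A}\|_\infty\le O(\delta|A|^{3/2})$ from \cref{thm:approx_qft}, and substitute $\delta=O(|A|d^{-1/2})$. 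The stated bound $O(\poly(|A|)(d^{-1/2}+\varepsilon))$ follows.

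For the gate count, the size recurrence is
\[
\mathrm{size}(A)=\mathrm{size}(B)+\wt O(n^5)+\wt O\bigl(n^{13/2}(\sqrt n+\log d+\log(1/\varepsilon))\bigr),
\]
using the boxed costs of Steps 1 and 5; the partition algebra has $n$ in place of $\sqrt n$. The factorization cost $\wt O(n^5)$ is absorbed into the Step 5 term. Unrolling over $O(n)$ levels multiplies by $n$, giving $n^{15/2}$. There is one parity issue. For $B_n(d)$ with $n$ odd, the first step drops only one column, after which the chain proceeds in steps of two. For $B_{r,s}(d)$, the steps alternate between the $s>r$ and $r=s$ cases. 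Both change the count of levels by at most a constant factor. One more point is needed: the per-step precision $\varepsilon$ at each level must be the same $\varepsilon$. Otherwise the additive error would have to be rescaled by the $O(n)$ levels, which only costs a $\log n$ inside $\log(1/\varepsilon)$ and is hidden in $\wt O$.

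The main obstacle is keeping the $\poly(|A|)$ factors from compounding multiplicatively across the recursion. In the lemmas each step's error is stated relative to $\wt{\ft}$ of that level. So I must check that composing with an approximate $U_B$ rather than the exact $\wt{\ft_B}$ adds errors rather than multiplying them. This holds because every postprocessing map is implemented by a unitary (or isometry) of norm one. Hence $\|V U_B - \wt V\wt{\ft_B}\|\le\|U_B-\wt{\ft_B}\|+\|V-\wt V\|$, where the second term is controlled by the accumulate lemma. I also need $\|\wt{\ft_B}\|=O(1)$, which follows from the Gram bound \cref{eq:new_gram_matrix}.
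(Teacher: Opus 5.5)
Your proposal is correct and follows the paper's own argument. It uses the same per-level size and error recurrences built from the boxed costs of Steps 1, 2 and 5, unrolled over the $O(n)$ levels of the subalgebra chain (with the same parity remark), followed by \cref{thm:approx_qft} and the substitution $\delta = O(|A|\,d^{-1/2})$ from \cref{cor:d_nice_algebras}. Your more careful treatment of error composition, which uses that the postprocessing is unitary and that $\|\wt{\ft_B}\|_\infty = O(1)$ by \cref{eq:new_gram_matrix}, just makes explicit a step the paper leaves implicit.
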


\ifanon
\else 
\section*{Acknowledgements}
We thank Dmitry Grinko, Maris Ozols, Alex Lombardi, and Henry Yuen for helpful discussions. 
This work was done in part while BN was visiting the Simons Institute for the Theory of Computing, supported by NSF QLCI Grant No. 2016245.
YD acknowledges partial support by the National Science Foundation (under awards CCF-2312754 and CCF-2338063), by the U.S. Department of Energy, Office of Science, National Quantum Information Science Research Center, Co-design Center for Quantum Advantage (C2QA) under Contract No. DE-SC0012704, by QuantumCT (under NSF Engines award ITE-2302908), by AFOSR MURI (FA9550-26-1-B036), by Boehringer Ingelheim, and NSF NQVL-ERASE (under award OSI-2435244). External interest disclosure: YD is a consultant and equity holder of D-Wave Quantum, Inc.
\fi

\paragraph{AI Disclosure.}
During the course of this work, we used OpenAI's Codex (with GPT 5.3 and GPT 5.4\ Thinking) to build a graphical interface to visualize different properties of Fourier states, including software for calculating Fourier basis matrix elements. This software was used to build intuition that informed several of the proofs, particularly those in~\cref{sec:overall_diagram_algebras}
\ifanon
\!.
\else
\!, and is freely available.\footnote{ \url{https://github.com/Ben-Foxman/semisimple_algebra_visualization}}
\fi
All proofs appearing in the paper were written by hand, and the correctness and originality were verified independently of any AI assistance. No references in the paper were AI generated.

\printbibliography
\newpage
\appendix
\begin{appendices}
\section{Matrix Inequalities}
We use $||M||_{\infty}$ to denote the operator norm induced from the $2$-norm, equivalent to the maximum singular value. We use the following matrix inequalities.
\begin{fact}
    \label{lem:operator_norm_of_inverse}
    Assume that $||G-I||_{\infty} \le \varepsilon < 1$. Then $G$ is invertible, and $||G^{-1} - I||_{\infty} \le \frac{\varepsilon}{1 - \varepsilon}$. 
\end{fact}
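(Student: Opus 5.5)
The plan is to use the Neumann series. First, the bound $\|G-I\|_\infty\le\varepsilon<1$ gives invertibility. For any unit vector $x$,
\[
\|Gx\|_2 \ge \|x\|_2-\|(G-I)x\|_2 \ge 1-\varepsilon>0 .
\]
So $G$ is injective. Since $G$ is a square matrix, it is therefore invertible. Equivalently, the spectral radius of $I-G$ is at most $\|I-G\|_\infty\le\varepsilon<1$, so $0$ is not an eigenvalue of $G=I-(I-G)$.

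Next, set $X=I-G$ and write $G^{-1}=(I-X)^{-1}=\sum_{k\ge 0}X^k$. The series converges absolutely in operator norm because $\|X^k\|_\infty\le\|X\|_\infty^k\le\varepsilon^k$ by submultiplicativity of the induced $2$-norm. Multiplying the partial sums $\sum_{k=0}^{N}X^k$ by $(I-X)$ telescopes to $I-X^{N+1}\to I$. This shows the series is indeed the inverse. It also gives an independent proof of invertibility, so the first step could be folded into this one.

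Finally, subtract the $k=0$ term and apply the triangle inequality:
\[
\|G^{-1}-I\|_\infty=\Big\|\sum_{k\ge1}X^k\Big\|_\infty\le\sum_{k\ge1}\varepsilon^k=\frac{\varepsilon}{1-\varepsilon}.
\]
As a cross-check, the same bound follows without series from the identity $G^{-1}-I=G^{-1}(I-G)$. Using $\|G^{-1}\|_\infty\le 1/(1-\varepsilon)$, which comes from the lower bound $\|Gx\|_2\ge(1-\varepsilon)\|x\|_2$, gives the same $\varepsilon/(1-\varepsilon)$.

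There is no real obstacle. The only point needing care is justifying convergence of the series and that its limit equals $G^{-1}$, which the telescoping identity settles.
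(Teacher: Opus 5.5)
Your proof is correct, and your ``cross-check'' is in fact the paper's own argument. The paper notes that a singular $G$ would force $\|G-I\|_\infty\ge 1$, writes $G^{-1}-I=G^{-1}(I-G)$, and bounds $\|G^{-1}\|_\infty\le 1/(1-\varepsilon)$ by saying that the singular values of $G$ lie in $[1-\varepsilon,1+\varepsilon]$.

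Your primary route, the Neumann series $G^{-1}=\sum_{k\ge0}(I-G)^k$, is genuinely different. It gives invertibility and the bound in one stroke. It uses only submultiplicativity of the norm, so it holds verbatim in any Banach algebra, including infinite-dimensional settings where injectivity alone would not give invertibility. It also writes $G^{-1}-I$ explicitly as $\sum_{k\ge1}(I-G)^k$.

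The paper's factorization is shorter but relies on the singular-value picture. Your justification of $\|G^{-1}\|_\infty\le 1/(1-\varepsilon)$ via $\|Gx\|_2\ge(1-\varepsilon)\|x\|_2$ is the more elementary way to obtain the only singular-value fact actually needed, namely the lower bound on the smallest singular value of $G$.
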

\begin{proof}
    If $G$ is non-invertible, then $||G - I||_\infty \ge 1$. Therefore $G$ is invertible, and 
    \begin{equation}
        ||G^{-1} - I||_{\infty} = ||G^{-1}(I - G)||_{\infty} \le ||G^{-1}||_{\infty}||G - I||_{\infty} \le \varepsilon ||G^{-1}||_\infty 
    \end{equation}
    Now, since $||G-I||_\infty \le \varepsilon$, the singular values of $G$ lie in $[1 - \varepsilon, 1 + \varepsilon]$, so the singular values of $G^{-1}$ lie in $\left[\frac{1}{1 + \varepsilon}, \frac{1}{1- \varepsilon} \right]$. Hence,
    \begin{equation}
        ||G^{-1}||_\infty \le \frac{1}{1 - \varepsilon}
    \end{equation}
    finishing the proof.
\end{proof}
\begin{fact}
    \label{lem:off_diagonal_sums}
    If $||G - I||_\infty \le \varepsilon$, then the sum of the magnitudes of all matrix elements in a column of $G - I$ is at most $\varepsilon\cdot \sqrt{d}$, where $d$ is the dimension of $G - I$. 
\end{fact}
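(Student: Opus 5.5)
The claim is that the $\ell^1$ norm of any column of $M := G - I$ is at most $\varepsilon\sqrt{d}$. I would prove it by passing from the $\ell^1$ norm to the $\ell^2$ norm with Cauchy--Schwarz, and then bounding the $\ell^2$ norm of a column by the operator norm.

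Fix a column index $k$, and let $c = M e_k$ be the $k$-th column, where $e_k$ is the $k$-th standard basis vector. The proof has two steps.

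\textbf{Step 1.} Apply Cauchy--Schwarz to the vector $(|c_1|,\dots,|c_d|)$ paired with the all-ones vector:
\begin{equation*}
\sum_{i=1}^d |c_i| \le \sqrt{d}\,\Bigl(\sum_{i=1}^d |c_i|^2\Bigr)^{1/2} = \sqrt{d}\,\|M e_k\|_2 .
\end{equation*}

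\textbf{Step 2.} Since $\|e_k\|_2 = 1$, the definition of the induced operator norm gives
\begin{equation*}
\|M e_k\|_2 \le \|M\|_\infty \|e_k\|_2 = \|M\|_\infty \le \varepsilon .
\end{equation*}

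Combining the two steps yields $\sum_i |M_{ik}| \le \varepsilon\sqrt{d}$, as claimed.

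No step here is a real obstacle. The one point to be careful about is that $\|\cdot\|_\infty$ in the paper means the spectral norm (the maximum singular value), not the max-column-sum norm. With that reading, the factor $\sqrt{d}$ is exactly the loss from comparing $\ell^1$ and $\ell^2$ norms in dimension $d$.

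The same bound applies to rows by passing to $M^\dagger$, which has the same operator norm. That is the form needed in \cref{lem:approx_nice}, where the sum runs over the $(\rho,i,j)$ row of $G^{-1}-I$. There one first applies \cref{lem:operator_norm_of_inverse} to get $\|G^{-1}-I\|_\infty \le \delta/(1-\delta)$, and then this fact with $d = |A|$.
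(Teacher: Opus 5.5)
Your proof is correct and matches the paper's argument: write the $k$-th column as $(G-I)e_k$, bound its $\ell^2$ norm by the spectral norm $\varepsilon$, then use Cauchy--Schwarz to get $\ell^1 \le \sqrt{d}\,\ell^2$. Your remark that the row version follows from $M^\dagger$ is also correct.
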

\begin{proof}
    The $k$th column of $G - I$ is equivalent to $(G - I)e_k$. Using $v_k$ to denote this column, we have
    \begin{equation}
        ||v_k||_2 \le \varepsilon \implies \sum_{i} |(v_k)_i|^2 \le \varepsilon^2
    \end{equation}
    Using Cauchy-Schwarz, 
    \begin{equation}
        \sum_{i} |(v_k)_i| \le \sqrt{d} \cdot  \sqrt{\sum_{i} |(v_k)_i|^2} \le \sqrt{d} \cdot \varepsilon
    \end{equation}
\end{proof}

\begin{fact}
\label{lem:opnorm_bound}
Let $M$ be a linear operator on $\mathbb{C}^d$ with $|M_{ij}| \le \varepsilon$ for all $i, j$. Then, $||M||_{\infty} \le \varepsilon d$.
\end{fact}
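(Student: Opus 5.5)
The plan is a direct entrywise bound: each entry of $M$ is at most $\varepsilon$ in magnitude, so I will bound the row and column sums and then combine them. I will use the standard Schur test, which says that $||M||_\infty \le \sqrt{R\,C}$, where $R$ is the maximum absolute row sum and $C$ is the maximum absolute column sum. Each row and each column has $d$ entries, each of magnitude at most $\varepsilon$. Hence $R, C \le \varepsilon d$, and so $||M||_\infty \le \varepsilon d$.

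To keep the argument self-contained, I will instead go through the Frobenius norm, which is more elementary. The spectral norm is bounded by the Frobenius norm, because the largest singular value squared is at most the sum of all the squared singular values:
\begin{equation}
||M||_\infty^2 \le \sum_{i,j} |M_{ij}|^2 \le d^2 \varepsilon^2 .
\end{equation}
Taking square roots gives $||M||_\infty \le \varepsilon d$.

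If one prefers to avoid singular values, the same bound follows from Cauchy--Schwarz. For any unit vector $x$, we have $|(Mx)_i| \le \sum_j |M_{ij}||x_j| \le \varepsilon \sqrt{d}\,||x||_2 = \varepsilon\sqrt{d}$. Summing the squares over the $d$ coordinates gives $||Mx||_2^2 \le d \cdot \varepsilon^2 d$, so $||Mx||_2 \le \varepsilon d$.

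There is no real obstacle in this argument. The only step needing care is that the inequality $||M||_\infty \le ||M||_F$ (or the Cauchy--Schwarz step) is used correctly, and that $d$ counts the entries in a row or column, so the bound is exactly $\varepsilon d$ with no extra constant.
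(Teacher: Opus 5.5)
Your proposal is correct, and your first paragraph is exactly the paper's proof: bound the maximum absolute row and column sums by $\varepsilon d$ and apply $\|M\|_2 \le \sqrt{\|M\|_1\,\|M\|_{\mathrm{row}\text{-}\infty}}$. The Frobenius-norm and Cauchy--Schwarz routes you then prefer are more elementary alternatives. They avoid citing the interpolation inequality and give the identical bound, which is tight in any case, since the all-$\varepsilon$ matrix has operator norm exactly $\varepsilon d$.
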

\begin{proof}
Let \(\|\cdot\|_{\mathrm{row}\text{-}\infty}\) denote the maximum absolute row sum. Since \(|M_{ij}| \le \varepsilon\) for all \(i,j\), we have
\begin{equation}
    \|M\|_{1} \le \varepsilon d
    \qquad\text{and}\qquad
    \|M\|_{\mathrm{row}\text{-}\infty} \le \varepsilon d.
\end{equation}
By the standard inequality
\begin{equation}
    \|M\|_{2} \le \sqrt{\|M\|_{1}\|M\|_{\mathrm{row}\text{-}\infty}},
\end{equation}
it follows that
\begin{equation}
    \|M\|_{\infty} = \|M\|_{2} \le \sqrt{(\varepsilon d)(\varepsilon d)} = \varepsilon d.
\end{equation}
\end{proof}
\begin{lemma}
\label{lem:close_gram_implies_close_norms}
Let $\langle\cdot,\cdot\rangle_g$ and $\langle\cdot,\cdot\rangle_h$ be two inner products on $\mathbb{C}^d$, and let $B=\{b_1,\dots,b_d\}$ and $C=\{c_1,\dots,c_d\}$ be two bases of $\mathbb{C}^d$. Assume that the Gram matrices $G_g(B)=G_h(C)=I$, and that for some $c>0$,
\begin{equation}
\left\|\frac{G_h(B)}{c}-I\right\|_\infty \le \varepsilon.
\end{equation}
Then, for $\varepsilon<1$,
\begin{equation}
\|c\,G_g(C)-I\|_\infty \le \frac{\varepsilon}{1-\varepsilon}.
\end{equation}
\end{lemma}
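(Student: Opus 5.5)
Let $M_B$ be the $d\times d$ change-of-basis matrix expressing the $C$-vectors in terms of the $B$-vectors, so that $c_j=\sum_i (M_B)_{ij} b_i$. I will first rewrite both Gram matrices in terms of $M_B$. Since $G_g(B)=I$, sesquilinearity gives $G_g(C)=M_B^\dagger G_g(B) M_B = M_B^\dagger M_B$. Likewise $G_h(C)=M_B^\dagger G_h(B) M_B$. Combined with the hypothesis $G_h(C)=I$, this yields $M_B^\dagger H M_B = I$, where $H:=G_h(B)$.

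From this identity, $M_B$ is invertible and $H = (M_B^\dagger)^{-1}M_B^{-1} = (M_B M_B^\dagger)^{-1}$. It follows that $M_B M_B^\dagger = H^{-1}$. The matrices $M_B^\dagger M_B$ and $M_B M_B^\dagger$ share the same singular values: both are positive definite, and they are unitarily similar via the polar decomposition of $M_B$. Hence
\begin{equation}
\|c\,G_g(C)-I\|_\infty = \|c\,M_B^\dagger M_B - I\|_\infty = \|c\,M_BM_B^\dagger - I\|_\infty = \|c H^{-1} - I\|_\infty = \|(H/c)^{-1}-I\|_\infty .
\end{equation}
The middle equality holds because $c\,X - I$ and $c\,Y-I$ are unitarily similar whenever $X=U Y U^\dagger$.

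It remains to apply \cref{lem:operator_norm_of_inverse} to $G := H/c$. The hypothesis $\|H/c-I\|_\infty\le\varepsilon<1$ gives exactly $\|(H/c)^{-1}-I\|_\infty\le \varepsilon/(1-\varepsilon)$, which is the claim.

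The only step needing care is the unitary-similarity argument transferring the bound from $M_BM_B^\dagger$ to $M_B^\dagger M_B$. I would justify it through the SVD $M_B=U\Sigma V^\dagger$, which gives $M_B^\dagger M_B = V\Sigma^2V^\dagger$ and $M_BM_B^\dagger=U\Sigma^2U^\dagger$. Since $\|c\Sigma^2-I\|_\infty$ is the same in both cases, the bound transfers. Apart from this, everything is bookkeeping of conjugate-linearity conventions: one must make sure $G_h(C)=M_B^\dagger G_h(B)M_B$ with the conjugate on the correct factor, consistent with the paper's convention that inner products are conjugate-linear in the first slot.
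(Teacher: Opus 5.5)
Your proof is correct. It reaches the bound by a different final step than the paper, though both arguments start the same way. With your $M_B$ (the paper's $X=B^{-1}C$), both have $G_g(C)=M_B^\dagger M_B$ and $M_B^\dagger H M_B=I$.

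The paper then works in the Loewner order. It turns $\|H/c-I\|_\infty\le\varepsilon$ into $(1-\varepsilon)I\preceq H/c\preceq(1+\varepsilon)I$ via Courant--Fischer. It conjugates by $M_B$ to sandwich $I/c$ between $(1-\varepsilon)M_B^\dagger M_B$ and $(1+\varepsilon)M_B^\dagger M_B$. From this it reads off that the spectrum of $c\,M_B^\dagger M_B$ lies in $[1/(1+\varepsilon),1/(1-\varepsilon)]$.

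You instead solve the identity exactly, $M_BM_B^\dagger=H^{-1}$. You then use the unitary similarity of $M_B^\dagger M_B$ and $M_BM_B^\dagger$ to get $\|c\,G_g(C)-I\|_\infty=\|(H/c)^{-1}-I\|_\infty$, and finish with the paper's own \cref{lem:operator_norm_of_inverse}.

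Your route gives an exact statement: the eigenvalues of $c\,G_g(C)$ are precisely the reciprocals of those of $H/c$. It also avoids operator inequalities entirely by recycling a fact already in the paper. The paper's congruence argument uses only order relations, so it extends more directly if the hypothesis $G_h(C)=I$ were weakened to a two-sided Loewner bound. Your inversion step, by contrast, relies on the equality $M_B^\dagger HM_B=I$ holding exactly. Both give the same constant $\varepsilon/(1-\varepsilon)$.
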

\begin{proof}
Let \(B,C\) be the basis matrices (with columns containing the respective basis vectors) and set \(X=B^{-1}C\), so \(C=BX\). Let \(H=G_h(B)\) and \(H'=H/c\).
Then
\begin{equation}
G_g(C)=G_g(BX)=X^\dagger G_g(B)X=X^\dagger X
\end{equation}
and
\begin{equation}
I=G_h(C)=G_h(BX)=X^\dagger HX=c\,X^\dagger H'X,
\end{equation}
so \(X^\dagger H'X=\frac{1}{c}I\). By the Courant-Fischer Theorem (\cite{Watrous2018TQI}, Theorem 1.2), \(\|H'-I\|_\infty\le \varepsilon\) implies that
\begin{equation}
(1-\varepsilon)I \preceq H' \preceq (1+\varepsilon)I.
\end{equation}
Conjugating by \(X\) gives
\begin{equation}
(1-\varepsilon)X^\dagger X \preceq X^\dagger H'X = \frac{1}{c}I \preceq (1+\varepsilon)X^\dagger X,
\end{equation}
hence
\begin{equation}
\frac{1}{1+\varepsilon}I \preceq c\,X^\dagger X \preceq \frac{1}{1-\varepsilon}I.
\end{equation}
Therefore all eigenvalues of \(c\,G_g(C)=c\,X^\dagger X\) lie in \([\frac{1}{1+\varepsilon},\frac{1}{1-\varepsilon}]\), and so
\begin{equation}
\|c\,G_g(C)-I\|_\infty
\le \max\!\left\{1-\frac{1}{1+\varepsilon},\,\frac{1}{1-\varepsilon}-1\right\}
= \frac{\varepsilon}{1-\varepsilon}.
\end{equation}
\end{proof}

\section{Proofs from Section~\ref{sec:properties_of_irreps}}
\label{app:properties_of_irreps}
\begin{theorem}
    For all $A \in \{P_n(d), P_{n - \frac12}(d),  B_n(d), B_{r,s}(d)\}$, and $0 \le k \le n$, 
    \begin{equation}
        |\{D \in \mathcal{D}(A): \pn(D) = k\}| = \sum_{\rho \in \wh{A}: |\rho| = k} d_\rho^2
    \end{equation}
\end{theorem}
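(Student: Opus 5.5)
We prove the identity by a direct bijective count: we factor each diagram through its propagating blocks and compare the result with the dimension formulas stated earlier. Fix $A$ and $k$. A diagram $D$ with $\pn(D)=k$ is uniquely determined by three pieces of data. The first is its \emph{top half}, namely the restriction of $D$ to the top row $\{1,\dots,n\}$ with the $k$ blocks meeting the bottom row marked. The second is the analogous marked \emph{bottom half}. The third is a bijection (a permutation in $S_k$) matching the marked top blocks to the marked bottom blocks. Conversely, any pair of admissible half-diagrams with $k$ marked blocks each, together with any matching, glues to a unique diagram in $\mathcal{D}(A)$ with propagating number $k$. For $B_{r,s}(d)$ the matching must respect the wall, so it is a pair of permutations in $S_{k_1}\times S_{k_2}$. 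Hence
\begin{equation}
|\{D\in\mathcal{D}(A):\pn(D)=k\}| = N_k^2\cdot k! ,
\end{equation}
where $N_k$ is the number of marked half-diagrams, with the wall-respecting analogue for the walled Brauer algebra.

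Next we compute $N_k$ for each algebra.
\begin{itemize}
\item For $P_n(d)$, a marked half-diagram is a set partition of $[n]$ into $l$ blocks with $k$ of them marked. Hence $N_k=\sum_{l}\stirlingii{n}{l}\binom{l}{k}$.
\item For $B_n(d)$, a marked half-diagram is a choice of $k$ unmatched vertices together with a perfect matching on the remaining $n-k$. Hence $N_k=\binom{n}{k}(n-k-1)!!$.
\item For $B_{r,s}(d)$ with $k$ horizontal arcs per row, $N$ counts the choice of $k$ left and $k$ right endpoints together with their pairing, giving $k!\binom{r}{k}\binom{s}{k}$. The propagating lines then split as $(r-k)$ on the left and $(s-k)$ on the right.
\item For $P_{n-\frac12}(d)$, the block containing $n,n'$ is excluded from $\pn$. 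Merging $n$ into that block shifts the count, giving $N_k=\sum_l\stirlingii{n}{l+1}\binom{l}{k}$.
\end{itemize}

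On the representation side, we use the sum-of-squares identity \cref{eq:dim_identity_for_symmetric_group}. Every irrep dimension stated earlier has the form $d_\lambda=f^\lambda\cdot N_{|\lambda|}$, with $N$ as above; for the walled Brauer algebra it has the form $f^\lambda f^\mu\cdot N$. Therefore
\begin{equation}
\sum_{|\lambda|=k} d_\lambda^2 = N_k^2\sum_{|\lambda|=k}(f^\lambda)^2 = N_k^2\, k!.
\end{equation}
For the walled case the analogous sum is $N^2\,(r-k)!\,(s-k)!$. This matches the diagram count, because the wall-respecting matchings of propagating lines form $S_{r-k}\times S_{s-k}$. In that case, $k$ in the theorem should be read as $|(\lambda,\mu)|=n-2k'$ with $k'$ the number of arcs per row.

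We expect the main obstacle to be the bookkeeping for $P_{n-\frac12}(d)$ and for the walled algebra. We must check that the gluing bijection produces exactly the allowed diagrams and that the conventions on $\pn$ match the indexing of the irreps. For $P_{n-\frac12}(d)$, the distinguished block must be handled as an always-present, unmarked propagating block. We will verify this against small cases, for example $n=1$, where it gives the Bell-number totals.
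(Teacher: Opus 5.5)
Your proposal is correct and follows essentially the same route as the paper's proof. The paper also counts diagrams with $\pn(D)=k$ via marked top and bottom halves (set partitions with $k$ marked blocks, the distinguished $\{n,n'\}$ block for $P_{n-\frac12}(d)$, Brauer half-matchings, or wall-crossing arcs) together with a matching of the propagating blocks. It then compares this with $d_\lambda = f^\lambda N_{|\lambda|}$ and $\sum_{|\lambda|=k}(f^\lambda)^2 = k!$, or the product version $(r-k)!\,(s-k)!$ for $B_{r,s}(d)$. One minor bookkeeping point: for $B_n(d)$ you should state explicitly that $N_k=0$ when $n-k$ is odd, which matches the absence of irreps of that parity.
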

\begin{proof}
    We begin by giving a general way for constructing a partition diagram $D \in \mathcal{D}(P_n(d))$ with $\pn(D) = k$. First, choose a partition of the top $n$ vertices into $i \ge k$ blocks. There are $\stirlingii{n}{i}$ choices. Among these $i$ blocks, choose the $k$ blocks that will become propagating; this can be done in $\binom{i}{k}$ ways. Similarly, choose a partition of the bottom $n$ vertices into $j \ge k$ blocks, which can be done in $\stirlingii{n}{j}$ ways, and then choose which $k$ of those $j$ blocks will be propagating; this can be done in $\binom{j}{k}$ ways. 
    
    Now, form the propagating blocks by pairing the chosen $k$ top-blocks with the chosen $k$ bottom-blocks via a bijection. There are $k!$ possible bijections. 

    This construction uniquely specifies a diagram $D$ with propagating number $k$ such that $i$ blocks intersect the top row and $j$ blocks intersect the bottom row. Therefore, for fixed $i$ and $j$ the number of diagrams with $\pn(D) = k$ is 
    \begin{equation}
        \label{eq:initial_partition_count_level_k}
    k!\,\binom{i}{k}\binom{j}{k}\,\stirlingii{n}{i}\,\stirlingii{n}{j}.
    \end{equation}
    Summing over all $i,j$ with $k\le i,j\le n$ gives
    \begin{equation}
     |\{D \in \mathcal{D}(P_n(d)): \pn(D) = k\}|
    =
    k!\sum_{i=k}^{n}\sum_{j=k}^{n}
    \binom{i}{k}\binom{j}{k}\, \stirlingii{n}{i}\,\stirlingii{n}{j},
    \end{equation}
    Also, 
    \begin{equation}
        \label{eq:sum_of_squares_of_dims_for_partition}
         \sum_{\lambda \in \wh{A}: |\lambda| = k} d_\lambda^2 =  \sum_{|\lambda| = k} \left(f^\lambda\sum_{l = k}^n \stirlingii{n}{l}\binom{l}{k} \right)^2 =  \left(\sum_{l = k}^n \stirlingii{n}{l}\binom{l}{k}\right)^2\sum_{|\lambda| = k} \left(f^\lambda\right)^2 = \left(\sum_{l = k}^n \stirlingii{n}{l}\binom{l}{k}\right)^2 k!
    \end{equation}
    with the first equality by \cref{eq:irreps_of_parition_algebra}, and the last by \cref{eq:dim_identity_for_symmetric_group}. Comparing the previous two equations, we see that the claimed equality holds for $P_n(d)$. For $P_{n-\frac12}(d)$, we can repeat almost the same analysis, except that the block containing $n$ and $n^\prime$ is not counted as propagating. Thus, if the top row is partitioned into $l+1$ blocks and the bottom row into $m+1$ blocks, then among the $l$ remaining top-blocks and the $m$ remaining bottom-blocks we choose $k$ propagating blocks, and pair them by a bijection. Therefore the number of diagrams with $\pn(D)=k$ is
    \begin{equation}
    k!\sum_{l=k}^{n-1}\sum_{m=k}^{n-1}
    \binom{l}{k}\binom{m}{k}\stirlingii{n}{l+1}\stirlingii{n}{m+1}.
    \end{equation}
    The remainder of the argument is identical to the $P_n(d)$ case.
    
    Now, we move on to the Brauer algebra $B_n(d)$. We can construct any $D \in \mathcal{D}(B_n(d))$ with $\pn(D) = k$ as follows. First, choose one of $\binom{n}{k}$ possible subsets of vertices on the top row to be in propagating blocks, and do the same on the bottom. Form the propagating blocks by choosing one of $k!$ possible bijections between these sets of $k$ vertices. Then, there are $n - k$ vertices in both the top row and the bottom row belonging to non-propagating blocks (of size 2). Within each row, these vertices can be paired up in $(n - k - 1)!!$ ways. Hence, 
    \begin{equation}
     |\{D \in \mathcal{D}(B_n(d)): \pn(D) = k\}|
    =
    k!\sum_{i=k}^{n}\sum_{j=k}^{n}
    \binom{i}{k}\binom{j}{k}((n - k - 1)!!)^2
    \end{equation}
    A calculation analogous to \cref{eq:sum_of_squares_of_dims_for_partition} proves the result for $B_n(d)$. 
    
    Finally, for $B_{r,s}(d)$, let $\ell = \frac{n-k}{2}$, where $n=r+s$. If $n-k$ is odd, then no such diagrams exist, and no irreps with $n-k$ total boxes exist. Otherwise, in each row we choose $\ell$ vertices on the left and $\ell$ vertices on the right to belong to non-propagating blocks, and pair them by a bijection. This can be done in
    \begin{equation}
        \binom{r}{\ell}\binom{s}{\ell}\ell!
    \end{equation}
ways per row. The remaining $r-\ell$ left vertices and $s-\ell$ right vertices in the top row must then be matched vertically to the corresponding vertices in the bottom row, which can be done in $(r-\ell)!(s-\ell)!$ ways. Therefore,
\begin{equation}
    |\{D \in \mathcal{D}(B_{r,s}(d)) : \pn(D)=k\}|
    =
    \left(\binom{r}{\ell}\binom{s}{\ell}\ell!\right)^2 (r-\ell)!(s-\ell)!.
\end{equation}
We confirm this matches the sum of squares of dimensions of $n-2\ell$ box irreps: 
\begin{align}
    \sum_{\rho \in \wh{A} : |\rho| = n - 2\ell} d_\rho^2
    = \sum_{\substack{|\lambda| = r-\ell \\ |\mu| = s-\ell}}
    \left(\ell!\binom{r}{\ell}\binom{s}{\ell}f^\lambda f^\mu\right)^2 \\
    = \left(\ell!\binom{r}{\ell}\binom{s}{\ell}\right)^2
    \left(\sum_{|\lambda| = r-\ell}(f^\lambda)^2\right)
    \left(\sum_{|\mu| = s-\ell}(f^\mu)^2\right)
    = \left(\ell!\binom{r}{\ell}\binom{s}{\ell}\right)^2 (r-\ell)!(s-\ell)!
\end{align}
\end{proof}

\begin{theorem}
    Assume $A \in \{P_n(d), P_{n - \frac 12}(d),  B_n(d), B_{r,s}(d)\}$, and $\lambda \in \wh{A}$. For any diagram $D \in \mathcal{D}(A)$ with $\pn(D) < |\lambda|$, $\lambda(D) = 0$.
\end{theorem}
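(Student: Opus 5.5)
We argue via the cellular (filtration) structure of these algebras. For each $k$, let $J_k \subseteq A$ denote the span of all diagrams $D \in \mathcal{D}(A)$ with $\pn(D) \le k$.

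\textbf{Step 1: $J_k$ is a two-sided ideal.} By \cref{lem:propgation_number_sub_mult}, for any diagrams $D_1, D_2$ we have $\pn(D_1 D_2) \le \min\{\pn(D_1),\pn(D_2)\}$. Hence $A J_k A \subseteq J_k$. The same holds in each subalgebra $P_{n-\frac12}(d)$, $B_n(d)$, $B_{r,s}(d)$, since these are spanned by subsets of diagrams closed under multiplication (for $P_{n-\frac12}(d)$, using the convention that the block $\{n,n'\}$ is not counted).

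\textbf{Step 2: $\dim J_k$ equals the total $\dim^2$ of irreps with at most $k$ boxes.} Since $A$ is semisimple, every two-sided ideal is a direct sum of some of the Wedderburn blocks $M_{d_\rho}(\mathbb{C})$. Concretely, $J_k = \bigoplus_{\rho \in S_k} \text{span}\{E^\rho_{ij}\}$ for a unique set $S_k \subseteq \wh{A}$, and an irrep $\rho$ vanishes on $J_k$ exactly when $\rho \notin S_k$. Summing \cref{thm:prop_number_k_equals_irrep_dimension_at_level_k} over levels $\le k$ gives
\begin{equation}
\dim J_k = \sum_{\rho : |\rho| \le k} d_\rho^2 .
\end{equation}
This is the correct total dimension, but it does not yet say which blocks make up $J_k$.

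\textbf{Step 3 (main obstacle): identifying $S_k$.} We must show $S_k = \{\rho : |\rho| \le k\}$. Induction on $k$ reduces this to showing that every irrep $\lambda$ with $|\lambda| = k$ is nonzero on $J_k$ (equivalently, lies in $S_k$). Once that holds, $S_{k-1} \subseteq S_k$, and the two dimension counts at levels $k-1$ and $k$ force $S_k \setminus S_{k-1}$ to be exactly the $k$-box irreps. In particular, a $\lambda$ with $|\lambda| > k$ is not in $S_k$, so $\lambda(J_k) = 0$. Taking $k = \pn(D)$ yields the theorem.

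The key input is therefore nonvanishing. We get it from the standard construction of the irreps as cell modules~\cite{halverson2004partitionalgebras}: the module $V^\lambda$ is realized as $J_k/J_{k-1}$-type half diagrams with $k$ propagating strands, tensored with the Specht module of $S_k$. On this realization, a diagram of propagating number exactly $k$, namely a permutation of the $k$ propagating strands padded by fixed non-propagating parts, acts nontrivially. So $\lambda(J_k) \ne 0$.

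If we prefer to avoid importing the cellular construction, an alternative route is induction on $n$ along the Bratteli chain using the branching rules of \cref{thm:branching_rules}. In that approach, one factors $D = w_1 D' w_2$ as in Step 1 of the algorithm and uses the fact that restricting $\lambda$ changes the box count by at most one per step. However, the ideal-counting argument above is cleaner, and we would present that one.
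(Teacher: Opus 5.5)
Your proof is correct, but it runs in the opposite direction from the paper's. The paper proves vanishing directly. With $J_r$ the span of diagrams of propagating number $<r$, it forms $y_\lambda = x_\lambda p_{r+1}\cdots p_n$ (with $e_{r+1}e_{r+3}\cdots e_{n-1}$ or $f_{s+1}\cdots f_{s+k}$ in place of the $p$'s for the Brauer and walled Brauer algebras). It shows that the image of $y_\lambda$ in $A/J_r$ is a minimal idempotent generating a copy of $V^\lambda$. Hence $V^\lambda$ is a module over $A/J_r$, and every $D \in J_r$ acts by zero.

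You instead use that two-sided ideals of a semisimple algebra are sums of Wedderburn blocks. You get $\dim J_k$ from \cref{thm:prop_number_k_equals_irrep_dimension_at_level_k}, which is proved combinatorially, so there is no circularity. You then reduce vanishing to \emph{nonvanishing} of each $\lambda$ on $J_{|\lambda|}$ plus a dimension count, and that reduction is sound.

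Your route buys a sharper, uniform statement: $J_k = \bigoplus_{|\rho|\le k}\mathrm{span}\{E^\rho_{ij}\}$ exactly, i.e.\ $\lambda(J_k)\neq 0$ if and only if $|\lambda|\le k$. This immediately shows that $\wt{\ft_A}$ maps the span of diagrams with $\pn \le k$ onto $\bigoplus_{i\le k} S_i$. The paper re-derives that fact by a separate Gram-matrix and counting induction in the proof of \cref{thm:fourier_concentration_on_propagating_number}.

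The cost is that your nonvanishing input, justified through the cell-module realization, already contains the theorem. In the cell module for $|\lambda| = k$, a diagram acts by zero on any half diagram whenever the product has fewer than $k$ propagating blocks. So once you import that realization, $\lambda(D)=0$ for $\pn(D)<k$ is immediate, and the counting step is not needed for this particular statement. Both arguments ultimately rest on the same external input: that the abstract label $\lambda\in\wh{A}$ coincides with the irrep produced from the symmetric-group quotient $A/J$. That is Halverson--Ram's idempotent construction, or equivalently the cellular structure.
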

\begin{proof}
   For $P_n(d)$ and $P_{n - \frac12}(d)$, the proof is implicit in Proposition~2.43 of~\cite{halverson2004partitionalgebras}. To generalize it, we unroll the proof for $P_n(d)$, and give the changes required to derive the result for the other algebras. 

For any algebra $A$, $x \in A$ is an \textit{idempotent} if $x^2 = x$.\ A \textit{minimal idempotent} is an idempotent of $A$ satisfying $xAx = \mathbb{C}x$, i.e.\ $xax$ is a scalar multiple of $x$ for all $a \in A$. We say that two minimal idempotents are isomorphic if $Ax_1 \cong Ax_2$. For any finite-dimensional semisimple algebra $A$, equivalence classes of minimal idempotents correspond to the equivalence classes in $\wh{A}$: for any irrep $\lambda$, $V^\lambda$ is isomorphic to $Ax_{\lambda}$ for some minimal idempotent $x_\lambda$, and vice versa (see the discussion following \cite{halverson2004partitionalgebras}, Equation~2.31).

Next, let $J_r \subseteq P_n(d)$ be the subalgebra spanned by all partition diagrams with propagating number less than $r$. By~\cref{lem:propgation_number_sub_mult}, $J_r$ is a two-sided ideal of $P_n(d)$, so the quotient algebra $P_n(d)/J_r$ is well-defined for any $n \ge r$. We write $\overline{a} \coloneqq a + J_r$ for the image of $a$ in the quotient. When $n=r$, Proposition~2.11 of~\cite{halverson2004partitionalgebras} gives $P_r(d)/J_r \cong \mathbb{C}[S_r]$. Next, define the inclusion
\begin{equation}
i: P_r(d) \hookrightarrow P_n(d),
\qquad
i(a)=a \cdot p_{r+1}p_{r+2}\cdots p_n.
\end{equation}
and let $y_\lambda = i(x_\lambda)$.\ Note that for every diagram appearing in the support of \(y_\lambda a y_\lambda\), all vertices to the right of the \(r\)th column must lie in singleton blocks. Therefore, $y_\lambda a y_\lambda = i(x_\lambda b x_\lambda)$ for some $b \in P_r(d)$, and so $\overline{y_\lambda}$ is a minimal idempotent of $P_n(d)/J_r$:
\begin{equation}
\overline{y_\lambda a y_\lambda} = \overline{i(x_\lambda b x_\lambda)} = \overline{i(c x_\lambda)} =  \overline{cy_\lambda}
\end{equation}
The second equality holds because $\overline{x_\lambda}$ is a minimal idempotent of $P_r(d)/J_r$, and the third by the definition of $y_\lambda$. Since $\overline{y_\lambda}$ is a minimal idempotent, 
\begin{equation}
(P_n(d)/J_r)\overline{y_\lambda} \cong V^\lambda
\end{equation}
where $\lambda$ is an $r$-box irrep of $\mathbb{C}[S_r]$. Using $\Phi$ to denote this isomorphism, it follows that
\begin{equation}
    (\Phi \circ L_D)(\overline{ay_\lambda}) = (\lambda(D) \circ \Phi)(\overline{ay_\lambda})
\end{equation}
where $L_D(\overline{ay_\lambda}) = \overline{Day_\lambda}$ is left multiplication in the quotient algebra. For any $D \in J_r, \overline{Day_\lambda} = \overline{0}$, so $L_D = 0$. Since $\Phi$ is an isomorphism, $\lambda(D) = 0$ as well. Hence, if $|\lambda|=r$, then $\lambda(D)=0$ whenever $\pn(D)<r$.
    
To generalize to $P_{n - \frac12}(d)$, we make the following changes. First, we restrict $J_{r-1}$ to be all half partition diagrams with propagating number less than $r-1$, so that $P_{r - \frac12}(d)/J_{r-1} \cong \mathbb{C}[S_{r-1}]$ (Proposition 2.11,~\cite{halverson2004partitionalgebras}). Next, we modify the domain and codomain of $i$, and relabel the $r$th and $n$th columns when performing the embedding:
\begin{equation}
    i: P_{r - \frac12}(d) \hookrightarrow P_{n - \frac12}(d),\;\; i(a) =  (r \; n) \cdot  a  \cdot p_{r+1}p_{r+2} \dots p_{n} \cdot (r \; n)
\end{equation}
which is again well defined for every $n \ge r$. Note that irreps of $P_{n - \frac12}(d)$ have at most $n - 1$ boxes. After these modifications, the rest of the proof goes through analogously to the partition algebra. 

For the Brauer algebra $B_n(d)$, we restrict $J_r$ to Brauer diagrams with propagating number less than $r$. Once again, we have $B_r(d)/J_r \cong \mathbb{C}[S_r]$ (Equation 3.9, \cite{MartinDeVisscher2009}). The correct inclusion map becomes 
\begin{equation}
    i: B_r(d) \hookrightarrow B_n(d), \;\; i(a) = ae_{r+1}e_{r+3}\dots e_{n-1}
\end{equation}
where we assume without loss of generality that $n -r$ is even.\footnote{All irreps of the Brauer algebra have $n - 2k$ boxes, so it suffices to prove the statement when $n - r$ is even.} After these modifications, we again follow the proof for the partition algebra. 

Finally, for the walled Brauer algebra $B_{s,t}(d)$, we define $J_{l, r}$ to be the ideal of walled Brauer diagrams with less than $l$ propagating blocks on the left side of the wall, and less than $r$ propagating blocks on the right side of the wall.\ Then, $B_{s,t}(d)/J_{s, t} \cong \mathbb{C}[S_s \times S_t]$ (Proposition 2.3, \cite{cox2007blockswalledbraueralgebra}), which has irreps indexed by pairs of Young diagrams $(\lambda, \mu)$ with $|\lambda| = s$, $|\mu| = t$ since the Fourier transform commutes with a direct product. The inclusion map is modified to 
\begin{equation}
    i: B_{s, t}(d) \hookrightarrow B_{s + k, t + k}(d), \;\; i(a) = af_{s+1}f_{s+2}\dots f_{s + k}
\end{equation}
after which we proceed as above.
\end{proof} 
\section{Proof of Theorem~\ref{thm:fourier_concentration_on_propagating_number}}
\label{sec:proof_of_invariant}
For convenience, we restate~\cref{thm:fourier_concentration_on_propagating_number} below: 
\begin{theorem}
    Let $A \in \{P_n(d), P_{n - \frac12}(d), B_n(d), B_{r,s}(d)\}$ be $\delta$-nice. Let $a = d^{\frac{n - \cc(D)}{2}}D \in \mathcal{B}(A)$, and define $\ket{\wt{a}} = \wt{\ft_A}\ket{a}$.
    
    Define $\Pi_r$ to be the projection onto the subspace $S_r$, spanned by states of the form 
\begin{equation}
    \{\ket{\lambda, i, j}: |\lambda| = r\}
\end{equation}
If $D$ has propagating number $r$, then 
\begin{equation}
    \abs{\braket{\wt{a}|\Pi_r |\wt{a}}_0 - \braket{\wt{a}, \wt{a}}_0} = O(\poly(|A|) \cdot \delta)
\end{equation}
\end{theorem}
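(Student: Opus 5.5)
The argument has two halves. Theorem~\ref{thm:zero_irrep_matrix} kills all weight above level $r$ exactly. A counting argument based on Theorem~\ref{thm:prop_number_k_equals_irrep_dimension_at_level_k}, combined with the near-orthogonality of the vectors $\wt{\ft_A}\ket{a}$, shows that the weight below level $r$ is small.

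\textbf{Step 1 (levels above $r$ vanish).} For every $\lambda$ with $|\lambda| > r = \pn(D)$, Theorem~\ref{thm:zero_irrep_matrix} gives $\lambda(a) = d^{(n-\cc(D))/2}\lambda(D) = 0$. Hence $\ket{\wt a}$ is supported on $\bigoplus_{k\le r} S_k$. It therefore suffices to show that $\sum_{k<r}\|\Pi_k\ket{\wt a}\|^2 = O(\poly(|A|)\,\delta)$.

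\textbf{Step 2 (near-orthonormality of $\{\wt{\ft_A}\ket{a}\}$).} By Theorem~\ref{thm:approx_qft}, $\wt{\ft_A}$ is within $O(\delta|A|^{3/2})$ of $\ft_A$. The Gram matrix of the columns of $\ft_A$ is the inverse (up to conjugation) of the Gram matrix $G$ of the normalized Fourier basis, since $\ft_A$ maps the orthonormal computational basis to coordinates in that basis. By Fact~\ref{lem:operator_norm_of_inverse}, this inverse Gram matrix is $O(\delta)$-close to $I$. Consequently the Gram matrix $G_{\wt{\ft_A}}$ of $\{\wt{\ft_A}\ket{a}\}_{a\in\mathcal{B}(A)}$ satisfies $\|G_{\wt{\ft_A}}-I\|_\infty = O(\poly(|A|)\delta)$. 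Equivalently, $\wt{\ft_A}$ is $O(\poly(|A|)\delta)$-close to a unitary $V$ (take $V$ from the polar decomposition). In particular, $\braket{\wt a,\wt a}_0 = 1\pm O(\poly(|A|)\delta)$.

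\textbf{Step 3 (counting by levels).} Define $K_{\le k}=\mathrm{span}\{\ket{a}:\pn(D_a)\le k\}$. By Step 1, $\wt{\ft_A}$ maps $K_{\le k}$ into $\bigoplus_{j\le k}S_j$. By Theorem~\ref{thm:prop_number_k_equals_irrep_dimension_at_level_k} summed over $j\le k$,
\[
\dim K_{\le k}=\dim\bigoplus_{j\le k}S_j .
\]
Step 2 makes $\wt{\ft_A}$ nearly isometric, so it is injective, and $\wt{\ft_A}(K_{\le k})=\bigoplus_{j\le k}S_j$. Apply this with $k=r-1$. Then $\bigoplus_{j<r}S_j$ is spanned by $\{\wt{\ft_A}\ket{c}:\pn(D_c)<r\}$, and this spanning family is near-orthonormal by Step 2.

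For our $a$ we have $\pn(D)=r$, so $a\notin K_{\le r-1}$. Each inner product $\braket{\wt{\ft_A}c,\wt a}_0$ with $c\in K_{\le r-1}$ is therefore an off-diagonal entry of $G_{\wt{\ft_A}}$, and has size $O(\poly(|A|)\delta)$. Project $\ket{\wt a}$ onto $\bigoplus_{j<r}S_j$ using the dual family of $\{\wt{\ft_A}\ket c\}$. That dual family has Gram matrix $O(\poly(|A|)\delta)$-close to $I$, again by Fact~\ref{lem:operator_norm_of_inverse}. The squared norm of the projection is then at most $(1+O(\poly(|A|)\delta))\sum_c|\braket{\wt{\ft_A}c,\wt a}_0|^2$. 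This sum has at most $|A|$ terms, each $O(\poly(|A|)\delta)^2$, so the total is $O(\poly(|A|)\delta)$ (using $\delta\le 1$ to absorb the square). Combined with Step 1, this gives $\braket{\wt a,\wt a}_0-\braket{\wt a|\Pi_r|\wt a}_0=\sum_{k<r}\|\Pi_k\wt a\|^2=O(\poly(|A|)\delta)$.

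The main obstacle is Step 2. One has to carefully transfer $\delta$-niceness of the Fourier basis, together with Theorem~\ref{thm:approx_qft}, to near-orthonormality of the columns of $\wt{\ft_A}$ in the computational inner product, tracking the normalizations $\|E^\rho_{ij}\|_2$. The dimension-matching in Step 3 also needs injectivity of $\wt{\ft_A}$, which requires $\poly(|A|)\delta<1$. Otherwise the bound is trivial anyway, since both quantities are $O(1)$ in that regime.
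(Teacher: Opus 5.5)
Your proposal is correct and is essentially the paper's proof. Both combine the same three ingredients in the same way: exact vanishing above level $r$ from \cref{thm:zero_irrep_matrix}; near-orthonormality of $\{\wt{\ft_A}\ket{a}\}$; and the dimension count of \cref{thm:prop_number_k_equals_irrep_dimension_at_level_k}, which shows that the images of diagrams with $\pn < r$ span $\bigoplus_{j<r}S_j$, so the lower-level component is controlled by the small off-diagonal Gram entries. Dropping the induction is fine, because the paper's strong induction establishes the basis property at each level by the same independence-plus-dimension argument you use.

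One small slip in Step 2: if $F$ is the matrix whose columns are the normalized Fourier vectors, then $\ft_A = F^\dagger$. Hence $\ft_A^\dagger\ft_A = FF^\dagger$, which is unitarily similar to $G = F^\dagger F$ itself, not to $G^{-1}$. This gives $\|\ft_A^\dagger \ft_A - I\|_\infty \le \delta$ directly, so your conclusion is unchanged.
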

\begin{proof}
    First, we show that $\{\wt{\ft_A}\ket{a}\}_{a \in \mathcal{B}(A)}$ is approximately orthonormal. Since $A$ is $\delta$-nice (\cref{cor:d_nice_algebras}), the Gram matrix $G_{\ft_A} = \ft_A^\dagger \ft_A$ satisfies $||G_{\ft_A} - I||_{\infty} \le \delta$. Defining $K \coloneqq \wt{\ft_A} - \ft_A$, 
    \begin{align}
        G_{\wt{\ft_A}} = \wt{\ft_A}^\dagger \wt{\ft_A} = K^\dagger\ft_A  + \ft_A^\dagger K + G_{\ft_A} + K^\dagger K \\ 
        \implies ||G_{\wt{\ft_A}} - G_{\ft_A}||_{\infty} \le 2\delta ||\ft_A||\cdot |A|^{3/2} + \delta^2 |A|^3 \\
        \implies ||G_{\wt{\ft_A}} - G_{\ft_A}||_{\infty} \le 2\delta \sqrt{1 + \delta} \cdot |A|^{3/2} + \delta^2 |A|^3\\
        \implies ||G_{\wt{\ft_A}} - I||_\infty \le \delta + 2\delta \sqrt{1 + \delta} \cdot |A|^{3/2} + \delta^2 |A|^3 \label{eq:new_gram_matrix}
    \end{align}
    with the first implication by subadditivity and sub multiplicativity of the operator norm, the second and third using the $\delta$-niceness of $A$ (\cref{thm:approx_qft},~\cref{lem:operator_norm_of_inverse}). For conciseness, let $\eta \coloneqq \delta + 2\delta \sqrt{1 + \delta} \cdot |A|^{3/2} + \delta^2 |A|^3$.
    
  Now, we prove the theorem using strong induction on $k$. We will simultaneously prove that
  \begin{equation}
      \mathcal{B}_k = \{\ket{\wt{a}}: {a = d^{({n - \cc(D)})/{2}}D, \;\; \pn(D) \le k}\}
  \end{equation}
  is a basis of $\bigoplus_{i=0}^k S_i$. For the base case $k=0$, \cref{thm:zero_irrep_matrix} implies that $\lambda(D)=0$ whenever $|\lambda|>0$, and therefore $\ket{\wt{a}}=\Pi_0 \ket{\wt{a}}$. Moreover, $\mathcal{B}_0$ forms a basis for $S_0$. This is because the Gram matrix of $S_0$ is a principal submatrix of $G_{\wt{\ft_A}}$, so it has operator-norm distance at most $\eta$ of the identity. $\eta < 1$ by assumption, and so this submatrix is invertible. Hence, $\mathcal{B}_0$ is linearly independent, and therefore a basis by \cref{thm:prop_number_k_equals_irrep_dimension_at_level_k}.

    For the inductive step, assume that for every $k < r$, $\mathcal{B}_k$ is a basis of $\bigoplus_{i=0}^k S_i$.\ Fix some $b = d^{({n - \cc(E)})/{2}}E$ with $\pn(E) =r $.\ By \cref{thm:zero_irrep_matrix}, $\ket{\wt{b}}$ is supported on the subspace $\bigoplus_{i=0}^r S_i$, and $\ket{\wt{b_{\le r - 1}}} \coloneqq \sum_{i=0}^{r-1} \Pi_i\ket{\wt{b}}$ is supported on the subspace $\bigoplus_{i=0}^{r-1} S_i$. Also by the induction hypothesis, 
  \begin{align}
      \ket{\wt{b_{\le r - 1}}} = \sum_{\wt{a} \in \mathcal{B}_{r-1}} c_a \ket{\wt{a}} \\ 
      \implies \abs{\braket{\wt{b}, \wt{b_{\le r - 1}}}_0} \le \sum_{\wt{a} \in \mathcal{B}_{r-1}} \abs{c_a} \cdot  \abs{\braket{\wt{a}, \wt{b}}_0}
  \end{align}
  Expanding $c_a$ using the definition of a Gram matrix, together with \cref{eq:new_gram_matrix} and \cref{lem:operator_norm_of_inverse}, we conclude that
  $\abs{c_a} \le (1 + \eta)/(1 - \eta)$ and $\abs{\braket{\wt{a}, \wt{b}}}_0 \le \eta$. Substituting this into the bound,
\begin{equation}
    \abs{\braket{\wt{b}, \wt{b_{\le r - 1}}}_0} \le \sum_{\wt{a} \in \mathcal{B}_{r-1}} \frac{(1 + \eta)}{1 - \eta} \eta \le |A| \cdot \frac{(1 + \eta)}{1 - \eta} \eta = O(\poly(|A|) \cdot d^{-1/2})
\end{equation}
\cref{thm:zero_irrep_matrix} implies that $\ket{\wt{b}} = \ket{\wt{b_{\le r - 1}}} + \ket{\wt{b_{r}}}$, from which the result follows. Finally, $\mathcal{B}_r$ is a basis for $\bigoplus_{i=0}^r S_i$ via the same argument used to show that $\mathcal{B}_0$ formed a basis for $S_0$. 
\end{proof}

\section{Orthogonal Forms for Diagram Algebras}
\label{sec:appendix_matrix_algebras}
 Here, we prove the statements from~\cref{sec:main_body_orthogonal_form}, as well as an additional structural result about irrep matrices of the partition algebra. Throughout this section, we write $P \sim_i Q$ for two Bratteli paths $P$ and $Q$ if $P(j) = Q(j)$ whenever $j \ne i$. 

\ifanon
\else
 To compute orthogonal forms for each of the diagram algebras below, we have written software to calculate and display the irrep matrices and Fourier basis states in the orthogonal form. 
This software is freely available at \url{https://github.com/Ben-Foxman/semisimple_algebra_visualization}.
\fi
\subsection{Matrix Elements For The Brauer Algebra}
\begin{lemma}[\cite{Nazarov1996Brauer}]
    \label{lem:brauer_subalgebra_matrix_elements}
    \normalfont
    We summarize the results of Section 3 of~\cite{Nazarov1996Brauer}, which computes the matrix elements in the orthogonal form for $B_n(d)$. To begin, for any Bratteli path $P$, define $x_i(P)$ as follows:
    \begin{equation}
        x_i(P) = 
        \begin{cases}
            (d - 1)/2 + \cont(a), & P(i) = P(i - 1) \cup \{a\} \\
            -(d - 1)/2 - \cont(a), & P(i) = P(i - 1) \setminus \{a\} 
        \end{cases}
    \end{equation}
     Let $\Delta_i \coloneqq x_{i+1}(P) - x_i(P)$.

    \paragraph{The Swap Generators.} To compute $\lambda(s_i)$, we consider two cases : $P(i - 1) \ne P(i + 1)$ and $P(i - 1) = P(i + 1)$. In the former case, there is at most one other Bratteli path $Q$ with $P \sim_i Q$. If such a $Q$ exists, then 
    \begin{equation}
        \label{eq:brauer_symmetric_elements_one}
        \lambda(s_i)\ket{P} = \frac{1}{\Delta_i}\ket{P} + \sqrt{1 - \frac{1}{\Delta_i^2}}\ket{Q},\;\; \lambda(s_i)\ket{Q} = \sqrt{1 - \frac{1}{\Delta_i^2}}\ket{P} - \frac{1}{\Delta_i}\ket{Q}
    \end{equation}
    otherwise, 
    \begin{equation}
    \label{eq:brauer_symmetric_elements_two}
        \lambda(s_i)\ket{P} = \frac{1}{\Delta_i}\ket{P}
    \end{equation}
    In the case where $P(i - 1) = P(i + 1)$, we can write the matrix elements of $\lambda(s_i)$ in terms of those for $\lambda(e_i)$:
    \begin{equation}
        \lambda(s_i)_{PQ} = \frac{\lambda(e_i)_{PQ} - \delta_{PQ}}{x_i(P) + x_i(Q)}
    \end{equation}

    \paragraph{The Contraction Generators.} Now, we give the matrix elements for $\lambda(e_i)$. If $P(i-1) \ne P(i + 1)$, then 
    \begin{equation}
     \label{eq:brauer_matrix_form_eq_one}
        \lambda(e_i)\ket{P} = 0
    \end{equation}
    If $P(i - 1) = P(i + 1)$, then 
    \begin{equation}
        \label{eq:brauer_matrix_form_eq_two}
        \lambda(e_i)_{PP} = 
        \begin{cases}
            (2\cdot x_i(P) + 1)\prod_{c_i \ne x_i(P)} \frac{x_i(P) + c_i}{x_i(P) -c_i}, & x_i(P) \ne -1/2 \\
            -\prod_{c_i \ne x_i(P)} \frac{x_i(P) + c_i}{x_i(P) - c_i}, & x_i(P)  = -1/2
        \end{cases}
    \end{equation}
    where $c_i = \pm((d - 1)/2 + \cont(a_i))$, and $a_i$ is a box that could be added or removed from $P(i - 1)$ to form another valid Young diagram, with the sign chosen respectively. The product in \cref{eq:brauer_matrix_form_eq_two} runs over all such boxes, of which there are at most $O(\sqrt{n})$. The off-diagonal entries satisfy 
    \begin{equation}
        \label{eq:brauer_matrix_form_eq_three}
        \lambda(e_i)_{PQ} = \delta_{P \sim_i Q} \sqrt{\lambda(e_i)_{PP} \cdot \lambda(e_i)_{QQ}}
    \end{equation}
\end{lemma}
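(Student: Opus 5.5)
The plan is to run the Okounkov--Vershik style argument with Nazarov's Jucys--Murphy elements for $B_n(d)$,
\begin{equation*}
x_k = \tfrac{d-1}{2} + \sum_{j<k}\bigl((j\;k) - e_{jk}\bigr),
\end{equation*}
where $e_{jk}$ is the contraction between columns $j$ and $k$, so $e_{k-1,k}=e_{k-1}$. I would take as input from \cite{Nazarov1996Brauer} that these elements pairwise commute and generate the Gelfand--Tsetlin subalgebra. On the path basis, each $x_k$ therefore acts diagonally with eigenvalue $x_k(P)$ as defined in the statement: $\pm\bigl((d-1)/2+\cont(a)\bigr)$, according to whether $P(k)$ is obtained from $P(k-1)$ by adding or removing the box $a$. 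I would also use the defining relations
\begin{equation*}
s_i x_i - x_{i+1}s_i = e_i - 1,\qquad e_i(x_i+x_{i+1}) = 0 = (x_i+x_{i+1})e_i,\qquad e_i^2 = d\,e_i .
\end{equation*}
Both $s_i$ and $e_i$ commute with $B_{i-1}(d)$ and with every $x_j$ for $j\notin\{i,i+1\}$. Hence $\lambda(s_i)$ and $\lambda(e_i)$ preserve the span of paths $Q$ with $Q\sim_i P$, and everything reduces to computing small blocks indexed by the possible values of $Q(i)$.

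\textbf{Case $P(i-1)\neq P(i+1)$.} Here $Q(i)$ can take at most two values, so the block has dimension at most $2$. First I would show $\lambda(e_i)\ket{P}=0$. The image of $e_i$ lies in $B_{i-1}(d)\,e_i$, which through $i+1$ only supports paths that return to $P(i-1)$ at step $i+1$; so on this block $e_i$ vanishes. With $e_i=0$ the first relation reduces to $s_i x_i - x_{i+1}s_i=-1$. Taking the $(P,P)$ entry gives $\lambda(s_i)_{PP}(x_i(P)-x_{i+1}(P))=-1$, that is $\lambda(s_i)_{PP}=1/\Delta_i$, and the same computation at $Q$ gives $-1/\Delta_i$. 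In the orthogonal form (\cref{eq:orthogonal_form}) $\lambda(s_i)$ is real symmetric, and $s_i^2=1$ makes it orthogonal. This forces the off-diagonal entry to be $\sqrt{1-1/\Delta_i^2}$, after fixing its sign by the choice of scalars $\alpha_P$. If $Q$ does not exist, the relation gives \cref{eq:brauer_symmetric_elements_two} directly.

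\textbf{Case $P(i-1)=P(i+1)$.} Here $e_i(x_i+x_{i+1})=0$ together with $x_{i+1}(P)=-x_i(P)$ is consistent with $e_i$ acting nontrivially on the block. Moreover $e_iB_{i+1}(d)e_i=B_{i-1}(d)e_i$, so on this block $e_i$ has rank one. Being self-adjoint in the orthogonal form, it must be of the form $vv^{\top}$. This immediately yields \cref{eq:brauer_matrix_form_eq_three}, $\lambda(e_i)_{PQ}=\sqrt{\lambda(e_i)_{PP}\lambda(e_i)_{QQ}}$, again with signs absorbed into the normalization. The formula for $s_i$ then comes from taking the $(P,Q)$ entry of $s_ix_i-x_{i+1}s_i=e_i-1$ and using $x_{i+1}(P)=-x_i(P)$:
\begin{equation*}
\lambda(s_i)_{PQ}\bigl(x_i(Q)+x_i(P)\bigr)=\lambda(e_i)_{PQ}-\delta_{PQ}.
\end{equation*}

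\textbf{Main obstacle: the diagonal entries $\lambda(e_i)_{PP}$.} These are the weights $|v_P|^2$. I would compute them through Nazarov's generating-function identity
\begin{equation*}
e_i\,(u-x_i)^{-1}e_i = W_{i}(u)\,e_i ,
\end{equation*}
where $W_i(u)$ depends only on $P(i-1)$. Nazarov evaluates $W_i(u)$ recursively as a product over the addable and removable boxes of $P(i-1)$, namely
\begin{equation*}
\Bigl(u+\tfrac12\Bigr)\prod_{c}\frac{u+c}{u-c} ,
\end{equation*}
up to normalization, with $c=c_i$ as in the statement. Expanding the left side in the eigenbasis of $x_i$ gives $\sum_Q \lambda(e_i)_{QQ}/(u-x_i(Q))$. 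Thus each $\lambda(e_i)_{PP}$ is the residue of $W_i$ at $u=x_i(P)$, which produces \cref{eq:brauer_matrix_form_eq_two}. The special case $x_i(P)=-1/2$ is exactly where the factor $(2u+1)$ cancels a pole, which explains the separate formula there. Verifying the recursion for $W_i$, by conjugating with $x_{i-1}$ and using $e_{i-1}$, is where the real work lies; I would cite \cite{Nazarov1996Brauer} for it rather than rederive it.
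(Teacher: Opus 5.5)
The paper gives no argument for this lemma; it only quotes Section 3 of Nazarov. Your skeleton is Nazarov's own route, and most of it is fine:
\begin{itemize}
\item the reduction to blocks of paths $Q\sim_i P$;
\item the vanishing of $\lambda(e_i)$ when $P(i-1)\neq P(i+1)$;
\item the swap entries, from $s_ix_i-x_{i+1}s_i=e_i-1$ together with $s_i^2=1$;
\item the rank-one structure of $\lambda(e_i)$ on a block with $P(i-1)=P(i+1)$. To get the form $vv^{\top}$ you need $e_i^2=de_i$ with $d>0$, not only self-adjointness.
\end{itemize}
One caveat on signs: they cannot be fixed separately for each $i$, because the same scalars $\alpha_P$ serve every $i$. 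That all the square roots can be made positive simultaneously is a property of the representation. Nazarov establishes it through his explicit seminormal basis, and your argument does not give it.

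The genuine gap is in the step that carries the content, the diagonal entries $\lambda(e_i)_{PP}$. The generating function you quote is wrong, and the discrepancy is not a normalization.
\begin{itemize}
\item Take $P(i-1)=\emptyset$. The block is one-dimensional with $x_i=\tfrac{d-1}{2}$ and $\lambda(e_i)=d$, so $W_i(u)=d/\bigl(u-\tfrac{d-1}{2}\bigr)$. By contrast, $(u+\tfrac12)\frac{u+c}{u-c}$ with $c=\tfrac{d-1}{2}$ has residue $\tfrac{d(d-1)}{2}$.
\item In general, the residue of $(u+\tfrac12)\prod_c\frac{u+c}{u-c}$ at $u=x$ is $x(2x+1)\prod_{c\neq x}\frac{x+c}{x-c}$. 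This is off from the stated formula by the path-dependent factor $x=x_i(P)$.
\end{itemize}
With these conventions, the identity you need is
\[
u\,W_i(u)+u-\tfrac12=\Bigl(u+\tfrac12\Bigr)\prod_c\frac{u+c}{u-c}.
\]
This $W_i$ is regular at $u=0$, because the number of $c$'s is odd. Its leading term is $d/u$, because addable and removable boxes have equal content sums. Its residue at $u=x\neq0$ is exactly $(2x+1)\prod_{c\neq x}\frac{x+c}{x-c}$.

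Your explanation of the exceptional case $x_i(P)=-\tfrac12$ is also wrong.
\begin{itemize}
\item If $-\tfrac12$ occurs once among the $c$'s, the factor $u+\tfrac12$ cancels the pole and the residue is $0$. The generic formula already gives $0$ there, since $2x+1=0$.
\item The separate formula arises when two of the $c$'s coincide at $-\tfrac12$. This happens when a removable box of content $1-d/2$ sits directly above an addable box of content $-d/2$, e.g. $P(i-1)=(1,1)$ with $d=4$.
\item In that case a simple pole with residue $-2\prod_{c\neq x}\frac{x+c}{x-c}$ survives. It is shared by two paths with the same $x_i$-eigenvalue, so expanding in the eigenbasis of $x_i$ gives only their sum. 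Splitting it between the two paths needs an extra argument.
\end{itemize}
This configuration cannot occur for $d>2n$, so it is irrelevant in the paper's regime $d\gg\poly(|A|)$. It is, however, part of the statement as written.
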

As a corollary to~\cref{lem:brauer_subalgebra_matrix_elements}, the matrix entries in the orthogonal form can be computed efficiently: 
\begin{corollary} \label{cor:efficient_computation_brauer_matrix_elements}
    There is a quantum algorithm which, for any irrep $\lambda \in \wh{B_n(d)}$, implements the isometry
    \begin{equation}
        \ket{s_i, \lambda, P} \mapsto \ket{s_i, \lambda, P} \otimes \bigotimes_{Q: \lambda(s_i)_{QP} \ne 0} \ket{Q(i), \lambda(s_i)_{QP}}
    \end{equation}
    up to operator norm error $O(\varepsilon \cdot |B_n(d)|)$ using $\wt{O}(n \cdot (n + \log d + \log(1/\varepsilon)))$ gates.
\end{corollary}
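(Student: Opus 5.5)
The plan is to read off, for fixed $(\lambda,P)$ and swap $s_i$, the nonzero entries of the $P$-column of $\lambda(s_i)$ from \cref{lem:brauer_subalgebra_matrix_elements}. Then I would compute each entry reversibly to $O(\log(1/\varepsilon))$ bits and uncompute garbage. The structural point is that every $Q$ with $\lambda(s_i)_{QP}\neq 0$ satisfies $Q\sim_i P$. Two facts should give this. First, in the case $P(i-1)\ne P(i+1)$, \cref{eq:brauer_symmetric_elements_one}--\cref{eq:brauer_symmetric_elements_two} only involve $P$ and the unique $Q\sim_i P$. Second, in the case $P(i-1)=P(i+1)$, $\lambda(e_i)_{PQ}$ vanishes unless $P\sim_i Q$ by \cref{eq:brauer_matrix_form_eq_three}, so $\lambda(s_i)_{PQ}$ does too. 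Hence each such $Q$ is determined by $P$ together with $Q(i)$, which justifies the encoding $\mathrm{loc}(Q)=Q(i)$. There are at most $O(\sqrt n)$ such $Q$, since $Q(i)$ is obtained from $P(i-1)$ by adding or removing a box.

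The algorithm then proceeds in the following steps.

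\textbf{Step 1.} Read $P(i-1),P(i),P(i+1)$ from the path register, where each is a Young diagram with $\le n$ boxes, and decide which case holds. This costs $\wt O(n)$.

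\textbf{Step 2.} Enumerate the addable and removable boxes $a$ of $P(i-1)$, at most $O(\sqrt n)$ of them, and from these enumerate the candidate $Q(i)$. For the case $P(i-1)\ne P(i+1)$, check whether the alternative middle diagram is valid, meaning it is obtained from $P(i-1)$ by one box move and yields $P(i+1)$ by one box move.

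\textbf{Step 3.} For each candidate, compute the contents and hence $x_i,x_{i+1}$. These are rationals of magnitude $O(d+n)$, so this takes $O(\log d+\log n)$-bit arithmetic. Then compute $\Delta_i$ and $1/\Delta_i$, together with $\sqrt{1-1/\Delta_i^2}$ or the quantities $\lambda(e_i)_{PP}$, $\lambda(e_i)_{QQ}$, their geometric mean, and the final quotient by $x_i(P)+x_i(Q)$.

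\textbf{Step 4.} Write the outputs $\ket{Q(i),\lambda(s_i)_{QP}}$ into fresh registers, then run Steps 1--3 in reverse to clean the workspace.

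For the cost, each diagonal entry $\lambda(e_i)_{QQ}$ is a product of $O(\sqrt n)$ ratios. Computing one to relative precision $\varepsilon'$ needs $O(\sqrt n)$ multiplications and divisions at $O(\log d+\log n+\log(1/\varepsilon'))$ bits. Fast integer arithmetic makes each operation $\wt O(\log d+\log(1/\varepsilon)+n)$ up to polylog factors, with a generous precision allowance absorbing the product's growth and $\log n$ terms. Over $O(\sqrt n)$ candidates, that is $O(n)$ arithmetic operations in total. Reading and comparing Young diagrams of size $n$ also costs about $n$ per step, which gives the claimed $\wt O(n\cdot(n+\log d+\log(1/\varepsilon)))$. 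To reduce the $\lambda(e_i)_{QQ}$ products, I would compute the product over all addable and removable boxes once and divide out the excluded factor per candidate.

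\textbf{Main obstacle: numerical precision.} The ratios $(x+c)/(x-c)$ and the denominator $x_i(P)+x_i(Q)$ could be small or large, and I would bound the error propagation there. Because $d\gg n$, the $x$ values are of order $d$ while their differences are $O(n)$. The first thing to try is to argue that each $(x+c)/(x-c)$ has magnitude at most $O(d)$ and at least $\Omega(1/d)$, and that $x_i(P)+x_i(Q)$ is a nonzero integer or half-integer by semisimplicity. Then $O(\log d+\log n)$ extra bits of working precision suffice to get each output entry to additive error $\varepsilon$. Finally, I would convert the per-entry error into operator norm error. The map differs from the ideal isometry only in the value registers, so per-basis-state error $\varepsilon$ summed over entries gives a bound of $O(\varepsilon\cdot|B_n(d)|)$ by \cref{lem:opnorm_bound}.
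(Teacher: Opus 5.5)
Your proposal is correct and follows essentially the same route as the paper. You locate the $O(\sqrt n)$ paths $Q\sim_i P$ via the case split on $P(i-1)$ versus $P(i+1)$, compute each entry from Nazarov's formulas using $O(\sqrt n)$ arithmetic operations of cost $\wt O(n+\log d+\log(1/\varepsilon))$ each, uncompute, and convert per-entry additive error into operator norm error via \cref{lem:opnorm_bound}. One aside: your shortcut of computing the product over all boxes once and dividing out the excluded factor does not work, because every factor $(x_i(Q)+c_i)/(x_i(Q)-c_i)$ depends on $x_i(Q)$ and the excluded factor has zero denominator. Your cost bound does not rely on this shortcut, so the argument stands without it.
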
 
\begin{proof}
    For any irrep $\lambda$, there are at most $O(\sqrt{n})$ Bratteli paths for which $\lambda(s_i)_{QP} \ne 0$, since any $n$-box Young diagram has at most $O(\sqrt{n})$ addable/removable boxes.\ To determine the set of nonzero indices, first check if $P(i - 1) \ne P(i + 1)$, which can be done using $\wt{O}(n)$ gates. If so, there is at most one other valid Bratteli path. Otherwise, there are at at most $O(\sqrt{n})$ Bratteli paths. In the latter case, storing all such paths (or more precisely, all the $i$th level irreps of these paths) requires $\wt{O}(n^{3/2})$ gates. At this point, we have the state
    \begin{equation}
        \ket{s_i, \lambda, P} \otimes \bigotimes_{Q: \lambda(s_i)_{QP} \ne 0} \ket{Q(i)}.
    \end{equation}
    To compute the entries $\lambda(s_i)_{QP}$, we again condition on $P(i -1) = P(i + 1)$: 
    \begin{enumerate}
        \item $P(i -1) \ne P(i + 1)$: First, we first compute $\Delta_i$. This requires $O(n + \log d)$ gates, with $O(n)$ to compute the content and $O(\log d)$ to compute the final fraction. The gate cost to invert $\Delta_i$ (a $\wt{O}(\log d)$-bit number) to additive error $\varepsilon$ is $\wt{O}(\log d + \log (1/\varepsilon))$~\cite{BrentZimmermann2010, HarveyVanDerHoeven2021}, as is computing the subsequent squares and square roots. The overall gate complexity in this case is $\wt{O}(n + \log d + \log(1/\varepsilon))$.
        \item $P(i -1) = P(i + 1)$: To compute $\lambda(s_i)_{QP}$, we first compute $\lambda(e_i)_{QP}$, which requires computing $\lambda(e_i)_{PP}$ and $\lambda(e_i)_{QQ}$. Determining the set of valid $c_i$'s in~\cref{eq:brauer_matrix_form_eq_two} takes $\wt{O}(n)$ time by iterating over cells in $P(i - 1)$. Then, computing each numerator and denominator requires $O(n + \log d)$ gates as in the first case, and multiplying all of the (at most $O(\sqrt{n})$) numerators and denominators together requires $\wt{O}(\sqrt{n}\log d)$ gates. Taking their ratio to error $\varepsilon$ requires another $\wt{O}(\log d + \log(1/\varepsilon))$ gates. Once $\lambda(e_i)_{PP}$ and $\lambda(e_i)_{QQ}$ are computed, $\lambda(e_i)_{QP}$ and then $\lambda(s_i)_{QP}$ can also be computed with $\wt{O} (\log d + (\log(1/\varepsilon)))$ gates. An upper bound on the gate complexity in this case is $\wt{O} (\sqrt{n} \cdot (n + \log d + (\log(1/\varepsilon)))$.
    \end{enumerate}
    Since we repeat for all $Q$, the total gate count for computing all $O(\sqrt{n})$ matrix entries is $\wt{O} (n\cdot (n + \log d + \log(1/\varepsilon)))$.\ Combining this gate count with the count of determining the valid Bratteli paths gives the bound. Note that all intermediate steps can be uncomputed at a cost of a constant factor increase in the gate count. Since the action of this isometry on any basis state is implemented to additive error $\varepsilon$,~\cref{lem:opnorm_bound} implies a total operator norm error of $O(\varepsilon \cdot |B_n(d)|)$. 
\end{proof}

\subsection{Matrix Elements For The Walled Brauer Algebra}
\begin{lemma}[\cite{grinko2023gelfandtsetlinbasispartiallytransposed}] \normalfont The matrix elements of the orthogonal form for the walled Brauer algebra are similar to those of the Brauer algebra, with a few definitions modified. These are computed by Grinko, Burchardt, and Ozols in Section 3 of~\cite{grinko2023gelfandtsetlinbasispartiallytransposed}. Although~\cite{grinko2023gelfandtsetlinbasispartiallytransposed} define the orthogonal form with respect to the chain 
\begin{equation}
    B_{0, 0}(d) \subseteq B_{1, 0}(d) \subseteq \dots B_{r,0}(d) \subseteq B_{r,1}(d) \subseteq \dots \subseteq B_{r,s}(d)
\end{equation}
as opposed to our oscillating chain in~\cref{eq:walleD_brauer_chain}, the derivation of the orthogonal form in~\cite{grinko2023gelfandtsetlinbasispartiallytransposed} \textit{depends only on branching rules at each step}, with the difference in the final matrix elements reflected in the set of valid Bratteli paths (which will change the set $\{Q: P \sim_i Q\}$ appearing in some of the formulas below). Since our branching rules are the same, we can still apply the formulas in~\cite{grinko2023gelfandtsetlinbasispartiallytransposed}. 

For a Bratteli path $P$, define $x_i(P)$ as follows: 
\begin{equation}
    x_i(P) = 
    \begin{cases}
        \cont(a) & \text{if $i \le r$, $P(i - 1) = (\lambda, \emptyset)$, $P(i) = (\lambda \cup \{a\}, \emptyset) $} \\ 
        d + \cont(a) & \text{if $i > r$, $P(i - 1) = (\lambda, \mu)$, $P(i) = (\lambda, \mu \cup \{a\}) $} \\
        -\cont(a) & \text{if $i > r$, $P(i - 1) = (\lambda, \mu)$, $P(i) = (\lambda \setminus \{a\}, \mu) $} \\ 
    \end{cases}
\end{equation}
and define $\Delta_i \coloneqq x_{i+1}(P) - x_i(P)$ as before.

\paragraph{The Swap Generators.} To compute $\rho(s_i) = (\lambda, \mu)(s_i)$, we use a direct analogy of \cref{eq:brauer_symmetric_elements_one} and \cref{eq:brauer_symmetric_elements_two}, noting that $P(i-1)$ must be distinct from $P(i + 1)$: 
\begin{equation}
        \label{eq:walleD_brauer_symmetric_elements_one}
        (\lambda, \mu)(s_i)\ket{P} = \frac{1}{\Delta_i}\ket{P} + \sqrt{1 - \frac{1}{\Delta_i^2}}\ket{Q},\;\; \lambda(s_i)\ket{Q} = \sqrt{1 - \frac{1}{\Delta_i^2}}\ket{P} - \frac{1}{\Delta_i}\ket{Q}
    \end{equation}
    when there exists another Bratteli path $Q$ with $P \sim_i Q$. Otherwise, 
    \begin{equation}
    \label{eq:walleD_brauer_symmetric_elements_two}
        (\lambda, \mu)(s_i)\ket{P} = \frac{1}{\Delta_i}\ket{P}
    \end{equation}
    
\paragraph{The Contraction Generator.}
For $\rho(e_r) = (\lambda, \mu)(e_r)$, $P(r - 1) \ne P(r + 1)$ again implies that
\begin{equation}
    \label{eq:walleD_brauer_contraction}
    (\lambda, \mu)(e_r)\ket{P} = 0
\end{equation}
For $P(r - 1) = P(r + 1)$, we use an analogy of \cref{eq:brauer_matrix_form_eq_two}. Define $c_r(Q) \coloneqq \cont(a)$ if $(\lambda, \mu) = Q(r - 1) = Q(r + 1)$ and $Q(r) = Q(r - 1) \cup \{a\}$, and $c_r(Q) \coloneqq 0$ otherwise. Then, 
\begin{equation}
    \label{eq:walleD_brauer_contraction_off_diagonals}
(\lambda, \mu)(e_r)_{PQ} = \delta_{P \sim_r Q}K(P)K(Q), \;\; K(Q) = \sqrt{(d + c_r(Q)) \frac{\prod_{a \in R(P(r - 1))} (c_r(Q) - \cont(a))}{\prod_{a \in A(P(r - 1))} (c_r(Q) - \cont(a))}}
\end{equation}
where $R(\cdot)$ denotes the set of all boxes that can be removed from the \textit{left} Young diagram to form another valid Young diagram (the product is one if the left irrep is $\emptyset$), and $A(\cdot)$ denotes the set of all boxes that can be added to the \textit{left} Young diagram to form another valid Young diagram. 
\end{lemma}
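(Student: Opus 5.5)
The plan is to re-derive the formulas of \cite{grinko2023gelfandtsetlinbasispartiallytransposed} and check that the derivation only ever looks at three consecutive levels $P(i-1),P(i),P(i+1)$ of a Bratteli path. Since the branching rules of \cref{thm:branching_rules} coincide with the ones used there, this transports the formulas to the oscillating chain \cref{eq:walleD_brauer_chain}. The first step is to fix Jucys--Murphy type elements $X_i \in B_{r,s}(d)$ adapted to our chain. $X_i$ is defined as the difference of the images of central elements of consecutive algebras in the chain: a sum of transpositions for a column on the same side of the wall, and minus a sum of contractions for columns across the wall, plus the constant $d$ when appropriate. I would then show that each Gelfand--Tsetlin vector $\ket{P}$ is a simultaneous eigenvector of the $X_i$, with eigenvalue $x_i(P)$ given by the stated content formula. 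This follows because $X_i$ is a difference of central elements. Their eigenvalues on $V^{P(i)}$ and $V^{P(i-1)}$ are differences of total contents (with the $d$ shift coming from each right-hand box), so the eigenvalue of $X_i$ depends only on the box added or removed at step $i$.

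The second step is localization. A generator $g$ acting on columns $i,i+1$ commutes with $A_{i-1}$ and lies in $A_{i+1}$. Hence $\rho(g)$ preserves the isotypic data of levels $\le i-1$ and $\ge i+1$, so $\rho(g)_{QP}\neq 0$ forces $Q\sim_i P$. For each fixed segment $(P(i-1),P(i+1))$ one then works in the span of the at most $O(\sqrt n)$ paths differing only at level $i$. For a swap $s_i$, I would use the local relation $s_iX_i s_i = X_{i+1} - (\text{correction})$, where the correction is $1$ or a contraction term. When $P(i-1)\neq P(i+1)$ the correction acts as zero, because a contraction at $i$ kills such paths; this is the analogue of \cref{eq:walleD_brauer_contraction}. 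Combined with $s_i^2=1$, this forces the $2\times 2$ (or $1\times1$) block with diagonal entries $\pm 1/\Delta_i$. The orthogonal-form normalization (\cref{eq:orthogonal_form}, so $\rho(s_i)$ is symmetric) then fixes the off-diagonal entry as $\sqrt{1-\Delta_i^{-2}}$.

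For the contraction $e_r$, the plan is different. It satisfies $e_r^2=d\,e_r$ and $e_r a e_r\in A_{r-1}e_r$ for $a\in A_r$, so on the span of paths with $P(r-1)=P(r+1)$ the matrix $\rho(e_r)$ is rank one. In the orthogonal form it is therefore of the form $K K^{\top}$. The diagonal entries $K(Q)^2$ are computed as $d$ times the weight of the conditional expectation $A_r\to A_{r-1}$. Using the Markov trace and a partial-fraction (Lagrange interpolation) identity over the addable and removable contents of the left diagram, this gives the stated product of $(c_r(Q)-\cont(a))$ factors. The consistency check is $\sum_Q K(Q)^2=d$, which follows from the same partial-fraction identity.

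The main obstacle I expect is bookkeeping rather than algebra. In the oscillating chain, a step of the chain may add a left or a right column, so the pair of columns acted on by $s_i$ (or the position of the contraction $e_r$ versus $f_r$) must be reindexed consistently. One then has to check that the local relations above still hold with the $X_i$ as defined. I would handle this by proving the localization statement abstractly, for any multiplicity-free chain with the given branching graph and generators in $A_{i+1}\cap A_{i-1}'$. After that, it only remains to verify the three-level relations case by case for the two step types (left-add, right-add).
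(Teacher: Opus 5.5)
Your Jucys--Murphy plan would re-derive the formulas for the chain of \cite{grinko2023gelfandtsetlinbasispartiallytransposed}, $B_{0,0}(d)\subseteq\cdots\subseteq B_{r,0}(d)\subseteq\cdots\subseteq B_{r,s}(d)$, because there every $s_i$ and $e_r$ lies in $A_{i+1}\cap A_{i-1}'$. The gap is the transfer to the oscillating chain \cref{eq:walleD_brauer_chain}. You treat it as reindexing, but your abstract localization assumes a hypothesis that the swaps do not satisfy there. In \cref{eq:walleD_brauer_chain}, left column $k$ enters at level $2k-1$, right column $k$ at level $2k$, and left column $k+1$ at level $2k+1$. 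Take $k\ge 2$ and let $s$ be the transposition of left columns $k,k+1$. It lies in $A_{2k+1}$ but not in $A_{2k}$. It also fails to commute with $A_{2k-1}=B_{k,k-1}(d)$, since it moves the contraction of left column $k$ with right column $1$ to left column $k+1$. Hence $s\notin A_{i+1}\cap A_{i-1}'$ for every $i$. It only lies in $A_{2k+1}\cap A_{2k-2}'$, so its blocks consist of paths agreeing outside the \emph{two} levels $2k-1,2k$.

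These blocks are not of the stated form. In $B_{3,3}(d)$ with $k=2$, fix $P(2)=(\Box,\Box)$ and $P(5)=((2),\Box)$. There are five paths $R_1,\dots,R_5$, whose $(P(3),P(4))$ are $(((2),\Box),((2),(2)))$, $(((2),\Box),((2),(1,1)))$, $(((2),\Box),(\Box,\Box))$, $(((1,1),\Box),(\Box,\Box))$ and $((\Box,\emptyset),(\Box,\Box))$.
\begin{itemize}
\item The contraction of left column $2$ with right column $2$ is local at levels $3,4$. On this block it acts as $\ket{\kappa'}\bra{\kappa'}$ with $\kappa'$ supported on $\{R_3,R_4,R_5\}$, all three coefficients nonzero.
\item Its conjugate by $s$ is the contraction of left column $3$ with right column $2$, local at levels $4,5$. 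It acts as $\ket{\kappa}\bra{\kappa}$ with $\kappa$ supported on $\{R_1,R_2,R_3\}$, all coefficients nonzero.
\end{itemize}
Hence $\rho(s)\kappa'=\pm\kappa$, so $R_1$ and $R_2$ each have a nonzero entry against some path in $\{R_3,R_4,R_5\}$. Of these, only $R_3$ differs from them in a single level. So $\rho(s)$ either couples paths differing in two levels or gives $R_3$ two partners. No choice of $\Delta_i$ or relabeling of levels produces this.

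Your other ingredients fail for the same reason.
\begin{itemize}
\item The stated $x_i$ covers only that chain's step types (for $i\le r$, a box added to $\lambda$ starting from $(\lambda,\emptyset)$). Here a step $i\le r$ may be a right step, or a left step removing a box from $\mu\neq\emptyset$. Already in $B_{2,2}(d)$, the two paths of $(\Box,\Box)$ through $(\Box,\emptyset)$ at level $3$ differ only at level $2$. The formula has no $x_2$ for them and $\{Q:P\sim_1 Q\}=\{P\}$, so it predicts a diagonal action of $s_1$. A direct computation in the $B_{2,1}(d)$-irrep $(\Box,\emptyset)$, realized in $V\otimes V\otimes V^{*}$, instead gives diagonal entries $-1/d$ and $1/d$ and off-diagonal entries $\pm\sqrt{1-d^{-2}}$.
\item For the contraction, the local element is $f_k$ (levels $2k-1,2k$), not $e_r$. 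On a class with $P(2k-2)=P(2k)=(\lambda,\mu)$, your Markov-trace computation gives diagonal entries $\dim W_{Q(2k-1)}/\dim W_{(\lambda,\mu)}$, with $W$ the Schur--Weyl-dual $U(d)$-irrep. These depend on $\mu$ and include intermediates $(\lambda,\mu\setminus\{b\})$. For $(\lambda,\mu)=(\Box,\Box)$ the path through $(\Box,\emptyset)$ gets $d/(d^2-1)$, while the stated $K(Q)^2$, which sees only the left diagram and sets $c_r(Q)=0$, gives $0$.
\item Separately, the denominator of $K(Q)$ must omit the added box itself, or it vanishes.
\end{itemize}

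So your argument proves the formulas of \cite{grinko2023gelfandtsetlinbasispartiallytransposed} for their chain and nothing more. The paper supports the transfer only by the citation and the remark that the derivation ``depends only on branching rules''. That remark is exactly what fails: the matrix elements depend on where the generators sit relative to the chain, not just on the branching graph. The statement as formulated for \cref{eq:walleD_brauer_chain} is therefore not correct. A usable version needs either new formulas (two free levels for same-side swaps before level $2r$, $\mu$-dependent weights for $f_k$) or a switch to the chain of \cite{grinko2023gelfandtsetlinbasispartiallytransposed}.
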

We also give an analog of~\cref{cor:efficient_computation_brauer_matrix_elements} for the walled Brauer algebra: 
\begin{corollary}
    \label{cor:efficient_computation_walleD_brauer_matrix_elements}
     There is a quantum algorithm which, for any irrep $\rho = (\lambda, \mu) \in \wh{B_{r,s}(d)}$, implements the isometry 
    \begin{equation}
        \ket{s_i, \rho, P} \mapsto \ket{s_i, \rho, P} \otimes \bigotimes_{Q: \rho(s_i)_{QP} \ne 0} \ket{Q(i), \rho(s_i)_{QP}}
    \end{equation}
    up to operator norm error $O(\varepsilon \cdot |B_{r,s}(d)|)$ using $\wt{O}(n \cdot (n + \log d + \log(1/\varepsilon)))$ gates.
\end{corollary}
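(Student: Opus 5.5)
The proof follows the argument for the Brauer analogue, Corollary~\ref{cor:efficient_computation_brauer_matrix_elements}. The formulas for the swap generators in the walled Brauer case, \cref{eq:walleD_brauer_symmetric_elements_one} and \cref{eq:walleD_brauer_symmetric_elements_two}, are in fact simpler than in the Brauer case. Along the swap-generated part of our chain, $P(i-1)\neq P(i+1)$ always holds, so the contraction-dependent branch never has to be computed. The algorithm therefore has two phases, carried out coherently conditioned on $\ket{s_i,\rho,P}$: first enumerate the at most $O(\sqrt n)$ (in fact at most two) paths $Q$ with $\rho(s_i)_{QP}\neq 0$, then compute the corresponding matrix entries.

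\textbf{Enumeration.} By \cref{eq:walleD_brauer_symmetric_elements_one}, the only candidates are $Q=P$ and the (at most one) other path $Q\sim_i P$. To find it, read $P(i-1)$ and $P(i+1)$ off the path register and determine the two-step transition between them. Using the oscillating branching rules of \cref{thm:branching_rules}, check whether the two box moves can be performed in the opposite order while passing through a valid intermediate pair $(\lambda',\mu')$; if so, record $Q(i)=(\lambda',\mu')$. This costs $\wt{O}(n)$ gates, since it only compares Young diagrams of size at most $n$.

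\textbf{Entries.} Compute the contents appearing in $x_i(P)$ and $x_{i+1}(P)$ from the case definition. Each is an $O(\log n)$-bit integer, possibly shifted by $d$, so it is an $O(\log n+\log d)$-bit number, and obtaining it costs $\wt O(n+\log d)$ gates. Then form $\Delta_i$, its reciprocal, and $\sqrt{1-\Delta_i^{-2}}$ to additive precision $\varepsilon$ using fast arithmetic~\cite{BrentZimmermann2010, HarveyVanDerHoeven2021}, in $\wt{O}(\log d+\log(1/\varepsilon))$ gates. Repeating this for the $O(1)$ (certainly $O(\sqrt n)$) paths $Q$ stays within $\wt{O}(n\cdot(n+\log d+\log(1/\varepsilon)))$. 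Afterwards, uncompute all garbage at a constant-factor overhead.

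\textbf{Error.} Each basis input is mapped to within additive error $\varepsilon$ of its ideal image, and distinct basis states map to distinguishable registers. Applying \cref{lem:opnorm_bound} over the $|B_{r,s}(d)|$-dimensional space therefore gives operator norm error $O(\varepsilon\cdot|B_{r,s}(d)|)$.

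The main obstacle is justifying that the swap case never requires $P(i-1)=P(i+1)$. This has to be checked for our oscillating chain \cref{eq:walleD_brauer_chain}, not for the chain of~\cite{grinko2023gelfandtsetlinbasispartiallytransposed}, and the check needs care when $s_i$ straddles levels at which the chain alternates between adding left and right columns. If such an equality case does arise, the fallback is to use the Brauer-style relation $\rho(s_i)_{PQ}=(\rho(e)_{PQ}-\delta_{PQ})/(x_i(P)+x_i(Q))$ together with \cref{eq:walleD_brauer_contraction_off_diagonals}. That formula involves products over $O(\sqrt n)$ addable and removable boxes, which gives the same $\wt O(\sqrt n(n+\log d+\log(1/\varepsilon)))$ cost per entry, so the stated bound would still hold.
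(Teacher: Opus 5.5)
Your proposal follows the paper's proof essentially verbatim. The paper also just invokes the Brauer argument: it enumerates the paths $Q$ with nonzero entries in $\wt{O}(n)$ gates, computes each entry from $\Delta_i$ in $\wt{O}(n+\log d+\log(1/\varepsilon))$ gates, and concludes with \cref{lem:opnorm_bound}; your observation that there are at most two such $Q$, rather than $O(\sqrt n)$, is a harmless sharpening.

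The obstacle you flag is settled in the paper simply by citing the walled Brauer matrix-element lemma, which asserts $P(i-1)\neq P(i+1)$ for the swap generators, so your fallback is unnecessary. It would also not be available as written: in $B_{r,s}(d)$ the swaps act within one side of the wall and no contraction acts on the same pair of columns, so there is no Brauer-style $s_i$--$e_i$ relation to invoke.
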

\begin{proof}
    In the walled Brauer algebra, there are again at most $O(\sqrt{n})$ Bratteli paths $Q$ for which $\rho(s_i)_{QP} \ne 0$. Following the argument for the Brauer algebra, we can prepare
    \begin{equation}
        \ket{s_i, \rho, P} \otimes \bigotimes_{Q: \rho(s_i)_{QP} \ne 0} \ket{Q(i)}
    \end{equation}
    using $\wt{O}(n)$ gates, and then compute each of the nonzero matrix entries using $\wt{O}(n + \log d + \log(1/\varepsilon))$ gates. 
\end{proof}
\subsection{Matrix Elements For The Partition Algebra}
\begin{lemma}[\cite{Enyang_2013}] \normalfont
    First, we summarize Theorems 5.1 to 5.4 in~\cite{Enyang_2013}, using our notation for the swap, point, and bridge diagrams. These theorems give a \textit{seminormal} form for the partition algebra, which can be made into an orthogonal form by applying the proper normalization to each matrix element. 
    
    First, we define some variables. For a Young diagram $\lambda$, we define $R(\lambda)$ to be the set of removable boxes, and $A(\lambda)$ to be the set of addable boxes. For a box $b \in R(\lambda)$, define $R(\lambda)^{< b}$ to be the set of all removable boxes in rows above $b$, and define $A(\lambda)^{< b}$ similarly. Now, for $\mu = \lambda \cup \{a\}$, define 
    \begin{equation}
        \Psi_{\lambda \rightarrow \mu} \coloneqq \frac{\prod_{b \in R(\lambda)^{<a}} (\cont(a) - \cont(b))}{\prod_{b \in A(\lambda)^{<a}} (\cont(a) - \cont(b))}
    \end{equation}
    To convert the seminormal form into the orthogonal form, we will need a formula for $\braket{P, P}_{\text{path}}$ as defined in~\cref{sec:main_body_orthogonal_form}. This is given in Equation 4.2 and Proposition 4.12 of~\cite{Enyang_2013}, which we now summarize. For any Bratteli path $P$ of length $k+1$,
\begin{equation}
    \label{eq:orthogonal_norm_partition_algebra}
    \braket{P,P}_{\text{path}}
    \;=\;
    \prod_{r=0}^{k-1}
    \frac{\braket{P_{\le r+1},P_{\le r+1}}_{\text{path}}}{\braket{P_{\le r},P_{\le r}}_{\text{path}}},
\end{equation}
where $P_{\le r}$ denotes the Bratteli path truncated to level $r$. Proposition 4.12 of~\cite{Enyang_2013} gives formulas for the factors:
\begin{equation}
    \frac{\braket{P_{\le r+1},P_{\le r+1}}_{\text{path}}}{\braket{P_{\le r},P_{\le r}}_{\text{path}}}
    \;=\;
    \begin{cases}
        1, & P(r+1)=P(r)\\[6pt]
        \Psi_{P(r)\rightarrow P(r+1)}, & P(r+1)=P(r)\cup\{a\}\\[6pt]
        \lambda(b_i)_{PP}\,\Psi_{P(r+1)\rightarrow P(r)}, & r = 2i, \; P(r + 1)=P(r)\setminus\{a\}\\[6pt]
    \lambda(p_i)_{PP}\,\Psi_{P(r+1)\rightarrow P(r)}, & r = 2i - 1,\;  P(r + 1)=P(r)\setminus\{a\}
    \end{cases}
\end{equation}
where in the removal case one has \(P(r)=P(r+1)\cup\{a\}\), so \(a\) is the box removed in that step; the extra factor is \(\lambda(b_i)_{PP}\) when \(r=2i\), and \(\lambda(p_i)_{PP}\) when \(r=2i-1\).

    \paragraph{The Bridge Generators.} We give the matrix elements for $\lambda(b_i)$ in the seminormal basis. 
    There are three cases for the diagonal elements: 
    \begin{equation}
        \label{eq:bridge_gen_formula_one}
        \lambda(b_i)_{PP} = 
        \begin{cases}
            \frac{\prod_{b \in R(P(2i))} d - \cont(b) - |P(2i)|}{\prod_{b \in A(P(2i))} d - \cont(b) - |P(2i)|}, & P(2i - 1) = P(2i) = P(2i + 1) \\
             \frac{d - \cont(a) - |P(2i)| - 1}{d - \cont(a) - |P(2i)|}\cdot \frac{\prod_{b \in R(P(2i))} \cont(a) - \cont(b)}{\prod_{b \ne a \in A(P(2i))} \cont(a) - \cont(b)}, & P(2i - 1) = P(2i) \cup \{a\} = P(2i + 1) \\
             0, & P(2i - 1) \ne P(2i + 1)
        \end{cases}
    \end{equation}
    There are also three cases for the off diagonal elements: 
    \begin{equation}
        \label{eq:bridge_gen_formula_two}
        \lambda(b_i)_{QP} = \delta_{P \sim_{2i} Q} \cdot 
        \begin{cases}
            \Psi_{P(2i) \rightarrow Q(2i)}^{-1}, & Q(2i) = P(2i) \cup \{a\} \\
            \lambda(b_i)_{PP}\lambda(b_i)_{QQ}\Psi_{Q(2i) \rightarrow P(2i)}, & Q(2i) = P(2i) \setminus \{a\} \\
            \lambda(b_i)_{PP}\frac{\Psi_{Q(2i + 1) \rightarrow Q(2i)}}{\Psi_{P(2i + 1) \rightarrow P(2i)}}, & \text{ otherwise}
        \end{cases}
    \end{equation}

    \paragraph{The Point Generators.} Next, we give the matrix elements for $\lambda(p_i)$ in the seminormal basis. Like the bridge generator, there are three cases for the diagonal elements: 
    \begin{equation}
     \label{eq:partition_point_generator_one}
        \lambda(p_i)_{PP} = 
        \begin{cases}
             \frac{\prod_{b \in A(P(2i - 1))} d - \cont(b) - |P(2i - 1)|}{\prod_{b \in R(P(2i - 1))} d - \cont(b) - |P(2i - 1)|}, & P(2i - 2) = P(2i - 1) = P(2i) \\
             -\frac{d - \cont(a) - |P(2i - 1)| + 1}{d - \cont(a) - |P(2i - 1)|}\cdot \frac{\prod_{b \in A(P(2i - 1))} \cont(b) - \cont(a)}{\prod_{b \ne a \in R(P(2i - 1))} \cont(b) - \cont(a)}, & P(2i - 2) = P(2i - 1) \setminus \{a\} = P(2i) \\
             0, & P(2i -2) \ne P(2i)
        \end{cases}
    \end{equation}
    and three cases for the off-diagonal elements: 
    \begin{equation}
     \label{eq:partition_point_generator_two}
        \lambda(p_i)_{QP} = \delta_{P \sim_{2i - 1} Q} \cdot 
        \begin{cases}
            \Psi_{Q(2i - 1) \rightarrow P(2i - 1)}, & Q(2i - 1) = P(2i - 1) \setminus \{a\} \\
            \frac{\lambda(p_i)_{PP}\lambda(p_i)_{QQ}}{\Psi_{P(2i - 1) \rightarrow Q(2i - 1)}}, & Q(2i - 1) = P(2i - 1) \cup \{a\} \\
            \lambda(p_i)_{QQ}\frac{\Psi_{P(2i) \rightarrow P(2i - 1 )}}{\Psi_{Q(2i) \rightarrow Q(2i - 1)}}, & \text{ otherwise}
        \end{cases}
    \end{equation}

    \paragraph{The Swap Generators.} Instead of matrices for the swap generators directly, Enyang instead gives seminormal forms for the what we will call \textit{sigma}-elements, denoted $\sigma_i$. Defined in~\cite{Enyang_2012}, these elements satisfy 
    \begin{equation}
       s_i = \sigma_{2i}\sigma_{2i + 1} 
    \end{equation}
  We define $c_P(i)$ by parity of the step as follows:
    \begin{equation}
        c_P(i)=
        \begin{cases}
            d-|P(i)|, & \text{if $i$ is even and } P(i)=P(i-1),\\
            \cont(a), & \text{if $i$ is even and } P(i)=P(i-1)\cup\{a\},\\
            |P(i)|, & \text{if $i$ is odd and } P(i)=P(i-1),\\
            d-\cont(a), & \text{if $i$ is odd and } P(i)=P(i-1)\setminus\{a\}.
        \end{cases}
    \end{equation}
    Also, we write $P\sim_{i, j} Q$ if $P$ and $Q$ agree on all levels except possibly the $i$th and/or $j$th. 
    Now, we give the matrix elements of $\lambda(\sigma_{2i})$. There are five cases to consider. In each of them, the off diagonal entry $(P, Q)$ is zero unless $P \sim_{2i - 1, 2i} Q$.
    
\label{sec:cases_partition}
\noindent\textbf{Case 1:} $P(2i-1)=P(2i+1)$ and $P(2i-2)=P(2i)$. Then
\begin{equation}
    \lambda(\sigma_{2i})_{PP}
    =
    \frac{c_P(2i)}{\lambda(p_i)_{PP}},
\end{equation}
and for $Q\neq P$,
\begin{equation}
    \lambda(\sigma_{2i})_{QP}
    =
    \frac{d-c_Q(2i)-c_P(2i-1)-\lambda(p_i)_{PP}}
    {c_Q(2i+1)-c_P(2i-1)}
    \,\lambda(b_i)_{QP}.
\end{equation}

\noindent\textbf{Case 2:} $P(2i-1)\neq P(2i+1)$ and $P(2i)=P(2i-2)$.
Let $V$ be the unique path with $V\sim_{2i-1}P$ and $V(2i-1)=P(2i+1)$. Then
\begin{equation}
    \lambda(\sigma_{2i})_{VP}
    =
    c_P(2i)\,\lambda(b_i)_{VP},
\end{equation}
and
\begin{equation}
    \lambda(\sigma_{2i})_{QP}
    =
    \frac{\delta_{QP}-\lambda(p_i)_{VP}\lambda(b_i)_{QV}}
    {c_Q(2i+1)-c_P(2i-1)}.
\end{equation}

\noindent\textbf{Case 3:} $P(2i+1)=P(2i-1)$ and $P(2i)\neq P(2i-2)$. Then if $Q(2i)\neq P(2i-2)$,
\begin{equation}
    \lambda(\sigma_{2i})_{QP}
    =
    \frac{\delta_{QP}+\bigl(d-c_P(2i-1)-c_Q(2i)\bigr)\lambda(b_i)_{QP}}
    {c_Q(2i+1)-c_P(2i-1)}.
\end{equation}
If instead $Q(2i)=P(2i-2)$, then
\begin{equation}
    \lambda(\sigma_{2i})_{QP}
    =
    \lambda(\sigma_{2i})_{PQ}
    \frac{\braket{P,P}_{\mathrm{path}}}{\braket{Q,Q}_{\mathrm{path}}}.
\end{equation}

\noindent\textbf{Case 4:} $P(2i+1)\neq P(2i-1)$ and $P(2i)\neq P(2i-2)$, and $P\sigma_{2i}$ does not exist. Then
\begin{equation}
    \lambda(\sigma_{2i})_{QP}
    =
    \frac{\delta_{QP}}{c_P(2i+1)-c_P(2i-1)}.
\end{equation}

\noindent\textbf{Case 5:} $P(2i+1)\neq P(2i-1)$ and $P(2i)\neq P(2i-2)$, and there exists a distinct $Q$ satisfying $P\sim_{2i-1,2i}Q$.
\begin{equation}
\lambda(\sigma_{2i})_{QP}=
\begin{cases}
\dfrac{1}{c_P(2i+1)-c_P(2i-1)},&Q=P,\\[0.8em]
1-\dfrac{1}{(c_P(2i+1)-c_P(2i-1))^2},&Q\succ P,\\[0.8em]
1,&P\succ Q,\\
\end{cases}
\end{equation}
and $0$ otherwise. 
Here, $P \succ Q $ means that $P$ comes after $Q$ in lexicographic order, where $P(i) > Q(i)$ if $|P(i)| > |Q(i)|$.\footnote{More accurately, irreps at the same level of the path are compared by \textit{partition dominance}, but for the $P$ and $Q$ appearing in this case these two comparators are equivalent, so we use the simpler one.}

Now, we move on to $\lambda(\sigma_{2i + 1})$. There are again five cases. Analogous to $\lambda(\sigma_{2i})$, the off diagonal entry $(Q,P)$ is zero unless $P \sim_{2i, 2i + 1} Q$.

\noindent\textbf{Case 1:} $P(2i-1)=P(2i)=P(2i+1)=P(2i+2)$.
\begin{equation}
\lambda(\sigma_{2i+1})_{PP}=\frac{c_P(2i)}{\lambda(b_i)_{PP}},\qquad
\lambda(\sigma_{2i+1})_{QP}=-\frac{\lambda(b_i)_{PP}\,\lambda(b_i)_{QP}}{c_Q(2i+2)-c_P(2i)},
\end{equation}
for $Q\neq P$, and $0$ otherwise.

\noindent\textbf{Case 2:} $P(2i+1)=P(2i-1)$ and $P(2i+2)\neq P(2i)$.
Let $V$ be the path with $V\sim_{2i+1}P$ and $V(2i)$ the unique predecessor vertex used in Enyang Thm.~5.4(2).
\begin{equation}
\lambda(\sigma_{2i+1})_{VP}=c_P(2i)\,\lambda(p_{i+1})_{VP},
\end{equation}
and
\begin{equation}
\lambda(\sigma_{2i+1})_{QP}=
\frac{\delta_{QP}-\lambda(b_i)_{VP}\lambda(p_{i+1})_{QV}+(c_V(2i)-c_P(2i))\,\lambda(b_i)_{QP}}{c_Q(2i+2)-c_P(2i)}.
\end{equation}

\noindent\textbf{Case 3:} $P(2i+2)=P(2i)$ and $P(2i+1)\neq P(2i-1)$.
If $Q(2i+1)\neq P(2i-1)$, then
\begin{equation}
\lambda(\sigma_{2i+1})_{QP}=\frac{\delta_{QP}}{c_P(2i+2)-c_P(2i)}.
\end{equation}
If $Q(2i+1)=P(2i-1)$, then
\begin{equation}
\lambda(\sigma_{2i+1})_{QP}=\lambda(\sigma_{2i+1})_{PQ}\cdot\frac{\braket{P,P}_{\text{path}}}{\braket{Q,Q}_{\text{path}}}.
\end{equation}

\noindent\textbf{Case 4:} $P(2i+2)\neq P(2i)$ and $P(2i+1)\neq P(2i-1)$, and no other path satisfies $P\sim_{2i,2i+1}Q$.
\begin{equation}
\lambda(\sigma_{2i+1})_{QP}=
\begin{cases}
\dfrac{1}{c_P(2i+2)-c_P(2i)},&Q=P,
\end{cases}
\end{equation}
and $0$ otherwise.

\noindent\textbf{Case 5:} $P(2i+2)\neq P(2i)$ and $P(2i+1)\neq P(2i-1)$, and there exists a distinct $Q$ satisfying $P\sim_{2i,2i+1}Q$.
\begin{equation}
\lambda(\sigma_{2i+1})_{QP}=
\begin{cases}
\dfrac{1}{c_P(2i+2)-c_P(2i)},&Q=P,\\[0.8em]
1-\dfrac{1}{(c_P(2i+2)-c_P(2i))^2},&Q\succ P,\\[0.8em]
1,&P\succ Q,
\end{cases}
\end{equation}
and $0$ otherwise.

\paragraph{From the seminormal to orthogonal form.}
To convert the seminormal form into the orthogonal form, we scale every $(P, Q)$ matrix entry by 
\begin{equation}
   \sqrt{\frac{\braket{P, P}_{\text{path}}}{\braket{Q, Q}_{\text{path}}}}
\end{equation}
which captures the effect of scaling each Bratteli path $P$ by the normalization constant $1/\sqrt{\braket{P, P}_{\text{path}}}$, computed in~\cref{eq:orthogonal_norm_partition_algebra}. In the orthogonal form, $\lambda(s_i)$ is a Hermitian and unitary operator, $\lambda(b_i)$ is an orthogonal projector (since $b_i^2 = b_i$), and $\lambda(p_i)$ is a Hermitian operator satisfying $\lambda(p_i)^2 = d\lambda(p_i)$. As an example, we give the explicit orthogonal forms for $P_2(d)$: 

\paragraph{$\lambda=\emptyset$:}
\begin{equation}
\lambda(p_1)=
\begin{pmatrix}
d&0\\
0&0
\end{pmatrix},
\qquad
\lambda(b_1)=
\begin{pmatrix}
\frac1d&\frac{\sqrt{d-1}}{d}\\
\frac{\sqrt{d-1}}{d}&\frac{d-1}{d}
\end{pmatrix},
\qquad
\lambda(p_2)=
\begin{pmatrix}
d&0\\
0&0
\end{pmatrix},
\qquad
\lambda(s_1)=
\begin{pmatrix}
1&0\\
0&1
\end{pmatrix}.
\end{equation}

\paragraph{$\lambda=\Box$:}
\begin{equation}
\lambda(p_1)=
\begin{pmatrix}
d&0&0\\
0&0&0\\
0&0&0
\end{pmatrix},
\qquad
\lambda(b_1)=
\begin{pmatrix}
\frac1d&\frac{\sqrt{d-1}}{d}&0\\
\frac{\sqrt{d-1}}{d}&\frac{d-1}{d}&0\\
0&0&0
\end{pmatrix},
\qquad
\lambda(p_2)=
\begin{pmatrix}
0&0&0\\
0&\frac{d}{d-1}&\frac{d\sqrt{d-2}}{d-1}\\
0&\frac{d\sqrt{d-2}}{d-1}&\frac{d(d-2)}{d-1}
\end{pmatrix}.
\end{equation}
\begin{equation}
\lambda(s_1)=
\begin{pmatrix}
0&\frac{1}{\sqrt{d-1}}&\frac{\sqrt{d-2}}{\sqrt{d-1}}\\
\frac{1}{\sqrt{d-1}}&\frac{d-2}{d-1}&-\frac{\sqrt{d-2}}{d-1}\\
\frac{\sqrt{d-2}}{\sqrt{d-1}}&-\frac{\sqrt{d-2}}{d-1}&\frac{1}{d-1}
\end{pmatrix}.
\end{equation}

\paragraph{$\lambda=\raisebox{0.5ex}{\scalebox{0.3}{$\ydiagram{1,1}$}}$\;:}
\begin{equation}
\lambda(p_1)=0,\qquad \lambda(b_1)=0,\qquad \lambda(p_2)=0,\qquad \lambda(s_1)=-1.
\end{equation}

\paragraph{$\lambda=\Box\!\Box$:}
\begin{equation}
\lambda(p_1)=0,\qquad \lambda(b_1)=0,\qquad \lambda(p_2)=0,\qquad \lambda(s_1)=1.
\end{equation}
\end{lemma}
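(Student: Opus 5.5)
The plan is to treat the lemma as a transcription of Enyang's results plus one change of basis, and then to sanity-check the explicit $P_2(d)$ matrices by direct computation.

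\textbf{Seminormal form.} For the bridge, point and sigma generators, I will import \cite{Enyang_2013} Theorems 5.1--5.4 essentially verbatim. The only work is translating notation. Enyang's Bratteli paths for the chain $P_0(d)\subseteq P_{\frac12}(d)\subseteq\cdots\subseteq P_n(d)$ index levels by half-integers, and I will reindex them by integers $0,\dots,2n$, which explains the parity conventions in $c_P(i)$. I will also identify his generators with $s_i,p_i,b_i$, and check that $s_i=\sigma_{2i}\sigma_{2i+1}$ using the definition in \cite{Enyang_2012}.

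\textbf{Orthogonal form.} I will use the argument already sketched in \cref{sec:main_body_orthogonal_form}. By \cite{Enyang_2013} Proposition 4.2(6), the cellular form $\braket{\cdot,\cdot}_{\text{path}}$ is diagonal on seminormal paths, and its Gram matrix $G$ satisfies $\lambda(D^{\op})^{T}G=G\lambda(D)$. Rescaling each path by $\alpha_P=\sqrt{\braket{P,P}_{\text{path}}}$ conjugates every matrix by $S=G^{1/2}$. This multiplies the $(P,Q)$ entry by $\sqrt{\braket{P,P}_{\text{path}}/\braket{Q,Q}_{\text{path}}}$ (or its reciprocal, depending on the row/column convention, which I will fix against the $P_2$ example). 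The resulting matrices satisfy \cref{eq:orthogonal_form}.

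The norms themselves follow from the telescoping identity \cref{eq:orthogonal_norm_partition_algebra} together with Proposition 4.12. Since $s_i,b_i,p_i$ are all fixed by $\op$, \cref{eq:orthogonal_form} makes each of their images Hermitian. The algebra relations then give the remaining properties. From $s_i^2=1$, $\lambda(s_i)$ is a Hermitian unitary. From $b_i^2=b_i$, $\lambda(b_i)$ is a Hermitian idempotent, that is, an orthogonal projector. From $p_i^2=dp_i$, $\lambda(p_i)^2=d\lambda(p_i)$.

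\textbf{The $P_2(d)$ example.} I will enumerate the Bratteli paths of length $5$ ending at each irrep:
\begin{itemize}
\item $2$ paths end at $\emptyset$;
\item $3$ paths end at $\Box$;
\item $1$ path ends at each two-box shape.
\end{itemize}
As a consistency check, $4+9+1+1=15=B_4$, which matches \cref{eq:irreps_of_parition_algebra}. I will then evaluate the formulas above and rescale by the path norms.

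For the two-box irreps, \cref{thm:zero_irrep_matrix} forces $\lambda(p_i)=\lambda(b_1)=0$, because these diagrams have propagating number less than $2$. Since $\lambda(s_1)$ is then a $1\times1$ Hermitian unitary, it is the sign or trivial character $\pm1$.

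For $\emptyset$ and $\Box$, I will confirm the displayed matrices by checking the defining relations directly: $p_i^2=dp_i$, $b_1^2=b_1$, $s_1^2=1$, $s_1p_1s_1=p_2$, and $b_1p_1b_1=b_1$. These relations do not by themselves pin down a basis. The basis is pinned down by the Gelfand--Tsetlin adaptedness: $p_1$ must be diagonal, $b_1$ must be block diagonal according to level $3$, and so on. Together with the normalization and sign conventions inherited from Enyang's formulas, this uniquely determines the displayed entries.

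\textbf{Main obstacle.} The hardest step will be faithfully reconciling Enyang's index and sign conventions, in particular:
\begin{itemize}
\item which factor of the path norm multiplies the $(P,Q)$ entry;
\item how the lexicographic/dominance order $\succ$ in Case 5 is defined;
\item the identity of the auxiliary path $V$ in Case 2.
\end{itemize}
I will use the hand-computed $P_2(d)$ case, and the symmetry $\lambda(s_1)=\lambda(s_1)^\dagger$ it must exhibit, as the test that fixes these conventions.
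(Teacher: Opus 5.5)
Your overall route is the paper's own. The paper gives no independent argument for this lemma: it transcribes Enyang's Theorems~5.1--5.4 and Proposition~4.12 and relies on the cellular-form rescaling of \cref{sec:main_body_orthogonal_form}. Your deduction of Hermiticity, unitarity of $\lambda(s_i)$, and the projector property from $\op$-invariance is fine. The step that fails is the one you use for your main obstacle: fixing conventions by calibrating against $P_2(d)$. First, $P_2(d)$ exercises neither Case~5 (for $\sigma_2$ or $\sigma_3$) nor Case~2 for $\sigma_2$. Every path has $P(0)=P(1)=\emptyset$ and at most one box at levels $2$ and $3$. So the Case~5 hypotheses force $P(2)=P(3)=\Box$, and then any $Q\sim_{1,2}P$ (resp.\ $Q\sim_{2,3}P$) must equal $P$, which is Case~4. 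Likewise, $P(3)\neq P(1)$ is incompatible with $P(2)=P(0)=\emptyset$. The order $\succ$ and the path $V$ for $\sigma_{2i}$ are therefore untested by your example. You must take them from Enyang's definitions, or check them on $P_3(d)$.

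Second, the example will not select a rescaling direction; it exposes a normalization mismatch in the formulas as transcribed. For $\lambda=\emptyset$, take $X=(\emptyset,\emptyset,\emptyset,\emptyset,\emptyset)$ and $Y=(\emptyset,\emptyset,\Box,\emptyset,\emptyset)$. The bridge formulas give:
\begin{itemize}
\item $\lambda(b_1)_{XX}=1/d$;
\item $\lambda(b_1)_{YX}=\Psi_{\emptyset\to\Box}^{-1}=1$;
\item $\lambda(b_1)_{XY}=\lambda(b_1)_{XX}\lambda(b_1)_{YY}=(d-1)/d^2$, where $\lambda(b_1)_{YY}=1-1/d$ because the block is a rank-one idempotent.
\end{itemize}
A diagonal rescaling symmetrizes this block only if the squared scale ratio of $Y$ to $X$ is $(d-1)/d^2$ (or its reciprocal). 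The telescoping product instead gives $\braket{Y,Y}_{\text{path}}/\braket{X,X}_{\text{path}}=\lambda(b_1)_{YY}=(d-1)/d$. Neither direction yields the displayed entry $\sqrt{d-1}/d$, which is in fact correct. So ``import verbatim, then calibrate'' does not close the argument; the Proposition~4.12 factors must be reconciled with Enyang's normalization of the generators in the source itself.

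Two convention-free tools avoid this. For an $\op$-fixed generator, diagonal conjugation preserves $\rho_{PQ}\rho_{QP}$, so the orthogonal off-diagonal entry is $\pm\sqrt{\rho_{PQ}\rho_{QP}}$ and needs no path norms. The $P_2(d)$ matrices can also be certified directly in the Schur representation on $(\mathbb{C}^d)^{\otimes 2}$. Since $\mathbf{S}(a^{\op})=\mathbf{S}(a)^\dagger$, any orthonormal Gelfand--Tsetlin basis of an $S_d$-multiplicity space (built from the eigenspaces of $p_1$ and the $P_{\frac32}(d)$-isotypic components) automatically realizes the orthogonal form. This also removes your reliance on a partial list of relations, which by itself does not certify a representation: for instance, $s_1b_1=b_1$, $p_1b_1p_1=p_1$ and $p_1p_2=p_2p_1$ are missing.
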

We will also need the following characterization of the $+1$ eigenspace of $\lambda(b_i)$:
\begin{lemma}
    \label{lem:bridge_equivalence_class_structure}
    Let $[P]$ be the equivalence class of a Bratteli path $P$ under $\sim_{2i}$. 
    \begin{enumerate}
        \item If $P(2i - 1) \ne P(2i + 1)$, then $\lambda(b_i)\ket{Q} = 0$ for all $Q \in [P]$.
        \item If $P(2i - 1) = P(2i + 1)$, then $\lambda(b_i)$ has exactly one $+1$ eigenvector supported on $[P]$, up to scaling.
    \end{enumerate}
\end{lemma}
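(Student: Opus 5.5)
The plan is to use the structure of $\lambda(b_i)$ in the orthogonal form: it is an orthogonal projector (since $b_i^2=b_i$ and $b_i^{\op}=b_i$), and by \cref{eq:bridge_gen_formula_one}--\cref{eq:bridge_gen_formula_two} it is block diagonal with respect to the classes $[P]$ under $\sim_{2i}$. Off-diagonal entries vanish unless $P\sim_{2i}Q$, and rescaling to the orthogonal form multiplies entries by positive constants, so the zero pattern is unchanged. Hence $\lambda(b_i)$ restricted to $\mathrm{span}\{\ket{Q}:Q\in[P]\}$ is itself an orthogonal projector, and it suffices to compute the rank of each block. Note also that all $Q\in[P]$ share $Q(2i\pm1)=P(2i\pm1)$, so the dichotomy in the statement is well defined on classes.

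\textbf{Case 1.} Suppose $P(2i-1)\ne P(2i+1)$. Then every $Q\in[P]$ has $\lambda(b_i)_{QQ}=0$ by the third line of \cref{eq:bridge_gen_formula_one}. A positive semidefinite matrix (an orthogonal projector) with zero diagonal is zero, so the whole block vanishes. Since the block structure means $\lambda(b_i)\ket{Q}$ is supported on $[P]$, this gives $\lambda(b_i)\ket{Q}=0$. This settles part 1 without examining the off-diagonal formulas.

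\textbf{Case 2.} Suppose $P(2i-1)=P(2i+1)$. A projector's rank equals its trace, so I need $\sum_{Q\in[P]}\lambda(b_i)_{QQ}=1$; I would establish this rank-one claim in one of two ways.

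The first route is to show directly that the block is rank one. In the seminormal form, \cref{eq:bridge_gen_formula_two} gives entries of the shape $\lambda(b_i)_{QP}=\lambda(b_i)_{PP}\cdot(\text{ratio of }\Psi)$, i.e.\ factorized as $u_Qv_P$. After the orthogonal-form rescaling, the block becomes $w w^{\dagger}$ with $w_Q=\sqrt{\lambda(b_i)_{QQ}}$, mirroring \cref{eq:brauer_matrix_form_eq_three}. A rank-one Hermitian idempotent is a projector onto a single line, which is the desired conclusion.

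The second route, which avoids checking all $\Psi$ cases, is structural. Let $\mu=P(2i-1)=P(2i+1)$ and fix the remaining path. The block corresponds to the subspace $V^\mu_{P_{i-1/2}}$-component isotypic piece inside the restriction of $P_{i+\frac12}$-modules down to $P_{i-\frac12}$, with the middle level $P(2i)$ free. Since $b_i$ commutes with $P_{i-\frac12}(d)$ (it acts on columns $i,i+1$ only), Schur's lemma makes $b_i$ act on the $\mu$-multiplicity space of the $P_{i-\frac12}\subset P_{i+\frac12}$ restriction. The dimension of that multiplicity space equals $|[P]|$, and the $+1$ eigenspace equals $b_iP_{i+\frac12}b_i$ acting there. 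One would identify $b_iP_{i+\frac12}(d)b_i\cong P_{i-\frac12}(d)$ (collapsing column $i+1$ into $i$), so the image of $b_i$ on the $\mu$-isotypic part of $V^\lambda\!\downarrow$ is an irreducible module of the centralizer, i.e.\ multiplicity one.

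The main obstacle is making the second argument rigorous, specifically the identification $b_iP_{i+\frac12}b_i\cong P_{i-\frac12}$ together with the multiplicity-one conclusion. So I would first try the direct trace computation: verify $\sum_{Q\in[P]}\lambda(b_i)_{QQ}=1$ from the first two lines of \cref{eq:bridge_gen_formula_one} via a partial-fraction identity over addable/removable boxes of $P(2i)$. The checks are $\sum_{a}\Psi$-type sums telescoping as in the classical identity $\sum_{a\in A(\lambda)}\frac{\prod_{b\in R}(x_a-x_b)}{\prod_{b\neq a\in A}(x_a-x_b)}=\cdots$, which are routine.
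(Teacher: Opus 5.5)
Your proposal is correct. Your first route for part 2 is the paper's proof: the paper gets invariance of $V_{[P]}$ from the factor $\delta_{P\sim_{2i}Q}$ and disposes of part 1 by reading off vanishing entries. For part 2 it writes the seminormal block as $x_Qy_P$, concludes rank one, and then uses that the orthogonal-form block is an orthogonal projector.

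Where you genuinely differ is part 1 (zero diagonal in a positive semidefinite block) and, more importantly, the trace argument for part 2. The trace argument is the better of the two routes, for four reasons.
\begin{itemize}
\item It needs only $b_i^2=b_i$, the block structure, and the two diagonal lines of \cref{eq:bridge_gen_formula_one}. Diagonal entries are untouched by the orthogonal rescaling, so you can even compute in the seminormal form.
\item It gives rank \emph{exactly} one. An outer product only gives rank at most one, and the paper leaves nonvanishing implicit.
\item It also covers part 1, since an idempotent block of trace $0$ is zero. Positive semidefiniteness is not needed.
\item It avoids the off-diagonal formulas. As transcribed in \cref{eq:bridge_gen_formula_two}, these do not fit a rank-one pattern: for $Q(2i)=\mu\cup\{a\}$ and $P(2i)=\mu\cup\{a'\}$, the paper's own $x_Qy_P$ gives the reciprocal of the printed $\Psi$-ratio. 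Already for $\mu=\Box$ the two $\Psi$'s are $1$ and $1/2$, so the factorization route would force you to resolve a transcription issue.
\end{itemize}
What the factorization route offers in exchange is an explicit formula for the eigenvector.

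The trace identity you left as routine does close. Read the second line of \cref{eq:bridge_gen_formula_one} as the case $P(2i)=\mu\cup\{a\}$, with $R$, $A$ and $|\cdot|$ taken at $\mu=P(2i\pm1)$. As printed it is garbled, since $a\in R(\mu\cup\{a\})$ would make it vanish. Put $u=d-|\mu|$ and
$r_a=\prod_{b\in R(\mu)}(\cont(a)-\cont(b))\big/\prod_{b\in A(\mu)\setminus\{a\}}(\cont(a)-\cont(b))$.
\begin{itemize}
\item Since $|A(\mu)|=|R(\mu)|+1$, partial fractions give the entry for $P(2i)=\mu$ as $\sum_a r_a/(u-\cont(a))$.
\item Comparing $u^{-1}$ coefficients at infinity gives $\sum_a r_a=1$.
\item The entry for $P(2i)=\mu\cup\{a\}$ is $r_a-r_a/(u-\cont(a))$.
\end{itemize}
Summing, the trace is $1$. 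This matches $1/d+(d-1)/d$ in the paper's $P_2(d)$, $\lambda=\emptyset$ example.

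Your structural route is then unnecessary, but one justification in it is off. $b_i$ does not commute with $P_{i-\frac12}(d)$ because it ``acts on columns $i,i+1$ only'', since column $i$ is shared. It commutes because $i$ and $i'$ lie in one block for every diagram of $P_{i-\frac12}(d)$.
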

\begin{proof}
Define
\begin{equation}
    V_{[P]} \coloneqq \text{span}\{\ket{Q} : Q \in [P]\}.
\end{equation}
From~\cref{eq:bridge_gen_formula_two}, every off-diagonal entry $\lambda(b_i)_{QP}$ is multiplied by $\delta_{P \sim_{2i} Q}$. Hence, if $\ket{Q}$ is supported on $[P]$, then so is $\lambda(b_i)\ket{Q}$, and therefore $V_{[P]}$ is invariant under $\lambda(b_i)$.

For the first claim, if $P(2i-1)\ne P(2i+1)$, then for every $Q\in [P]$ we also have $Q(2i-1)\ne Q(2i+1)$, since $Q$ agrees with $P$ at every level except possibly $2i$.\ ~\cref{eq:bridge_gen_formula_one} then gives $\lambda(b_i)_{QQ}=0$ for all $Q\in [P]$, and the off-diagonal formulas also vanish. Thus $\lambda(b_i)|_{V_{[P]}}=0$.

For the second claim, ~\cref{eq:bridge_gen_formula_one} and~\cref{eq:bridge_gen_formula_two} imply that in the seminormal form $\lambda(b_i)_{QP}$ factors as $x_Qy_P$:
\begin{equation}
    x_Q =
    \begin{cases}
        \lambda(b_i)_{QQ}, & Q(2i)=Q(2i + 1), \\[0.5em]
        \Psi_{Q(2i + 1)\rightarrow Q(2i)}^{-1}, & Q(2i)=Q(2i - 1)\cup\{a\},
    \end{cases}
\end{equation}
\begin{equation}
    y_P =
    \begin{cases}
        1, & P(2i)=p(2i + 1), \\[0.5em]
        \lambda(b_i)_{PP}\,\Psi_{P(2i + 1)\rightarrow P(2i)}, &P(2i)=P(2i + 1)\cup\{a\}
    \end{cases}
\end{equation}
Since $\lambda(b_i)|_{V_{[P]}}$ is the outer product of two vectors, it has rank 1. 

Finally, converting the seminormal form to the orthogonal form is a change of basis, which does not affect the rank. In the orthogonal form, $\lambda(b_i)|_{V_{[P]}}$ is an orthogonal projector, and so $\lambda(b_i)|_{V_{[P]}}$ has a one-dimensional $+1$ eigenspace.
\end{proof}

Finally, we give an analog of~\cref{cor:efficient_computation_brauer_matrix_elements} for the partition algebra: 
\begin{corollary}
    \label{cor:efficient_computation_partition_matrix_elements}
    There is a quantum algorithm which, for any irrep $\lambda \in \wh{P_n(d)}$, implements the isometry 
    \begin{equation}
        \label{eq:partition_isometry}
        \ket{s_i, \lambda, P} \mapsto \ket{s_i, \lambda, P} \otimes \bigotimes_{Q: \lambda(s_i)_{QP} \ne 0} \ket{Q(2i - 1), Q(2i), Q(2i + 1),  \lambda(s_i)_{QP}}
    \end{equation}
    up to operator norm error $O(\varepsilon \cdot |P_n(d)|)$ using $\wt{O}(n^{5/2} \cdot (n + \log d + \log(1/\varepsilon)))$ gates.  
\end{corollary}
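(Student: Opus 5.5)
The plan follows the template of \cref{cor:efficient_computation_brauer_matrix_elements}, using $s_i = \sigma_{2i}\sigma_{2i+1}$ to reduce everything to matrix elements of the sigma-elements, the bridge and point generators, and the path norms $\braket{P,P}_{\text{path}}$.

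\textbf{Step 1: locate the nonzero entries.} The matrix $\lambda(\sigma_{2i})$ only connects paths with $P \sim_{2i-1,2i} Q$, and $\lambda(\sigma_{2i+1})$ only connects paths with $P\sim_{2i,2i+1} Q$. Hence
\begin{equation}
\lambda(s_i)_{QP}=\sum_R \lambda(\sigma_{2i})_{QR}\,\lambda(\sigma_{2i+1})_{RP}
\end{equation}
is supported on $Q$ differing from $P$ only at levels $2i-1,2i,2i+1$. This justifies the encoding $\mathrm{loc}(Q)=(Q(2i-1),Q(2i),Q(2i+1))$.

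Each level changes by at most one box, and a Young diagram with at most $n$ boxes has $O(\sqrt n)$ addable or removable boxes. So there are $O(\sqrt n)$ candidate intermediate paths $R$, and $O(n)$ candidate outputs $Q$. I would enumerate them with the branching rules of \cref{thm:branching_rules}, spending $\wt O(n)$ gates per candidate to write down the three levels. This gives $\wt O(n^2)$ gates for the enumeration.

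\textbf{Step 2: compute each seminormal entry.} For each pair $(R,Q)$, I first decide which of the five cases of Enyang's formulas applies; this is a comparison of adjacent levels. I then evaluate the relevant inputs:
\begin{itemize}
\item the quantities $c_P(\cdot)$;
\item the diagonal entries $\lambda(b_i)_{PP}$, $\lambda(p_i)_{PP}$, $\lambda(p_{i+1})_{PP}$;
\item the ratios $\Psi_{\lambda\to\mu}$.
\end{itemize}
Each of these is a ratio of products of $O(\sqrt n)$ factors. Every factor is an $O(\log d + \log n)$-bit integer built from contents and sizes. Following the arithmetic accounting in the Brauer corollary (fast multiplication and division, as in \cite{BrentZimmermann2010, HarveyVanDerHoeven2021}), each entry costs $\wt O(\sqrt n\,(n+\log d+\log(1/\varepsilon)))$ gates.

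\textbf{Step 3: convert to the orthogonal form.} Each entry is rescaled by $\sqrt{\braket{Q,Q}_{\text{path}}/\braket{R,R}_{\text{path}}}$. By \cref{eq:orthogonal_norm_partition_algebra}, the full path norm is a telescoping product over all $O(n)$ levels. In the ratio, however, all factors at levels where the two paths agree, together with their prefixes, cancel. Only the $O(1)$ factors at the differing levels survive, and each is again a $\Psi$ or a generator diagonal entry. So the rescaling costs the same as Step 2, and avoids a linear-in-$n$ product per entry.

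\textbf{Step 4: assemble and bound the cost and error.} For each $Q$ I sum the $O(\sqrt n)$ products over $R$. Multiplying $O(n)$ candidate pairs by the $\wt O(\sqrt n\,(n+\log d+\log(1/\varepsilon)))$ per-entry cost gives the claimed $\wt O(n^{5/2}(n+\log d+\log(1/\varepsilon)))$ gates. I would then discard zero entries, which requires only a comparison, and uncompute all scratch registers at constant-factor overhead. Each output amplitude is accurate to additive error $\varepsilon$ per basis state, so \cref{lem:opnorm_bound} gives operator norm error $O(\varepsilon|P_n(d)|)$.

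\textbf{Main obstacle.} The hardest step is Step 3, together with Case 3 of both sigma formulas. There, entries are defined through the transposed entry times a ratio of path norms, and one must check that each such ratio is local, so that no global product over the path is ever needed. I would handle this by writing the ratio explicitly via \cref{eq:orthogonal_norm_partition_algebra} and checking the cancellation of prefix factors case by case. I would also guard against vanishing denominators such as $c_Q(2i+1)-c_P(2i-1)$. These are nonzero because $d>2n$, which is exactly the regime where the algebra is semisimple.
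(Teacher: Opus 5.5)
Your proposal is correct and has the same skeleton as the paper's proof. Both arguments write $\lambda(s_i)=\lambda(\sigma_{2i})\lambda(\sigma_{2i+1})$ and enumerate the $O(n)$ targets $Q$ and the intermediate paths $R$ in $\wt{O}(n^2)$ gates. They then evaluate Enyang's five-case seminormal formulas, rescale to the orthogonal form by path norms, and finish with \cref{lem:opnorm_bound}.

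The one genuine difference is your Step~3. The paper computes the normalization $\sqrt{\braket{Q,Q}_{\text{path}}/\braket{P,P}_{\text{path}}}$ by evaluating the full telescoping product \cref{eq:orthogonal_norm_partition_algebra} over all $O(n)$ levels, at $\wt{O}(\sqrt n\,(n+\log d+\log(1/\varepsilon)))$ per factor. It also charges $\wt{O}(n\,(n+\log d+\log(1/\varepsilon)))$ per bridge or point entry. Together these make each $\lambda(s_i)_{QP}$ cost $\wt{O}(n^{3/2}(\cdots))$, and the $O(n)$ entries give the stated exponent $5/2$.

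Your observation that the step-$r$ factor depends only on levels $r-1,r,r+1$ is right. It enters only through a $\Psi$ between $P(r)$ and $P(r+1)$ and the local diagonal entries $\lambda(b_i)_{PP}$ or $\lambda(p_i)_{PP}$. So for paths differing in a window of at most three levels, the ratio collapses to $O(1)$ local factors. This also handles the Case~3 path-norm ratios without any global computation.

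The only slip is in Step~4. $O(n)$ pairs $(R,Q)$ at $\wt{O}(\sqrt n\,(\cdots))$ each multiplies out to $\wt{O}(n^{3/2}(n+\log d+\log(1/\varepsilon)))$, with the $\wt{O}(n^2)$ enumeration absorbed, not to $n^{5/2}$. This is harmless: it only means your route proves a bound a factor of $n$ better than the stated one. The exponent $5/2$ in the statement comes from the paper's coarser accounting, not from anything your argument needs.
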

\begin{proof}
Since 
\begin{equation}
    \label{eq:swap_is_two_sigmas}
        \lambda(s_i)_{QP} = \sum_{\substack{R \\ Q \sim_{2i - 1, 2i} R \\ R \sim_{2i, 2i + 1} P}} \lambda(\sigma_{2i})_{QR}\lambda(\sigma_{2i + 1})_{RP}
    \end{equation}
    $P$ and $Q$ may differ on levels $2i-1$, $2i$, and $2i + 1$ while still satisfying $\lambda(s_i)_{QP} \ne 0$. There up up to $O(n)$ possible paths $Q$ satisfying this condition. To see this, first note that there are $O(\sqrt{n})$ possible choices for $Q(2i)$, since $Q(2i)$ is obtained from $Q(2i-2)$ by adding or removing at most one box. If $Q(2i) \ne Q(2i-2)$, then $Q(2i-1)$ and $Q(2i+1)$ are uniquely determined. On the other hand, if $Q(2i)=Q(2i-2)$, then there are $O(\sqrt{n})$ choices for each of $Q(2i-1)$ and $Q(2i+1)$, contributing up to $O(n)$ valid Bratteli paths. As a consequence, implementing 
   \begin{equation}
        \ket{s_i, \lambda, P} \otimes \bigotimes_{Q: \lambda(s_i)_{QP} \ne 0} \ket{Q(2i - 1),Q(2i), Q(2i + 1)}
    \end{equation}
    requires $\wt{O}(n^{2})$ gates.
    
    Our next task is to compute $\lambda(\sigma_{2i})_{QR}$ for every $R\sim_{2i - 1, 2i} Q$, for each path $Q$. For each $Q$, there are $O(\sqrt{n})$ choices for $R$, so with $\wt{O}(n^{3/2})$ gates we can implement the map 
    \begin{equation}
        \ket{Q(2i - 1),Q(2i)} \mapsto \ket{Q(2i - 1),Q(2i)} \otimes \bigotimes_{R\sim_{2i - 1, 2i } Q} \ket{R(2i), R(2i + 1)}
    \end{equation}
    For each $R$, we determine which of the five cases we are in (\cref{sec:cases_partition}) using $\wt{O}(n)$ gates. After determining which case we are in, computing the appropriate \textit{seminormal} matrix entry $\lambda(\sigma_{2i})_{QR}$ requires the following components: 
    \begin{enumerate}
        \item A constant number of terms of the form $c_Q(j)$, $c_R(j)$, for some index $j$. These can be computed using $\wt{O}(n + \log d)$ gates. 
        \item \label{item:bridge_time_complexity} A constant number of bridge and point matrix entries, either $\lambda(b_j)_{QQ}$, $\lambda(b_j)_{QR}$, $\lambda(p_j)_{QQ}$, or $\lambda(p_j)_{RQ}$. Each of these requires $\wt{O}(n \cdot (n + \log d + \log(1/\varepsilon)))$ gates to compute, with similar analysis as~\cref{cor:efficient_computation_brauer_matrix_elements}. 
        \item A constant number of additions, multiplications, and divisions of terms from the previous two steps, each on registers with at most $\wt{O}(\log d)$ bits. This requires $\wt{O}(\log d + \log(1/\varepsilon))$ additional gates, to compute the final matrix entries to additive error $\varepsilon$. 
    \end{enumerate}
    Hence, in the seminormal basis, computing a single matrix entry $\lambda(\sigma_{2i})_{QR}$ requires $\wt{O}(n \cdot (n + \log d + \log(1/\varepsilon)))$ gates. For each $Q$, there are $O(\sqrt{n})$ possible paths $R$, so computing $\lambda(\sigma_{2i})_{QR}$ for all $R$ requires $\wt{O}(n^{3/2} \cdot (n + \log d + \log(1/\varepsilon)))$ gates. Computing all $\lambda(\sigma_{2i})_{RP}$ can be done in an analogous fashion, which implies that computing a single matrix entry $\lambda(s_i)_{QP}$ in the seminormal form (\cref{eq:swap_is_two_sigmas}) also requires $\wt{O}(n^{3/2} \cdot (n + \log d + \log(1/\varepsilon)))$ gates. To convert this matrix entry into the orthogonal form, we multiply the entry by the normalization factor 
    \begin{equation}
        \sqrt{\frac{\braket{Q, Q}_{\text{path}}}{\braket{P, P}_{\text{path}}}}
    \end{equation}
    Each factor of the telescoping product in~\cref{eq:orthogonal_norm_partition_algebra} requires at most $\wt{O}(n^{1/2} \cdot (n + \log d + \log(1/\varepsilon)))$ gates to compute, dominated by the cost of computing $\lambda(b_i)_{PP}$ or $\lambda(b_i)_{QQ}$. Therefore, $\wt{O}(n^{3/2} \cdot (n + \log d + \log(1/\varepsilon)))$ gates are also required to compute and apply the normalization factor. 
    
    In conclusion, the computing a single matrix entry $\lambda(s_i)_{QP}$ in the orthogonal form requires 
    \begin{equation}
        \wt{O}(n^{3/2} \cdot (n + \log d + \log(1/\varepsilon)))
    \end{equation}
    gates. Since there are $O(n)$ nonzero entries, the overall gate complexity of~\cref{eq:partition_isometry} is 
    \begin{equation}
        \wt{O}(n^{5/2} \cdot (n + \log d + \log(1/\varepsilon)))
    \end{equation}
    The operator norm bound is analogous to~\cref{cor:efficient_computation_brauer_matrix_elements}.
\end{proof}

\section{Multiplicities of the Schur Representation}
Here, we give the formulas needed for the embedding step in \cref{sec:embedding}.
\label{sec:multiplicity_formulas}
\begin{lemma}[{\cite{halverson2004partitionalgebras}}, Proposition 3.24]
    For the Schur representation of the partition algebra $P_n(d)$, 
    \begin{equation}
     \label{eq:partition_multiplicity}
        m_\rho = \frac{f^\rho}{|\rho|!} \cdot \prod_{j=1}^{|\rho|} (d- |\rho| - \rho_j + j)
    \end{equation}
   Also, for $P_{n - \frac12}(d)$:
    \begin{equation}
        m_\rho = \frac{f^\rho}{|\rho|!} \cdot d\prod_{j=1}^{|\rho|} (d -1- |\rho| - \rho_j + j)
    \end{equation}
\end{lemma}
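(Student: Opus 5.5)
The plan is to identify $m_\rho$ with the dimension of an irrep of the Schur--Weyl dual group and then evaluate that dimension with the hook length formula (\cref{def:hook_length_formula}). For $P_n(d)$ acting on $(\mathbb{C}^d)^{\otimes n}$, Schur--Weyl duality with $S_d$ (\cref{tab:schur-weyl-partition}) pairs the partition-algebra irrep $\rho$ with the $S_d$-irrep of shape $\rho[d] \coloneqq (d-|\rho|, \rho_1, \rho_2, \dots)$. This shape is a valid partition once $d \ge |\rho| + \rho_1$, which holds in our regime $d \gg \poly(|A|)$. By \cref{eq:schur-isomorphism}, the multiplicity $m_\rho$ of $\rho$ in $\mathbf{S}$ is therefore $\dim W_{\rho[d]} = f^{\rho[d]}$. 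I will take this correspondence from \cite{halverson2004partitionalgebras} and not reprove it.

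The main computation is $f^{\rho[d]} = d!/\prod_{b\in\rho[d]} h(b)$.

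\textbf{Lower rows.} The boxes in rows $2,3,\dots$ of $\rho[d]$ have exactly the hook lengths they have in $\rho$. Their product is therefore $|\rho|!/f^\rho$.

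\textbf{First row.} Write $N = d-|\rho|$ for the length of the first row. The hook of box $(1,j)$ is $N - j + 1 + \rho'_j$, where $\rho'_j$ is the $j$th column length of $\rho$ (zero for $j > \rho_1$).

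\textbf{Combining.} We get
\begin{equation*}
f^{\rho[d]} = \frac{f^\rho}{|\rho|!}\cdot\frac{d!}{\prod_{j=1}^{N}(N-j+1+\rho'_j)}.
\end{equation*}
It remains to show that $d!\big/\prod_{j=1}^{N}(N-j+1+\rho'_j) = \prod_{j=1}^{|\rho|}(d-|\rho|-\rho_j+j)$.

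This is the one genuinely combinatorial step, and I expect it to be the main obstacle. I would prove it with the standard complementarity of hook lengths. For a shape with first row of length $N$, the multiset $\{N-j+1+\rho'_j\}_{j\le N}$ together with $\{N + k - \rho_k\}_{k=1}^{\ell}$ partitions $\{1,\dots,N+\ell\}$. Here $\ell$ is taken large, padding $\rho$ with zero parts. This is the usual ``first-row hooks and shifted row lengths are complementary'' lemma, proved by walking along the boundary of $\rho$. Taking $\ell = |\rho|$ gives $N+\ell = d$, so $d!$ divided by the first-row hooks equals $\prod_{k=1}^{|\rho|}(N+k-\rho_k)$, which is exactly the claimed factor. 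As a sanity check, $\rho = \Box$ gives $d-1$, as it should.

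For $P_{n-\frac12}(d)$, rather than redoing the duality, I would derive the formula from the $P_n(d)$ case via the branching rule of \cref{thm:branching_rules}. Since $\mathbf{S}$ for $P_{n-\frac12}(d)$ is the restriction of the Schur representation of $P_n(d)$, we have
\begin{equation*}
m^{P_{n-\frac12}}_\mu = \sum_{\lambda:\ \mu\in\text{Res}(\lambda)} m^{P_n}_\lambda = m^{P_n}_\mu + \sum_{\lambda=\mu+\Box} m^{P_n}_\lambda.
\end{equation*}
In terms of $S_d$, this sum is the dimension of $\mathrm{Ind}_{S_{d-1}}^{S_d}$ of the $S_{d-1}$-irrep $\mu[d-1]$, by the Pieri/branching rule for symmetric groups. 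That dimension is $d\cdot f^{\mu[d-1]}$, and applying the $P_n$ computation with $d\to d-1$ yields the stated formula with its extra factor of $d$. The check $\mu=\emptyset$ gives $1 + (d-1) = d$, which matches.
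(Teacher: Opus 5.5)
Your argument is correct. It differs from the paper only in that the paper gives no proof here at all: it imports the formula from Halverson--Ram. You import less, namely only the Schur--Weyl correspondence $\rho \leftrightarrow \rho[d]$, which gives $m_\rho = f^{\rho[d]}$, and you derive the product formula yourself.

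\textbf{The $P_n(d)$ case.} The lower rows of $\rho[d]$ contribute $|\rho|!/f^\rho$ to the hook product. The remaining step is the complementarity of the first-row hooks $\{N-j+1+\rho'_j\}_{j\le N}$ with the shifted parts $\{N+k-\rho_k\}_{k\le \ell}$ inside $\{1,\dots,N+\ell\}$. This is Macdonald, Ch.~I, (1.7), after the reflection $x\mapsto N+\ell-x$, and it applies because $N\ge\rho_1$ and $\ell=|\rho|\ge\ell(\rho)$. It yields exactly $\prod_{k=1}^{|\rho|}(d-|\rho|-\rho_k+k)$, including the factors $d-|\rho|+k$ for $k>\ell(\rho)$.

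\textbf{The $P_{n-\frac12}(d)$ case.} Your route is also sound. You restrict $\mathbf{S}$, invert the branching rule to get $m_\mu + \sum_{\lambda=\mu+\Box} m_\lambda$, and identify this with $\dim \mathrm{Ind}_{S_{d-1}}^{S_d} S^{\mu[d-1]} = d\, f^{\mu[d-1]}$.

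\textbf{What your derivation buys.} The main gain over the bare citation is transparency about conventions. The factor $d$ in the half-partition formula is $[S_d:S_{d-1}]$. It appears precisely because the paper's $\mathbf{S}$ for $P_{n-\frac12}(d)$ is the $d^n$-dimensional restriction of the $P_n(d)$ Schur representation on $(\mathbb{C}^d)^{\otimes n}$. Under the Halverson--Ram duality, where $P_{k+\frac12}(d)$ centralizes $S_{d-1}$ on $(\mathbb{C}^d)^{\otimes k}$, the multiplicity would instead be $f^{\mu[d-1]}$, with no factor of $d$. Your argument makes explicit which normalization the stated formula corresponds to.

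Your argument also surfaces the hypotheses $d\ge|\rho|+\rho_1$ and $d-1\ge|\mu|+\mu_1$, which are needed for $\rho[d]$ and $\mu[d-1]$ to be partitions. Both hold automatically in the paper's large-$d$ regime.

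\textbf{Limitation.} You still rely on Halverson--Ram for the labeling of the duality pairing. So this is a reconstruction of the cited result from a more basic input, not a fully independent proof.
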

A similar formula is also known for the Brauer algebra $B_n(d)$:
\begin{lemma}[{\cite{Molev_2012}}, Equation 3.15]
    For the Schur representation of the Brauer algebra $B_n(d)$, 
    \begin{equation}
        \label{eq:brauer_multiplicity}
        m_\rho =  \frac{f^\rho}{|\rho|!} \cdot \prod_{(i,j) \in \rho} (d- 1 + b(i, j))
    \end{equation}
    where $b(i,j) = \rho_i + \rho_j - i - j + 1$ if $i \le j$, and $-\rho_i^\prime - \rho_j^\prime + i + j - 1$ if $i > j$. $\rho_i^\prime$ denotes the number of boxes in the $i$th column of $\rho$. 
\end{lemma}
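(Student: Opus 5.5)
The plan is to reduce the statement to a dimension count for the orthogonal group and then evaluate that dimension with the Weyl dimension formula. This is the classical El Samra--King dimension formula, which \cite{Molev_2012} records as Equation 3.15.

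\textbf{Step 1: multiplicity equals an $O(d)$ dimension.} By Brauer--Schur--Weyl duality (\cref{tab:schur-weyl-partition}), the Schur representation $\mathbf{S}$ of $B_n(d)$ on $(\mathbb{C}^d)^{\otimes n}$ has commutant spanned by $U_g^{\otimes n}$ for $g\in O(d)$. We work in the semisimple regime with $d$ large, in particular $d\ge 2n$, so that $\mathbf{S}$ is faithful. The double centralizer theorem then gives the decomposition \cref{eq:schur-isomorphism}. Hence $m_\rho=\dim W_\rho$, where $W_\rho$ is the irreducible $O(d)$-module labelled by the partition $\rho$, with $|\rho|\le n$ and $|\rho|\equiv n \bmod 2$. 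It therefore suffices to show
\[
\dim W_\rho=\frac{f^\rho}{|\rho|!}\prod_{(i,j)\in\rho}\bigl(d-1+b(i,j)\bigr).
\]

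\textbf{Step 2: reduce to a polynomial identity checked at odd $d$.} Both sides are polynomials in $d$ of degree $|\rho|$. For the left side, Littlewood's restriction rule gives $W_\rho=\sum_\lambda (\pm c)\,V^{GL}_\lambda$, a finite alternating sum of $GL(d)$ modules whose coefficients are Littlewood--Richardson numbers independent of $d$. Each $\dim V^{GL}_\lambda=\prod_{(i,j)\in\lambda}(d+j-i)/h(i,j)$ is a polynomial in $d$. Two polynomials of degree $|\rho|$ that agree on infinitely many integers are equal. So it is enough to verify the identity for $d=2m+1$ with $m\ge \ell(\rho)$. For such $d$, $O(d)=SO(d)\times\{\pm1\}$, and $W_\rho$ restricts to the $SO(2m+1)$ irrep with highest weight $\rho$.

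\textbf{Step 3: evaluate the Weyl dimension formula for type $B_m$.} Here $\rho_{\mathrm{Weyl}}=(m-\tfrac12,\dots,\tfrac12)$. Set $l_k=\rho_k+m-k+\tfrac12$. Then
\[
\dim=\prod_{i<j}\frac{(l_i-l_j)(l_i+l_j)}{(j-i)(2m+1-i-j)}\prod_i\frac{l_i}{m-i+\tfrac12}.
\]
The task is to regroup this as a product over boxes. The factors $l_i-l_j$ combine with the trivial rows $k>\ell(\rho)$ into $|\rho|!/\prod h(i,j)=f^\rho$, exactly as in the standard derivation of the $GL$ content formula. The factors $l_i+l_j$ and $l_i$ telescope into the boxwise factors $d-1+b(i,j)$. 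Boxes above the diagonal ($i\le j$) contribute $d-1+\rho_i+\rho_j-i-j+1$, which comes from $l_i+l_j$ when $j$ is read as a row index. Boxes below the diagonal contribute the conjugate-partition version $d-1-\rho'_i-\rho'_j+i+j-1$, which arises from cancelling numerator and denominator factors in each row.

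\textbf{Main obstacle.} The hard part is this final combinatorial regrouping, in particular the below-diagonal boxes, where the cancellation is only visible through the conjugate partition. I would first carry out the check for one-row $\rho=(k)$ and one-column $\rho=(1^k)$. Then I would induct on $|\rho|$ by adding one box at a time, comparing the ratio of the two sides after adding a box in row $r$: on the Weyl side only the factors involving $l_r$ change, and one shows this ratio matches the change in $f^\rho/|\rho|!$ times the new factors $d-1+b$. If the direct regrouping becomes unwieldy, I would fall back on citing the El Samra--King identity as stated in \cite{Molev_2012}. Step 1 is the only paper-specific ingredient.
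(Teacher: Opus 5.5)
Your proposal is correct and follows the paper's route. The paper gives no independent proof: it identifies $m_\rho$ with the dimension of the $O(d)$-irrep labelled $\rho$ via Brauer--Schur--Weyl duality (as its footnote on Schur multiplicities indicates) and then quotes the El Samra--King formula recorded in Molev's Equation 3.15, which is exactly your Step 1 plus your fallback citation. Your Steps 2--3 (polynomiality in $d$, then the type-$B_m$ Weyl dimension formula and the box-by-box regrouping) sketch how that cited formula is proved, but the regrouping is unfinished, so your argument ultimately rests on the same citation and, like the paper's use of it, covers only large $d$.
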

And finally, for irreps of the walled Brauer algebra: 
\begin{lemma}[{\cite{Halverson1996MixedTensor}}, \cite{Koike1989TensorProducts} Theorem 3.10]
    For the Schur representation of the walled Brauer algebra $B_{r,s}(d)$, 
    \begin{equation}
    \label{eq:walled_brauer_schur_multiplicity}
        m_{(\lambda, \mu)} = \prod_{1 \le i < j \le d } \frac{a_i - a_j + j - i}{j - i}
    \end{equation}
    Here, $a_i = \lambda_i$ for $1 \le i \le r$, $a_i = 0$ for $r < i \le d - s$, $a_{d - s + k} = -\mu_{s - k + 1}$ for $1 \le k \le s$. 
\end{lemma}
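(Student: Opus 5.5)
The plan is to read $m_{(\lambda,\mu)}$ off Schur--Weyl duality for mixed tensors and then evaluate it with the Weyl dimension formula for $U(d)$. We work throughout in the regime $d \ge r+s$, which is implied by our standing assumption $d \gg \poly(|A|)$.

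\emph{Step 1 (reduce to a $U(d)$ dimension).} By the duality in \Cref{tab:schur-weyl-partition}, the image of $\mathbf{S}$ on $V_d^{\otimes r}\otimes (V_d^*)^{\otimes s}$ and the span of $U^{\otimes r}\otimes \overline{U}^{\otimes s}$ are mutual commutants. The double commutant theorem then gives a decomposition
\begin{equation}
V_d^{\otimes r}\otimes (V_d^*)^{\otimes s}\;\cong\;\bigoplus_{(\lambda,\mu)} W_{(\lambda,\mu)}\otimes V^{(\lambda,\mu)},
\end{equation}
where the $W_{(\lambda,\mu)}$ are pairwise inequivalent irreducible $U(d)$-modules. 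Comparing with the decomposition used in \cref{lem:fourier_states_orthogonal_under_schur}, we get $m_{(\lambda,\mu)}=\dim W_{(\lambda,\mu)}$.

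\emph{Step 2 (identify the highest weight).} This is the main step. We will show that $W_{(\lambda,\mu)}$ has highest weight
\begin{equation}
(\lambda_1,\dots,\lambda_r,0,\dots,0,-\mu_s,\dots,-\mu_1)=(a_1,\dots,a_d),
\end{equation}
which is exactly the sequence $(a_i)$ in the statement. The approach follows Koike. First take the subspace of traceless tensors, that is, the joint kernel of all contractions $e_r$ and $f_i$. By \cref{thm:zero_irrep_matrix} and the propagating-number ideal $J$, this subspace is the isotypic part on which $B_{r,s}(d)$ acts through its quotient $\mathbb{C}[S_r\times S_s]$. On this part we apply ordinary Schur--Weyl duality separately to $V^{\otimes r}$ and to $(V^*)^{\otimes s}$.

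To see that the resulting weight is the claimed one, take a Young symmetrizer for $\lambda$ applied to $e_1,\dots$ in the first $r$ factors. In the last $s$ factors take the symmetrizer for $\mu$ applied to $e_d^*,e_{d-1}^*,\dots$. This vector is a highest weight vector of weight $(a_i)$. It is traceless because the indices used in the $V$-factors and the $V^*$-factors are disjoint, which is possible precisely when $d\ge r+s$. Irreps with $k>0$ are handled by induction through the contraction maps: $W_{(\lambda,\mu)}$ for $(\lambda,\mu)$ with $|\lambda|=r-k$ is the traceless part inside $V^{\otimes (r-k)}\otimes (V^*)^{\otimes (s-k)}$, embedded using $k$ copies of the invariant $\sum_i e_i\otimes e_i^*$. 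The delicate point is the bijection between the labels $(\lambda,\mu)$ and these highest weights. I would fix it by matching multiplicities through the branching rules in \cref{thm:branching_rules} rather than proving it directly.

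\emph{Step 3 (Weyl dimension formula).} The Weyl dimension formula for $U(d)$ with $\rho$-shift $(d-1,\dots,0)$ gives
\begin{equation}
\dim W_{(\lambda,\mu)}=\prod_{1\le i<j\le d}\frac{(a_i+d-i)-(a_j+d-j)}{j-i}=\prod_{1\le i<j\le d}\frac{a_i-a_j+j-i}{j-i},
\end{equation}
which is the claimed formula \cref{eq:walled_brauer_schur_multiplicity}.

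I expect Step 2, the highest-weight identification, to be the main obstacle. Step 1 is formal and Step 3 is a direct substitution.
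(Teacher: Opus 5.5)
Your outline follows the route behind the results the paper simply cites (Koike, Halverson); the paper itself gives no argument beyond those citations. The route is: mixed Schur--Weyl duality, then identifying the $U(d)$-partner of $(\lambda,\mu)$ as the irrep $[\lambda,\mu]$ with highest weight $(a_1,\dots,a_d)$, then the Weyl dimension formula. Steps~1 and~3 are fine. The $k=0$ case of Step~2 is also fine. Your Young-symmetrized vector is traceless, which only needs $\ell(\lambda)+\ell(\mu)\le d$. It lies in the $S^\lambda\otimes S^\mu$-isotypic part of the traceless subspace, and it is a highest weight vector of weight $(a_i)$. So it determines $W_{(\lambda,\mu)}$ whenever $|\lambda|=r$ and $|\mu|=s$.

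The $k>0$ case is asserted rather than proved, and the fix you propose needs more than the branching rules. Embedding traceless tensors via $k$ copies of $\sum_i e_i\otimes e_i^*$ only shows that $[\lambda,\mu]$ occurs in $V^{\otimes r}\otimes (V^*)^{\otimes s}$. It does not say which walled Brauer irrep $[\lambda,\mu]$ is paired with.

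The branching rules alone cannot settle this either. The branching graph is invariant under $(\lambda,\mu)\mapsto(\lambda^t,\mu^t)$, and that relabeling changes the dimension formula. Already in $B_{1,1}(d)$, the irreps $((1),(1))$ and $(\emptyset,\emptyset)$ restrict to the same irrep of both $B_{0,1}(d)$ and $B_{1,0}(d)$.

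To close the step you need three ingredients:
\begin{itemize}
\item[(i)] The mixed Pieri rule. For $d>\ell(\lambda)+\ell(\mu)$, $[\lambda,\mu]\otimes V\cong\bigoplus[\lambda+\Box,\mu]\oplus\bigoplus[\lambda,\mu-\Box]$, and dually for $V^*$. This follows directly from the Brauer--Klimyk formula, since every weight of $V$ has multiplicity one.
\item[(ii)] The see-saw identity. Decompose $V^{\otimes r}\otimes(V^*)^{\otimes s}\cong (V^{\otimes (r-1)}\otimes (V^*)^{\otimes s})\otimes V$ under $U(d)\times B_{r-1,s}(d)$ in two ways. This shows the left restriction of the partner of $x$ is the set of partners of those $y$ with $x\subset y\otimes V$. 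The same holds on the right with $V^*$.
\item[(iii)] Your $k=0$ identification as an anchor. It also shows that $U(d)$-irreps at level $k\ge 1$ have partners at level $\ge 1$.
\end{itemize}

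Then induct on $r+s$, using both restriction rules. Suppose $[\lambda,\mu]$ with $k\ge1$ has partner $(\alpha,\beta)$. Then $(\lambda,\mu+\Box)$ lies in the left restriction of $(\alpha,\beta)$, and $(\lambda+\Box,\mu)$ lies in its right restriction. Counting boxes leaves only $(\alpha,\beta)=(\lambda,\mu)$ or $(\alpha,\beta)=(\lambda+\Box,\mu+\Box)$.

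In the second case, $(\alpha,\beta)$ would have to be at level $0$. Otherwise its left restriction would contain some $(\alpha,\beta+\Box)$ with $|\lambda|+1$ left boxes, and no element of the see-saw set has that many. Level $0$ contradicts (iii).

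This route is also the right one for this paper, because the paper fixes the labels of the $k>0$ irreps only implicitly, through the cited dimension formula and branching rules. With this argument supplied, your proposal is a complete proof.
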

\begin{lemma}
\label{lem:efficient_to_compute_multiplicities}
For any irrep $\rho \in \wh{A}$, the multiplicity $m_\rho$ in the Schur representation can be computed to additive error $\varepsilon$ with
\begin{equation}
    \begin{cases}
        \wt{O}(n\log d + \log(1/\varepsilon)) \text{ gates}, & A \in \{P_n(d), P_{n - \frac12}(d), B_n(d)\}, \\[4pt]
        \wt{O}(n^2\log d + \log(1/\varepsilon)) \text{ gates}, & A = B_{r,s}(d).
    \end{cases}
\end{equation}
\end{lemma}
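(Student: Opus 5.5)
The plan is to compute each $m_\rho$ in the Schur representation directly from the closed formulas \cref{eq:partition_multiplicity}, \cref{eq:brauer_multiplicity}, and \cref{eq:walled_brauer_schur_multiplicity}, using reversible classical arithmetic. Every formula is a product of $\poly(n)$ (or, for the walled Brauer case, more) rational factors whose numerators and denominators are integers of bit length $O(\log d + \log n)$. Since $\rho$ is given as a Young diagram with at most $n$ boxes, it is encoded in $\wt{O}(n)$ bits. The contents, row lengths $\rho_j$, column lengths $\rho_j'$, and the hook lengths appearing in $f^\rho$ can all be extracted with $\wt{O}(n)$ gates.

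\textbf{Partition, half partition, and Brauer algebras.} The main difficulty is the factor $f^\rho/|\rho|!$, which by \cref{def:hook_length_formula} equals $1/\prod_{(i,j)\in\rho} h(i,j)$. I will therefore rewrite $m_\rho$ as a single product over the boxes of $\rho$, at most $n$ factors in total:
\begin{itemize}
\item for $P_n(d)$, use the factors $(d-|\rho|-\rho_j+j)$;
\item for $P_{n-\frac12}(d)$, use the analogous factors together with one extra factor of $d$;
\item for $B_n(d)$, use the factors $(d-1+b(i,j))$;
\end{itemize}
in each case divided by the hook lengths $h(i,j)$.

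I will not form the exact integer numerator, which has $O(n\log d)$ bits. Instead I will work in fixed point with $O(\log n + \log d + \log(1/\varepsilon))$ bits of relative precision. The steps are:
\begin{enumerate}
\item Compute each factor $x_k/h_k$ to relative error $\varepsilon/\poly(n)$, using fast division at cost $\wt{O}(\log d+\log(1/\varepsilon))$ per factor.
\item Multiply the at most $n$ factors together with a balanced product tree of depth $O(\log n)$, which controls error accumulation.
\end{enumerate}

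Additive error $\varepsilon$ on $m_\rho$ itself can require up to $\log m_\rho = O(n\log d)$ bits of magnitude. This accounts for the $\wt{O}(n\log d + \log(1/\varepsilon))$ bound: across the $n$ steps, the operands grow to $\wt{O}(n\log d + \log(1/\varepsilon))$ bits. Using a product tree with fast (quasi-linear) multiplication, as in \cite{BrentZimmermann2010, HarveyVanDerHoeven2021}, the total cost is quasi-linear in the final bit length.

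\textbf{Walled Brauer algebra.} The obstacle here is that \cref{eq:walled_brauer_schur_multiplicity} is a product over $1\le i<j\le d$, which has $\Theta(d^2)$ terms and cannot be evaluated naively. I will exploit the fact that $a_i = 0$ for all $r<i\le d-s$, so only $O(n)$ indices are nonzero. I will split the pairs $(i,j)$ into three groups:
\begin{enumerate}
\item \emph{Both indices in the constant middle block.} Every factor equals $1$.
\item \emph{Both indices outside the middle block.} There are $O(n^2)$ such pairs, each computed directly.
\item \emph{Exactly one index $i$ nonconstant.} For fixed $i$, the product over $j$ in the middle block telescopes into a ratio of factorial-like products, $\prod_{j}\frac{a_i+j-i}{j-i}$. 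This equals a ratio of rising factorials of length $|a_i|\le n$, so it collapses to $O(n)$ factors.
\end{enumerate}
After this reduction, $m_{(\lambda,\mu)}$ becomes a product of $O(n^2)$ rational factors, each with $O(\log d)$-bit entries. Applying the same product-tree evaluation then gives $\wt{O}(n^2\log d + \log(1/\varepsilon))$ gates.

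I expect the telescoping step, carried out carefully with the signs and the index shift $d-s+k$, to be the main obstacle. The fixed-point error analysis is routine by comparison.
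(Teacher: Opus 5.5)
Your reduction matches the paper's. For $P_n(d)$, $P_{n-\frac12}(d)$ and $B_n(d)$ you write $m_\rho$ as at most $n$ factors divided by hook lengths. For $B_{r,s}(d)$ you split the pairs $(i,j)$ three ways and collapse the middle-block products as ratios of rising factorials of length $\lambda_i$ or $\mu_j$. This is the paper's five-product rewriting together with $(x+\ell)_m/(x)_m=(x+m)_\ell/(x)_\ell$.

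The gap is in the arithmetic, at exactly the step you decline to take. You round each quotient $x_k/h_k$ to relative error $\varepsilon/\poly(n)$ using $O(\log n+\log d+\log(1/\varepsilon))$ bits. That gives relative error about $\varepsilon$ in the product, i.e.\ additive error about $\varepsilon\, m_\rho$, which in the worst case is $\varepsilon\, d^{\Theta(n)}$ rather than $\varepsilon$.

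To reach additive error $\varepsilon$, note that a relative error in any single rounded leaf gets multiplied by $m_\rho$. So every leaf must carry $\Theta(\log(n\,m_\rho/\varepsilon))=\Theta(n\log d+\log(1/\varepsilon))$ bits. Once the leaves are rounded, precision cannot start small and ``grow up the tree.'' Computing the leaves and the bottom level of the product tree then costs $\wt{O}(n^2\log d+n\log(1/\varepsilon))$ instead of $\wt{O}(n\log d+\log(1/\varepsilon))$. For the $O(n^2)$ leaves of $B_{r,s}(d)$ it costs $\wt{O}(n^3\log d+n^2\log(1/\varepsilon))$. Even under a relative-error reading, which is all the ratio $r_{\rho,\sigma}=m_\rho/(d\,m_\sigma)$ in the embedding step needs, your step 1 alone costs $\wt{O}(n\log d+n\log(1/\varepsilon))$. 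That exceeds the stated $\log(1/\varepsilon)$ term by a factor of $n$.

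The quasi-linear product-tree bound you cite applies to exact integer products, where operand lengths genuinely double level by level. So the repair is to keep the leaves exact:
\begin{enumerate}
\item Compute the integer numerator $N$ (at most $n$, resp.\ $O(n^2)$, integers of $O(\log d)$ bits) exactly by a product tree, in $\wt{O}(n\log d)$, resp.\ $\wt{O}(n^2\log d)$, gates.
\item Compute the integer denominator exactly the same way: $\prod_{(i,j)\in\rho}h(i,j)$, resp.\ the $O(\log d)$-bit denominators of the rewritten products.
\item Perform a single division to $O(\log N+\log(1/\varepsilon))$ bits. You can even do exact integer division, since $m_\rho\in\mathbb{Z}$.
\end{enumerate}
This is the paper's route: it inverts $\prod h(i,j)$ once and multiplies by the exactly computed second product. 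Note that there, too, the inverse must be taken to precision about $\varepsilon/N\approx\varepsilon d^{-n}$, not $\varepsilon$, to make the final error additive $\varepsilon$. This still fits within $\wt{O}(n\log d+\log(1/\varepsilon))$.
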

\begin{proof}
For $A\in\{P_n(d),P_{n - \frac12}(d),B_n(d)\}$, we first compute 
\begin{equation}
    \frac{f^\rho}{|\rho|!} = \frac{1}{\prod_{(i,j) \in \rho} h(i,j)}
\end{equation}
via $O(n)$ multiplications of $O(\log n)$-bit numbers, followed by taking an inverse. This requires $\wt{O}(n + \log (1/\varepsilon))$ gates~\cite{HarveyVanDerHoeven2021}. Next, we compute each of the $|\rho|$ factors in the second product, which takes $\wt{O}(\log d)$ time, and therefore $\wt{O}(n\log d)$ time to obtain the second product. Therefore, the total gate complexity is upper bounded by $\wt{O}(n\log d + \log(1/\varepsilon))$.

For the walled Brauer algebra,~\cref{eq:walled_brauer_schur_multiplicity} seems to require computing $\Theta(d^2)$ factors. However, when $d \gg n$, most of the $a_i$ terms are equal to zero, which simplifies the calculation considerably. We can rewrite~\cref{eq:walled_brauer_schur_multiplicity} as 
\begin{align}
\label{eq:walled_brauer_multiplicity_simplified_correct}
m_{(\lambda,\mu)}
&=\Bigg(\prod_{1\le i<j\le r}\frac{\lambda_i-\lambda_j+j-i}{j-i}\Bigg)
\Bigg(\prod_{1\le i<j\le s}\frac{\mu_i-\mu_j+j-i}{j-i}\Bigg) \nonumber\\
&\quad\cdot
\Bigg(\prod_{i=1}^{r}\prod_{j=1}^{s}\frac{\lambda_i+\mu_j+d-i-j+1}{d-i-j+1}\Bigg)
\Bigg(\prod_{i=1}^{r}\prod_{t=1}^{d-r-s}\frac{\lambda_i+r+t-i}{r+t-i}\Bigg)
\Bigg(\prod_{j=1}^{s}\prod_{t=1}^{d-r-s}\frac{\mu_j+s+t-j}{s+t-j}\Bigg).
\end{align}
The final two products are ratios of \textit{rising factorials}, which satisfy the following equality (\cite{DLMF}, Equation 5.2.5):
\begin{equation}
\label{eq:rising_factorial_cancellation_identity}
\prod_{t=1}^{m}\frac{x+\ell+t-1}{x+t-1}
\;=\;
\frac{(x+\ell)_m}{(x)_m}
\;=\;
\prod_{q=0}^{\ell-1}\frac{x+m+q}{x+q}
\qquad\qquad
\end{equation}
This rewrite allows us to express each of the five terms as products over $O(n^2)$ terms, where every numerator and denominator requires at most $\wt{O}(\log d)$ bits to express. Hence, $m_\rho$ can be computed using $\wt{O}(n^2\log d + \log(1/\varepsilon))$ gates. 
\end{proof}

\begin{fact}
    \label{fact:quotient_rule}
    For all $a, b >0$,
    \begin{align}
        \left\lvert
            \frac{a \pm \Delta_a}{b \pm \Delta_b}
            -
            \frac{a}{b}
        \right\rvert
        \le
        \left\lvert
            \frac{
                a \Delta_b + b \Delta_a
            }{
                b (b - \Delta_b)
            }
        \right\rvert
    \end{align}
\end{fact}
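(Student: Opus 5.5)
The plan is to clear denominators and apply the triangle inequality; I read $\pm\Delta_a$, $\pm\Delta_b$ as perturbations $\delta_a,\delta_b$ with $|\delta_a|\le\Delta_a$ and $|\delta_b|\le\Delta_b$, and I assume $\Delta_b<b$ so that the right-hand side is finite and positive. Write
\begin{equation*}
\frac{a+\delta_a}{b+\delta_b}-\frac{a}{b}=\frac{b(a+\delta_a)-a(b+\delta_b)}{b(b+\delta_b)}=\frac{b\,\delta_a-a\,\delta_b}{b(b+\delta_b)},
\end{equation*}
so the proof reduces to bounding this single fraction. The numerator has the cross terms cancel exactly, and that cancellation is the whole point of the identity.

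For the numerator, the triangle inequality together with $a,b>0$ gives $|b\,\delta_a-a\,\delta_b|\le b\Delta_a+a\Delta_b$; this holds with any combination of signs, since only absolute values enter. For the denominator, $b+\delta_b\ge b-\Delta_b>0$ by the assumption $\Delta_b<b$, and therefore $|b(b+\delta_b)|\ge b(b-\Delta_b)$. Dividing the numerator bound by the denominator bound gives exactly
\begin{equation*}
\left\lvert\frac{a\Delta_b+b\Delta_a}{b(b-\Delta_b)}\right\rvert .
\end{equation*}

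The only step needing care is the denominator. If $\Delta_b\ge b$, the perturbed denominator can vanish and the claim is meaningless, so I would state the implicit hypothesis $\Delta_b<b$ explicitly. That hypothesis is what the fact's later use presumably supplies (e.g.\ in \cref{lem:appendix_ratio_closeness}, where $\Delta_b=O(\delta)\,b$ with $\delta$ small). Everything else is routine algebra.
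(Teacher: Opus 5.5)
Your proof is correct and is essentially the paper's argument: clear denominators so the $ab$ cross terms cancel, bound the numerator by the triangle inequality, and bound the denominator below by $b(b-\Delta_b)$. Two points in your version are more careful than the paper's. The hypothesis $\Delta_b<b$, which you state explicitly, is genuinely needed and the paper leaves it implicit. Your numerator $b\,\delta_a - a\,\delta_b$ has the correct sign where the paper writes $+$, a slip that is harmless once absolute values are taken.
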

\begin{proof}
    Let $\abs{\delta_a} \le \Delta_a$ and $\abs{\delta_b} \le \Delta_b$.
    \begin{align}
        \left\lvert
            \frac{a + \delta_a}{b + \delta_b}
            -
            \frac{a}{b}
        \right\rvert
        =
        \left\lvert
            \frac{
                b(a + \delta_a) - a(b + \delta_b)
            }{
                b(b + \delta_b)
            }
        \right\rvert
        =
        \left\lvert
            \frac{
                b\delta_a + a \delta_b
            }{
                b(b + \delta_b)
            }
        \right\rvert
        \le
        \left\lvert
            \frac{
                b\Delta_a + a \Delta_b
            }{
                b(b - \Delta_b)
            }
        \right\rvert
    \end{align}
\end{proof}

\begin{lemma}
    \label{lem:appendix_ratio_closeness}
    Assume that $A \in \{P_n(d), P_{n - \frac12}(d), B_n(d), B_{r,s}(d)\}$, with subalgebra $C$ given by the chains in~\cref{sec:subalgebra_chains}. Let $r_{\rho, \sigma}$ be defined as in~\cref{eq:step_two_embed_eq}. Then,
    \begin{equation}
        \left|\frac{||E^\rho_{P \circ \rho, Q \circ \rho}||^2_2}{||E^{\sigma}_{PQ}||^2_2} - r_{\rho, \sigma}\right | \le O(\delta) \quad  
    \end{equation}
    and
    \begin{equation}
        \label{eq:second_eq_ratio}
        \left|\frac{||E^\rho_{P \circ \rho, Q \circ \rho}||_2}{||E^{\sigma}_{PQ}||_2} - \sqrt{r_{\rho, \sigma}}\right | \le O(\delta)
    \end{equation}
    where $\delta \coloneqq d^{-1/2} \cdot |A|$. 
\end{lemma}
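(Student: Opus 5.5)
The plan is to express both squared norms through \cref{cor:norm_of_fourier_states}, applied once in $A$ and once in $C$. For $A$ we have $||E^\rho_{P\circ\rho,Q\circ\rho}||_2^2 = \frac{d^{n_A}}{m_\rho}(1\pm O(\delta))$, where $d^{n_A}$ is the diagonal scaling of the Schur inner product on $\mathcal{B}(A)$ from \cref{eq:dn_inner_product}. For $C$ we have $||E^{\sigma}_{PQ}||_2^2 = \frac{d^{n_C}}{m_\sigma}(1\pm O(\delta_C))$. Here $\delta_C = d^{-1/2}|C| \le \delta$, since $|C|\le|A|$. One subtlety needs care first. The norm of $E^\sigma_{PQ}$ is taken with respect to the computational basis of $C$, and the ratio is only meaningful after identifying $C$'s basis elements with their images in $A$. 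For the partition algebra this means accounting for the rescaling $d^{(n-\cc(D))/2}$ of \cref{eq:basis_for_a_diagram_algebra}. I will treat the relevant norm as the one appearing in the embedding, namely $C$'s own normalized basis, in which the Schur scaling is $d^{n_C}$.

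Dividing the two expressions gives the ratio $\frac{m_\sigma}{m_\rho}\, d^{n_A-n_C}\cdot\frac{1\pm O(\delta)}{1\pm O(\delta)}$. The exponent difference $n_A-n_C$ is read off from the chains in \cref{sec:subalgebra_chains}. It equals $1$ for one column added ($B_n\supset B_{n-1}$, $B_{r,s}\supset B_{r,s-1}$, $P_n\supset P_{n-\frac12}$ viewed as the diagram of $P_{n-1}$ padded). It equals $0$ for $P_{n-\frac12}\supset P_{n-1}$, because both algebras have Schur scaling $d^{n-1}$, as noted after \cref{eq:step_two_embed_eq}. Two-step embeddings compose via $\mathbf{E}(A,C)\mathbf{E}(C,B)=\mathbf{E}(A,B)$. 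At this point I must check that the leading ratio agrees with $r_{\rho,\sigma}$ as written in \cref{eq:step_two_embed_eq}. The orientation of $m_\rho/m_\sigma$ and the factor of $d$ need to be reconciled against the definition. I expect a correct reading to give $d^{n_A-n_C}m_\sigma/m_\rho$, with \cref{eq:step_two_embed_eq} meant in that sense, and I will follow whichever convention makes the identity hold.

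Next I bound the error using \cref{fact:quotient_rule} with $a=b=1$ and $\Delta_a,\Delta_b=O(\delta)$. Since $\delta\le 1/2$, this yields $\left|\frac{1\pm O(\delta)}{1\pm O(\delta)}-1\right| = O(\delta)$. The additive error is then $r_{\rho,\sigma}\cdot O(\delta)$. This is the main obstacle: I need $r_{\rho,\sigma}=O(1)$ to turn the multiplicative error into an additive $O(\delta)$. I plan to get this from the structure of the embedding itself rather than from multiplicity formulas. By \cref{thm:approx_norm_restriction}, $\sum_\rho ||E^\rho_{P\circ\rho,Q\circ\rho}||_2^2 \le (1+O(\sqrt n\,\delta))^{-1}\cdots$, and more directly $\sum_{\rho}||E^\rho||_2^2 = ||E^\sigma||_2^2(1+O(\sqrt n\delta))^{-1}$. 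Each individual ratio is therefore at most $1+O(\sqrt n\,\delta)\le 2$ when $d\gg\poly(|A|)$. Hence $r_{\rho,\sigma}\le 2+O(\delta)$, and the additive error is $O(\delta)$. The polynomial-in-$n$ factors are absorbed because $\delta$ is only defined up to $O(\cdot)$ with $|A|$ factors; if needed I would redefine $\delta$ with an extra $\poly(n)=\polylog|A|$ factor.

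For \cref{eq:second_eq_ratio} I use $|\sqrt x-\sqrt y| = |x-y|/(\sqrt x+\sqrt y)$. When $r_{\rho,\sigma}$ is bounded below by a constant, the bound follows immediately. In general I would instead apply part 2 of \cref{cor:norm_of_fourier_states} directly to the square-rooted norms. This gives $\frac{||E^\rho||_2}{||E^\sigma||_2} = \sqrt{r_{\rho,\sigma}}\,\frac{1\pm O(\delta)}{1\pm O(\delta)}$, and then the same quotient-rule argument applies together with $\sqrt{r_{\rho,\sigma}}=O(1)$.
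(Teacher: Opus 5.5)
The skeleton of your argument is the paper's: apply \cref{cor:norm_of_fourier_states} in $A$ and in $C$, divide, control the multiplicative error with \cref{fact:quotient_rule}, and use $r_{\rho,\sigma}=O(1)$ to make the error additive. However, the leading term you commit to is inverted, and under your proposed reading $r_{\rho,\sigma}=d\,m_\sigma/m_\rho$ the lemma is false. The fraction in the printed statement of \cref{cor:norm_of_fourier_states} is a typo. What \cref{eq:d_nice_corollary_three} actually gives, via \cref{lem:close_gram_implies_close_norms} with $c=d^n$ and $C=\{E^\rho_{ij}/\sqrt{m_\rho}\}$, is $d^n\|E^\rho_{ij}\|_2^2/m_\rho=1\pm O(\delta)$. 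Equivalently $\|E^\rho_{ij}\|_2^2=\frac{m_\rho}{d^n}(1\pm O(\delta))$, and this is the form the paper's proof uses. Dividing then gives $\frac{m_\rho}{d^{\,n_A-n_C}\,m_\sigma}(1\pm O(\delta))$, which is exactly $r_{\rho,\sigma}$ as defined in \cref{eq:step_two_embed_eq}; no reinterpretation is needed.

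Your reading already fails for $B_2(d)\supset B_1(d)$ with $\sigma=\Box$ and $\rho=\emptyset$. There $E^\sigma=I$, $E^\rho=e_1/d$, $m_\sigma=d$ and $m_\rho=1$. So the true ratio is $d^{-2}=m_\rho/(d\,m_\sigma)$, while $d\,m_\sigma/m_\rho=d^{2}$. Saying you will ``follow whichever convention makes the identity hold'' does not replace this determination, because identifying the leading term is the whole content of the lemma. Your own appeal to \cref{thm:approx_norm_restriction} caps every ratio at $1+O(\sqrt n\,\delta)$, which is incompatible with an $r_{\rho,\sigma}$ of order $d^2$ and should have exposed the inversion.

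With the orientation fixed, the rest of your argument goes through, and your route to $r_{\rho,\sigma}=O(1)$ genuinely differs from the paper's. The paper proves $m_\rho/m_\sigma\le O(d)$ case by case, using the explicit Schur-multiplicity formulas and the hook-length bound $f^\rho/f^\sigma\le n$. You instead read it off structurally. \cref{thm:approx_norm_restriction} gives $\|E^\rho_{P\circ\rho,Q\circ\rho}\|_2^2/\|E^\sigma_{PQ}\|_2^2\le 1/(1-(K-1)\delta)\le 2$ once $(K-1)\delta\le 1/2$. Since this ratio equals $r_{\rho,\sigma}(1\pm O(\delta))$, $r_{\rho,\sigma}=O(1)$ follows, with no multiplicity bookkeeping and no need to inflate $\delta$ by a $\poly(n)$ factor. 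This is cleaner and uniform across the four algebras. It only uses that $E^\sigma$ has the same computational norm in $C$ as in $A$. That holds for these chains: the Brauer bases are unscaled, and in the partition chain adding the block $\{n,n'\}$ leaves $n-\cc(D)$ unchanged. The square-root bound then follows from part 2 of the corrected corollary exactly as you describe.
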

\begin{proof}
    We give the proof for when $A \in \{P_n(d), B_n(d), B_{r,s}(d)\}$, so $r_{\rho, \sigma} = {m_\rho}/(d \cdot m_\sigma)$: the proof for the half-partition algebra is analogous. By the first statement~\cref{cor:norm_of_fourier_states} (the proof is identical for~\cref{eq:second_eq_ratio}, starting with the second statement of~\cref{cor:norm_of_fourier_states} instead), and the definition of $r_{\rho, \sigma}$:
    \begin{align}
        \left|\frac{||E^\rho_{P \circ \rho, Q \circ \rho}||^2_2}{||E^{\sigma}_{PQ}||^2_2} - r_{\rho, \sigma}\right |
        = &
         \left|\frac{||E^\rho_{P \circ \rho, Q \circ \rho}||^2_2}{||E^{\sigma}_{PQ}||^2_2} - \frac{m_\rho}{d \cdot m_\sigma}\right |
         \\ 
         = &
         \left|\frac{m_\rho/d^n  + m_\rho/d^n (1 + O(\delta))}{m_\sigma/d^{n-1}  + m_\sigma/d^{n-1} (1 + O(\delta))} - \frac{m_\rho}{d \cdot m_\sigma}\right |
    \end{align}
    Substituting~\cref{fact:quotient_rule},  
    \begin{align}
         \left|\frac{m_\rho/d^n  + m_\rho/d^n (1 + O(\delta))}{m_\sigma/d^{n-1}  + m_\sigma/d^{n-1} (1 + O(\delta))} - \frac{m_\rho}{d \cdot m_\sigma}\right | \le & 
         \;\frac{O(\delta) \cdot m_\rho m_\sigma / d^{2n - 1}}{m_\sigma^2/d^{2n - 2} \cdot (1 - O(\delta))} \\
         \le & \;
         \frac{m_\rho}{dm_\sigma} \cdot O(\delta)
    \end{align}
    So it suffices to show that $m_\rho/m_\sigma \le d$, when $\rho$ is obtained by adding or removing a box from $\sigma$. To show this, we first assume $\rho$ is obtained from $\sigma$ by adding a box:  
    \begin{equation}
        \frac{f^\rho}{|\rho|!} \cdot \frac{|\sigma|!}{f^{\sigma}} = \frac{f^\rho}{f^\sigma \cdot n} \le 1
        \label{eq:hook_ratio}
    \end{equation}
    where $|\rho| = n$, and the inequality follows from $f^\rho/f^\sigma \le n$ (see~\cite{Sagan2001}, Theorem 2.8.3). \cref{eq:hook_ratio} implies $m_\rho/m_\sigma \le O(d)$, since the addition of an extra box adds an extra factor in the products given in~\cref{eq:partition_multiplicity}~\cref{eq:brauer_multiplicity}, and~\cref{eq:walled_brauer_schur_multiplicity}, which is at most $d$. Now, we assume that $\rho$ is obtained from $\sigma$ by removing a box. In this case, expanding out the formulas for the multiplicities gives
    \begin{equation}
        \frac{f^\rho}{f^\sigma} \cdot \frac{n + 1}{d - |\sigma| + \sigma_j + j} = O(n^2/d) 
    \end{equation}
    when $d \gg \poly(A) = 2^{O(n \log n)}$, the ratio is $O(1)$, which is certainly $O(d)$. 

    Finally, we also we need to handle the second case of the lemma statement, which proceeds similarly to the first case: 
    \begin{align}
        \left|\frac{||E^\rho_{P \circ \rho, Q \circ \rho}||_2}{||E^{\sigma}_{PQ}||_2} - r_{\rho, \sigma}\right |
        = &
         \left|\frac{||E^\rho_{P \circ \rho, Q \circ \rho}||_2}{||E^{\sigma}_{PQ}||_2} - \sqrt{\frac{m_\rho}{d \cdot m_\sigma}}\right |
         \\ 
         = &
         \left|\frac{\sqrt{m_\rho/d^n}  + \sqrt{m_\rho/d^n (1 + O(\delta))}}{\sqrt{m_\sigma/d^{n-1}}  + \sqrt{m_\sigma/d^{n-1} (1 + O(\delta))}} - \sqrt{\frac{m_\rho}{d \cdot m_\sigma}}\right |
    \end{align}
    Substituting~\cref{fact:quotient_rule},  
    \begin{align}
         \left|\frac{\sqrt{m_\rho/d^n}  + \sqrt{m_\rho/d^n (1 + O(\delta))}}{\sqrt{m_\sigma/d^{n-1}}  + \sqrt{m_\sigma/d^{n-1} (1 + O(\delta))}} - \sqrt{\frac{m_\rho}{d \cdot m_\sigma}}\right |\le & 
         \;\frac{O(\delta) \cdot \sqrt{m_\rho m_\sigma / d^{2n - 1}}}{m_\sigma/d^{n - 1} \cdot (1 - O(\delta))} \\
         \le & \;
         \sqrt{\frac{m_\rho}{dm_\sigma}} \cdot O(\delta)
    \end{align}
    The final expression is at most $O(\delta)$, using the same logic as the first case. 
\end{proof}
 
\section{Rules For Extending Bratteli Paths}
\label{appendix:bratteli_path_extension_rules}
In this section, we provide proofs of the extension rules stated in~\cref{sec:apply_irreps}, beginning with~\cref{lem:brauer_extension}:
\begin{lemma}
    \label{lem:brauer_extension_appendix}
     Let $B = B_{n-2}(d)$, $A = B_n(d)$. For any diagram $D \in B$ with $\pn(D) = 0$, the isometry
    \begin{equation}
        \mathbf{F}(\wt{\ft_B}\ket{D}) = \wt{\ft_A}\ket{D e_{n-1}}
    \end{equation}
    can be implemented up to operator norm error $O(\poly(|A|) \cdot \delta)$ by the map
    \begin{equation}
        \wt{\mathbf{F}}\ket{\emptyset, P, Q} = \ket{\emptyset, P \circ \Box \circ \emptyset, Q \circ \Box \circ \emptyset}
    \end{equation} 
\end{lemma}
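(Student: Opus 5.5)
Since $\pn(D)=0$ and $\pn(De_{n-1})=0$, \cref{thm:zero_irrep_matrix} says both $\wt{\ft_B}\ket{D}$ and $\wt{\ft_A}\ket{De_{n-1}}$ are supported only on the zero-box irrep $\emptyset$. So the plan is to work entirely inside the $\emptyset$-isotypic blocks and compare coefficients entrywise.

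The target coefficient is $\|E^\emptyset_{P'Q'}\|_2\,\emptyset(De_{n-1})_{P'Q'}$. The source coefficient is $\|E^\emptyset_{PQ}\|_2\,\emptyset(D)_{PQ}$.

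\textbf{Step 1: restrict the row index.} Write $\emptyset(De_{n-1})=\emptyset(D)\emptyset(e_{n-1})$, with $\emptyset(D)$ coming from $B_{n-2}(d)$ through the $B$-adapted basis, so it is block diagonal in the truncated path $P'_{\le n-2}$ by \cref{eq:subalgebra_adapted_matrices}. Two facts about $\emptyset(D)$ follow.
\begin{itemize}
\item Its row index has the form $P\circ\tau\circ\emptyset$ with $\tau\in\{\Box\}$. The only path from $\emptyset$ at level $n-2$ back to $\emptyset$ at level $n$ passes through $\Box$.
\item Since $\pn(D)=0$, the element $\emptyset_{B_{n-2}}(D)$ is supported on $\emptyset$ paths.
\end{itemize}

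\textbf{Step 2: restrict the column index.} $\emptyset(e_{n-1})$ is nonzero only when $Q'(n-2)=Q'(n)=\emptyset$, by \cref{eq:brauer_matrix_form_eq_one}. Between two $\emptyset$'s the only middle vertex is $\Box$. By \cref{eq:brauer_matrix_form_eq_two} the diagonal entry on such paths is $\emptyset(e_{n-1})_{Q'Q'}=d$.

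Concretely, $x_{n-1}=(d-1)/2$ and the single $c$ term is $-(d-1)/2$. Cancellation gives the value $d$, consistent with $e^2=de$ acting as $d$ times a rank-one projector.

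\textbf{Step 3: off-diagonal entries of $\emptyset(e_{n-1})$.} The off-diagonal entries within a $\sim_{n-1}$ class vanish because the class is a singleton.

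\textbf{Step 4: the resulting entrywise identity.} Combining Steps 1–3 gives
\[\emptyset(De_{n-1})_{P\circ\Box\circ\emptyset,\,Q\circ\Box\circ\emptyset}=d\,\emptyset(D)_{PQ},\]
and all other entries are $0$. The factor $d$ is adjusted by whatever rescaling the basis uses for Brauer diagrams. I will double-check this scalar, since Brauer diagrams are unscaled in $\mathcal{B}(A)$.

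\textbf{Step 5: norm ratio.} By \cref{cor:norm_of_fourier_states}, $\|E^\emptyset_{P'Q'}\|_2/\|E^\emptyset_{PQ}\|_2=\sqrt{m_\emptyset^{A}d^{-n}/(m_\emptyset^{B}d^{-(n-2)})}\,(1\pm O(\delta))$. The Schur multiplicity of the trivial irrep is $1$ in both algebras by \cref{eq:brauer_multiplicity}, so the ratio is $d^{-1}(1\pm O(\delta))$. This exactly cancels the factor $d$ from Step 4.

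Hence $\wt{\ft_A}\ket{De_{n-1}}$ equals $\wt{\mathbf F}\,\wt{\ft_B}\ket{D}$ up to an entrywise relative error of $O(\delta)$. By \cref{lem:opnorm_bound} this becomes operator norm error $O(\poly(|A|)\delta)$. The map $\wt{\mathbf F}$ is visibly an isometry: it injectively relabels basis states.

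\textbf{Main obstacle.} The delicate point is that $\mathbf F$ is only defined on the span of $\{\wt{\ft_B}\ket{D}:\pn(D)=0\}$, while the error statement is in operator norm. I would handle this by writing a general vector in that span as $\sum c_D\wt{\ft_B}\ket{D}$. Approximate orthonormality of these vectors (\cref{eq:new_gram_matrix}) bounds $\sum|c_D|^2$ by a constant times the norm squared. A triangle inequality then gives total error $|B|\cdot O(\poly(|A|)\delta)$, which is still $O(\poly(|A|)\delta)$.
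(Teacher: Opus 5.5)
Your proof is correct. Steps 1--4 match the paper's argument: support on $\emptyset$ via \cref{thm:zero_irrep_matrix}, block-diagonality of $\emptyset(D)$ in the $B$-adapted basis, and $\emptyset(e_{n-1})$ acting as $d$ on the singleton $\sim_{n-1}$ class. Together these give $\emptyset(De_{n-1})_{P\circ\Box\circ\emptyset,\,Q\circ\Box\circ\emptyset}=d\,\emptyset(D)_{PQ}$.

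Two small corrections in that part:
\begin{itemize}
\item No rescaling is involved. Every Brauer diagram has $\cc(D)=n$, so the factor is exactly $d$.
\item In Step 2, the addable box of $\emptyset$ gives $c=+(d-1)/2=x_{n-1}$. That $c$ is excluded from the product, so the entry is $2x_{n-1}+1=d$. With $c=-(d-1)/2$, as you wrote, the factor $(x+c)/(x-c)$ would make the entry vanish rather than produce $d$.
\end{itemize}

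The real difference is Step 5.

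The paper gets the norm relation exactly. It shows $d\,E^\emptyset_{P\circ\Box\circ\emptyset,\,Q\circ\Box\circ\emptyset}=E^\emptyset_{PQ}\,e_{n-1}$ in $B_n(d)$ by comparing images under the Fourier isomorphism. It then notes that $a\mapsto ae_{n-1}$ sends distinct diagrams of $B_{n-2}(d)$ to distinct diagrams of $B_n(d)$, so $\|E^\emptyset_{PQ}e_{n-1}\|_2=\|E^\emptyset_{PQ}\|_2$. Hence $\wt{\mathbf F}\,\wt{\ft_B}\ket{D}=\wt{\ft_A}\ket{De_{n-1}}$ holds with zero error. This needs neither $\delta$-niceness nor multiplicity formulas, and your last paragraph (extending from basis vectors to their span) becomes unnecessary.

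Your route instead uses the general fact that Fourier-state norms track Schur multiplicities, together with $m_\emptyset=1$ in both algebras.
\begin{itemize}
\item It only yields the ratio $d^{-1}(1\pm O(\delta))$, but that suffices for the stated bound.
\item It is the same mechanism the embedding step uses, so it would carry over to extension maps that lack a clean identity like $dE=Ee_{n-1}$.
\end{itemize}

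One caution on citation. You are using \cref{cor:norm_of_fourier_states} in the form $\|E^\rho_{ij}\|_2^2\approx m_\rho/d^n$. That is what the Gram-matrix argument actually gives, and it is what \cref{lem:appendix_ratio_closeness} uses. The corollary as displayed has the ratio inverted: in $B_2(d)$ one has $E^\emptyset=e_1/d$, with squared norm $d^{-2}=m_\emptyset/d^2$. So cite it in the corrected orientation.
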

\begin{proof}
 Since $D$ has propagating number zero, \cref{thm:zero_irrep_matrix} implies that every $\ket{\lambda,P,Q}$ in the support of $\wt{\ft_B}\ket{D}$ satisfies $\lambda=P(n-2)=Q(n-2)=\emptyset$:
\begin{equation}
    \label{eq:brauer_embedding_the_start_state}
    \wt{\ft_B}\ket{D} = \sum_{PQ} \|E_{PQ}^\emptyset\|_2 \cdot \emptyset(D)_{PQ} \ket{\emptyset, P, Q}.
\end{equation}
After applying $\mathbf{E}(A,B)$, we obtain
\begin{equation}
    \wt{\ft_A}\ket{D} = \sum_{\rho_n} \sum_{RS} \|E_{RS}^{\rho_n}\|_2\cdot \rho_n(D)_{RS} \ket{\rho_n, R, S},
\end{equation}
where $R = P \circ \rho_{n-1}\circ \rho_n$ and $S = Q \circ \rho_{n-1}\circ \rho_n$ range over the valid extensions of the Bratteli paths to basis vectors of $V^{\rho_n}$, with $\rho_n \in \wh{B_n(d)}$. 

 Now, we apply a controlled-$\rho(e_{n-1})$ operation to the row register, resulting in the state $\wt{\ft_A}\ket{De_{n-1}}$ up to $O(\poly(|A|) \cdot \delta)$ error (\cref{eq:applying_irreps}), and inspect the resulting state. Since $De_{n-1}$ also has propagating number zero,~\cref{thm:zero_irrep_matrix} implies that the only $\ket{\rho_{n}, R, S}$ with nonzero amplitude in $\wt{\ft_A}\ket{De_{n-1}}$ correspond to $\rho_{n} = \emptyset$, $R = P \circ\Box\circ\emptyset$ and $S = Q \circ\Box\circ\emptyset$:
\begin{equation}
    \label{eq:replaced_state_brauer}
    \wt{\ft_A}\ket{De_{n-1}} = \sum_{PQ} \|E_{P \circ \Box \circ \emptyset,\; Q \circ \Box \circ \emptyset}^\emptyset\|_2\cdot \emptyset(De_{n-1})_{P \circ \Box \circ \emptyset, Q \circ \Box \circ \emptyset}
    \ket{\emptyset, P \circ \Box \circ \emptyset, Q \circ \Box \circ \emptyset}
\end{equation}
Our remaining task is to prove that applying $\wt{\mathbf{F}}$ to $\wt{\ft_B}\ket{D}$ exactly recovers the state in~\cref{eq:replaced_state_brauer}, which we do by comparing coefficients with~\cref{eq:brauer_embedding_the_start_state}. To begin,
\begin{align}
\emptyset(De_{n-1})_{P \circ \Box \circ \emptyset, Q \circ \Box \circ \emptyset}
&= \sum_S
\emptyset(D)_{P \circ \Box \circ \emptyset, S}\,
\emptyset(e_{n-1})_{S,Q \circ \Box \circ \emptyset} \\
&= \emptyset(D)_{P \circ \Box \circ \emptyset, Q \circ \Box \circ \emptyset}\, \label{eq:contraction_matrix_is_identity}
\emptyset(e_{n-1})_{Q \circ \Box \circ \emptyset,Q \circ \Box \circ \emptyset} \\
&= \emptyset(D)_{PQ}\,
\emptyset(e_{n-1})_{Q \circ \Box \circ \emptyset, Q \circ \Box \circ \emptyset} \\
&= d\,\emptyset(D)_{PQ}.
\end{align}
The first line expands out the definition of a matrix element. The second follows from~\cref{eq:brauer_matrix_form_eq_three}, which implies that $\emptyset(e_{n-1})_{S,Q \circ \Box \circ \emptyset} = 0$ unless $S = Q \circ \Box \circ \emptyset$. The third follows from ~\cref{eq:subalgebra_adapted_matrices}, and the fourth from~\cref{eq:brauer_matrix_form_eq_two}. 

Hence, it suffices to show that $d\cdot \|E_{P \circ \Box \circ \emptyset,\; Q \circ \Box \circ \emptyset}^\emptyset\|_2 = \|E_{PQ }^\emptyset\|_2$, where the left-hand-side is the norm of a Fourier state in $B_n(d)$, and the right-hand side is the norm of a Fourier state in $B_{n-2}(d)$.

To do so, we first prove that 
\begin{equation}
    \label{eq:intermediate_equality}
    d \cdot E_{P \circ \Box \circ \emptyset,\; Q \circ \Box \circ \emptyset}^\emptyset = E_{PQ }^\emptyset e_{n-1}
\end{equation}
as elements of $B_n(d)$.\ By the definition of a Fourier state, the left-hand side of~\cref{eq:intermediate_equality} is the preimage of $d\ket{P \circ \Box \circ \emptyset}\bra{Q \circ \Box \circ \emptyset}_{\emptyset}$ under the Fourier isomorphism (\cref{eq:semisimple_algebra}) for $B_n(d)$.

On the right-hand side of~\cref{eq:intermediate_equality}, $E_{PQ }^\emptyset$ is the preimage of $\ket{P}\bra{Q}_{\emptyset}$ under the Fourier isomorphism for $B_{n-2}(d)$.  Applying~\cref{eq:subalgebra_chain} twice for the Brauer subalgebra chain implies that as an element of $B_n(d)$, $E_{PQ}^\emptyset$ maps to $\ket{P \circ \Box \circ \emptyset}\bra{Q \circ \Box \circ \emptyset}_{\emptyset}$ under the Fourier isomorphism for $B_n(d)$. Furthermore, on the irrep space $V^\emptyset$, it is easy to check using~\cref{eq:brauer_matrix_form_eq_three} that
\begin{equation}
    \label{eq:id_action_of_contraction}
    \emptyset(e_{n-1})\ket{Q \circ \Box \circ \emptyset} = d \cdot \ket{Q \circ \Box \circ \emptyset}
\end{equation}
This fact was already used in~\cref{eq:contraction_matrix_is_identity}). Thus, $E_{PQ}^\emptyset e_{n-1}$ maps to $d\ket{P \circ \Box \circ \emptyset}\bra{Q \circ \Box \circ \emptyset}_{\emptyset}$ under the same isomorphism. Since the two images are equal,~\cref{eq:intermediate_equality} holds, and therefore 
\begin{equation}
    d \cdot ||E_{P \circ \Box \circ \emptyset,\; Q \circ \Box \circ \emptyset}^\emptyset||_2 = ||E_{PQ}^\emptyset e_{n-1}||_2
\end{equation}
Moreover, 
\begin{equation}
   ||E_{PQ}^\emptyset e_{n-1}||_2 = \left\|d_{\emptyset}\sum_{a \in \mathcal{B}(B_{n-2}(d))} \emptyset(a^*)_{QP}\cdot ae_{n-1}\right\|_2 = \left\|d_{\emptyset}\sum_{a \in \mathcal{B}(B_{n-2}(d))} \emptyset(a^*)_{QP} \cdot a\right\|_2 = ||E_{PQ}^\emptyset||_2
\end{equation}
where in the second equality we used that $a \mapsto ae_{n-1}$ sends distinct basis elements in $B_{n-2}(d)$ to distinct basis elements in $B_{n}(d)$. Hence, 
\begin{equation}
    d \cdot ||E_{P \circ \Box \circ \emptyset,\; Q \circ \Box \circ \emptyset}^\emptyset||_2 = ||E_{PQ}^\emptyset||_2
\end{equation}
which concludes the proof. 
\end{proof}
We now move on to the walled Brauer algebra (\cref{lem:walled_brauer_extension_rule}):
\begin{lemma}
    \label{lem:walled_brauer_extension_appendix}
    Let $B = B_{r-1, r-1}(d)$, $A = B_{r,r}(d)$. For any diagram $D \in B$ with $\pn(D) = 0$, the isometry
    \begin{equation}
        \mathbf{F}(\wt{\ft_B}\ket{D}) = \wt{\ft_A}\ket{D f_{r}}
    \end{equation}
    can be implemented up to operator norm error $O(\poly(|A|) \cdot \delta)$ by the map 
    \begin{equation}
        \wt{\mathbf{F}}\ket{(\emptyset, \emptyset), P, Q} = \ket{(\emptyset, \emptyset), \;P \circ (\Box, \emptyset) \circ (\emptyset, \emptyset), \;Q \circ (\Box, \emptyset) \circ (\emptyset, \emptyset)}
    \end{equation}
\end{lemma}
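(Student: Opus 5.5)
I will mirror the proof of \cref{lem:brauer_extension_appendix} step by step, replacing the Brauer chain and contraction $e_{n-1}$ by the walled Brauer chain of \cref{eq:walleD_brauer_chain} and the contraction $f_r$. First, since $\pn(D)=0$, \cref{thm:zero_irrep_matrix} (walled Brauer case) shows that $\wt{\ft_B}\ket{D}$ is supported only on $\ket{(\emptyset,\emptyset),P,Q}$, with $P(2r-2)=Q(2r-2)=(\emptyset,\emptyset)$. Next I apply the embedding $\mathbf{E}(A,B)$ (two steps, $B_{r-1,r-1}\subseteq B_{r,r-1}\subseteq B_{r,r}$, composed as noted after the embedding lemma). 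By \cref{eq:applying_irreps}, applying $\rho(f_r)$ on the column register then yields $\wt{\ft_A}\ket{Df_r}$ up to $O(\poly(|A|)\cdot\delta)$. Since $\pn(Df_r)=0$ as well, \cref{thm:zero_irrep_matrix} forces the output to be supported on $\rho=(\emptyset,\emptyset)$. By the walled Brauer branching rules, the only extension of a path ending at $(\emptyset,\emptyset)$ that returns to $(\emptyset,\emptyset)$ two levels later is $\circ(\Box,\emptyset)\circ(\emptyset,\emptyset)$: adding a left box, then removing it via the right step. So the output is exactly
\begin{equation*}
\sum_{P,Q}\|E^{(\emptyset,\emptyset)}_{P',Q'}\|_2\,(\emptyset,\emptyset)(Df_r)_{P'Q'}\ket{(\emptyset,\emptyset),P',Q'},
\end{equation*}
where $P'=P\circ(\Box,\emptyset)\circ(\emptyset,\emptyset)$ and $Q'$ is defined analogously.

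It then remains to compare coefficients with $\wt{\ft_B}\ket{D}$. First expand $(\emptyset,\emptyset)(Df_r)_{P'Q'}=\sum_S (\emptyset,\emptyset)(D)_{P'S}\,(\emptyset,\emptyset)(f_r)_{SQ'}$. Next use \cref{eq:walleD_brauer_contraction_off_diagonals}: when the irrep at level $2r-2$ is $(\emptyset,\emptyset)$, the only admissible intermediate is $(\Box,\emptyset)$, so the $\sim$-class of $Q'$ is a singleton and the contraction matrix acts diagonally there. Its diagonal value is $K(Q')^2$, with $c=\cont(\text{box})=0$, both products empty, and the $d$ factor giving $d$. (If the indexing uses $e_r$ rather than $f_r$, I would conjugate by a permutation of the right block to reduce to the same computation, since $f_r$ is the contraction across the wall at position $r$.) Then \cref{eq:subalgebra_adapted_matrices} gives $(\emptyset,\emptyset)(D)_{P'Q'}=(\emptyset,\emptyset)(D)_{PQ}$. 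Hence the coefficient becomes $d\,(\emptyset,\emptyset)(D)_{PQ}$.

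Finally, I need $d\,\|E^{(\emptyset,\emptyset)}_{P'Q'}\|_2=\|E^{(\emptyset,\emptyset)}_{PQ}\|_2$. To show it, I first establish $d\,E^{(\emptyset,\emptyset)}_{P'Q'}=E^{(\emptyset,\emptyset)}_{PQ}f_r$ in $A$. The argument is to compare images under the Fourier isomorphism, using \cref{thm:subalgebra_restrictions} to view $E_{PQ}$ as the matrix unit $\ket{P'}\bra{Q'}$ in $A$, since $(\emptyset,\emptyset)$ has the unique extension. Then $f_r$ acts as $d$ on $\ket{Q'}$ and kills all other irreps, which has to be checked via \cref{thm:zero_irrep_matrix} applied to the product. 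Next I need that right multiplication by $f_r$ maps distinct walled Brauer diagrams of $B_{r-1,r-1}$ injectively to diagrams of $B_{r,r}$ with no loop factor. Together with the walled Brauer basis being unscaled, this gives $\|E_{PQ}f_r\|_2=\|E_{PQ}\|_2$. I expect the main obstacle to be pinning down the exact normalization of $(\emptyset,\emptyset)(f_r)$ on $\ket{Q'}$ in the orthogonal form, that is, confirming it equals $d$ and not $d$ times a content-dependent factor.
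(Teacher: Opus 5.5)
Your proposal is correct and follows essentially the same route as the paper, whose proof simply transfers the argument of \cref{lem:brauer_extension_appendix}, replacing the extension $\Box\circ\emptyset$ by $(\Box,\emptyset)\circ(\emptyset,\emptyset)$. One phrasing fix: viewed in $A$, $E^{(\emptyset,\emptyset)}_{PQ}$ is $E^{(\emptyset,\emptyset)}_{P'Q'}$ plus a $(\Box,\Box)$ matrix unit coming from the path $(\emptyset,\emptyset)\to(\Box,\emptyset)\to(\Box,\Box)$, so the extension is unique only among paths that end at $(\emptyset,\emptyset)$; the extra term is annihilated by $f_r$ because its path has different irreps at levels $2r-2$ and $2r$, which is exactly the check you flagged.
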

\begin{proof}
    This proof is identical to the proof of~\cref{lem:brauer_extension_appendix}, except that the only valid extension paths after applying $\rho(f_r)$ (\cref{eq:replaced_state_brauer}) are $\rho_{n} = \emptyset$, $R = (P \circ (\Box, \emptyset) \circ (\emptyset, \emptyset))$ and $S = Q \circ (\Box, \emptyset) \circ (\emptyset, \emptyset)$. For brevity, we will not repeat the full proof here. 
\end{proof}
Next, we handle the propagating number zero case for the partition algebra (\cref{lem:efficient_add_a_point}): 
\begin{lemma}
\label{lem:efficient_add_a_point_appendix}
    Let $B = P_{n - 1}(d)$, $A = P_{n}(d)$. For any basis element $b = d^{({n - 1 - \cc(D)})/2}D \in \mathcal{B}(B)$ with $\pn(D) = 0$, the isometries
    \begin{equation}
        \mathbf{F}_1(\wt{\ft_B}\ket{b}) = \wt{\ft_A}\ket{d^{(n - 2 - \cc(D))/2} Dp_{n}}
    \end{equation}
    \begin{equation}
        \mathbf{F}_2(\wt{\ft_B}\ket{b}) = \wt{\ft_A}\ket{d^{(n - 1 - \cc(D))/2}b_{n-1}D p_{n}}
    \end{equation}
    \begin{equation}
        \mathbf{F}_3(\wt{\ft_B}\ket{b}) = \wt{\ft_A}\ket{d^{(n - 1 - \cc(D))/2}D p_{n}b_{n-1}}
    \end{equation}
    \begin{equation}
        \mathbf{F}_4(\wt{\ft_B}\ket{b}) = \wt{\ft_A}\ket{d^{(n -  \cc(D))/2} b_{n-1}D p_{n}b_{n-1}}
    \end{equation}
    can be implemented up to operator norm error $O(\poly(|A|) \cdot \delta)$ as follows: 
    \begin{equation}
        \wt{\mathbf{F}}_1\ket{\emptyset, P, Q} = \ket{\emptyset,\;\;P_{\le 2n - 4} \circ \emptyset \circ \emptyset \circ \emptyset \circ \emptyset,\;\; Q_{\le 2n - 4}\circ \emptyset \circ \emptyset \circ \emptyset \circ \emptyset}
    \end{equation}
    \begin{equation}
        \wt{\mathbf{F}}_2\ket{\emptyset, P, Q} = \ket{\emptyset,\;\;P_{\le 2n - 4} \circ \emptyset \circ \Box \circ \emptyset \circ \emptyset,\;\; Q_{\le 2n - 4}\circ \emptyset \circ \emptyset \circ \emptyset \circ \emptyset}
    \end{equation}
     \begin{equation}
        \wt{\mathbf{F}}_3\ket{\emptyset, P, Q} = \ket{\emptyset,\;\;P_{\le 2n - 4} \circ \emptyset \circ \emptyset \circ \emptyset \circ \emptyset,\;\; Q_{\le 2n - 4}\circ \emptyset \circ \Box \circ \emptyset \circ \emptyset}
    \end{equation}
     \begin{equation}
        \wt{\mathbf{F}}_4\ket{\emptyset, P, Q} = \ket{\emptyset,\;\;P_{\le 2n - 4} \circ \emptyset \circ \Box \circ \emptyset \circ \emptyset,\;\; Q_{\le 2n - 4}\circ \emptyset \circ \Box \circ \emptyset \circ \emptyset}
    \end{equation}
\end{lemma}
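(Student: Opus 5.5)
Following the proof of \cref{lem:brauer_extension_appendix}, I show that each $\wt{\mathbf{F}}_k$ maps $\wt{\ft_B}\ket{b}$ exactly to the claimed target state, up to replacing Fourier-state norms, which costs $O(\poly(|A|)\cdot\delta)$ by \cref{cor:norm_of_fourier_states}.

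\textbf{Step 1: support of the input.} Since $\pn(D)=0$, \cref{thm:zero_irrep_matrix} puts the support of $\wt{\ft_B}\ket{b}$ on $\ket{\emptyset,P,Q}$. Every path of $V^\emptyset$ in $P_{n-1}(d)$ satisfies $P(2n-3)=P(2n-2)=\emptyset$, because the last step $P(2n-3)\to P(2n-2)=\emptyset$ can only add a box or stay put.

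\textbf{Step 2: support of the target.} Each of the four target diagrams $Dp_n$, $b_{n-1}Dp_n$, $Dp_nb_{n-1}$, $b_{n-1}Dp_nb_{n-1}$ again has propagating number zero. Hence $\wt{\ft_A}$ of the target is supported on $\rho=\emptyset$. Embedding via $\mathbf{E}(A,B)$ appends two levels $P\circ\rho_{2n-1}\circ\rho_{2n}$, with $\rho_{2n-1}\in\{\emptyset,\Box\}$ and $\rho_{2n}=\emptyset$.

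\textbf{Step 3: matrix elements.} The idea is to compute the target matrix elements and show they are supported on exactly the listed extensions.
\begin{itemize}
\item In the orthogonal form, $\emptyset(p_n)$ acts on the last two levels as $d$ times the projector onto $\cdots\circ\emptyset\circ\emptyset$. This is the $P_2$ example pattern, coming from \cref{eq:partition_point_generator_one}.
\item Right multiplication by $p_n$ therefore forces the column path to end in $\emptyset\circ\emptyset$ at levels $2n-1,2n$.
\item Left multiplication by $b_{n-1}$ acts on level $2n-2$ through the rank-one structure of \cref{lem:bridge_equivalence_class_structure}. This mixes the endings $\emptyset\circ\emptyset$ and $\Box\circ\emptyset$ at level $2n-2$.
\item Because the input has $P(2n-2)=\emptyset$ and is supported on a single $\sim_{2n-2}$ class, the image is a single vector in that class.
\end{itemize}
The factor of $d$, together with the normalization $d^{\pm1/2}$ from rescaling the diagram basis by $d^{(n-\cc)/2}$, should combine with the Fourier-norm ratio $\|E^\emptyset_{\text{ext}}\|_2/\|E^\emptyset_{PQ}\|_2$. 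As in the Brauer case, this ratio is computed by showing, for example, $E^\emptyset_{PQ}p_n = d\,E^\emptyset_{P\circ\emptyset\circ\emptyset,\,Q\circ\emptyset\circ\emptyset}$ via the Fourier isomorphism and \cref{thm:subalgebra_restrictions}. One then checks that $a\mapsto ap_n$, and the bridge multiplications, are injective on scaled basis elements and preserve $\|\cdot\|_2$.

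\textbf{Main obstacle.} I expect the hard step to be the bridge cases $\mathbf{F}_2,\mathbf{F}_3,\mathbf{F}_4$: showing that the result is concentrated on the single path with $\Box$ at level $2n-2$, rather than spread over both paths in the class. The plan is to use \cref{thm:fourier_concentration_on_propagating_number}-type counting and the $\delta$-niceness Gram bound \cref{eq:new_gram_matrix}. The four families of target states are $\wt{\ft_A}$-images of distinct scaled diagrams, so they are approximately orthonormal. The number of pairs $(P,Q)$ matches dimension counts (\cref{thm:prop_number_k_equals_irrep_dimension_at_level_k}), so the four maps $\wt{\mathbf{F}}_k$ must approximately fill mutually orthogonal, path-labelled subspaces. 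I would verify the $\emptyset\circ\Box\circ\emptyset\circ\emptyset$ labelling directly in the $P_2(d)$ example and then argue the general case with the same local computation at levels $2n-3$ through $2n$. The remaining errors from norm replacements are then bounded using \cref{lem:opnorm_bound}.
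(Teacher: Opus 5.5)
Your treatment of $\mathbf{F}_1$ is essentially the paper's. You compare coefficients as in \cref{lem:brauer_extension_appendix}: $\emptyset(p_n)$ acts as $d$ on paths whose last three levels are $\emptyset$ and kills the $\Box$ sector, so $E^\emptyset_{PQ}p_n=d\,E^\emptyset_{P\circ\emptyset\circ\emptyset,\,Q\circ\emptyset\circ\emptyset}$. There are two bookkeeping slips.
\begin{itemize}
\item You have the two appended levels swapped. The embedding forces $\rho_{2n-1}=\emptyset$, because a half-step cannot remove a box from $\emptyset$. It is $\rho_{2n}\in\{\emptyset,\Box\}$, and $p_n$ kills $\rho_{2n}=\Box$.
\item In the scaled basis, the maps $a\mapsto d^{-1/2}ap_n$ and $a\mapsto d^{1/2}b_{n-1}a$ send basis elements to basis elements and preserve $\|\cdot\|_2$. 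By contrast, $a\mapsto ap_n$ multiplies norms by $d^{1/2}$.
\end{itemize}

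The genuine gap is in $\mathbf{F}_2,\mathbf{F}_3,\mathbf{F}_4$, which you leave as the ``main obstacle''. The counting route cannot close it. Approximate orthonormality of the four target families, plus the dimension match from \cref{thm:prop_number_k_equals_irrep_dimension_at_level_k}, only shows that the targets span a subspace of the right dimension. It says nothing about which path labels carry the amplitude. Combined with the rank-one structure of \cref{lem:bridge_equivalence_class_structure} and near-orthogonality to the $\mathbf{F}_1$ family, it could at best show that little amplitude stays on the original path. It cannot fix the sign of what remains, which depends on the orthogonal-form conventions; yet $\wt{\mathbf{F}}_2$ asserts a relabelling with coefficient $+1$. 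So you are thrown back on the local computation you deferred, and that computation is essentially the whole proof.

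Here is the computation the paper uses. Realize $\mathbf{F}_2$ as $\wt{\mathbf{F}}_1$ followed by $\emptyset(d^{1/2}b_{n-1})$ on the row register, which is the left action of \cref{eq:applying_irreps}.
\begin{itemize}
\item The $\sim_{2n-2}$ class of $P_{\le 2n-4}\circ\emptyset\circ\emptyset\circ\emptyset\circ\emptyset$ is $\{P_{\le 2n-4}\circ\emptyset\circ X\circ\emptyset\circ\emptyset : X\in\{\emptyset,\Box\}\}$.
\item The first case of \cref{eq:bridge_gen_formula_one} gives diagonal entry $1/d$ at $X=\emptyset$, since $R(\emptyset)$ is empty and $A(\emptyset)$ is one box of content $0$.
\item The rank-one projector therefore has off-diagonal entry of magnitude $\sqrt{d-1}/d$. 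Its sign, read off \cref{eq:bridge_gen_formula_two} with $\Psi_{\emptyset\to\Box}=1$, is positive, as in the $P_2(d)$ example.
\end{itemize}
Hence
\[
\emptyset(d^{1/2}b_{n-1})\ket{P_{\le 2n-4}\circ\emptyset\circ\emptyset\circ\emptyset\circ\emptyset}
= d^{-1/2}\ket{P_{\le 2n-4}\circ\emptyset\circ\emptyset\circ\emptyset\circ\emptyset}
+\sqrt{\tfrac{d-1}{d}}\,\ket{P_{\le 2n-4}\circ\emptyset\circ\Box\circ\emptyset\circ\emptyset}.
\]
This is $O(d^{-1/2})$-close to simply replacing $P(2n-2)$ by $\Box$, and \cref{lem:opnorm_bound} turns this into $O(\poly(|A|)\cdot\delta)$. $\mathbf{F}_3$ is the same computation on the column register, and $\mathbf{F}_4$ applies it to both. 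With this, the bridge cases are no harder than $\mathbf{F}_1$, and no counting argument is needed.
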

\begin{proof}
    For $\mathbf{F}_1$, we follow the proof of~\cref{lem:brauer_extension_appendix}. Since $D$ has propagating number zero,~\cref{thm:zero_irrep_matrix} implies that $P(2n - 2) = Q(2n - 2) = \emptyset$. This also implies that $P(2n - 3) = Q(2n - 3) = \emptyset$, since the number of boxes cannot decrease from level $2n - 3$ to level $2n - 2$. Applying $\mathbf{E}(A, B)$, we obtain the state 
    \begin{equation}
        \wt{\ft_A}\ket{b} = \sum_{\rho_{2n}}\sum_{RS} ||E_{RS}^{\rho_{2n}}|| \cdot \rho_{2n}(b)_{RS}\ket{\rho_{2n}, R, S}
    \end{equation}
    Where $R = P \circ \rho_{2n -1} \circ \rho_{2n}$, $S = Q \circ \rho_{2n-1} \circ \rho_{2n}$.
    
    We will now apply a controlled-$\rho(d^{-1/2}p_n)$ operation to the row register, which up to $O(\poly(|A|) \cdot \delta)$ error results in the state $\wt{\ft_A}\ket{d^{(n - 2 - \cc(D))/2} Dp_{n}}$. Since $Dp_n$ has propagating number zero,~\cref{thm:zero_irrep_matrix} implies that the only $\ket{\rho_{2n}, R, S}$ with nonzero amplitude correspond to $\rho_{2n} = \emptyset$, $R = P \circ\emptyset\circ\emptyset$ and $S = Q \circ\emptyset\circ\emptyset$:
    \begin{align}
        \wt{\ft_A}\ket{d^{(n - 2 - \cc(D))/2} Dp_{n}} \\ = \sum_{PQ} ||E_{P \circ \emptyset \circ \emptyset, Q \circ \emptyset \circ \emptyset}^{\emptyset}||_2 \cdot \emptyset(d^{(n - 2 - \cc(D))/2}Dp_{n})_{P \circ\emptyset\circ\emptyset, Q \circ\emptyset\circ\emptyset}\ket{\emptyset,\;P \circ \emptyset \circ \emptyset,\; Q \circ \emptyset \circ \emptyset} \label{eq:end_f_one}
    \end{align}
We then repeat the same argument from the proof of~\cref{lem:brauer_extension_appendix}, beginning after~\cref{eq:replaced_state_brauer}. The role of $e_{n-1}$ is replaced with $p_n$. Using the formulas for the matrix entries of $\emptyset(p_n)$ (\cref{eq:partition_point_generator_one},~\cref{eq:partition_point_generator_two}), 
\begin{equation}
    \emptyset(p_n)\ket{P \circ \emptyset \circ \emptyset} = d \ket{P \circ \emptyset \circ \emptyset}
\end{equation}
the analog of~\cref{eq:id_action_of_contraction} for the point generator. Following the same argument allows us to replace the coefficients in~\cref{eq:end_f_one} by their counterparts from $\wt{\ft_B}\ket{b}$, showing that simply extending each Bratteli path by $\dots \emptyset,\emptyset$ suffices to approximate $\mathbf{F}_1$ to $O(\poly(|A|) \cdot \delta)$ error. 

    Now, we move on to $\mathbf{F}_2$. Again up to $O(\poly(|A|) \cdot \delta)$ error, left-multiplying the state in~\cref{eq:end_f_one} by $\emptyset(d^{1/2}b_{n-1})$ gives the desired action, so it suffices to first apply $\wt{\mathbf{F}}_1$ and then apply $\emptyset(d^{1/2}b_{n-1})$ to the row register. After applying $\wt{\mathbf{F}}_1$, the final four irreps of the path $P$ are all $\emptyset$. Using ~\cref{eq:bridge_gen_formula_one} and ~\cref{eq:bridge_gen_formula_two}, we can calculate the action of $\emptyset(d^{1/2}b_{n-1})$ on $\ket{P}$:
    \begin{align}
        \emptyset(d^{1/2}b_{n-1})\ket{P} = \emptyset(d^{1/2}b_{n-1})\ket{P_{\le 2n - 4} \circ \emptyset \circ \emptyset \circ \emptyset \circ \emptyset} \\ = \frac{1}{\sqrt{d}}\ket{P_{\le 2n - 4} \circ \emptyset \circ \emptyset \circ \emptyset \circ \emptyset} + \frac{\sqrt{(d - 1)d}}{d}\ket{P_{\le 2n - 4} \circ \emptyset \circ \Box \circ \emptyset \circ \emptyset}
    \end{align}
    which can be simulated up to operator norm error $O(\poly(|A|) \cdot d^{-1/2}) = O(\poly(|A|) \cdot \delta)$ by simply replacing $P(2n - 2)$ with $\Box$. Approximating $\mathbf{F}_3$ and $\mathbf{F}_4$ can be done in the same manner, except that for $\wt{\mathbf{F}_3}$ we replace $Q(2n - 2)$ with $\Box$ instead of $P(2n -2)$, and for $\wt{\mathbf{F}_4}$ we replace both.
\end{proof}
Finally, we prove~\cref{lem:efficient_add_a_bridge}. This is the most involved of the proofs in this section. The bulk of the proof is containing the following Lemma: 
\begin{lemma}
    \label{lem:efficient_add_a_bridge_appendix}
    Let $B = P_{n-1}(d)$, $A = P_{n - \frac12}(d)$. For any basis element $b = d^{({n - 1 - \cc(D)})/2}D \in \mathcal{B}(B)$, the maps 
    \begin{equation}
        \mathbf{H}_1(\wt{\ft_B}\ket{b}) = \wt{\ft_A}\ket{d^{(n - \cc(D))/2} b_{n-1}D}
    \end{equation}
    \begin{equation}
        \mathbf{H}_2(\wt{\ft_B}\ket{b}) = \wt{\ft_A}\ket{d^{(n - \cc(D))/2} Db_{n-1}}
    \end{equation}
    can be implemented to operator norm error $O(\poly(|A|) \cdot (\delta + \varepsilon))$ by operators $\wt{\mathbf{H}}_1$ and $\wt{\mathbf{H}}_2$. Both $\wt{\mathbf{H}}_1$ and $\wt{\mathbf{H}}_2$ can be implemented with $\wt{O}(n^{3/2} \cdot (n + \log d + \log(1/\varepsilon)))$ gates. 
\end{lemma}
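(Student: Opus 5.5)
We plan to realize $\mathbf{H}_1$ as ``embed, then project by the bridge'' and show that, up to $O(\poly(|A|)\cdot\delta)$, this composite maps Bratteli-path basis states to Bratteli-path basis states (times a known norm ratio), so it is approximately an isometry that can be implemented with Givens rotations. First apply the embedding $\wt{\mathbf{E}}(A,B)$ of \cref{lem:embedding_summary}, which turns $\wt{\ft_B}\ket{b}$ into $\wt{\ft_A}\ket{b}$ up to $O(\sqrt n(\delta+\varepsilon))$. By \cref{eq:applying_irreps}, $\wt{\ft_A}\ket{d^{1/2}b_{n-1}\cdot b}$ is then obtained, up to $O(\poly(|A|)\cdot\delta)$, by applying the controlled $\rho(d^{1/2}b_{n-1})$ to the row register. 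The scalar $d^{1/2}$ matches the change of normalization $d^{(n-\cc(D))/2}$ versus $d^{(n-1-\cc(D))/2}$, since $b_{n-1}D$ has the same number of components as $D$ (with $\{n,n'\}$ absorbed). $\mathbf{H}_2$ is symmetric: since $\rho(b_{n-1}^{\op})=\rho(b_{n-1})^\dagger=\rho(b_{n-1})$ in the orthogonal form (\cref{eq:orthogonal_form}), we act on the column register instead.

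The key structural input is \cref{lem:bridge_equivalence_class_structure}. In $P_{n-\frac12}(d)$, $b_{n-1}$ acts on the level-$(2n-2)$ entry of the path, and $\rho(b_{n-1})$ is block diagonal over the classes $[R]$ under $\sim_{2n-2}$. On each class it is either $0$ or the rank-one projector $\ket{v_{[R]}}\bra{v_{[R]}}$. After embedding, the row path is $R=P\circ\rho$, and its class consists of the paths with the same prefix and endpoint but a different level-$(2n-2)$ irrep. We then compute the overlap $\braket{v_{[R]}|R}$ explicitly from the orthogonal-form entries \cref{eq:bridge_gen_formula_one}, \cref{eq:bridge_gen_formula_two}, rescaled by $\sqrt{\braket{P,P}_{\text{path}}/\braket{Q,Q}_{\text{path}}}$. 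We expect these to behave like the $P_2(d)$ example: the ``stay'' path carries weight $\approx 1/d$, and the dominant weight $\approx 1$ sits on the unique path in the class whose level-$(2n-2)$ irrep equals $P(2n-3)=P(2n-1)$. Multiplying by $d^{1/2}$, the image of $\ket{\rho,R,S}$ is $\sqrt d\,\braket{v_{[R]}|R}\ket{\rho,v_{[R]},S}$.

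The main obstacle is showing that the resulting map $\wt{\ft_B}\ket{b}\mapsto\wt{\ft_A}\ket{d^{1/2}b_{n-1}b}$ is, to $O(\poly(|A|)\cdot\delta)$, the isometry predicted by niceness, even though $\rho(d^{1/2}b_{n-1})$ is far from unitary. Rather than implementing the projector, we would use two facts. First, the output states are Fourier transforms of (rescaled) basis elements, so they are approximately orthonormal by \cref{eq:new_gram_matrix}. Second, the map $b\mapsto d^{1/2}b_{n-1}b$ is injective on $\mathcal{B}(B)$. Together these show the ideal $\mathbf{H}_1$ is approximately an isometry. We then define $\wt{\mathbf{H}}_1$ directly on basis states: conditioned on $\rho$ and $R$, it rotates $\ket{R}$ into $\ket{v_{[R]}}$ with the appropriate amplitudes, and it rescales by the ratio of Fourier norms, which \cref{cor:norm_of_fourier_states} and \cref{lem:appendix_ratio_closeness} give as ratios of Schur multiplicities. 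Where the class contributes $0$, the residual amplitude is argued negligible by \cref{thm:zero_irrep_matrix}-style support arguments.

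For the cost, each class has $O(\sqrt n)$ members. Each entry costs $\wt{O}(n+\log d+\log(1/\varepsilon))$ with the path-norm telescoping of \cref{eq:orthogonal_norm_partition_algebra}, computed as in \cref{cor:efficient_computation_partition_matrix_elements}. Applying the $O(\sqrt n)$ sequential Givens rotations of \cref{eq:givens_rotation} over the class then gives the claimed $\wt{O}(n^{3/2}\cdot(n+\log d+\log(1/\varepsilon)))$ gates. The errors from the embedding, the norm replacement, and the $\varepsilon$-precision arithmetic are summed via \cref{lem:opnorm_bound}.
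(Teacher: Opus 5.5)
There is a genuine gap in how you construct $\wt{\mathbf{H}}_1$. You propose to apply $\wt{\mathbf{E}}(A,B)$ first, then rotate $\ket{R}$ into $\ket{v_{[R]}}$ class by class while ``rescaling by Fourier-norm ratios,'' and to treat the amplitude in classes annihilated by $\tau(b_{n-1})$ as negligible. That amplitude is not negligible. Only the component with $\tau=P(2n-3)$ survives the bridge.

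Suppose the input path gained a box at its last step, so $\rho=P(2n-3)\cup\{a\}$. Then $\mathbf{E}$ puts squared weight $1-O(n/d)$ on $\tau=\rho$, which the bridge kills, and only amplitude $O(\sqrt{n/d})$ on $\tau=P(2n-3)$. The entire output is that small piece multiplied by $\sqrt d\,\alpha_P$. This is the typical case, not an edge case: $\wt{\ft_B}\ket{I}$ is concentrated on $|\rho|=n-1$, where every path is of this type. Already for $n=2$, the $\Box$-component of $\wt{\ft_B}\ket{I}$ is sent almost entirely to $\ket{\Box,\emptyset\emptyset\Box\Box,\emptyset\emptyset\Box\Box}$, on which $\Box(b_1)=0$.

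No unitary applied after an isometric embedding can discard the dominant part and amplify an $O(d^{-1/2})$ part. Rescaling individual basis states is not a unitary operation either. Even as bookkeeping, the $O(\sqrt n(\delta+\varepsilon))$ error of $\wt{\mathbf{E}}$ followed by an operator of norm $\sqrt d$ gives $O(\sqrt{nd}\,(\delta+\varepsilon))$, which is $\Omega(|A|)$ for $\delta=|A|d^{-1/2}$. Your Gram/injectivity argument does correctly show that the ideal $\mathbf{H}_1$ is approximately an isometry, but it does not tell you which isometry to build.

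The missing idea, which the paper uses, is to skip the embedding and compute the ideal action on each basis vector $\ket{\rho,P,Q}$ of $B$ directly. The map $\tau(d^{1/2}b_{n-1})\circ\mathbf{E}$ sends $\ket{\rho,P,Q}$ to $c_{\rho,P,Q}\ket{\tau}\otimes\ket{\psi_{[P]}}\otimes\ket{Q\circ\tau}$, where $\tau=P(2n-3)$ and $c_{\rho,P,Q}=d^{1/2}\alpha_P\|E^{\tau}_{P\circ\tau,Q\circ\tau}\|_2/\|E^{\rho}_{PQ}\|_2$. The image is a class vector, not a path basis state as your opening sentence says.

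One then implements $\ket{\rho,P,Q}\mapsto\ket{P(2n-3)}\otimes\ket{\psi_{[P]}}\otimes\ket{Q\circ P(2n-3)}$ directly. This is an isometry only because it acts jointly on both path registers: the map $P(2n-2)\mapsto\psi_{[P]}$ on the row register alone is not injective. Concretely, the row's last entry is erased using the copy $Q(2n-2)=\rho$, $\psi_{[P]}$ is prepared by Givens rotations, $P(2n-3)$ is fanned out onto $Q$, and the irrep label is updated.

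It then remains to show $c_{\rho,P,Q}=1\pm O(\poly(|A|)\delta)$ in both branching cases:
\begin{itemize}
\item For $\rho=\tau$, one has $\alpha_P^2=d^{-1}(1+O(d^{-1}))$ and the norm ratio is $\approx 1$.
\item For $\rho=\tau\cup\{a\}$, one has $\alpha_P^2\approx f^{\rho}/(|\rho|f^{\tau})$ by the content identity, while the squared norm ratio is $\approx |\rho|f^{\tau}/(d\,f^{\rho})$.
\end{itemize}

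Your description of the class vector also needs correcting. The path whose middle entry equals $P(2n-3)=P(2n-1)$ is the stay path, and it carries weight $\approx 1/d$. The remaining weight is spread over all paths with middle entry $\tau\cup\{a\}$, roughly in proportion to $f^{\tau\cup\{a\}}/((|\tau|+1)f^{\tau})$. It does not sit on a single dominant path.
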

\begin{proof}
    We focus on $\mathbf{H}_1$, since the proof for $\mathbf{H}_2$ will be the same up to exchanging the role of the row and column registers. First, we would like to understand the ideal action of $\mathbf{H}_1$. Given an input state
    \begin{equation}
        \label{eq:bridge_add_starting_state}
        \wt{\ft_B}\ket{b} = \sum_{\rho}\sum_{PQ} ||E_{PQ}^\rho||_2 \rho(b)_{PQ}\ket{\rho, P, Q}
    \end{equation}
    we can simulate $\mathbf{H}_1$ by first applying  $\mathbf{E}(P_{n - \frac12}(d), B)$, obtaining the state
    \begin{equation}
        \label{eq:after_half_embedding}
     \wt{\ft_{P_{n - \frac12}(d)}}\ket{b} =\sum_{\rho}\sum_{PQ}\sum_{\substack{\tau \in \wh{P_{n - \frac12}(d)}: \\ \rho \in \text{Res}^{P_{n - \frac12}(d)}_B(\tau)}}||E_{P \circ \tau, Q \circ \tau}^{\tau}||_2 \cdot \tau(b)_{P \circ \tau, Q \circ \tau}\ket{\tau, P \circ \tau, Q \circ \tau}
    \end{equation}
    and then applying a controlled-$\tau(d^{1/2}b_{n-1})$ operation to the row register, $\ket{P \circ \tau}$.\ From~\cref{eq:bridge_gen_formula_one} and \cref{eq:bridge_gen_formula_two}, applying $\tau(d^{1/2}b_{n-1})$ only acts on the register containing the last irrep of $P$, $P(2n - 2)$.\ To determine this action, we can split into two cases based on the adjacent irreps in the Bratteli path: 
    \begin{itemize}
        \item $\tau \ne P(2n - 3)$. In this case, $P^\prime \sim_{2n - 2} P \circ \tau \implies P^\prime = P \circ \tau$, but then the third case of~\cref{eq:bridge_gen_formula_one} implies that $\tau(d^{1/2}b_{n-1})\ket{P \circ\tau} = 0$. Hence, no paths with $\tau \ne P(2n - 3)$ survive the application of $\tau(d^{1/2}b_{n-1})$.
        \item $\tau = P(2n - 3)$. Following the notation and results of~\cref{lem:bridge_equivalence_class_structure}, $P(2n - 2)$ is projected onto the one-dimensional eigenspace of $\tau(b_{n-1})|_{V_{[P]}}$, spanned by the unit vector 
        \begin{equation}
            \ket{\psi_{[P]}} = \sum_{R \sim_{2n - 2} P } \alpha_{R}\ket{R \circ P(2n - 3)}
        \end{equation}
    \end{itemize}
    Hence, after applying controlled-$\tau(d^{1/2}b_{n-1})$ to the row register, we obtain the state 
    \begin{align}
        \sum_{\rho}\sum_{PQ} \sum_{\substack{\tau \in \wh{P_{n - \frac12}(d)}: \\ \rho \in \text{Res}^{P_{n - \frac12}(d)}_B(\tau)}} d^{1/2} \cdot  \alpha_{P} \cdot  ||E_{P \circ \tau, Q \circ \tau}^{\tau}||_2 \cdot \tau(b)_{P \circ \tau, Q \circ \tau} \sum_{R \sim_{2n - 2} P} \alpha_R  \ket{\tau, R \circ P(2n - 3), Q \circ \tau}
    \end{align}
    which, up to $O(\poly(|A|) \cdot \delta)$ error (\cref{eq:applying_irreps}), is equivalent to $\wt{\ft_{P_{n - \frac12}(d)}}\ket{d^{(n - \cc(D))/2}b_{n-1}D}$.
    
    Moreover, $\ket{\tau, R \circ P(2n - 3), Q \circ \tau}$ has zero amplitude in $\wt{\ft_{P_{n - \frac12}(d)}}\ket{d^{(n - \cc(D))/2}b_{n-1}D}$ if $\tau \ne P(2n - 3)$, since all Bratteli paths in a Fourier state must end with the corresponding irrep label. Fixing $\tau = P(2n - 3)$, we conclude that 
    \begin{align}
        \label{eq:exact_bridge_add}
        \wt{\ft_{P_{n - \frac12}(d)}}\ket{d^{(n - \cc(D))/2}b_{n-1}D} \\ = \sum_{\rho}\sum_{PQ}  d^{1/2} \cdot  \alpha_{P} \cdot  ||E_{P \circ \tau, Q \circ \tau}^{\tau}||_2 \cdot \tau(b)_{P \circ \tau, Q \circ \tau} \sum_{R \sim_{2n - 2} P} \alpha_R  \ket{\tau, R \circ \tau, Q \circ \tau}
    \end{align}
    up to $O(\poly(|A|) \cdot \delta)$ additive error on each coefficient. 
    
    To (approximate) $\wt{\mathbf{H}}_1$, we must find an efficiently implementable unitary which (approximately) maps the input state, $\wt{\ft}_B\ket{b}$, to the state in ~\cref{eq:exact_bridge_add}. Even though both $\wt{\ft}_B\ket{b}$ and $\wt{\ft_{P_{n - \frac12}(d)}}\ket{d^{(n - \cc(D))/2}b_{n-1}D}$ are (approximately) unit vectors, finding a unitary transformation which takes one to the other seems difficult \textit{a priori}, since the controlled-$\tau(d^{1/2}b_{n-1})$ operation is highly non-unitary, and we do not have the $\pn(D) = 0$ constraint which was instrumental in the previous lemmas. However, by applying a global operation on both the row and column registers, we can implement the desired transformation. Starting from the state $\wt{\ft}_B\ket{b}$ in~\cref{eq:bridge_add_starting_state}, we apply the following operations: 
   \begin{algorithm}[H]
    \caption{Implementing $\wt{\mathbf{H}}_1$.}
    \label{alg:bridge_add_extension}
    \begin{algorithmic}[1]
        \State Conditioned on the registers containing $\ket{Q(2n - 2)}$ and $\ket{P(2n - 3)}$, compute a circuit description of a unitary $U_{Q(2n - 2)}$ implementing
        \begin{equation}
            \ket{Q(2n - 2)\circ P(2n - 3)}
            \mapsto
            \ket{\psi_{[P]}}
            =
            \sum_{R \sim_{2n - 2} P} \alpha_R \ket{R \circ P(2n - 3)} .
        \end{equation}
        
        \State Apply $U_{Q(2n - 2)}$ to the register containing $\ket{P(2n - 2)}$, i.e.\ the final  irrep in the Bratteli path $\ket{P}$. Note that $P(2n - 2)=Q(2n - 2)=\rho$. 
        
        \State Using $\ket{Q(2n - 2)}$ and $\ket{P(2n - 3)}$ (which are unaffected by Step~2), uncompute the description of this circuit.
        
        \State Extend the Bratteli path $\ket{Q}$ to $\ket{Q \circ P(2n - 3)}$ by fanning out $\ket{P(2n - 3)}$.

        \State Controlled on the Bratteli path $\ket{Q \circ P(2n - 3)}$, compute a description of a unitary $V_{Q \circ P(2n - 3)}$ which maps 
        \begin{equation}
            \ket{Q(i)} \mapsto \ket{Q(i + 1 \!\!\!\!\mod 2n - 1)}
        \end{equation}
        apply $V_{Q \circ P(2n - 3)}$ to the irrep label register $\ket{\rho}$, and then uncompute the description. 
    \end{algorithmic}
    \end{algorithm}
First, we show that $\wt{\mathbf{H}}_1$ has an efficient implementation:
\begin{lemma}
    $\wt{\mathbf{H}}_1$ can be implemented to operator norm error $O(\varepsilon \cdot |A|)$ with $\wt{O}(n^{3/2} \cdot (n + \log d + \log(1/\varepsilon)))$ gates. 
\end{lemma}
\begin{proof}
    In order to compute a description of $U_{Q(2n - 2)}$, we first need a description of $\ket{\psi_{[P]}}$. Since $\ket{\psi_{[P]}}$ is supported on $V_{[P]}$, it suffices to compute $(R, \alpha_R)$ for every $R \in [P]$, of which there are at most $O(\sqrt{n})$.\ Since $\tau(b_{n-1})$ is a real matrix, 
    \begin{equation}
        \label{eq:coefficients_of_p}
        \alpha_R^2 = \braket{R|\psi_{[P]}}\braket{\psi_{[P]}|R} = \left(\tau(b_{n-1})|_{V_{[P]}}\right)_{RR}
    \end{equation}
    which can be computed to additive error $\varepsilon$ using $\wt{O}(n \cdot (n + \log d + \log(1/\varepsilon)))$ gates (\cref{item:bridge_time_complexity}). Hence, computing a complete description of $\ket{\psi_{[P]}}$ requires $\wt{O}(n^{3/2} \cdot (n + \log d + \log(1/\varepsilon)))$ gates. After computing this description, one way to implement $U_{Q(2n - 2)}$ is to first map $\ket{Q(2n - 2)}$ to some fixed (non-irrep) state $\ket{\perp}$, and then apply the same Givens rotation technique used in~\cref{eq:givens_rotation}. The gate complexity of Lines 1-3 is dominated by the cost of computing the description of $\ket{\psi_{[P]}}$,
    
    Finally, Lines 4 and 5 require $\wt{O}(n^2)$ gates, since $Q \circ \tau$ is a length-$O(n)$ Bratteli path where each irrep is represented with $\wt{O}(n)$ qubits. Since all entries were computed to additive error $\varepsilon$, ~\cref{lem:opnorm_bound} implies an operator norm error of $O(\varepsilon \cdot |A|)$.
\end{proof}
\noindent Now, we analyze the state after applying $\wt{\mathbf{H}}_1$. After Lines 1-3, $\wt{\ft}_B\ket{b}$ is transformed into
\begin{align}
     \sum_{\rho}\sum_{PQ}  ||E_{PQ}^{\rho}||_2 \cdot \rho(b)_{PQ}\sum_{R \sim_{2n - 2} P} \alpha_{R}  \ket{\rho, R \circ P(2n - 3), Q}
    \end{align}
After Line 4, $\ket{Q}$ is transformed to $\ket{Q \circ P(2n - 3)}$, and after line 5 the irrep label $\ket{\rho}$ becomes $\ket{P(2n - 3)}$, since $\rho$ is the final irrep of the Bratteli path $Q$. Hence, the final state after applying $\wt{\mathbf{H}}_1$ is:
\begin{align}
    \label{eq:approx_bridge_add}
     \sum_{\rho}\sum_{PQ}  ||E_{PQ}^{\rho}||_2 \cdot \rho(b)_{PQ}\sum_{R \sim_{2n - 2} P} \alpha_{R}  \ket{P(2n - 3), R \circ P(2n - 3), Q \circ P(2n - 3)}
\end{align}
The remaining task is to show that the coefficient of $\ket{\tau, P \circ \tau, Q \circ \tau}$ in~\cref{eq:approx_bridge_add} is close to the same coefficient in~\cref{eq:exact_bridge_add}, i.e.\
\begin{equation}
    \label{eq:goal_bound_last_thm}
    \left|\sum_{R \sim_{2n-2} P} \left(d^{1/2} \cdot  \alpha_{R} \cdot  ||E_{R \circ \tau, Q \circ \tau}^{\tau}||_2 \cdot \tau(b)_{R \circ \tau, Q \circ \tau} \cdot \alpha_P - ||E_{RQ}^{\rho}||_2 \cdot \rho(b)_{RQ} \cdot \alpha_P \right) \right| \le O(\poly(|A| \cdot \delta)
\end{equation}
In order to do so, we begin with the following chain of inequalities: 
\begin{align}
 &\left|\sum_{R \sim_{2n-2} P} \left(d^{1/2} \cdot  \alpha_{R} \cdot  ||E_{R \circ \tau, Q \circ \tau}^{\tau}||_2 \cdot \tau(b)_{R \circ \tau, Q \circ \tau} \cdot \alpha_P - ||E_{RQ}^{\rho}||_2 \cdot \rho(b)_{RQ} \cdot \alpha_P \right) \right|  \\
     =& \left|\alpha_P \cdot \sum_{R \sim_{2n-2} P} \left(\frac{d^{1/2} \cdot  \alpha_{R} \cdot  ||E_{R \circ \tau, Q \circ \tau}^{\tau}||_2}{||E_{RQ}^{\rho}||_2} - 1\right) ||E_{RQ}^{\rho}||_2 \cdot \rho(b)_{RQ}   \right|  \\ 
     \le&  \left|\alpha_P\right| \cdot \left(\sum_{R \sim_{2n-2} P} \left(\frac{d^{1/2} \cdot  \alpha_{R} \cdot  ||E_{R \circ \tau, Q \circ \tau}^{\tau}||_2}{||E_{RQ}^{\rho}||_2} - 1\right)^2\right)^{1/2} \cdot \left(\sum_{R \sim_{2n-2} P}\left(||E_{RQ}^{\rho}||_2 \cdot \rho(b)_{RQ}\right)^2\right)^{1/2}    \\ 
     \le&  \left|\alpha_P\right| \cdot \left(\sum_{R \sim_{2n-2} P} \left(\frac{d^{1/2} \cdot  \alpha_{R} \cdot  ||E_{R \circ \tau, Q \circ \tau}^{\tau}||_2}{||E_{RQ}^{\rho}||_2} - 1\right)^2\right)^{1/2} \cdot (1 + \poly(|A|) \cdot \delta)    \\ 
     \le& |\alpha_P| \cdot \left(\sum_{R \sim_{2n-2} P} \left(\frac{ \alpha_{R} \cdot  \sqrt{m_\tau}}{\sqrt{m_\rho}} \cdot(1 + O(\delta)) - 1\right)^2\right)^{1/2} \cdot (1 + \poly(|A|) \cdot \delta) 
     \label{eq:simpler_bound}
\end{align}
The first equality follows from $\rho(b)_{RQ}=\tau(b)_{R\circ\tau,\,Q\circ\tau}$. The first inequality is an application of Cauchy-Schwarz, and the second follows by observing that  \( \|E_{RQ}^{\rho}\|_2\,\rho(b)_{RQ} \) is exactly the coefficient of \( \ket{\rho,R,Q} \) in \( \wt{\ft_B}\ket{b} \), and therefore the sum of the squared magnitudes is at most \( 1 + O(\poly(|A|)\cdot \delta) \).\ The final inequality then follows by substituting the approximations from~\cref{cor:norm_of_fourier_states}. 

To bound~\cref{eq:simpler_bound}, we split into the two cases implied by \cref{eq:branching_rules}. Either $\rho=\tau$ or $\rho = \tau \cup \{a\}$: 

\paragraph{Case 1: $\rho = \tau = P(2n - 3)$.} In this case, $m_\rho = m_\tau$, and the sum in~\cref{eq:simpler_bound} simplifies to 
\begin{equation}
    \sum_{R \sim_{2n-2} P} \left(\alpha_R\cdot(1 + O(\delta)) - 1\right) \le \sqrt{n} \cdot (1 + O(\delta))
\end{equation}
since $|\alpha_R| \le 1$. We can also bound $|\alpha_P|$, which is equal to $\sqrt{\rho(b_{n-1})_{PP}}$ by~\cref{eq:coefficients_of_p}. From the first case of~\cref{eq:bridge_gen_formula_one}, 
\begin{equation}
    \rho(b_{n-1})_{PP} =   \frac{\prod_{b \in R(\rho)} d - \cont(b) - |\rho|}{\prod_{b \in A(\rho)} d - \cont(b) - |\rho|}
\end{equation}
 The leading term of the numerator is $d^{|R(\rho)|}$, and the leading term of the denominator is $d^{|A(\rho)|}$. For any Young diagram, $|A(\rho)| = |R(\rho)| + 1$, since we can always add a box to start a new row. Therefore, $ \rho(b_{n-1})_{PP} = d^{-1}(1 + O(d^{-1}))$, which implies that $\alpha_P = d^{-1/2}(1 + O(d^{-1}))$. Thus, the bound in~\cref{eq:simpler_bound} becomes 
 \begin{equation}
   O\left(\sqrt{n} \cdot (1 + O(\delta)) \cdot d^{-1/2}\right) \cdot (1 + \poly(|A|) \cdot \delta) = O(\poly(|A|) \cdot \delta) 
 \end{equation}
 
 \paragraph{Case 2: $\rho = \tau \cup \{a\}$, $\tau = P(2n - 3)$.} By the second case of~\cref{eq:bridge_gen_formula_one},
 \begin{equation}
      \rho(b_{n-1})_{PP} =  \frac{d - \cont(a) - |\rho| - 1}{d - \cont(a) - |\rho|}\cdot \frac{\prod_{b \in R(\rho)} \cont(a) - \cont(b)}{\prod_{b \ne a \in A(\rho)} \cont(a) - \cont(b)}
 \end{equation}
 Also, by~\cref{eq:partition_multiplicity}, 
 \begin{equation}
     \frac{m_\rho}{m_\tau} = \frac{f^\rho |\tau|!}{f^\tau |\rho|!} \cdot \frac{1}{d}\frac{\prod_{j=1}^{|\rho|} (d- |\rho| - \rho_j + j)}{\prod_{j=1}^{|\tau|} (d- |\tau| - 1 - \tau_j + j)}
 \end{equation}
 Now, $|\rho| = |\tau| + 1$, with the only difference being the added box $a$. Hence, all terms of the product cancel, except for the row containing the box $a$. There is also an extra factor of $d$ in the numerator for the extra (empty) row of $\rho$:
  \begin{equation}
     \frac{m_\rho}{m_\tau} = \frac{f^\rho |\tau|!}{f^\tau |\rho|!} \cdot \frac{d - |\rho| - \cont(a) -1}{d - |\rho| - \cont(a)}
 \end{equation}
 We can now substitute the following identity (\cite{grinko2025mixed}, Equation 5.59):
 \begin{equation}
     \frac{\prod_{b \in R(\rho)} \cont(a) - \cont(b)}{\prod_{b \ne a \in A(\rho)} \cont(a) - \cont(b)} = \frac{f^\rho}{f^\tau |\rho|} = \frac{f^\rho |\tau|!}{f^\tau |\rho|!}
 \end{equation}
 And so $|\alpha_P|^2 = \rho(b_{n-1})_{PP} = m_\rho/m_\tau$. Hence,~\cref{eq:simpler_bound} simplifies to 
 \begin{equation}
     |\alpha_P| \cdot \left(\sum_{R \sim_{2n-2} P} O(\delta)^2\right)^{1/2} \cdot (1 + \poly(|A|) \cdot \delta) 
 \end{equation}
 which is also $O(\poly(|A|) \cdot \delta)$.
     
So, in both cases,~\cref{eq:goal_bound_last_thm} holds. By~\cref{lem:opnorm_bound}, establishing this coefficient-wise bound suffices to prove the theorem statement for $\wt{\mathbf{H}}_1$. Note that we pick up an additional factor of $|A|$ when bounding the operator norm of $\mathbf{H}_1 - \wt{\mathbf{H}}_1$. Finally, the proof for $\mathbf{H}_2$ is entirely analogous to the proof for $\mathbf{H}_1$, except with the roles of the row register $\ket{P}$ and column register $\ket{Q}$ exchanged. 
\end{proof}
\begin{corollary}
       \label{cor:efficient_add_a_bridge_appendix}
    Let $B = P_{n-1}(d)$, $A = P_n(d)$. For any basis element $b = d^{({n - 1 - \cc(D)})/2}D \in \mathcal{B}(B)$, the maps 
    \begin{equation}
        \mathbf{G}_1(\wt{\ft_B}\ket{b}) = \wt{\ft_A}\ket{d^{(n - \cc(D))/2} b_{n-1}D}
    \end{equation}
    \begin{equation}
        \mathbf{G}_2(\wt{\ft_B}\ket{b}) = \wt{\ft_A}\ket{d^{(n - \cc(D))/2} Db_{n-1}}
    \end{equation}
    can be implemented to operator norm error $O(\poly(|A|) \cdot (\delta + \varepsilon))$ by operators $\wt{\mathbf{G}}_1$ and $\wt{\mathbf{G}}_2$. Both $\wt{\mathbf{G}}_1$ and $\wt{\mathbf{G}}_2$ can be implemented with $\wt{O}(n^2 \cdot (n + \log d + \log(1/\varepsilon)))$ gates. 
\end{corollary}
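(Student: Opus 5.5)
The plan is to factor $\mathbf{G}_1$ (and symmetrically $\mathbf{G}_2$) through the intermediate half partition algebra, using the chain $P_{n-1}(d)\subseteq P_{n-\frac12}(d)\subseteq P_n(d)$. Since $b_{n-1}\in P_{n-\frac12}(d)$, for a basis element $b=d^{(n-1-\cc(D))/2}D\in\mathcal{B}(P_{n-1}(d))$ the element $d^{(n-\cc(D))/2}b_{n-1}D$ already lies in $P_{n-\frac12}(d)$. It is moreover a scaled basis element there, since $\cc$ is unchanged: the bridge merges $\{n,n'\}$ into an existing block. \cref{lem:efficient_add_a_bridge_appendix} therefore gives an isometry $\wt{\mathbf{H}}_1$ that approximately sends $\wt{\ft_{P_{n-1}(d)}}\ket{b}$ to $\wt{\ft_{P_{n-\frac12}(d)}}\ket{d^{(n-\cc(D))/2}b_{n-1}D}$. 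This holds to operator norm error $O(\poly(|A|)(\delta+\varepsilon))$, using $\wt{O}(n^{3/2}(n+\log d+\log(1/\varepsilon)))$ gates.

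Next I would promote from $P_{n-\frac12}(d)$ to $P_n(d)$ with the embedding lemma. Any $c\in P_{n-\frac12}(d)$ satisfies $\mathbf{E}(P_n(d),P_{n-\frac12}(d))\,\wt{\ft_{P_{n-\frac12}(d)}}\ket{c}=\wt{\ft_{P_n(d)}}\ket{c}$. We must also check that the computational-basis normalisation agrees between the two algebras. In both algebras the element carries the scale $d^{(n-\cc)/2}$ with the same $n$, because $P_{n-\frac12}(d)$ is defined inside $P_n(d)$ with $n$ columns. This is consistent with $r_{\rho,\sigma}=m_\rho/m_\sigma$ having no extra $1/d$ in \cref{eq:step_two_embed_eq}. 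Hence the target of $\mathbf{G}_1$ is exactly $\mathbf{E}(P_n(d),P_{n-\frac12}(d))\mathbf{H}_1$. We then set $\wt{\mathbf{G}}_1=\wt{\mathbf{E}}\,\wt{\mathbf{H}}_1$, with $\wt{\mathbf{E}}$ from \cref{lem:embedding_summary}, which costs $\wt{O}(\sqrt{n}(n^2\log d+\log(1/\varepsilon)))$ gates and has error $O(\sqrt n(\delta+\varepsilon))$.

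For the error, I would telescope:
\begin{equation}
\mathbf{E}\mathbf{H}_1-\wt{\mathbf{E}}\wt{\mathbf{H}}_1=(\mathbf{E}-\wt{\mathbf{E}})\mathbf{H}_1+\wt{\mathbf{E}}(\mathbf{H}_1-\wt{\mathbf{H}}_1).
\end{equation}
The factor $\wt{\mathbf{E}}$ is an isometry. The factor $\mathbf{H}_1$ has operator norm $1+O(\poly(|A|)\delta)$, because it maps an approximately orthonormal family (\cref{eq:new_gram_matrix}) to another such family. Both terms are therefore $O(\poly(|A|)(\delta+\varepsilon))$.

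The gate counts add to $\wt{O}(n^{3/2}(n+\log d+\log(1/\varepsilon)))+\wt{O}(\sqrt n(n^2\log d+\log(1/\varepsilon)))$. Both terms are within $\wt{O}(n^2(n+\log d+\log(1/\varepsilon)))$, since $\sqrt n\, n^2\log d\le n^2\cdot n\log d$ is loose but fine after absorbing into the stated bound. Alternatively, I would note that $n^{5/2}\log d\le n^2\cdot(n+\log d)\cdot\sqrt{n}$, and if needed state the bound with that slack. $\mathbf{G}_2$ is handled identically using $\mathbf{H}_2$, with rows and columns swapped.

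I expect the main obstacle to be the normalization bookkeeping. We need the $P_{n-\frac12}(d)$ Fourier states produced by \cref{lem:efficient_add_a_bridge_appendix} to be exactly the ones that $\mathbf{E}(P_n,P_{n-\frac12})$ expects, in particular that the scaled basis conventions coincide. We also need to control $\|\mathbf{H}_1\|$ only on the relevant subspace, the image of $\wt{\ft_B}$, so that the error does not blow up. If the operator norm of $\mathbf{H}_1$ is awkward off that subspace, I would restrict the whole argument to $\operatorname{im}\wt{\ft_B}$, which is all that Algorithm~\ref{alg:accumulate} uses.
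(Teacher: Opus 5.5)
Your construction is the same as the paper's. The paper also defines $\wt{\mathbf{G}}_i$ as the embedding $P_{n-\frac12}(d)\hookrightarrow P_n(d)$ composed with $\wt{\mathbf{H}}_i$ from \cref{lem:efficient_add_a_bridge_appendix}, and it justifies the normalization the same way you do: $n$ and $\cc(D)$ are unchanged. Your telescoping bound is a correct and more explicit version of an error argument the paper leaves implicit. One unstated fact it needs: the claim $\|\mathbf{H}_1\|=1+O(\poly(|A|)\delta)$ on $\operatorname{im}\wt{\ft_B}$ requires $D\mapsto b_{n-1}D$ to be injective. This holds, because $b_{n-1}D$ is just $D$ with $n,n'$ adjoined to the block of top vertex $n-1$.

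The gate-count step, however, is wrong as written, and this is a genuine gap. Using the uniform bound of \cref{lem:embedding_summary}, $\wt{\mathbf{E}}$ costs $\wt{O}(\sqrt n(n^2\log d+\log(1/\varepsilon)))$, and the term $n^{5/2}\log d$ is not $\wt{O}(n^2(n+\log d+\log(1/\varepsilon)))$. Your inequality $\sqrt n\,n^2\log d\le n^2\cdot n\log d$ only bounds it by $n^3\log d$, which involves the product $n\log d$ rather than the sum $n+\log d$. In the regime $\log d=\Omega(n\log n)$, which is needed for $\delta$ to be small, the excess can be a factor $\Theta(\sqrt n)$ (e.g.\ for constant $\varepsilon$), and $\wt{O}$ does not hide that. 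Your fallback ``with that slack'' proves only $\wt{O}(n^{5/2}(n+\log d+\log(1/\varepsilon)))$, which is weaker than the statement.

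The fix is to avoid the uniform bound. As the footnote to \cref{lem:embedding_summary} indicates, the $n^2\log d$ term is only needed for $B_{r,s}(d)$. By \cref{lem:efficient_to_compute_multiplicities}, each multiplicity of $P_n(d)$ or $P_{n-\frac12}(d)$ costs $\wt{O}(n\log d+\log(1/\varepsilon))$. Computing the $O(\sqrt n)$ ratios and applying the Givens rotations for $\mathbf{E}(P_n(d),P_{n-\frac12}(d))$ therefore costs $\wt{O}(\sqrt n(n\log d+\log(1/\varepsilon)))$. Adding $\wt{O}(n^{3/2}(n+\log d+\log(1/\varepsilon)))$ for $\wt{\mathbf{H}}_i$, which already includes its $\wt{O}(n^2)$ path bookkeeping, gives the claimed $\wt{O}(n^2(n+\log d+\log(1/\varepsilon)))$.

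A minor side remark: the $1/d$-free case of \cref{eq:step_two_embed_eq} is $A=P_{n-\frac12}(d)$, $C=P_{n-1}(d)$. For the step you use, $A=P_n(d)$, the paper's ratio carries the $1/d$. This misreading is harmless, because $\mathbf{E}\,\wt{\ft_C}\ket{c}=\wt{\ft_A}\ket{c}$ follows from the definition of $\mathbf{E}$ through the true norm ratios. The quantity $r_{\rho,\sigma}$ only enters inside the black box $\wt{\mathbf{E}}$.
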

\begin{proof}
    Define $\wt{\mathbf{G}}_i \coloneqq \mathbf{E}(P_n(d), P_{n - \frac12}(d))\wt{\mathbf{H}}_i$. In other words, simultaneously multiply $b$ by the (normalized) bridge generator $d^{1/2}b_{n-1}$ while embedding the product diagram into $P_{n - \frac12}(d)$, and then embed that diagram into $P_{n}(d)$. This second embedding does not change the normalization, as both $n$ and $\cc(D)$ are unchanged. By~\cref{lem:efficient_add_a_bridge_appendix} and~\cref{lem:embedding_summary}, $\wt{\mathbf{G}}_i$ performs the desired transformation up to $O(\poly(|A|) \cdot (\delta + \varepsilon))$ operator norm error. The asymptotic gate count is determined by taking the dominating terms in the gate counts of $\wt{\mathbf{G}}_i$ and $\mathbf{E}(P_n(d), P_{n - \frac12}(d))$.   
\end{proof}
\end{appendices}
\end{document}